\newcommand{\pred}[1]{\ensuremath{\mathsf{#1}}}
\definecolor{almostblack}{rgb}{0.6, 0.0, 0.0}
\definecolor{almostblackk}{rgb}{0.3, 0, 0.0}
\mathchardef\mhyphen="2D
\newtheorem{theorem}{Theorem}[section]
\newtheorem{claim}{Claim}[section]
\newtheorem{subclaim}{SubClaim}[section]
\newtheorem{definition}{Definition}[section]
\newtheorem{corollary}{Corollary}[section]
\newtheorem{lemma}{Lemma}[section]
\newtheorem{remark}{Remark}[section]
\newtheorem{conjecture}{Conjecture}
\newlength{\protowidth}
\definecolor{blue(ncs)}{rgb}{0.0, 0.53, 0.74}
\newcommand{\sdotfill}{\textcolor[rgb]{0.2,0.2,0.2}{\dotfill}} 
\newcommand{\namedref}[2]{\hyperref[#2]{#1~\ref*{#2}}\xspace}
\def\poly{\ensuremath{\mathsf{poly}}\xspace}
\def\negl{\ensuremath{\mathsf{negl}}\xspace}
  \newcommand{\cA}{\ensuremath{{\mathcal A}}\xspace}
  \newcommand{\cB}{\ensuremath{{\mathcal B}}\xspace}
  \newcommand{\cC}{\ensuremath{{\mathcal C}}\xspace}
  \newcommand{\cD}{\ensuremath{{\mathcal D}}\xspace}
  \newcommand{\cG}{\ensuremath{{\mathcal G}}\xspace}
  \newcommand{\cH}{\ensuremath{{\mathcal H}}\xspace}
  \newcommand{\cM}{\ensuremath{{\mathcal M}}\xspace}
  \newcommand{\cO}{\ensuremath{{\mathcal O}}\xspace}
  \newcommand{\cP}{\ensuremath{{\mathcal P}}\xspace}
  \newcommand{\cV}{\ensuremath{{\mathcal V}}\xspace}
  \newcommand{\bbC}{\ensuremath{{\mathbb C}}\xspace}
  \newcommand{\bbE}{\ensuremath{{\mathbb E}}\xspace}
  \newcommand{\bbI}{\ensuremath{{\mathbb I}}\xspace}
  \newcommand{\bbN}{\ensuremath{{\mathbb N}}\xspace}
  \newcommand{\bbR}{\ensuremath{{\mathbb R}}\xspace}
  \newcommand{\bbS}{\ensuremath{{\mathbb S}}\xspace}
        \newcommand{\bbX}{\ensuremath{{\mathbb X}}\xspace}
  \renewcommand{\Pr}[0]{\mathrm{Pr}\xspace}
\newcommand{\td}{\mathsf{TD}}
\newcommand{\kabir}[1]{}
 \newcommand{\markz}[1]{}
\newcommand{\bin}{\{0,1\}}
\definecolor{lg}{gray}{0.89}
\DeclareMathOperator*{\Prr}{Pr}
\DeclareMathOperator*{\Exp}{\bbE}
\newcommand{\srno}{\ensuremath{\mathsf{s}}\xspace}
\newcommand{\sig}{\ensuremath{\mathsf{sig}}\xspace}
\newcommand{\Gen}{\ensuremath{\mathsf{Samp}}\xspace}
\newcommand{\Setup}{\ensuremath{\mathsf{Setup}}\xspace}
\newcommand{\Ver}{\ensuremath{\pred{Ver}}\xspace}
\newcommand{\TD}{\ensuremath{\pred{TD}}\xspace}
\newcommand{\SD}{\ensuremath{\pred{SD}}\xspace}
\newcommand{\Samp}{\mathsf{Samp}}
\newcommand{\ket}[1]{|#1\rangle}
\newcommand{\bra}[1]{\langle #1 |}
\newcommand{\ketbra}[1]{\ket{#1}\bra{#1}}
\newcommand{\reg}[1]{\mathbf{#1}}    
\newcommand{\adv}{\mathcal{A}}
\newcommand{\advB}{\mathcal{B}}
\newcommand{\obf}{\mathsf{Obf}}      
\newcommand{\chk}{\mathsf{Check}}
\newcommand{\ind}{\mathds{1}}
\newcommand{\eval}{\mathsf{Eval}}
\newcommand{\obfC}{\widetilde{C}}
\newcommand{\thres}{\mathsf{thres}}
\newcommand{\puzz}{\mathsf{puzz}}
\newcommand{\key}{\mathsf{key}}
\newcommand{\Sim}{\mathsf{Sim}}
\newcommand{\Sign}{\mathsf{Sign}}
\newcommand{\Comp}{\mathsf{Comp}}
\newcommand{\StdO}{\mathsf{StdO}}
\newcommand{\AdvO}{\mathsf{AdvO}}
\newcommand{\pp}{\mathsf{pp}}
\newcommand{\GComp}{\mathsf{GComp}}
\title{On the Cryptographic Foundations of\\ Interactive Quantum Advantage}
\author{Kabir Tomer\thanks{UIUC. Email:~\texttt{ktomer2@illinois.edu}}
\and
Mark Zhandry\thanks{Stanford University and NTT Research. Email:~\texttt{mzhandry@stanford.edu}}
}
\date{}
\begin{document}

\maketitle

\thispagestyle{empty}
 \begin{abstract}In this work, we study the hardness required to achieve proofs of quantumness (PoQ), which in turn capture (potentially interactive) quantum advantage. A ``trivial'' PoQ is to simply assume an average-case hard problem for classical computers that is easy for quantum computers. However, there is much interest in ``non-trivial'' PoQ that actually rely on quantum hardness assumptions, as these are often a starting point for more sophisticated protocols such as classical verification of quantum computation (CVQC). We show several lower-bounds for the hardness required to achieve non-trivial PoQ, specifically showing that they likely require cryptographic hardness, with different types of cryptographic hardness being required for different variations of non-trivial PoQ. In particular, our results help explain the challenges in using lattices to build publicly verifiable PoQ and its various extensions such as CVQC.
\end{abstract}

\newpage

\thispagestyle{empty}
\tableofcontents
\newpage

\pagenumbering{arabic}

\section{Introduction}
\kabir{the MSY paper is called "Cryptographic Characterization of Quantum Advantage" which may be a bit confusing}
Arguably the most important question in quantum computing is whether there are tasks that quantum computers can solve super-polynomially faster than classical computers -- whether there is a super-polynomial quantum computational advantage. Interpreted broadly to include potentially interactive tasks, quantum advantage is captured by the notion of a \emph{proof of quantumness} (PoQ).

In a PoQ, a supposedly quantum prover convinces a classical verifier that it is indeed quantum, through a potentially interactive protocol. PoQ are a central object in the study of quantum computation and quantum protocols, and are closely connected to many topics. Quantum algorithmic developments such as Shor's quantum famous algorithms for factoring and discrete logarithms~\cite{FOCS:Shor94} or Hallgren's algorithms for solving Pell's equation and the PID problem~\cite{STOC:Hallgren02} yield (non-interactive) PoQs. In the interactive setting, many quantum protocols using classical communication trivially imply PoQ, examples including certified randomness~\cite{FOCS:BCMVV18}, classical verification of quantum computation~\cite{FOCS:Mahadev18a}, remote state preparation~\cite{FOCS:GheVid19}, and more. 
\kabir{one thought: There is a lot more work on PoQ as an object of study in and of themselves, maybe we should mention some? }

PoQs always require some form of computational hardness, in particular the hardness of simulating the honest quantum prover on a classical device. Achieving such hardness unconditionally is out of the reach of current techniques. As such, a natural and important goal is to understand the types of hardness assumptions needed for PoQs. Given the central nature of PoQs, understanding the hardness needed for PoQs can also help address questions surrounding related concepts.

A very recent work gave an answer to this question: Morimae, Shirakawa, and Yamakawa~\cite{STOC:MorShiYam25} show that PoQs with potentially inefficient verifiers (IV-PoQs) utilizing an arbitrary number of rounds are \emph{equivalent} to a notion called \emph{classically-secure one-way puzzles}. A (classically-secure) one-way puzzle is an average-case classically-hard problem that can be sampled by a quantum computer, with solutions confirmed by a potentially inefficient verifier. This shows in particular that PoQs can be based on very mild assumptions. 

On the other hand, this result proves a rather course-grained view of PoQ, as arbitrary-round IV-PoQs are the most general form of PoQ. In particular, there are a number of fascinating questions surrounding PoQ and related concepts which this result says little about:
\begin{itemize}
    \item What if we want \emph{efficient verification}, so that the quantum advantage can be observed ``in the real world?''
    \item Some flavors of PoQ seem to require \emph{cryptographic} assumptions; is this necessary? 
    \item Why is it so hard to obtaining public verification in some forms of PoQ or related protocols such as classical verification of quantum computation?
    \item Round-efficient PoQ seems to require structured algebraic assumptions such as LWE; is this necessary?
\end{itemize} 

\subsection{This Work}

In this work, we take a more fine-grained view of PoQ, and show several results lower-bounding the hardness required for different flavors of PoQs. This more fine-grained view also lets us provide meaningful answers to the questions above.

We first make a distinction between ``trivial'' PoQ and ``non-trivial'' PoQ. A ``trivial'' PoQ has the verifier generate an instance of a classically-hard but quantumly-easy computational problem, which the prover then must solve. By classical hardness, the ability to solve the problem proves quantumness. The most famous examples of such hard problems yielding trivial PoQs are Factoring and Discrete Logarithms~\cite{FOCS:Shor94}. But many other PoQs utilize several rounds of interaction that have the prover solve some interactive task~\cite{FOCS:BCMVV18}. Unlike trivial PoQs, non-trivial PoQs typically rely on assumed hardness even for \emph{quantum} computers. These types of PoQs are inherent in many related quantum protocols such as certified randomness and classical verification of quantum computation. These non-trivial PoQ tend to utilize \emph{cryptographic} assumptions, in the sense of requiring assumptions arising in cryptography.

We now summarize our results.

\paragraph{Constant-round efficiently-verifiable PoQs and cryptographic assumptions.} Our first results, morally speaking, show that the dichotomy between ``trivial'' PoQ and ``cryptographic'' PoQ is somewhat inherent: any constant-round efficiently-verifiable PoQs either (1) can be simplified to a trivial PoQ; or requires cryptographic assumptions.

Here, a cryptographic assumption posits the existence of a certain cryptographic primitive, such as one-way functions (functions that are efficiently computable but hard to invert). Such cryptographic assumptions imply average-case hard problems, but are typically considered to be much stronger than average-case hardness. The concrete cryptographic primitives we obtain are (with some minor caveats) either one-way functions or a weak form of publicly verifiable quantum money.

\paragraph{Black-box lower-bounds for low-round protocols.} The above results indicate that cryptographic tools are inherent in many PoQs. The vast majority of cryptographic techniques and constructions, and also the most efficient ones, are \emph{black box}, in the sense that the constructions only use the underlying tools as a black box, and do not utilize any particulars of the implementation of that tool. By restricting our attention to black-box constructions, we can obtain even stronger lower-bounds. We obtain 3 different such lower-bounds, all of which apply to even inefficiently-verifiable PoQs (IV-PoQs):
\begin{itemize}
    \item There is no black-box construction of constant round IV-PoQ from (even subexponentially secure, post-quantum) indistinguishability obfuscation (iO) plus one-way permutations. iO plus one-way permutations are often considered ``Crypto Complete'', together implying numerous foundational cryptographic concepts including public key encryption~\cite{STOC:SahWat14}, (adaptive) SNARGs~\cite{STOC:SahWat14, C:WatWu25}, watermarking~\cite{STOC:CHNVW16},
    (domain invariant) trapdoor permutations~\cite{TCC:BitPanWic16,C:ShmZha25}, public key quantum\allowdisplaybreaks money~\cite{EC:Zhandry19b} and much, much more. Our result implies that constant-round PoQ cannot be based in a black box way from \emph{any} of these primitives. Our results do \emph{not} rule out building PoQ from collision-resistance, which may help explain why all known constructions of constant round PoQ rely on assumptions which imply collision resistance. 
    
    \item 3-message IV-PoQ cannot be proved secure via a black box reduction to a quantum-secure (potentially inefficiently) falsifiable assumption. Note that this result even handles the case where the \emph{construction} is non-black box, as long as the reduction proving security is black box. Such results are known as \emph{meta reductions}~\cite{EC:BonVen98}. Note that the reduction can even be quantum and make quantum queries to the adversary. This improves on prior results showing the same for \emph{2-message} protocols~\cite{C:MorYam24} restricted to \emph{classical-query} reductions. Our result is tight as 4-message constructions are known assuming trapdoor claw-free functions~\cite{FOCS:BCMVV18}.
    
    \item Even ignoring round complexity, any IV-PoQ (even poly round) proved secure via (even quantum-query) black box reduction to a \emph{quantumly-secure} falsifiable assumption implies the existence of \emph{quantumly-secure} one-way puzzles. This improves on \cite{STOC:MorShiYam25} who only obtain a \emph{classically-secure} one-way puzzle; on the other hand our result comes at the cost of requiring a black-box reduction for the PoQ. Note that quantumly-secure one-way puzzles seem much stronger than classically-secure ones, as  the former implies a number of important primitives such as quantum commitments, quantum MPC, etc~\cite{ITCS:BraCanQia23}. Thus any PoQ with a classical black-box reduction to a hard quantum assumption implies a number of important quantum protocols. We find it interesting to obtain a security proof for a protocol (e.g. quantum commitments) which utilizes not just the existence of a primitive (e.g. PoQ), but also the fact that the primitive has a black-box reduction. 
\end{itemize}
\begin{remark}Any PoQ with an efficient verifier is also an IV-PoQ, so the above results also apply to efficiently-verifiable PoQ. 
\end{remark}

\paragraph{Public coin PoQ versus computational no-cloning.} We now turn to public coin PoQ, which means that the verifier keeps no private coins. Public coin protocols in general are important for a number of reasons: they allow for \emph{anyone} to verify the result of the protocol, and rounds can be compressed using the Fiat-Shamir heuristic~\cite{C:FiaSha86}. An example of public coin PoQ is a trivial PoQ where the classically-hard problem is in $NP$. On the other hand, all known constant-round PoQ based on hard lattice problems or other ``standard'' quantum assumptions (e.g.~\cite{FOCS:BCMVV18}) are not public coin. The limitation to secret coin protocols extends to the various related protocols based on such proofs of quantumness, such as classical verification of quantum computation~\cite{FOCS:Mahadev18a}, and succinct or non-interactive versions of it~\cite{C:BKLMMV22,TCC:ACGH20}. A significant open question has been whether these protocols could be upgraded to public coin. For example, upgrading said protocols to have public verification could be a path toward solving major open problems like constructing (publicly-verifiable) non-interactive zero knowledge proofs for QMA in the plain model (without oracles).

Our next result helps explain this situation. We show that a \emph{four-message} public coin PoQ admitting a \textit{classical} black box reduction to a quantum-hard assumption implies a weak variant of public key quantum money. Note that~\cite{FOCS:BCMVV18} is four-message. Our notion of quantum money is weak in the sense that the cloning task is to produce a large number of copies of a banknote instead of the usual two copies.\footnote{A harder problem for the adversary makes for a milder security requirement.} Nevertheless, our result shows that some form of publicly-verifiable cloning hardness is inherent to many standard techniques for constructing public coin non-trivial provably-secure PoQs. 

Our result therefore suggests a close relationship between provable non-trivial public coin PoQ and the challenging open question of constructing public key quantum money from standard assumptions, and in particular lattices. 

We also generalize our result to general constant-round PoQs, but with the caveat that we only prove the general-round version under an assumed quantum \emph{de Finetti theorem} for general LOCC (local quantum operation, classical communication) interactions with polylog dependence on dimension. While we do not know how to prove such a statement, our conjectured LOCC de Finetti theorem does not appear to be trivially false. We leave confirming or refuting our conjectured de Finetti theorem as an interesting direction for future work.

\section{Technical Overview and Further Discussion}

\subsection{Efficiently verifiable PoQs and cryptographic assumptions (Section~\ref{sec:generic})}

Our starting point is the result of~\cite{PassV20}, who study what they call ``interactive puzzles.'' An interactive puzzle is an interactive game between a classical efficient verifier and a prover, such that no efficient classical prover can cause the verifier to accept with non-negligible probability, but there exists a potentially inefficient classical honest prover that does cause the verifier to accept. Moreover, the verifier is required to be public coin. One of their main theorems is a round collapse theorem, which shows roughly that if one-way functions do \emph{not} exist, any constant-round interactive puzzle requiring a constant $\ell$ rounds can be converted into an interactive puzzle requiring $\ell-2$ rounds. By repeating this process $\ell/2$ times, they show that either (1) one-way functions exist, or (2) any constant-round interactive puzzle can be collapsed into just 2 rounds.

Interactive puzzles almost look like public coin PoQs, and a round collapse result would allow for turning a constant-round public coin PoQ into a trivial PoQ. However, there is one crucial difference: an interactive puzzle only requires the existence of an inefficient classical honest prover, whereas a PoQ requires an efficient \emph{quantum} prover. This breaks the straightforward adaptation of~\cite{PassV20} to public coin PoQs.

In slightly more detail, the idea of~\cite{PassV20} is to start from the original $\ell$-message prover $P$, and construct a new prover $P'$ that collapses the last three messages into 1. The final message of $P'$ will consist of:
\begin{itemize}
    \item The third-to-last message, $x_{\ell-2}$ from $P$.
    \item $t$ independent samples from the second-to-last message from the verifier, $x_{\ell-1}^{(1)},\cdots,x_{\ell-1}^{(t)}$, for some parameter $t$.
    \item The replies $x_{\ell}^{(1)},\cdots,x_{\ell}^{(t)}$ would make in the last message, for each of the second-to-last messages $x_{\ell-1}^{(1)},\cdots,x_{\ell-1}^{(t)}$.
\end{itemize}
For an efficient quantum prover $P$, the state of the prover after sending $x_{\ell-2}$ will in general be a quantum state $|\psi_{\ell-2}\rangle$, and generating the final message $x_\ell$ based on the verifier's message $x_{\ell-1}$ will in general involve some measurement on $|\psi_{\ell-2}\rangle$ that depends on $x_{\ell-1}$. But such a measurement will typically destroy the state $|\psi_{\ell-2}\rangle$, preventing the prover from answering more verifier messages. In order to answer each of the verifier messages $x_{\ell-1}^{(1)},\cdots,x_{\ell-1}^{(t)}$, it therefore seems that $P'$ will require many copies of the state $|\psi_{\ell-2}\rangle$, which it cannot construct efficiently. This is just one instance of the well-known hard problem of quantum rewinding. While $P'$ could generate these copies inefficiently, we need an efficient $P'$ in order to get a $\ell-2$-round PoQ. 

\paragraph{Our idea.} Our idea is the following: what if we also assumed that cloning quantum states was easy? Of course, the no-cloning theorem says that in general cloning quantum states is unconditionally information-theoretically hard. However, such cloning hardness only applies to states that are information-theoretically unknown and for which the cloner cannot efficiently verify the states. Here, however, we have a public verification mechanism: basically whether the state is able to convince the verifier. The verification mechanism also means that, while the states are computationally hidden, the quantum states are information-theoretically revealed. For these states, the usual no-cloning theorem does not apply, and it is a priori not obvious whether such states should be clonable or not.

We precisely define a notion of computational cloning hardness for publicly-verifiable states, and show that if it is impossible, then we can indeed finish the round collapse protocol. Conversely, if the  round collapse protocol fails, it implies our notion of cloning hardness.

\paragraph{What cloning hardness we need.} The typical notion of publicly verifiable cloning hardness in the literature is public key quantum money. Here, we focus on ``mini-schemes'', which are equivalent to full money schemes if one-way functions exist~\cite{AarChr} and capture the cloning hardness inherent to quantum money. In a mini-scheme, the adversary is given one copy of a publicly verifiable state, and must produce two copies that also pass verification. A mini-scheme is secure if no such efficient adversary can accomplish this.

Unfortunately, we are unable to get a mini-scheme in this sense. The reason is that we need not one but many copies of the state in order to do the round collapse, and in general one-to-two unclonability does not appear to follow from one-to-many unclonablity\footnote{Note that copying many times is a harder problem for the adversary, which makes the security weaker.}. One may try to run the cloner again on the copies it produced. However, the cloner only needs to make copies that pass verification; they may be very different from the original state. While the cloner succeeded on the original state, there is no guarantee that it will succeed again on these clones. We therefore define a weak mini-scheme as one that has 1-to-many hardness\footnote{On the other hand, we actually obtain something stronger than quantum money. Namely, to perform round collapse, we actually only need to be able to generate the prover's final response to many choices of the verifier's message. These responses can be viewed as a ``signature'' on the verifier's message. This gives us a mini-scheme with a signing functionality, which is known as a signature token~\cite{BS16}. Security now requires that it is hard to sign multiple messages; one could sign multiple messages by cloning, but there may be other strategies as well. As with quantum money, signature tokens usually are required to have 1-to-2 security (it is possible to sign once, but impossible to sign twice), whereas we obtain a weak form of signature token where it is only hard to sign a polynomial number of messages. In the 4-round case, we actually get something even stronger, a weak version of \emph{one-shot signatures}~\cite{STOC:AGKZ20}, which roughly are signature tokens that remain unclonable even if the original signature token was created by the adversary itself.}. See Section~\ref{sec:cloninghardness} for a more formal definition of the kinds of cloning hardness we need.

\medskip

\paragraph{The final result.} The final result is that if constant-round public coin PoQ exist, then either:
\begin{itemize}
    \item It can be collapsed into a trivial PoQ; or
    \item One-way functions exist; or
    \item Weak quantum money mini-schemes exist.
\end{itemize}
The last two bullets are both cryptographic in nature. We also note that existing literature~\cite{private-to-public} implies that public coin protocols can be converted to efficiently verifiable private coin protocols, assuming the absence of a certain type of one-way function. Thus, we obtain that constant-round public coin PoQ can either be made trivial, or certain cryptographic primitives must exist. See Section~\ref{sec:generic} for the full proof.

One caveat to the above, which is inherited from~\cite{PassV20}, is that the one-way function and weak signature token only have security against uniform attackers, and only have security for infinitely-many security parameters. This is typical of ``win-win'' style results like this. In our case, this is because the round-reduction uses the one-way function and weak signature token adversaries, but these adversaries only are guaranteed to succeed for infinitely-many security parameters.

Also note that we do not know any relation between one-way functions and weak quantum money mini-schemes.

\subsection{Black-box Separation between constant-round IV-PoQ and iO (Section~\ref{sec:iOimposs})}

We next turn to black-box separations. Our first result is to show that even IV-PoQ in constant rounds cannot be constructed in a black-box way from (sub-exponentially-secure, post-quantum) indistinguishability obfuscation (iO) and one-way permutations. Recall that iO is a transformation on circuits that preserves functionality while trying to hide the underlying details of how the circuit is implemented. Concretely, iO insists that the obfuscations of equivalent programs are indistinguishable.

To give our black-box separation, we start by providing an oracle where one-way permutations and iO exist. This oracle follows the work of~\cite{FOCS:AshSeg15}. We first provide an oracle for a random permutation, $P$. This gives the one-way permutation. Next, we give an oracle $O$ that takes as input oracle-aided circuits $C^P$ and some random coins $r$, and outputs a string $\hat{C}$. We take $\hat{C}=O(C^P)$ to be the obfuscation of $C^P$. Finally we provide an oracle $E(\hat{C},x)=C^P(x)$ in order to evaluate the obfuscated programs. Note that $E$ makes queries to $P$. It is not hard to show that $O,E$ are indeed secure obfuscators against against adversary who make polynomially-many queries to $(P,O,E)$.
\begin{remark}It is important to allow $O$ to work on oracle-aided circuits, as this captures obfuscating circuits that make use of the one-way permutation $P$, which captures the typical techniques for using iO in cryptography.
\end{remark}

The next step is to show that constant-round IV-PoQ does not exist relative to these oracles. But in this model, constant-round IV-PoQ actually exists, unconditionally! In particular, this model implies a claw-free trapdoor function, which in turn implies 4-message PoQ~\cite{FOCS:BCMVV18}.\footnote{To get the claw-free trapdoor function, first use the one-way permutation $P$ to build a pseudorandom permutation $\Pi^P_{k}$ using standard techniques. Sample a random keys $k_0,k_1$, and then obfuscate the circuits $\Pi_{k_0},\Pi_{k_1}$ using the obfuscation oracle to get ``circuits'' $\hat{\Pi_0}$ and $\hat{\Pi_1}$. The resulting obfuscated circuits are then the claw-free function pair (claw-freeness being a straightforward proof), and the trapdoor is $k_0,k_1$.} We therefore need to provide an extra oracle which breaks any constant-round IV-PoQ, but importantly does not break the underlying one-way permutation or iO provided by the oracles $(P,O,E)$.

\paragraph{Our Idea.} Our idea is the following. We will supply a breaking oracle $B$ which takes as input partial transcripts $T$ of the interactive protocol and outputs a next message $x$ for the prover. Concretely, even though $B$ will be a classical oracle, it's goal is to simulate the honest quantum prover. For each partial transcript, we inefficiently determine the current quantum state of the prover conditioned on the execution so far matching the partial transcript $T$, and then sample $x$ as the honest prover would. We then hardcode the pair $(T,x)$ into the description of $B$; on input $T$, $B$ will output $x$ deterministically.

It's not hard to see that $B$ will break any PoQ, since it behaves identically to the honest quantum prover under a single straight-line interaction. Importantly, since $B$ is inefficiently constructing the honest quantum prover's state, this corresponds to it potentially needing full truth-table access to $(P,O,E)$. Such inefficient access to $(P,O,E)$ would vilate the security of the implied one-way permutation and iO.

For now, we focus on adversaries making classical queries to $(P,O,E)$, and we will discuss quantum queries later.

Going forward, it is helpful to think of $B$ as a way to outsource the honest quantum prover to a classical oracle. Instead of hardcoding the responses, we can think of $B$ as running the the honest quantum prover inside its head in order to respond to the adversary's queries. Hardcoding seemingly just turn the quantum prover into a classical oracle, and at first glance it may seem that the classical oracle is indistinguishable from the honest quantum prover. Since the honest quantum prover is efficient (and hence could be run by the adversary on its own), this would allow us to conclude that giving out $B$ does not impact the security of the one-way permutation or iO. Unfortunately, this intuition turns out to be false,  and classicalizing the honest quantum prover into $B$ actually does let the adversary do things it could not on its own. However, we will show that $B$ nevertheless cannot be used to break the one-way permutation or the iO.

One immediate issue is that an attacker for $(P,O,E)$ may query $B$ on partial transcripts that did not result from previous interactions with $B$. Such partial transcripts do \emph{not} look like a simple straight-line interaction with the honest quantum prover (since the prover may have never generated the messages in the transcript $T$), and so the attacker may learn ``too much'' from such a query. We can fix this problem by having $B$ sign all its messages, and only answer partial transcripts $T$ where previous prover messages came with a valid signature.

A deeper issue is that the adversary may query $B$ on partial transcripts $T_0,T_1$ that are identical up until the most recent verifier message. Such transcripts can be honestly generated, and have all the necessary signatures. However, analogous to our first result, the honest quantum prover generates its next message by measuring its current state, which destroys the state. As such, the honest quantum prover would only be able to respond to one of the transcripts $T_0,T_1$, but not both. But $B$ will respond to both partial transcripts, showing that it can do something the adversary is unable to simulate for itself. If we think of $B$ as outsourcing the computation of the honest quantum prover, essentially what $B$ is doing when queried on $T_0,T_1$ is that it is copying the state of the honest prover. With copies of the state, it can answer $T_0,T_1$ (or any other transcripts that differ in the last message) as needed. We need to show that the ability to clone these particular quantum states -- or more accurately, copying the probabilistic next-message functionality of the honest quantum prover -- does not does not break one-way permutations or iO.

\paragraph{A problem with collision resistance.} We note that copying quantum states, or even copying next-message functionalities, does break some cryptosystem, in particular collision-resistance. For example, we can consider a simplified scenario where the prover's state is the superposition
\[|\psi_y\rangle:=\sum_{x:H(x)=y}|x\rangle\]
where $H$ is some collision-resistant hash function, $y$ some image (derived from the partial transcript), and $|\psi_y\rangle$ is the uniform superposition of pre-images of $y$ (here, we are ignoring normalization to keep the notation simple). Suppose the next message in the honest prover is obtained by measuring $|\psi_y\rangle$. Cloning this state (or even the next-message functionality) immediately allows for finding collisions: if we measure two copies of $|\psi_y\rangle$, the results will be independent random pre-images $x_0,x_1$ of $y$, which with reasonable probability will be distinct (assuming $y$ has multiple pre-images). 

In fact, since $y$ and the first pre-image $x_0$ can be taken to be honestly generated, such an oracle can actually be seen as breaking second-pre-image resistance, which asks that it is hard to find a second pre-image of an honestly generated input. This may seem troubling, as a general second-pre-image resistance adversary can break any one-way function; this is a crucial ingredient in the proof that second-pre-image resistant hash functions can be constructed from any one-way function~\cite{STOC:NaoYun89,STOC:Rompel90}. But this would seem to contradict our claim that the one-way permutation $P$ remains secure!

\paragraph{The Solution.} An analogous issue arises in the study of statistically hiding commitments built from one-way functions/permutations. Namely,~\cite{HHRS15} prove a lower-bound for the round-complexity of such commitments, using a mechanism that also seems to violate second-pre-image resistance. However, their crucial insight is that the algorithm which inverts a one-way function by finding second pre-images is inherently \emph{adaptive}, requiring super-constant many rounds of adaptive queries to the second pre-image finder. However, an adversary for a statistically hiding commitment only allows for an adaptivity that grows with the number of rounds. Intuitively, this is because the adversary doesn't get a truly arbitrary second-pre-image finder, but only one that finds second pre-images at the current stage of the protocol. As such, for few rounds, say a constant\footnote{In their work, they actually extend to $o(n/\log n)$ rounds}, there won't be enough adaptivity to break the underlying one-way function/permutation.

A similar phenomenon also occurs with proofs of quantumness, meaning that the oracle $B$ for breaking PoQs does not actually allow for breaking one-way permuations if the PoQ has only a constant number of rounds. However, the underlying details of how we actually prove this are very different.

At a very high level, we first employ techniques from~\cite{pdqp}, which gave a general set of techniques for upgrading classical oracle separations to the quantum setting. In particular, they used their techniques to prove that one-way permutations plus iO do not imply collision resistance even under quantum queries, lifting the classical analogous result~\cite{FOCS:AshSeg15} to the quantum setting. These techniques are not enough for us, however. To prove that iO and one-way permutations to not imply collision-resistance, the oracles $(P,O,E)$ are augmented with Simon's oracle~\cite{EC:Simon98} for breaking collision-resistance. But our oracle is even stronger, since in particular it finds second pre-images. At a lower-level, one of the key requirements~\cite{pdqp} need of the breaking oracle is that the distribution of its outputs is insensitive to local changes of the oracles $(P,O,E)$. However, this turns out to not be true for our PoQ-breaking oracle $B$. 

To overcome these challenges, we develop a new quantum one-way-to-hiding lemma in the setting of oracles for iO and one-way permutations. This lemma, roughly, lets us show that if an adversary can detect that we punctured the oracles $(P,O,E)$ at some point (which may in turn influence the outputs of $B$), then the adversary must actually know the punctured point. This turns out to be highly non-trivial, and is most involved technical contribution of our work. See Section~\ref{sec:iOimposs} for details.

\subsection{Meta-reduction for 3-message IV-PoQ (Section~\ref{sec:bbreductions})} 

We now turn to constructions that are potentially non-black-box, but where the reduction proving security is black-box, and show barriers to achieving such a reductions. In particular, ``standard'' assumptions in cryptography are almost always falsifiable, in the sense that they take the form of a game between adversary and challenger, and the assumption is that no efficient adversary can win the game with probability noticeably higher than some threshold. Such games are falsifable in the sense that if the assumption is false, there is a procedure to demonstrate its falseness: namely, run the game with the adversary that purports to contradict security.

The vast majority of security proofs in cryptography take the form of a ``black box'' reduction to such a falsifiable assumption. In the context of a PoQ, a black box reduction is an efficient algorithm $R$ which makes queries to a supposed adversary $A$, and attempts to win some problem $P$ specified by an interactive game. If $A$ is an algorithm which breaks the PoQ $\Pi$ -- that is, if $A$ is classical and convinced the verifier of $\Pi$ that it is quantum -- then $R^A$ will break $P$ with too-high a probability.  If $A$ is an efficient adversary for the PoQ, then in particular $R^A$ will be an efficient algorithm for $P$, contradicting its presumed hardness and thereby justifying the impossibility of an efficient $A$. Importantly for black-box reductions, however, $R^A$ must break $P$ even if $A$ is \emph{inefficient}, even though this does not reach a contradiction since $R^A$ is no longer efficient. 

Our first such impossibility is for 3-message IV-PoQ, where we show that it cannot be proved secure via a black-box reduction to a quantum-secure hard problem $P$. The reduction is allowed to be quantum and can even make quantum queries to the adversary. To prove this result, we follow the typical abstract framework in the literature, starting from~\cite{EC:BonVen98}: the first step is to exhibit an classical but inefficient algorithm $A$ which breaks the PoQ $\Pi$. By the black-box reduction, this means that $R^A$ breaks $P$ even though it is inefficient. The second step is then to devise an efficient algorithm $M$ which runs $R$, but manages to efficiently simulate the behavior of $A$. The resulting $M$, often called a ``meta-reduction'', is then efficient but still solves $P$; but since $M$ is efficient this now would contradict the hardness of $P$. Thus the original reduction $R$ could not exist. 

In meta-reductions, $M$ needs to be able to do something $A$ cannot, otherwise $M$ by simulating $A$ efficiently is just directly implementing an efficient attack on the cryptosystem, which we would generally believe is not possible. In the case of proving a PoQ based on a \emph{quantum-secure} assumption, what $M$ can do is utilize a quantum computer, which $A$ cannot.

In the case of 2-message PoQ,~\cite{C:MorYam24} show a simple meta reduction. Basically, the inefficient $A$ is constructed as follows: for each first message $y$, run the honest quantum prover to get a response $z$. Now hardcode the pair $(y,z)$ into $A$, and have $A$ answer every first message deterministically with $z$. $A$ is inefficient since it needs exponentially hardcoded pairs $(y,z)$. However, it is easy to simulate $A$: $M$ just runs the quantum honest prover, which is indistinguishable from $A$. Their result only works for classical-query reductions.

We generalize this to 3-message PoQ, and to allow quantum reductions. We start by describing the generalization to 3-message PoQ. Let us call the first message from the adversary $x$, the first message from the verifier $y$, and the final adversary message $z$. The challenge with 3-message PoQ is that now $R$ may query $A$ on partial transcripts $(x,y)$ and $(x,y')$ with the same first adversary message $x$. Naively, the efficient $M$ will not be able to answer both these queries. This is because when $M$ creates $x$ (which it does by just running the honest prover), it may have resulted in a quantum state $|\psi_x\rangle$. When $M$ receivers $(x,y)$ (corresponding to the honest prover receiving $y$ from the verifier) $M$ then operates on and measures (and hence destroys) $|\psi_x\rangle$ to generate $z$. But now that $|\psi_x\rangle$ is destroyed, $M$ will not be able to answer the query $(x,y')$.

We get around this issue by observing that we can give $M$ the necessary states $|\psi_x\rangle$ \emph{as quantum advice}. The result is that $M$ is now a non-uniform efficient algorithm which breaks $P$, contradicting the non-uniform hardness of $P$.

The above discussion only applies to classical-query reductions. In the case of reductions making quantum queries to the adversary, it does not even make sense to talk about well-defined queries $(x,y)$, since in general the adversary may query on a superposition of $(x,y)$ transcripts. We generalize the above to work even with quantum queries by using a variation on the compressed oracle framework~\cite{C:Zhandry19}. Roughly, the original compressed oracle technique allows for replacing a random oracle with a superposition over ``databases'' of input/output pairs, which very roughly correspond to the input/output pairs of the oracle that have been ``observed'' by the adversary. This database plays an intuitively similar role to classical lazy sampling, which is widely used in classical query results. Analyzing the quantum database sometimes results in proofs that look similar to the classical-query counterparts, though this clearly cannot be universal as quantum queries can give advantage over classical~\cite{EC:YamZha21,FOCS:YamZha22}.

While the original compressed oracle was for random functions, it has been generalized to functions with non-uniform outputs~\cite{EPRINT:CMSZ19,EPRINT:HhaYun23}, potentially with different distributions for each output. Necessary ingredients for these works are that (1) the non-uniform distribution is known, and (2) the output distributions for each input are independent. In our setting, property (2) is satisfied, the responses for each $(x,y)$ are sampled independently according to some $y$-dependent measurement applied to $|\psi_x\rangle$. However, (1) is not known, since the measurement distribution for a quantum state cannot in general be efficiently determined. Our key idea is show that we can emulate the correct behavior of the compressed oracle using the copies of $|\psi_x\rangle$ we have, even though we do not know the correct output distribution. To do so we formulate a new variant of the compressed oracle technique which may be of independent interest, and while we incur a small inverse polynomial error in simulation, this does not affect the results.  By analyzing this variant, we are able to extend our result to handle quantum queries. See Section~\ref{sec:bbreductions} for details. 

\subsection{Meta-reduction for general-round IV-PoQ (Section~\ref{sec:bbreductions})}

We now consider protocols with four or more messages. Note that here, we do not expect to get an unqualified impossibility, since there are four-message protocols with black-box reductions, e.g.~\cite{FOCS:BCMVV18}. The reason four-message protocols can have black-box reductions is that now the state in the prover's first message can actually depend on the verifier's first message, and therefore cannot be provided as non-uniform advice. Indeed, protocols such as~\cite{FOCS:BCMVV18} crucially exploit this fact. In the security proof, for the first two messages, the reduction behaves as the verifier, sending a first message $w$ and receiving a message $x$ from the malicious classical prover. Then the reduction queries the prover on the partial transcripts $(w,x,y_0)$ and $(w,x,y_1)$ to get two final message responses $z_0,z_1$ which allow it to solve the problem $P$. However, a quantum algorithm (in particular, the honest prover) would be unable to generate both $z_0,z_1$ simultaneously (even though it could generate \emph{either} $z_0$ or $z_1$ from its internal state after sending $x$). As a result, the reduction does \emph{not} apply to the honest quantum prover, which allows the problem $P$ to remain plausibly quantum-secure.

Instead, we wish to better understand what sort of hardness is necessary to have such a reduction for multi-round protocols. Our key insight is that constructing the prover responses given the current partial transcript can be viewed as solving a certain one-way puzzle. If these one-way puzzles were quantumly easy, then we could get the meta reduction to go through, implying that $P$ is quantumly easy. As a consequence, we obtain that the quantum hardness of $P$ implies that certain one-way puzzles are quantumly secure. 

The above description allows us to obtain quantum-secure one-way puzzles from classical-query reductions; by using compressed oracle techniques similar to the first meta-reduction, we extend this to quantum-query reductions as well. See Section~\ref{sec:bbreductions} for details.

\paragraph{Discussion.} Our result says that in order for any IV-PoQ to be black-box reduce-able to a true quantum falsifiable assumption, it must yield a quantum-secure one-way puzzle. Note that such quantum-secure one-way puzzles are \emph{not} in general implied by quantum-hard falsifiable assumptions, so this is a non-trivial statement. For example, the falsifiable assumption could even utilize an inefficient challenger, which could sample hard instances that require even exponential time to sample; one-way puzzles in contrast must be efficiently sampleable.

Also note that the one-way puzzle only depends on the particular IV-PoQ protocol (not the reduction), so the obtained one-way puzzle is explicit. However, the security proof of the one-way puzzle is non-black box, in that it uses the assumed reduction algorithm. We find it interesting to use an assumed reduction algorithm to prove the security of another cryptosystem, and are not aware of any other security proofs of this sort. 

At first glance, upgrading the prior work of~\cite{C:MorYam24} from classical-secure one-way puzzles to quantum-secure puzzles (at the expense of assuming a black-box reduction to a quantum-secure falsifiable assumption) may seem minor. However, we note that classically-secure one-way puzzles are extremely mild assumptions. In particular, they do not seem useful for cryptography: because one-way puzzles are sampled by a quantum algorithm, when used to build cryptosystems, the resulting protocol will be quantum. However, a classically-secure one-way puzzle will result in a quantum cryptosystem that is only secure against classical adversaries, meaning the adversary is much weaker than the honest parties. Cryptography usually insists on security against adversaries that are even much \emph{stronger} than the honest parties.

On the other hand, quantum-secure one-way puzzles, while still mild, are known to imply numerous quantum cryptographic objects such as quantum commitments~\cite{STOC:KhuTom24}, and thus any of the many objects that are equivalent to commitments, such as quantum MPC.

\subsection{Public coin PoQ versus no-cloning (Section~\ref{sec:publiccoinreductions})} 

We now turn to our final set of results, which shows that provably-secure public coin non-trivial PoQs are likely to require some form of publicly verifiable cloning hardness.

Our proof follows the meta-reduction outline as in our previous two results; like in the last result, instead of obtaining a strict impossibility (which would contradict known results), we instead extract a cryptographic object, whose security follows from the mere existence of a reduction for the PoQ.

We start with the same high-level outline as in our previous two meta-reductions. We first design first a classical but inefficient attacker $A$ for the PoQ: for each partial transcript $T$, we can inefficiently sample a random choice for the honest provers next message $z$ and hardcode that into the description of $A$. $A$ will then always deterministically reply to the partial transcript $T$ with the message $z$. 

Next, we give an efficient (quantum) simulator $M$ for $R^A$, which then breaks the assumption $P$, contradicting its hardness and thereby proving that the reduction $R$ could not exist. We start with the naive quantum simulator which runs $R$, but replaces queries to $A$ with queries to the honest quantum prover. Since $A$ behaved like the honest prover anyway, we hope that $M$ to simply break $R$. The challenge, just as with the general round meta-reduction for IV-PoQ, is that the meta-reduction will be unable to answer queries on two partial transcripts that differ in, say, just a single message, as in order to do so, it would seem to require multiple copies of the internal state of the honest prover.

Here is where cloning hardness comes in. We show that the meta-reduction actually \emph{can} handle reductions making such queries if it could copy the state of the honest quantum prover exactly. Moreover, the states of the honest prover are publicly verifiable, since the PoQ is public coin. At first glance, this seems to give the cloning hardness we need: the states of the honest quantum prover must be hard to clone, else we can simulate $R^A$ using by an efficient quantum algorithm, which would then contradict the assumed hardness of $P$. 

Unfortunately, turning this intuition into a security proof is subtle and comes with some significant caveats. The reason is that publicly verifiable unclonability does not insist that the adversary clones the state exactly, but rather it insists that the adversary cannot come up with two states that pass verification. It is actually not hard to come up with insecure quantum money schemes, even from very mild assumptions such as one-way functions, where copying the state exactly is hard, but it is easy to come up with states that pass verification. If the quantum simulator used such a cloner, it wouldn't correctly simulate the inefficient adversary $A$, meaning the meta-reduction may not work. We could insist on cloning hardness where we require the adversary to clone the original state with high-fidelity. However, the resulting object turns out to be very mild, and is in fact equivalent to one-way puzzles by the techniques of~\cite{KT25}. Intuitively, the reason for this is that there is no efficient verifier to cloning the original state\footnote{Note that the original state is information-theoretically determined by the information seen by the adversary, so it is still the case that cloning the original state only be computationally hard.}. On the other hand, the hardness of creating two states that pass an efficient verifier seems to require very strong assumptions.  The only known constructions of such cloning hardness are simply public key quantum money schemes, which currently seem to require very strong assumptions.

Another related problem is that if the reduction runs $A$ many times, we need the ability to clone many times, but standard publicly verifiable unclonability only asks cloner to turn one copy into two. We could try then applying the cloner again to the copies to get even more. But the cloner may not work on the copies, since the cloner only is guaranteed to clone the original state, which may be different than the outputs of the cloner.

Similar to our first result, we sidestep some of the above difficulty by only obtaining a weak quantum money mini-scheme\footnote{For a 4 round PoQ, we actually get a weak form of quantum lightning~\cite{EC:Zhandry19b}, which is between mini-schemes and one-shot signatures. Essentially, lightning in a mini-scheme that is unclonable even if the adversary comes up with the original state, but there is no signing functionality.}. Even with this weaker unclonability notion, the proof is non-trivial, and we are only able to make the proof work unconditionally in the four-message setting. The reason for four-message protocols is that we are only essentially only concerned about partial transcripts that differ on the last message.

Through a careful analysis, we are able to extend the proof to any constant number of rounds, but under a conjectured quantum LOCC de Finetti theorem\footnote{And our proof here only achieves a quantum money mini-scheme, no lightning.}. Very roughly, the issue is that the states we obtain from the cloner may be highly entangled, which may introduce inherently quantum correlations between the prover messages obtained from these states. Such quantum correlations could potentially allow for distinguishing our simulation from the outputs of the original adversary $A$, which by virtue of being classical only has classical correlations between its different outputs.

We overcome this using a quantum de Finetti theorem. In general, a quantum de Finetti theorem says that if a quantum state is permutation-invariant, then the reduced state of a small portion of the state ``looks like'' a product state. In our case, the permutation-invariant state will be the set of copies of the honest prover's state obtained from the cloner, and by obtaining many more clones than the number of messages we need to produce, we only use a small subset of the copies. The de Finetti theorem allows us to argue that we can use the cloned states in a way that is indistinguishable from the inefficient adversary $A$. 

The particular de Finetti theorem we need is false if ``looks like'' is taken to mean that the reduced state is close in trace distance to a product state. But we only need a weaker notion which states that the reduced state is indistinguishable from a product state under LOCC communication between the various cloned states. The reason for LOCC communication is that ultimately we do not need the states themselves to be indistinguishable, but only the classical messages sent to the verifier.

Our 4-message proof actually already needs a de Finetti theorem. But in the 4-message case, we only need to worry about queries on transcripts differing on the final message. Subsequent queries to the final message will just be handled by querying the cloned states in a sequential manner, meaning the LOCC communication between the clones is actually one-way, in the sense that one clone sends a message to the next, which sends a message to the one after, and so on. For such one-way communication, the needed LOCC de Finetti theorem was actually proved in~\cite{1locc}, allowing us to prove our 4-message result.

For general constant-round protocols, we unfortunately need general LOCC communication, since the reduction may interleave rewinds at different stages of the protocol, which ultimately result in back-and-forth communication between the clones. Even by assuming such a general LOCC de Finetti theorem, our proof is delicate, as we have to handle the reduction querying on arbitrary transcripts that may differ at any point from previously queried transcripts. See Section~\ref{sec:publiccoinreductions} for details.

\paragraph{Discussion.} The only known constructions of public key quantum money require extremely strong or non-standard assumptions  (e.g.~\cite{EC:Zhandry19b,EC:LiuMonZha23,STOC:BosNehZha25}), have security proofs using idealized models (e.g.~\cite{ITCS:Zhandry24a}), or have no proofs at all (e.g.~\cite{ITCS:FGHLS12,KanShaSil22}). There is moreover some evidence supporting the difficulty of constructing public key quantum money from ``mild'' assumptions such as one-way functions or even lattices~\cite{C:LiuZha19,EC:LiuMonZha23,AC:AnaHuYue23,EC:Zhandry25}. Notice that public key quantum money \emph{can} be constructed from indisitnguishability obfuscation (iO) and one-way functions~\cite{EC:Zhandry19b}. In light of our separation between constant-round PoQs and iO, this shows that public key quantum money, while potentially necessary for public coin PoQs, is unlikely to be sufficient.

We leave as an interesting open question extending our result to quantum query reduction. One might try to use the compressed oracle techniques from our previous two meta-reduction results, but we were unable to get this to work with our approach to this result.

\section{Preliminaries}

In this section, we discuss some notation and preliminary information, including definitions, that will be useful in the rest of the exposition.

\subsection{Notation and Conventions}

We write $\negl(\cdot)$ to denote any \emph{negligible} function, which is a function $f$ such that for every constant $c \in \mathbb{N}$ there exists $N \in \mathbb{N}$ such that for all $n > N$, $f(n) < n^{-c}$.
We will use $\mathsf{SD}(A,B)$ to denote the statistical distance between (classical) distributions $A$ and $B$. For natural numbers $n$ and $m$ we use $[n]$ to refer to the set $\{1,2,\ldots, n\}$ and $[m,n]$ to refer to the set $\{m, m+1,\ldots, n\}$. For a distribution $D$ we use $\Pr_D[x]$ to refer to the probability of sampling $x$ from $D$.\\

\noindent{\bf Quantum conventions.} A register $\reg{X}$ is a named Hilbert space $\bbC^{2^n}$. A pure state on register $\reg{X}$ is a unit vector $\ket{\psi} \in \bbC^{2^n}$, and we say that $\ket{\psi}$ consists of $n$ qubits. A mixed state on register $\reg{X}$ is described by a density matrix $\rho \in \bbC^{2^n \times 2^n}$, which is a positive semi-definite Hermitian operator with trace 1. 

A \emph{quantum operation} $F$ is a completely-positive trace-preserving (CPTP) map from a register $\reg{X}$ to a register $\reg{Y}$, which in general may have different dimensions. That is, on input a density matrix $\rho$, the operation $F$ produces $F(\rho) = \tau$ a mixed state on register $\reg{Y}$.
A \emph{unitary} $U: \reg{X} \to \reg{X}$ is a special case of a quantum operation that satisfies $U^\dagger U = U U^\dagger = \bbI^{\reg{X}}$, where $\bbI^{\reg{X}}$ is the identity matrix on register $\reg{X}$. If an operation acting on a multipartite system is only specified on some parts of the system, it is understood to act as identity on the other parts. A \emph{projector} $\Pi$ is a Hermitian operator such that $\Pi^2 = \Pi$, and a \emph{projective measurement} is a collection of projectors $\{\Pi_i\}_i$ such that $\sum_i \Pi_i = \bbI$.

We say a quantum circuit $C$ outputs strings in $\bin^n$ if $C$ acts on $\ket{0}$ to produce an $n$-qubit output register (potentially along with a junk register). The output of the circuit is the outcome of measuring the output register of $C\ket{0}$ in the computational basis. 

We use $\TD(\rho, \sigma)$ to refer to the trace distance of mixed states $\rho$ and $\sigma$. For pure states $\ket{\phi}$ and $\ket{\psi}$ we sometimes abuse notation and use $\TD(\ket{\phi}, \ket{\psi})$ to refer to $\TD(\ket{\phi}\!\bra{\phi}, \ket{\psi}\!\bra{\psi})$.
\subsection{Useful Theorems}
\begin{lemma}\label{lem:thetaED-to-TD}
For pure states $\ket{\psi}$ and $\ket{\psi'}$, if 
$\min_\theta \| \ket{\psi} - e^{i\theta}\ket{\psi'} \| \geq\delta$, 
then $\td(\ket{\psi},\ket{\psi'}) \geq\delta / \sqrt{2}$.
\end{lemma}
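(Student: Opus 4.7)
The plan is to reduce both quantities to a single scalar, namely the overlap $c := |\langle\psi|\psi'\rangle| \in [0,1]$, and then observe that the desired inequality becomes a trivial algebraic fact.

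First I would compute the minimized Euclidean distance explicitly. Writing $\langle\psi|\psi'\rangle = c\,e^{i\alpha}$ for some phase $\alpha$, one has
\[
\|\ket{\psi} - e^{i\theta}\ket{\psi'}\|^2 = 2 - 2\,\mathrm{Re}\!\left(e^{-i\theta}\langle\psi|\psi'\rangle\right) = 2 - 2c\cos(\alpha-\theta),
\]
which is minimized by choosing $\theta = \alpha$, giving $\min_\theta \|\ket{\psi} - e^{i\theta}\ket{\psi'}\|^2 = 2(1-c)$. The hypothesis $\min_\theta \|\ket{\psi}-e^{i\theta}\ket{\psi'}\| \geq \delta$ therefore says exactly that $2(1-c) \geq \delta^2$, i.e., $1-c \geq \delta^2/2$.

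Next I would use the standard closed form for the trace distance between two pure states: $\td(\ket{\psi},\ket{\psi'}) = \sqrt{1 - c^2}$. (This follows from computing the two nonzero eigenvalues of $\ketbra{\psi} - \ketbra{\psi'}$, which are $\pm\sqrt{1-c^2}$.) Factoring, $\td(\ket{\psi},\ket{\psi'})^2 = (1-c)(1+c)$.

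The final step is the one-line bound. Since $c \geq 0$, we have $1+c \geq 1$, so
\[
\td(\ket{\psi},\ket{\psi'})^2 \;=\; (1-c)(1+c) \;\geq\; 1-c \;\geq\; \delta^2/2,
\]
and taking square roots gives $\td(\ket{\psi},\ket{\psi'}) \geq \delta/\sqrt{2}$, as desired. There is no genuine obstacle here; the only mild subtlety is remembering that the phase optimization in the hypothesis is precisely what turns $2-2\,\mathrm{Re}\langle\psi|\psi'\rangle$ into $2(1-|\langle\psi|\psi'\rangle|)$, allowing the clean comparison with $1-c^2$.
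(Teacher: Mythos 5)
Your proof is correct and follows essentially the same approach as the paper: both reduce the hypothesis to $1 - |\langle\psi|\psi'\rangle| \geq \delta^2/2$ via the phase-minimized Euclidean distance and then apply the formula $\td = \sqrt{1 - |\langle\psi|\psi'\rangle|^2}$. Your final step factors $1-c^2 = (1-c)(1+c) \geq 1-c$, which is slightly cleaner than the paper's expansion of $\sqrt{1-(1-\delta^2/2)^2}$ (and sidesteps the need to note $\delta^2 \leq 2$).
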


\begin{proof}
\begin{align*}
  \|\ket{\psi} - e^{i\theta}\ket{\psi'}\| 
    &= \sqrt{2 - 2 \mathsf{Re} \bigl( e^{i\theta} \langle \psi | \psi' \rangle \bigr)} \\
  \min_\theta \|\ket{\psi} - e^{i\theta}\ket{\psi'}\| 
    &= \sqrt{2 - 2|\langle \psi | \psi' \rangle|} \geq\delta .
\end{align*}
Thus
\[
  1 - \tfrac{\delta^2}{2} \geq|\langle \psi | \psi' \rangle| .
\]
Note that since the RHS is positive, $\delta^2 \leq2$. We can use the bound on the inner product to bound the trace distance.
\begin{align*}
  \td(\ket{\psi},\ket{\psi'}) 
    &= \sqrt{1 - |\langle \psi | \psi' \rangle|^2} \\
    &\geq\sqrt{1 - \left(1 - \tfrac{\delta^2}{2}\right)^2} \\
    &= \sqrt{\delta^2 - \tfrac{\delta^4}{4}} \\
    &\geq\delta \sqrt{1 - \tfrac{\delta^2}{4}} \\
    &\geq\delta / \sqrt{2}.
\end{align*}
\end{proof}

We also use the following theorem showing that the trace distance of pure states is upper bounded by their Euclidean distance.
\begin{theorem}
\label{thm:trace-dist-and-euclidean-dist}
    Let $\ket{\psi}$ and $\ket{\phi}$ be two pure states such that $|\ket{\psi} - \ket{\phi}| \leq \epsilon$. Then 
    \[
    \TD\left(\ketbra{\psi}, \ketbra{\phi}\right) \leq \epsilon
    \] 
\end{theorem}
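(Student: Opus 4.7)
The plan is to convert the Euclidean-distance hypothesis into a lower bound on $|\langle\psi|\phi\rangle|$ and then plug this into the standard closed-form expression for trace distance between pure states, $\TD(\ketbra{\psi},\ketbra{\phi}) = \sqrt{1-|\langle\psi|\phi\rangle|^2}$.

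First I would dispense with the easy regime: trace distance is always at most $1$, so whenever $\epsilon \geq 1$ the bound is trivial. Thus I may assume $\epsilon < 1$ in what follows. Next I would expand
\[
\|\ket{\psi}-\ket{\phi}\|^2 = \langle\psi|\psi\rangle + \langle\phi|\phi\rangle - 2\,\mathrm{Re}\langle\psi|\phi\rangle = 2 - 2\,\mathrm{Re}\langle\psi|\phi\rangle,
\]
so the hypothesis $\|\ket{\psi}-\ket{\phi}\| \leq \epsilon$ yields $\mathrm{Re}\langle\psi|\phi\rangle \geq 1 - \epsilon^2/2$. Under the assumption $\epsilon < 1$ this real part is strictly positive, so it is safe to square: $|\langle\psi|\phi\rangle|^2 \geq (\mathrm{Re}\langle\psi|\phi\rangle)^2 \geq (1-\epsilon^2/2)^2$.

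Substituting this into the pure-state trace-distance formula gives
\[
\TD(\ketbra{\psi},\ketbra{\phi}) = \sqrt{1-|\langle\psi|\phi\rangle|^2} \leq \sqrt{1-(1-\epsilon^2/2)^2} = \sqrt{\epsilon^2 - \epsilon^4/4} \leq \epsilon,
\]
which is the desired bound. I would close by noting that the result mirrors the companion lemma above (Lemma~\ref{lem:thetaED-to-TD}), just without the freedom to optimize over a global phase, which is exactly why we pick up a $1/\sqrt{2}$ factor there but not here.

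I do not anticipate any real obstacle: the only subtlety is being careful about the sign of $\mathrm{Re}\langle\psi|\phi\rangle$ when squaring, which is handled by the case split on whether $\epsilon \geq 1$ or $\epsilon < 1$. Everything else is a direct computation that relies only on the identity $\|\ket{\psi}-\ket{\phi}\|^2 = 2 - 2\,\mathrm{Re}\langle\psi|\phi\rangle$ and the pure-state formula for trace distance.
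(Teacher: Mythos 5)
Your proof is correct and follows essentially the same route as the paper's: expand the Euclidean-distance hypothesis to lower-bound $|\langle\psi|\phi\rangle|$ by $1-\epsilon^2/2$, then substitute into $\TD = \sqrt{1-|\langle\psi|\phi\rangle|^2}$. The only difference is that you explicitly dispatch the $\epsilon \geq 1$ case up front (so that $1-\epsilon^2/2>0$ and squaring the real part is safe), a small point of rigor the paper's proof glosses over but which does not affect its correctness.
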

\begin{proof}
    We have that
    \begin{align*}
        \epsilon^2 &\geq |\ket{\psi} - \ket{\phi}|^2 \\
        &= (\bra{\psi} - \bra{\phi}) (\ket{\psi} - \ket{\phi}) \\
        &= 2 - (\langle \phi | \psi \rangle + \langle \phi | \psi \rangle) \\
        &= 2 - 2 \text{Re} (\langle \phi | \psi \rangle) \\
        &\geq 2 - 2 |\langle \phi | \psi \rangle|,
    \end{align*}
    which can be rearranged to 
    \begin{align*}
        &|\langle \phi | \psi \rangle| \geq 1 - \frac{\epsilon^2}{2}.
    \end{align*}
    By the identity for trace distance of pure states,
    \begin{align*}
        \TD( \ket{\psi} \bra{\psi}, \ket{\phi} \bra{\phi}) &= \sqrt{1 - |\langle \psi | \phi \rangle|^2} \\
        &= \sqrt{1 - \left(1 - \frac{\epsilon^2}{2} \right)^2} \\
        &= \sqrt{\epsilon^2 - \frac{\epsilon^4}{4}} \\
        &\leq \epsilon.
    \end{align*}
    This completes the proof.
\end{proof}

\begin{lemma}
\label{lem:ED-to-SD}
For any two distributions $\cD$ and $\cD'$,
\[
\|\ket{\cD} - \ket{\cD'}\| \leq \sqrt{2\cdot\SD(\cD,\cD')}
\]
\end{lemma}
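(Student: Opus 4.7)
The plan is to unfold the squared Euclidean norm in terms of the amplitudes $\sqrt{\cD(x)}$ and $\sqrt{\cD'(x)}$, then compare it pointwise to $|\cD(x)-\cD'(x)|$. Concretely, I would first expand
\[
\|\ket{\cD} - \ket{\cD'}\|^2 \;=\; \sum_x \bigl(\sqrt{\cD(x)} - \sqrt{\cD'(x)}\bigr)^2,
\]
using the definition $\ket{\cD} = \sum_x \sqrt{\cD(x)}\ket{x}$ (and similarly for $\cD'$), noting that this is essentially twice the squared Hellinger distance.

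Next I would establish the pointwise inequality $\bigl(\sqrt{a}-\sqrt{b}\bigr)^2 \leq |a-b|$ for all $a,b\geq 0$. This follows by factoring
\[
|a-b| \;=\; \bigl|\sqrt{a}-\sqrt{b}\bigr|\cdot\bigl(\sqrt{a}+\sqrt{b}\bigr)
\]
and observing that $\bigl|\sqrt{a}-\sqrt{b}\bigr| \leq \sqrt{a}+\sqrt{b}$ since both summands are nonnegative. Summing this over $x$ with $a=\cD(x)$, $b=\cD'(x)$ gives
\[
\|\ket{\cD} - \ket{\cD'}\|^2 \;\leq\; \sum_x |\cD(x) - \cD'(x)| \;=\; 2\,\SD(\cD,\cD').
\]
Taking square roots yields the claim.

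There is no real obstacle here; this is essentially the standard Fuchs--van de Graaf relation specialized to diagonal states. The only thing to be slightly careful about is making the pointwise inequality $\bigl(\sqrt{a}-\sqrt{b}\bigr)^2 \leq |a-b|$ transparent, and remembering that $\|\ket{\cD}\|^2 = \sum_x \cD(x) = 1$ is implicitly used when one expands the cross terms (though in the form above it is absorbed into the single sum). The whole argument is a few lines.
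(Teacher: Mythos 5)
Your proof is correct and is essentially identical to the paper's argument: both expand the squared Euclidean norm into $\sum_x (\sqrt{\cD(x)}-\sqrt{\cD'(x)})^2$ and bound each term by $|\cD(x)-\cD'(x)|$ using the factorization $a-b=(\sqrt{a}-\sqrt{b})(\sqrt{a}+\sqrt{b})$ together with $|\sqrt{a}-\sqrt{b}|\le\sqrt{a}+\sqrt{b}$. Your parenthetical about normalization is not actually needed — the pointwise bound suffices on its own — but that does not affect correctness.
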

\begin{proof}
    Recall that $\ket{\cD} = \sum_x \sqrt{\Pr_{\cD}[x]} \ket{x}$ and $\ket{\cD'} = \sum_x \sqrt{\Pr_{\cD'}[x]} \ket{x}$. Therefore, 
    \begin{align*}
        \|\ket{\cD} - \ket{\cD'}\| &= \|\sum_x (\sqrt{\Pr_{\cD}[x]} - \sqrt{\Pr_{\cD'}[x]}) \ket{x}\|\\
        &= \sqrt{\sum_x (\sqrt{\Pr_{\cD}[x]} - \sqrt{\Pr_{\cD'}[x]})^2}\\
        &\leq \sqrt{\sum_x |(\sqrt{\Pr_{\cD}[x]} - \sqrt{\Pr_{\cD'}[x]})(\sqrt{\Pr_{\cD}[x]} + \sqrt{\Pr_{\cD'}[x]})|}\\
        &=\sqrt{\sum_x |\Pr_{\cD}[x] - \Pr_{\cD'}[x]|}\\
        &=\sqrt{2\cdot \SD(\cD,\cD')}
    \end{align*}
\end{proof}
    
\begin{theorem}[Theorem 4 in \cite{jls}]\label{thm:jls}
Let $\ket{\psi} \in \mathcal{H}$ be a quantum state. Define oracle $O_\psi = \bbI - 2 \ket{\psi}\bra{\psi}$ to be the reflection about $\ket{\psi}$. Let $\ket{\xi}$ be a state not necessarily independent of $\ket{\psi}$. Let $\mathcal{A}^{O_\psi}$ be an oracle algorithm that makes $q$ queries to $O_\psi$. For any integer $n > 0$, there is a quantum algorithm $\mathcal{B}$ that makes no queries to $O_\psi$ such that
\[
\TD\bigl(\mathcal{A}^{O_\psi}(\ket{\xi}), \mathcal{B}(\ket{\psi}^{\otimes n} \otimes \ket{\xi})\bigr) \leq \frac{q\sqrt{2}}{\sqrt{n+1}}.
\]
Moreover, the running time of $\mathcal{B}$ is polynomial in that of $\mathcal{A}$ and $n$.
\end{theorem}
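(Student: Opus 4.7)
\medskip

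\noindent\textbf{Proof proposal.} The plan is to construct $\cB$ explicitly so that each call to the oracle $O_\psi$ is replaced by a single $\psi$-independent unitary acting on the work register together with the $n$ copy registers, and then bound the trace distance using a hybrid argument. Concretely, let $\reg{W}$ be the work register that $\cA$ operates on, and let $\reg{C}_1,\dots,\reg{C}_n$ be $n$ auxiliary copies of the same Hilbert space. Define
\[
R \;:=\; \bbI - 2\,\Pi_{\mathrm{sym}}^{n+1},
\]
where $\Pi_{\mathrm{sym}}^{n+1}$ is the projector onto the symmetric subspace of $\reg{W}\otimes\reg{C}_1\otimes\cdots\otimes\reg{C}_n$. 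Crucially, $R$ depends only on the dimension, not on $\ket{\psi}$, and it can be implemented efficiently (e.g.\ via the standard symmetric-subspace test combined with a phase). The simulator $\cB$ initializes the copy registers to $\ket{\psi}^{\otimes n}$, then runs $\cA$ verbatim on $\ket{\xi}$, answering each of its $q$ queries to $O_\psi$ by applying $R$ on $\reg{W}\otimes\reg{C}_1\cdots\reg{C}_n$.

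The first key step is a \emph{single-query} bound: for every pure state $\ket{\phi}$ on $\reg{W}$,
\[
\bigl\| R\bigl(\ket{\phi}_{\reg{W}}\otimes\ket{\psi}^{\otimes n}\bigr) \;-\; \bigl(O_\psi\ket{\phi}\bigr)_{\reg{W}}\otimes\ket{\psi}^{\otimes n}\bigr\| \;=\; O\!\bigl(1/\sqrt{n+1}\bigr).
\]
To prove this, decompose $\ket{\phi}=\alpha\ket{\psi}+\beta\ket{\psi^\perp}$ with $\ket{\psi^\perp}\perp\ket{\psi}$. The $\alpha\ket{\psi}$ component is unchanged: $\ket{\psi}^{\otimes(n+1)}$ lies in the symmetric subspace, so $R$ acts as $-\bbI$ on it, matching the action of $O_\psi$. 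The $\beta\ket{\psi^\perp}$ component is the interesting one: a direct computation shows
\[
\Pi_{\mathrm{sym}}^{n+1}\bigl(\ket{\psi^\perp}_{\reg{W}}\otimes\ket{\psi}^{\otimes n}\bigr) \;=\; \tfrac{1}{n+1}\sum_{j=0}^{n}\bigl|\psi^\perp\text{ at position }j\bigr\rangle,
\]
which has norm $1/\sqrt{n+1}$. Expanding $R(\ket{\psi^\perp}_{\reg{W}}\otimes\ket{\psi}^{\otimes n})$ and subtracting the target $\ket{\psi^\perp}_{\reg{W}}\otimes\ket{\psi}^{\otimes n}$ (i.e.\ the action of $O_\psi\otimes\bbI$) yields an error vector of norm at most $2|\beta|/\sqrt{n+1}$, and optimizing over the global phase (via Lemma~\ref{lem:thetaED-to-TD} applied in the converse direction, or equivalently bounding $\sqrt{1-|\langle\cdot|\cdot\rangle|^2}$) improves the trace-distance constant to $\sqrt{2}/\sqrt{n+1}$.

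The second step is a hybrid argument. Define for $i=0,1,\dots,q$ the hybrid $H_i$ in which $\cA$'s first $q-i$ queries are answered using the true oracle $O_\psi$ (on $\reg{W}$, acting trivially on the copies) and the last $i$ queries are answered using $R$. Then $H_0$ equals $\cA^{O_\psi}(\ket{\xi})\otimes\ket{\psi}^{\otimes n}\bra{\psi}^{\otimes n}$ (the copies are untouched) and $H_q$ equals $\cB(\ket{\psi}^{\otimes n}\otimes\ket{\xi})$. The crucial point in passing from $H_i$ to $H_{i+1}$: the state just before the swapped query is the same in both hybrids and is of the form $\sigma_{\reg{W}}\otimes\ket{\psi}^{\otimes n}\bra{\psi}^{\otimes n}$, since all preceding queries in the prefix are $O_\psi$'s that do not touch the copy registers. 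Therefore the single-query bound applies on this state, and combining it with Theorem~\ref{thm:trace-dist-and-euclidean-dist} gives $\TD(H_i,H_{i+1}) \leq \sqrt{2}/\sqrt{n+1}$. Trace distance is nonincreasing under the subsequent (identical) suffix, so the bound is preserved to the end. A telescoping sum yields the claimed bound $q\sqrt{2}/\sqrt{n+1}$.

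The main obstacle is the per-query analysis in Step~1, which must pin down the constant $\sqrt{2}$ carefully from the symmetric-projector calculation; everything else is bookkeeping. A subtle design choice is the ordering of the hybrids: one must put $O_\psi$'s in the \emph{prefix} and $R$'s in the \emph{suffix}, so that at each hybrid transition the copy registers are provably in the product state $\ket{\psi}^{\otimes n}$ and the single-query bound is directly applicable. Reversing this ordering would force one to control the single-query error on an entangled state in the copy registers, which is exactly what the hybrid is designed to avoid.
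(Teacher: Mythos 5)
The paper states this result as a black-box citation (Theorem~4 of the referenced paper) and does not reproduce its proof, so there is no in-paper argument to compare your proposal against; I am therefore evaluating the proposal on its own terms.

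Your choice of simulator --- the reflection $R=\bbI-2\Pi_{\mathrm{sym}}^{n+1}$ about the symmetric subspace of the query register together with the $n$ copies --- is a natural, efficiently implementable, $\psi$-independent construction, and the hybrid ordering (true oracle on the prefix, $R$ on the suffix, so that the copy register is provably in the product state $\ket{\psi}^{\otimes n}$ at each transition) is correctly set up. The issue is the single-query constant, and it is a genuine gap rather than loose bookkeeping. You correctly compute $\Pi_{\mathrm{sym}}^{n+1}\bigl(\ket{\psi^\perp}\otimes\ket{\psi}^{\otimes n}\bigr)=\tfrac{1}{n+1}\sum_{j=0}^{n}\ket{e_j}$, but the resulting error vector $R(\ket{\phi}\otimes\ket{\psi}^{\otimes n})-(O_\psi\ket{\phi})\otimes\ket{\psi}^{\otimes n}=-\tfrac{2\beta}{n+1}\sum_{j}\ket{e_j}$ has norm $\tfrac{2|\beta|}{\sqrt{n+1}}$, and the corresponding inner product is $1-\tfrac{2|\beta|^2}{n+1}$, which is already real and non-negative. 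Thus there is no global phase to optimize, $\min_\theta\|u-e^{i\theta}v\|=\|u-v\|$, and the relevant direction of Lemma~\ref{lem:thetaED-to-TD} (combined with Theorem~\ref{thm:trace-dist-and-euclidean-dist}) only gives $\TD\le\|u-v\|$, not $\TD\le\|u-v\|/\sqrt{2}$ --- the $\sqrt2$ in that lemma goes the other way. Your argument therefore establishes $\TD\le \tfrac{2q}{\sqrt{n+1}}$, which is weaker than the stated $\tfrac{q\sqrt2}{\sqrt{n+1}}$ by a factor that approaches $\sqrt2$.

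This is not merely a suboptimal analysis of a correct simulator: the symmetric-subspace $R$ provably violates the stated bound. Take $q=1$, $n=2$, $\ket{\xi}=\ket{\psi^\perp}$, and let $\cA$ query once and output its register. The true output is $\ket{\psi^\perp}$, while your $\cB$ outputs (after tracing out the copies) $\tfrac19\ket{\psi^\perp}\!\bra{\psi^\perp}+\tfrac89\ket{\psi}\!\bra{\psi}$, giving trace distance $8/9\approx 0.889$, which exceeds $\sqrt{2}/\sqrt{3}\approx 0.816$. So the cited theorem must rely on a different simulator (or an averaging/conditioning step beyond a plain application of $R$), not just a sharper analysis of the one you propose. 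I would recommend either consulting the reference for the actual construction, or stating and proving only the weaker bound $2q/\sqrt{n+1}$ that your approach yields, which may well suffice for the downstream applications in this paper (Theorems~\ref{thm:efficient-comp-oracle-advice} and its uses) where the constant is absorbed into $\poly(q)$ anyway.
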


\subsection{Quantum Cryptographic Primitives}
\begin{definition}[One-way Puzzles]
\label{def:owp}
A one-way puzzle is a pair of sampling and verification algorithms $(\Gen,\mathsf{Ver})$
with the following syntax. 
\begin{itemize}
\item $\Gen(1^n) \rightarrow (s,k)$, is a QPT algorithm that outputs a pair of classical strings $(s,k)$.
We refer to $s$ as the puzzle and $k$ as its key. Without loss of generality we may assume that $k\in\bin^n$.
\item $\mathsf{Ver}(s,k) \rightarrow \top$ or $\bot$,
is a Boolean function that maps every pair of classical strings $(k,s)$ to either $\top$ or $\bot$.
\end{itemize}
These satisfy the following properties.
\begin{itemize}
\item {\bf Correctness.} Outputs of the sampler pass verification with overwhelming probability, i.e., 
$$\Prr_{(s,k) \leftarrow \Gen(1^n)} [\Ver(s,k) = \top] = 1 - \negl(n)$$ 
\item {\bf Security.}
Given $s$, it is (quantum) computationally infeasible to find $k$ satisfying $\Ver(s,k) = \top$, i.e., for every quantum polynomial-sized adversary $\cA$ and every quantum advice state $\ket{\tau} = \{\ket{\tau_n}\}_{n \in \mathbb{N}}$,
 $$\Prr_{(s,k) \leftarrow \Gen(1^n)}[\Ver(s,\mathcal{A}(\ket{\tau},s)) = \top] = \negl(n)$$
\end{itemize}
\end{definition}

\begin{definition}[$\varepsilon$-Distributional One-way Puzzles]
\label{def:dist-owp}
    For $\varepsilon: \bbN \rightarrow \bbR$, a $\varepsilon$-distributional one-way puzzle is defined by a quantum polynomial-time generator $\Gen(1^n)$ that outputs a pair of classical strings $(s,k)$ such that 
    for every quantum polynomial-time adversary $\cA$, every (non-uniform, quantum) advice ensemble $\ket{\tau} = \{ \ket{\tau_n}\}_{n \in \mathbb{N}}$, for large enough $n \in \mathbb{N}$, 
     \[ \mathsf{SD} \left(
    \{s,k\}\, \{s,\cA(\ket{\tau},s)\} \right)
    \geq \varepsilon(n)\] where $(s,k)\leftarrow \Gen(1^n)$.
\end{definition}
We will sometimes simply refer to distributional one-way puzzles. This is taken to mean $1/p(n)$-distributional one-way puzzles for some non-zero polynomial $p$. 

The following theorem shows that distributional one-way puzzles can be amplified to (standard) one-way puzzles.
\begin{theorem}[Theorem 33 from \cite{CGG24}, rephrased]\label{thm:owp-amplification}
If there exists a polynomial $p(\cdot)$ for which $1/p(n)$-distributional one-way puzzles exist, then one-way puzzles exist.
\end{theorem}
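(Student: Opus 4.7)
The plan is to prove the contrapositive via an Impagliazzo-Luby-style amplification. Assume that standard one-way puzzles (as in \defref{owp}) do not exist; I will show that for every polynomial $p$, the putative $1/p(n)$-distributional one-way puzzle $(\Gen, \Ver)$ of \defref{dist-owp} can be broken by a QPT $\cA$ satisfying
\[
\SD\!\left(\{s,k\},\, \{s, \cA(s)\}\right) < \frac{1}{p(n)}
\]
for infinitely many $n$, contradicting the assumed distributional hardness.

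The construction uses 2-universal hashing to reduce distributional inversion to standard inversion at many scales. For each $i \in [n]$, I define a candidate standard one-way puzzle $(\Gen_i, \Ver_i)$: $\Gen_i(1^n)$ runs $(s,k) \gets \Gen(1^n)$, samples $h$ uniformly from a 2-universal hash family $\cH_i$ with output length $i$, and outputs the puzzle $(s, h, h(k))$ together with key $k$; the verifier $\Ver_i((s, h, z), k')$ accepts iff $\Ver(s, k') = \top$ and $h(k') = z$. Correctness is immediate from the correctness of $(\Gen, \Ver)$. By assumption, for each $i$ there is a QPT adversary $\cA_i$ succeeding with inverse-polynomial probability $\eta_i(n)$ for infinitely many $n$.

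Next, I build the distributional breaker $\cA(s)$ by sampling $i \gets [n]$ uniformly, sampling $h \gets \cH_i$ and $z \gets \bin^i$ uniformly, and returning $\cA_i(s, h, z)$ (or $\bot$ on failure). The analysis follows the classical Impagliazzo-Luby template: fix a puzzle $s$ and let $i^*(s)$ be approximately $H_\infty(k \mid s)$ under $\Gen$. By the leftover hash lemma, when $i = i^*(s)$ the pair $(h, h(k))$ for $(s,k) \gets \Gen$ is statistically close to $(h, U_i)$ with $U_i$ uniform on $\bin^i$. Hence, conditioned on $\cA_{i^*}$ succeeding, its output on uniform $h, z$ is statistically close to a fresh sample from the true conditional distribution of $k$ given $s$ under $\Gen$. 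Guessing $i$ uniformly from $[n]$ incurs only a $1/n$ loss in success probability, which is absorbed into the target $1/p(n)$ distance.

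The main obstacle, and the reason this does not reduce line-by-line to the classical Impagliazzo-Luby argument, is that the distribution of $k$ given $s$ under $\Gen$ need not be uniform on its support, whereas the pre-image distribution $f^{-1}(y)$ in the function setting is naturally uniform. I would handle this by partitioning puzzles $s$ into $O(n)$ buckets according to $H_\infty(k \mid s)$ and arguing that within each bucket the matching hash size $i$ makes the sampled $k'$ statistically close to the true conditional distribution; alternatively, one could first apply a Nisan-style flattening to reduce to the near-flat case and then invoke the classical argument directly. Summing over buckets, after the $1/n$ loss from guessing $i$, the overall statistical distance drops below $1/p(n)$ infinitely often, contradicting the assumed distributional hardness and completing the proof.
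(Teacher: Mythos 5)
The paper does not actually prove this theorem; it cites it directly as Theorem~33 of \cite{CGG24}, so there is no in-paper proof to compare against. Your Impagliazzo--Luby strategy via 2-universal hashing is the right template, and you are correct that non-flatness of the conditional $k\mid s$ is a genuine departure from the classical OWF setting. But the sketch as written has two gaps beyond that. First, ``OWPuzzles do not exist'' only yields inverse-polynomial inverters $\cA_i$; to drive $\SD$ below $1/p(n)$ you silently need success probability close to $1$, which requires a weak-to-strong amplification step for one-way puzzles (a Yao-type direct-product lemma adapted to QPT samplers and quantum adversaries) that you neither state nor invoke. Moreover, the sets of $n$ on which the various $\cA_i$ succeed are not guaranteed to align, so running ``$\cA_i$ for a random $i$'' may fail on every $n$.

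Second, and more seriously, the claim that ``guessing $i$ uniformly from $[n]$ incurs only a $1/n$ loss in success probability, which is absorbed into the target $1/p(n)$ distance'' conflates success probability with statistical distance. Your $\cA(s)$ outputs a uniform mixture over $i\in[n]$, but for $i\ne i^*(s)$ the output of $\cA_i$ is not close to the conditional: for $i$ much smaller than $i^*(s)$ there are many hash preimages and nothing constrains which one the inverter returns, so it can be arbitrarily biased; for $i$ much larger the inverter typically fails. The resulting mixture has $\SD$ on the order of $1-1/n$ from the target, not $1/p(n)$, and outputting $\bot$ on failure makes this worse, since the $\bot$ mass contributes directly to the $\SD$. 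The standard IL89/Goldreich fix is to \emph{estimate} $i^*(s)$ (e.g.\ sweep $i$ downward and commit to the largest $i$ for which inversion repeatedly succeeds), or, cleaner here, to run the argument in the forward direction: pack $i$ into a single candidate $\Gen'$, show $\Gen'$ is only a \emph{weak} one-way puzzle, amplify, and only then reduce a strong inverter for $\Gen'$ to a distributional inverter for $\Gen$. This also disposes of the infinitely-often bookkeeping. Finally, the bucketing/flattening you gesture at for the non-flat case is the right idea but is not a routine patch: the bucket weights are not efficiently computable from $s$, so the bucket index must itself become a guessed hashing parameter of the candidate puzzle and the resulting inverse-polynomial error terms have to be tracked through the reduction. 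As written, the argument does not get below the $1/p(n)$ bar.
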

\subsubsection{Proofs of Quantumness}

\begin{definition}[Proof of Quantumness](Adapted from \cite{mori-other})\label{def:poq}
A $(c(\lambda), s(\lambda))$-\emph{proof of quantumness} consists of a pair of algorithms: a QPT prover $P$ and a PPT verifier $V$, as well as correctness and soundness parameters $c(\cdot)$ and $s(\cdot)$.  
$P$ and $V$ interact over multiple rounds, at the end of which $V$ either accepts or rejects.  
We write
$\langle P,V\rangle(1^\lambda)$
to denote the final outcome of running the protocol with security parameter $\lambda$. We require the protocol to satisfy the following properties:
\begin{itemize}
    \item \textbf{Non-Triviality.} There exists a polynomial $t$ such that for all large enough $\lambda$,
\[
c(\lambda) - s(\lambda) \geq \frac{1}{t(\lambda)}.
\]
\item \textbf{Correctness.} For all large enough $\lambda$,
\[
\Pr[\langle P,V\rangle(1^\lambda)=1] \geq c(\lambda).
\]
\item \textbf{Soundness}
For all PPT provers $\widetilde{P}$, for infinitely many $\lambda$,
\[
\Pr[\langle \widetilde{P},V\rangle(1^\lambda)=1] \leq s(\lambda)
\]
\end{itemize}
If the final step of the verifier is computationally unbounded we call the construction an IV-PoQ. If the verifier messages consist of uniformly random strings and the final verification step only depends on the transcript we call the construction public coin. We will often drop the completeness and soundness parameters from the name of the primitive.
\end{definition}

\begin{definition}[(Inefficiently) Falsifiable Assumption]
\kabir{can be fleshed out a bit more}
\label{def:falsifiable}
A \emph{falsifiable assumption} consists of a (potentially inefficient) challenger $C$ and constant $\thres \in [0,1]$.  
The challenger interacts with an adversary and outputs either accept or reject. We say that an adversary $A$ \emph{breaks the assumption} if there exists a polynomial $p$ such that for infinitely many $\lambda$,
\[
\Pr\big[ \langle A(1^\lambda), C(1^\lambda) \rangle = 1 \big] \geq \thres + \frac{1}{p(\lambda)}.
\]
\end{definition}

\begin{definition}[PoQ with Black-Box Reduction]
\label{def:poq-bb}
A \emph{proof of quantumness} with black-box reduction to assumption $(C,\thres)$ consists of three algorithms: a QPT prover $P$, a verifier $V$, and a QPT reduction (potentially with non-uniform quantum advice) $R$, as well as correctness and soundness parameters $c(\cdot)$ and $s(\cdot)$.  
$P$ and $V$ interact over multiple rounds, at the end of which $V$ either accepts or rejects.  
We write
$\langle P,V\rangle(1^\lambda)$
to denote the final outcome of running the protocol with security parameter $\lambda$. We require the protocol to satisfy the following properties:
\begin{itemize}
    \item \textbf{Non-Triviality.} There exists a polynomial $t$ such that for all large enough $\lambda$,
\[
c(\lambda) - s(\lambda) \geq \frac{1}{t(\lambda)}.
\]
\item \textbf{Correctness.} For all $\lambda$,
\[
\Pr[\langle P,V\rangle(1^\lambda)=1] \geq c(\lambda).
\]
\item \textbf{Soundness by Reduction.}
For all polynomials $p'$, there exists a polynomial $p$ such that for all classical provers $\widetilde{P}$ where for infinitely many $\lambda$,
\[
\Pr[\langle \widetilde{P},V\rangle(1^\lambda)=1] \geq s(\lambda) + \frac{1}{p'(\lambda)},
\]
it is the case that for infinitely many $\lambda$,
\[
\Pr[\langle R^{\widetilde{P}},C\rangle(1^\lambda)=1] \geq \thres + \frac{1}{p(\lambda)}.
\]
\end{itemize}
If the final step of the verifier is  computationally unbounded we call the construction an IV-PoQ with black-box reduction to $(C,\thres)$.
\end{definition}
\begin{definition}[Interactive Puzzle](Adapted from \cite{PassV20})
    An interactive puzzle is synonymous with a public coin proof of quantumness (Definition \ref{def:poq}) except the prover is allowed to be computationally unbounded.
\end{definition}
We will use the following theorem about interactive puzzles.
\begin{theorem}(Adapted from \cite{PassV20})\label{thm:pass-v}
The existence of an $\ell$-round interactive puzzle for constant $\ell \geq 4$ implies either the existence of an $(\ell-2)$-round interactive puzzle or uniform i.o OWFs.
\end{theorem}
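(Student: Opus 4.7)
The plan is to adapt the Pass--Venkitasubramaniam round-collapse template to the public-coin setting of interactive puzzles. Let $(P,V)$ denote the given $\ell$-round puzzle with completeness/soundness $c,s$, and assume, toward the second alternative, that no uniform i.o.\ one-way functions exist. I will construct a new $(\ell-2)$-round public-coin protocol $(P',V')$ that plays rounds $1,\ldots,\ell-3$ identically to $(P,V)$. In round $\ell-2$, $P'$ sends $(p_{\ell-2};\{(v^{(i)},p^{(i)})\}_{i=1}^{t})$ where $p_{\ell-2}$ is the original round-$(\ell-2)$ message, each $v^{(i)}$ is a uniformly sampled challenge emulating the verifier's next coin toss, and each $p^{(i)}$ is the honest completion with respect to $v^{(i)}$. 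The new verifier $V'$ runs the original checks on the first $\ell-3$ rounds and then accepts iff the fraction of indices $i\in[t]$ for which $(p_{\ell-2},v^{(i)},p^{(i)})$ leads $V$ to accept exceeds a fixed threshold strictly between $s$ and $c$. Choosing $t$ polynomially large, completeness follows from Chernoff concentration and the fact that the \emph{unbounded} honest prover of $(P,V)$ can sample each $(v^{(i)},p^{(i)})$ faithfully.

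For soundness, suppose toward contradiction that some PPT $\tilde P'$ breaks $(P',V')$ with advantage $1/q(n)$ for infinitely many $n$. I turn $\tilde P'$ into a PPT $\tilde P$ breaking $(P,V)$. $\tilde P$ runs $\tilde P'$ through rounds $1,\ldots,\ell-3$; when $V$ issues $v_{\ell-1}$, $\tilde P$ must output a single accepting $p_\ell$. Invoking the Impagliazzo--Luby equivalence in its uniform i.o.\ form (the contrapositive of non-existence of uniform i.o.\ OWFs), the efficient function $f$ mapping $\tilde P'$'s random coins and first-$\ell-3$ transcript to its round-$(\ell-2)$ bundle admits a distributional inverter $\mathsf{Inv}$ whose output is statistically close to uniform on the preimages of its input. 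The prover $\tilde P$ first runs $\tilde P'$ to obtain a bundle $B$, picks $i^*\in[t]$ uniformly, replaces the $i^*$-th challenge slot of $B$ with the true $v_{\ell-1}$ to obtain $B'$, then feeds $B'$ to $\mathsf{Inv}$ to recover coins $r^*$ for which $\tilde P'(r^*)$ outputs a bundle $B^*$ matching $B'$ on the $i^*$-th challenge. Finally $\tilde P$ outputs the corresponding response slot of $B^*$ as $p_\ell$.

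The main obstacle is controlling the statistical cost of the programming step. I would carry out a hybrid argument between three distributions: (a) the true distribution of $\tilde P'$'s bundles, (b) the distribution in which the $i^*$-th challenge slot is freshly resampled uniformly, and (c) the distribution output by $\mathsf{Inv}$ on the programmed bundle. Transition (a)$\to$(b) is \emph{exact} because the honestly-produced challenge slots are themselves uniform and the $V'$-acceptance condition is permutation-symmetric over slots. Transition (b)$\to$(c) incurs only the inverter's statistical slack, which by choice of parameters can be made smaller than $1/(2q(n)t(n))$. Combined with the fact that $V'$ accepts only bundles whose empirical acceptance rate exceeds the chosen threshold above $s$, the $i^*$-th completion gives $V$ an accepting transcript with advantage $\Omega(1/(q(n)t(n)))$ over $s$, yielding an efficient cheating prover for $(P,V)$ on the same infinite set of security parameters shared by $\tilde P'$ and the OWF-nonexistence assumption, contradicting the soundness of the original puzzle. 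The most delicate quantifier work will be matching the ``infinitely often'' slices for the three objects (the cheater $\tilde P'$, the non-OWF inverter, and the original puzzle's soundness) so that the contradiction lands on the same $n$'s; this is where the uniform i.o.\ hypothesis is essential.
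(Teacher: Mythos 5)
Your round-collapse differs from the construction the paper actually imports from \cite{PassV20} (and describes explicitly when ``opening up'' the proof in Section~5): there, the \emph{verifier} samples the extra challenges $r_1,\ldots,r_{p(\lambda)}$ and sends them alongside the original round-$(\ell-3)$ message; the prover then replies with $m,m_1',\ldots,m_{p(\lambda)}'$, and $V'$ requires all (or a threshold of) the completions to verify. You instead have the \emph{prover} generate the challenge slots $v^{(i)}$ itself. This is not just a cosmetic change, and it breaks soundness: a cheating $\widetilde{P}'$ is under no obligation to sample the $v^{(i)}$ uniformly and can instead fix them to challenges for which it already knows a valid continuation. Nothing in $V'$'s threshold test (which only checks that some fraction of the prover-chosen $(v^{(i)},p^{(i)})$ verify under $V$) detects this. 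Concretely, if the original puzzle has a single ``magic'' challenge $v^\ast$ on which the prover can always answer, $\widetilde{P}'$ can set every $v^{(i)}=v^\ast$; $V'$ accepts with probability $1$ while the original puzzle remains perfectly sound, so the reduction cannot recover a cheating prover for $(P,V)$.

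This shows up as a concrete error in your hybrid argument: the transition (a)$\to$(b) is asserted to be ``exact because the honestly-produced challenge slots are themselves uniform,'' but the object under analysis is the \emph{cheating} prover $\widetilde{P}'$, whose challenge slots need not be uniform, so (a) and (b) can be far apart. The downstream distributional-inversion step inherits the same gap: after programming the $i^\ast$-th slot to the verifier's fresh $v_{\ell-1}$, the target bundle $B'$ may lie entirely outside the support of the map $r\mapsto\widetilde{P}'(r)$, in which case the conditional distribution the inverter is supposed to approximate is ill-defined and no preimage exists. Placing the challenge strings on the verifier's side, as in the paper, removes both problems: the challenges are forcibly uniform in the security game, and what one must invert is the map from $(\text{prover coins},r_1,\ldots,r_p)$ to $m$ with the target challenge $v_{\ell-1}$ substituted into one of the inverter's \emph{input} slots, which is well posed. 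I would redo the argument with the verifier sampling the extra challenges.
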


\section{Cloning Hardness}\label{sec:cloninghardness}
In this section we define a new set of primitives that capture the forms of cloning hardness we build from proofs of quantumness. We also show a simple amplification theorem.
\subsection{Definitions}
\begin{definition}[Weak Minischemes]
    \label{def:weak-minischeme} 
For a polynomial $n$ and functions $c:\mathbb{N}\to [0,1]$, $s:\mathbb{N}\to [0,1]$, 
we define $(n(\lambda),c(\lambda),s(\lambda))$--weak minischemes as a pair of QPT algorithms 
$(\Samp,\Ver)$ such that:
\begin{itemize}
    \item $\Samp(1^\lambda)$ outputs a string ${\srno}$ and a pure quantum state $\ket{\psi_\srno}$.  
    We require $\ket{\psi_{\srno}}$ to be unique given $\srno$ and $\lambda$.
    \item $\Ver({\srno},\rho)$ outputs a bit indicating accept or reject.
\end{itemize}

\noindent The following conditions must hold:
\begin{itemize}
    \item \textbf{Correctness:} For all large enough $\lambda \in \mathbb{N}$,
    \[
    \Pr\left[\Ver({\srno},\ket{\psi_{\srno}}\!\bra{\psi_{\srno}})=1 \middle| ({\srno},\ket{\psi_{\srno}}) \gets \Samp(1^\lambda)\right] 
    \ge c(\lambda).
    \]

    \item \textbf{Soundness:} For all QPT adversaries $\mathcal{A}$ with advice $\{\ket{\phi_\lambda}\}_\lambda$,  
    for all large enough $\lambda$,
    \[
    \Pr\left[ \forall t \in [n(\lambda)], 
    \Ver({\srno},\rho_{\reg{A}_t})=1 \middle| \begin{array}{l}
         ({\srno},\ket{\psi_{\srno}}) \gets \Samp(1^\lambda)\\
         \rho_{\reg{A}_1 \cdots \reg{A}_{n(\lambda)}} \gets \mathcal{A}({\srno},\ket{\psi_{\srno}}, \ket{\phi_\lambda})
    \end{array}
     \right] 
    \le s(\lambda).
    \]
\end{itemize}
If $c(\lambda) \ge 1 - \negl(\lambda)$ and $s(\lambda) \le \negl(\lambda)$, 
we refer to the scheme as a $n(\lambda)$-weak minischeme.  
If the security only holds for QPT adversaries and for each adversary there are infinitely many $\lambda$ such that security holds 
(as opposed to for all large enough $\lambda$), we call the scheme a $(n(\lambda), c(\lambda), s(\lambda))$-weak uniform i.o. minischeme.
\end{definition}
\begin{definition}[Weak Lightning]
    \label{def:weak-lightning} 
For a polynomial $n$ and functions $c:\mathbb{N}\to [0,1]$, $s:\mathbb{N}\to [0,1]$ 
we define $(n(\lambda),c(\lambda),s(\lambda))$--weak lightning as a pair of QPT algorithms 
$(\Samp,\Ver)$ such that:
\begin{itemize}
    \item $\Setup(1^\lambda)$ outputs classical public parameters $\pp$.  
    \item $\Samp(\pp)$ outputs a string ${\srno}$ and a pure quantum state $\ket{\psi_{\srno}}$.  
    We require $\ket{\psi_{\srno}}$ to be unique given ${\srno}, \pp,$ and $\lambda$.
    \item $\Ver(\pp,{\srno},\rho)$ outputs a bit indicating accept or reject.
\end{itemize}

\noindent The following conditions must hold:
\begin{itemize}
    \item \textbf{Correctness:} For all large enough $\lambda \in \mathbb{N}$,
    \[
    \Pr\left[\Ver(\pp, {\srno},\ket{\psi_{\srno}}\!\bra{\psi_{\srno}})=1 \middle|\begin{array}{l}
        \pp \leftarrow \Setup(1^\lambda)\\
         ({\srno},\ket{\psi_{\srno}}) \gets \Samp(\pp)
    \end{array} \right] 
    \ge c(\lambda).
    \]

    \item \textbf{Soundness:} For all QPT adversaries $\mathcal{A}$ with advice $\{\ket{\phi_\lambda}\}_\lambda$,  
    for all large enough $\lambda$,
    \[
    \Pr\left[ \forall t \in [n(\lambda)], 
    \Ver(\pp, {\srno},\rho_{\reg{A}_t})=1 \middle| \begin{array}{l}
        \pp \leftarrow \Setup(1^\lambda)\\
         {\srno}, \rho_{\reg{A}_1 \cdots \reg{A}_{n(\lambda)}} \gets \mathcal{A}(\pp, \ket{\phi_\lambda})
    \end{array}
     \right] 
    \le s(\lambda).
    \]
\end{itemize}
If $c(\lambda) \ge 1 - \negl(\lambda)$ and $s(\lambda) \le \negl(\lambda)$, 
we refer to the scheme as $n(\lambda)$-weak lightning.  
If the security only holds for QPT adversaries and for each adversary there are infinitely many $\lambda$ such that security holds 
(as opposed to for all large enough $\lambda$), we call the scheme $(n(\lambda), c(\lambda), s(\lambda))$-weak uniform i.o. lightning.
\end{definition}
\begin{definition}[Weak Signature Tokens]
    \label{def:weak-tokens}
For a polynomial $n$ and functions $c:\mathbb{N}\to [0,1]$, $s:\mathbb{N}\to [0,1]$ 
we define $(n(\lambda),c(\lambda),s(\lambda))$-weak signature tokens as a tuple of QPT algorithms 
$(\Samp,\Sign,\Ver)$ such that:
\begin{itemize}
    \item $\Samp(1^\lambda)$ outputs a string $\pp$ and a pure quantum state $\ket{\psi_\pp}$.  
    We require $\ket{\psi_\pp}$ to be unique given $\pp$ and $\lambda$.
    \item $\Sign(\pp, \ket{\psi}, r)$ outputs a classical signature $\sig$.
    \item $\Ver(\pp, r,\sig)$ outputs a bit indicating accept or reject.
\end{itemize}

\noindent The following conditions must hold:
\begin{itemize}
    \item \textbf{Correctness:} For all large enough $\lambda \in \mathbb{N}$,
    \[
    \Pr\left[\Ver(\pp,r,\sig)=1 \middle| \begin{array}{r}
        (\pp,\ket{\psi_\pp}) \gets \Samp(1^\lambda)\\
        r \leftarrow \bin^\lambda\\
        \sig\leftarrow\Sign(\pp, \ket{\psi_\pp}, r)
    \end{array}\right] 
    \ge c(\lambda).
    \]

    \item \textbf{Soundness:} For all QPT adversaries $\mathcal{A}$ with advice $\{\ket{\phi_\lambda}\}_\lambda$,  
    for all large enough $\lambda$,
    \[
    \Pr\left[ \begin{array}{l}
      \forall t \in [n(\lambda)], \\
      \Ver(\pp, r_t,\sig_t)=1
    \end{array}
     \middle| \begin{array}{r}
        (\pp,\ket{\psi_\pp}) \gets \Samp(1^\lambda)\\
        r_1, \ldots, r_{n(\lambda)} \leftarrow \bin^\lambda\\
        \sig_1, \ldots, \sig_{n(\lambda)}\leftarrow\cA(\pp, \ket{\psi_\pp}, r_1, \ldots, r_{n(\lambda)}, \ket{\phi_\lambda})
    \end{array}
     \right] 
    \le s(\lambda).
    \]
\end{itemize}
If $c(\lambda) \ge 1 - \negl(\lambda)$ and $s(\lambda) \le \negl(\lambda)$, 
we refer to the scheme as a $n(\lambda)$-weak token scheme.  
If the security only holds for QPT adversaries and for each adversary there are infinitely many $\lambda$ such that security holds 
(as opposed to for all large enough $\lambda$), we call the scheme a $(n(\lambda), c(\lambda), s(\lambda))$-weak uniform i.o. signature token scheme.
\end{definition}
\begin{definition}[Weak One-Shot Signatures]
    \label{def:weak-sig}
For a polynomial $n$ and functions $c:\mathbb{N}\to [0,1]$, $s:\mathbb{N}\to [0,1]$ 
we define $(n(\lambda),c(\lambda),s(\lambda))$-weak one-shot signatures (OSS) as a tuple of QPT algorithms 
$(\Samp,\Sign,\Ver)$ such that:
\begin{itemize}
    \item $\Setup(1^\lambda)$ outputs a string $\pp$.  
    \item $\Samp(\pp)$ outputs a string $\srno$ and a pure quantum state $\ket{\psi_\srno}$.
    We require $\ket{\psi_\srno}$ to be unique given $\pp,\srno,$ and $\lambda$.
    \item $\Sign(\pp, \srno, \ket{\psi_\srno}, r) $ outputs a classical signature $\sig$.
    \item $\Ver(\pp, \srno, r,\sig)$ outputs a bit indicating accept or reject.
\end{itemize}

\noindent The following conditions must hold:
\begin{itemize}
    \item \textbf{Correctness:} For all large enough $\lambda \in \mathbb{N}$,
    \[
    \Pr\left[\Ver(\pp,\srno,r,\sig)=1 \middle| \begin{array}{r}
        \pp \leftarrow \Setup(1^\lambda)\\
        (\srno,\ket{\psi_\srno}) \gets \Samp(\pp)\\
        r \leftarrow \bin^\lambda\\
        \sig\leftarrow\Sign(\pp,\srno, \ket{\psi_\pp}, r)
    \end{array}\right] 
    \ge c(\lambda).
    \]

    \item \textbf{Soundness:} For all QPT adversaries $\mathcal{A}$ with advice $\{\ket{\phi_\lambda}\}_\lambda$,  
    for all large enough $\lambda$,
    \[
    \Pr\left[ \forall t \in [n(\lambda)], 
    \Ver(\pp, \srno,r_t,\sig_t)=1 \middle| \begin{array}{r}
     \pp \leftarrow \Setup(1^\lambda)\\
        (\srno,\sigma) \gets \cA(\pp, \ket{\phi_\lambda})\\
        r_1, \ldots, r_{n(\lambda)} \leftarrow \bin^\lambda\\
        \sig_1, \ldots, \sig_{n(\lambda)}\leftarrow\cA(\sigma, r_1, \ldots, r_{n(\lambda)})
    \end{array}
     \right] 
    \le s(\lambda).
    \]
\end{itemize}
If $c(\lambda) \ge 1 - \negl(\lambda)$ and $s(\lambda) \le \negl(\lambda)$, 
we refer to the scheme as a $n(\lambda)$-weak OSS scheme.  
If the security only holds for QPT adversaries and for each adversary there are infinitely many $\lambda$ such that security holds 
(as opposed to for all large enough $\lambda$), we call the scheme a $(n(\lambda), c(\lambda), s(\lambda))$-weak uniform i.o.  OSS scheme.
\end{definition}

\subsection{Amplification For Weak Minischemes}
\begin{theorem}\label{thm:minischeme-amp}
Suppose there exists a $(n(\lambda), c(\lambda), s(\lambda))$-weak (uniform i.o.) minischeme where such that there exists a polynomial\ $t(\lambda)$ such that 
\[
  (1-s(\lambda)) > n(\lambda)(1-c(\lambda)) + \frac{1}{t(\lambda)}.
\]
Then there exists a $(n(\lambda))$-weak (uniform i.o.) minischeme.
\end{theorem}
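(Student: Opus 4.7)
The plan is to amplify via parallel repetition with threshold verification. Define the amplified scheme $(\Samp', \Ver')$ as follows: $\Samp'(1^\lambda)$ runs the base $\Samp$ independently $\ell = \poly(\lambda, n, t)$ times, producing serial number $\srno' = (\srno_1, \ldots, \srno_\ell)$ and state $\bigotimes_i \ket{\psi_{\srno_i}}$; $\Ver'(\srno', \rho)$ applies the base $\Ver$ to each of the $\ell$ subregisters of $\rho$ in parallel and accepts iff at least $\tau \ell$ of them accept. By the gap hypothesis $(1-s) > n(1-c) + 1/t$, the interval $\bigl(1 - (1-s)/n,\ c\bigr)$ is nonempty of length at least $1/(nt)$, so I pick $\tau$ inside it with $\Theta(1/(nt))$ slack on either side.

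Correctness is a direct Hoeffding bound on $\ell$ independent Bernoulli$(c)$ random variables: acceptance probability is at least $1 - \exp(-\Omega(\ell \cdot (c-\tau)^2)) = 1 - \exp(-\Omega(\ell/(nt)^2))$, which is overwhelming for $\ell = \omega((nt)^2 \log \lambda)$.

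For soundness I would first invoke pigeonhole. If the adversary outputs $n$ amplified banknotes each passing threshold verification, then the total number of verifying (position, copy) pairs is at least $n \tau \ell$, so the number of positions at which all $n$ copies verify---call this quantity $k$---is at least $\ell(1 - n(1-\tau)) \geq \ell(s + \Omega(1/t))$. Next, for each single position $i$, a reduction that freshly samples banknotes for the other $\ell - 1$ positions turns the amplified adversary into a legal single-position adversary against the base minischeme, so $\Pr[V_i = n] \leq s$ and hence $\E[k] \leq s\ell$. The goal is to convert the gap between $\E[k] \leq s\ell$ and the lower bound $k \geq \ell(s + \Omega(1/t))$ forced by success into a negligible soundness bound.

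This last step is the principal obstacle: Markov's inequality alone yields only $\Pr[k \geq \ell(s + \Omega(1/t))] \leq s/(s + \Omega(1/t))$, which is bounded away from $1$ but not negligible. To finish, I would sample a random size-$m$ subset of positions; conditioned on amplified success, with probability at least $(1 - n(1-\tau))^m$ the subset lies entirely in the $k$-set, so the amplified adversary yields an attack succeeding on an $m$-fold direct product of the base cloning game. Combined with a quantum threshold direct product theorem for the cloning game bounding the $m$-fold success probability by $s^m + \negl$, this gives $\Pr[\text{amp succ}] \leq (s/(1 - n(1-\tau)))^m + \negl$, which is negligible for $m = \omega(t \log \lambda)$ by the gap condition. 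The main technical work therefore lies in proving the quantum direct product theorem, which I would establish by a hybrid argument that inductively conditions on success at positions $1, \ldots, i-1$ and extracts a single-position attack at position $i$, carefully handling entanglement across positions using standard quantum hardness-amplification techniques (Impagliazzo-style hardcore reasoning adapted to quantum games, or the compressed-oracle tools developed later in the paper). Preservation of the uniform i.o.\ flavor is automatic because the reduction is uniform and the failure events are defined on the same infinite sequence of security parameters.
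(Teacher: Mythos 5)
Your proposal retraces the paper's argument quite closely for the main reduction. The construction (parallel repetition of the base scheme with a threshold verifier), the Chernoff bound for correctness, the pigeonhole count of ``all-pass positions,'' and the per-position embedding reduction are all present in the paper: its adversary $\cA'$ picks a uniformly random position $i^*$, embeds its challenge there, freshly samples the other $\ell-1$ slots, runs $\cA$, and outputs the $i^*$-th subsystems. Your choice of threshold $\tau$ with $\Theta(1/(nt))$ slack and your intermediate bound $\Pr[\text{amp succ}] \leq s/(s+\Omega(1/t))$ correspond to the paper's threshold $c - \tfrac{1}{2n(\lambda)t(\lambda)}$ and its conclusion that the repeated scheme is a $(n(\lambda), 1-\negl(\lambda), 1 - \tfrac{1}{\poly(\lambda)})$-weak minischeme.

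The remaining step --- boosting $1 - 1/\poly$ soundness to negligible --- is where there is a genuine gap, but it is one you share with the paper. The paper concludes by asserting, with no elaboration, that such schemes imply $n(\lambda)$-weak minischemes ``by a simple parallel repetition argument.'' You are more careful in identifying that this step requires a quantum threshold direct-product theorem for the cloning game, but your plan --- a random-subset embedding plus a hybrid/Impagliazzo-style amplification --- is only a sketch, and direct-product theorems for entangled quantum games do not follow from ``standard'' hardness-amplification techniques: the adversary holds an entangled multi-register state, and the inductive conditioning you describe does not obviously control correlations across positions (naive quantum parallel repetition is known to be delicate). So your proposal lands on the same intermediate primitive as the paper and correctly isolates the hard remaining step, which is a virtue; but to make this a full proof you would need to actually prove, or cite, a quantum parallel-repetition / threshold-direct-product result applicable to the weak-minischeme cloning game, and you should not present that ingredient as routine.
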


\begin{proof}
We will prove for weak minischemes with standard security. Essentially the same proof applies for weak uniform i.o. minischemes.

First we show the existence of $(n(\lambda), 1-\mathsf{negl}(\lambda), 1-\tfrac{1}{p(\lambda)})$-weak minischemes for some polynomial $p(\lambda)$.  
Let $(\mathsf{Samp}, \mathsf{Ver})$ be a $(n(\lambda), c(\lambda), s(\lambda))$-weak minischeme.  
Set $\ell(\lambda) := \lambda t(\lambda)$ and define $\mathsf{Samp}', \mathsf{Ver}'$ as:
\begin{itemize}
    \item $\mathsf{Samp}'(1^\lambda):$
    \begin{itemize}
        \item For $i \in [\ell(\lambda)]$:
        \begin{itemize}
            \item $s_i, \ket{\psi_i}\leftarrow\Samp(1^\lambda)$
        \end{itemize}
        \item Output $(s_1 \cdots s_{\ell(\lambda)}, \ket{\psi_1}\otimes \cdots \otimes \ket{\psi_{\ell(\lambda)}})$
    \end{itemize}
    \item $\mathsf{Ver}'(s,\sigma):$
    \begin{itemize}
        \item Parse $s$ as $s_1, \ldots, s_{\ell(\lambda)}$
        \item Apply $\mathsf{Ver}(s_1,\cdot) \otimes \cdots \otimes \mathsf{Ver}(s_{\ell(\lambda)},\cdot)$ to $\sigma$
        \item Accept only if at least $c(\lambda) - \tfrac{1}{2t(\lambda)}$ fraction accept.
    \end{itemize}
\end{itemize}
We claim that $(\mathsf{Samp}', \mathsf{Ver}')$ is a $(n(\lambda), 1-\mathsf{negl}, 1-\tfrac{1}{2n(\lambda)t(\lambda)})$-weak minischeme.  
The correctness follows from a Chernoff bound argument. Suppose there exists an adversary $\mathcal{A}$ that breaks security.  
We define an adversary $\mathcal{A}'$ that breaks security of the underlying weak minischeme for the same $\lambda$ values on which $\cA$ breaks security.
\begin{itemize}
    \item $\mathcal{A}'(s, \ket{\psi}):$
    \begin{itemize}
        \item Sample $i^* \gets [\ell(\lambda)]$.
        \item Sets $s_{i^*}, \ket{\psi_{i^*}} \gets (s, \ket{\psi})$.
        \item For $i \in [\ell(\lambda)] \setminus \{i^*\}:$
        \begin{itemize}
            \item Sample $(s_i, \ket{\psi_i}) \gets \mathsf{Samp}(1^\lambda)$
        \end{itemize}
        \item $\rho_{\reg{A}_1,\ldots,\reg{A}_{n(\lambda)}} \leftarrow \mathcal{A}(s_1,\ldots,s_{\ell(\lambda)}, \ket{\psi_1}\otimes\cdots\otimes\ket{\psi_{\ell(\lambda)}})$
        \item Output $\rho_{\reg{A}_{1,i^*},\reg{A}_{2,i^*},\ldots,\reg{A}_{n(\lambda),i^*}}$,  
where for $j\in[n(\lambda)]$, $\reg{A}_{j,i^*}$ is the $i^*$-th subsystem of register $\reg{A}_j$.
    \end{itemize}
\end{itemize}
We know by the assumption if $\mathsf{Ver}'$ was applied to $\rho$ output by $\cA$, over the randomness of the experiment it would succeed with probability at least 
$1-\tfrac{1}{2n(\lambda)t(\lambda)}$. When this occurs, for each $\reg{A}_{j}$ at least a $c(\lambda) - \tfrac{1}{2n(\lambda)t(\lambda)}$ fraction of subsystems pass $\mathsf{Ver}$.  
By a union bound, this means that for at least an $1 - n(\lambda)\left( 1+ \tfrac{1}{2n(\lambda)t(\lambda)} -c(\lambda)\right)$ fraction of choices of $i$, $\mathsf{Ver}$ succeeds for all of $(\rho_{\reg{A}_{1,i}}, \ldots, \rho_{\reg{A}_{n(\lambda),i}})$.  
Thus $\mathcal{A}'$ successfully breaks the security of the underlying scheme whenever $i^*$ is in this fraction.
Therefore, the probability that $\mathcal{A}'$ succeeds is at least
\begin{align*}
    \Bigl(1 - \tfrac{1}{2n(\lambda)t(\lambda)}\Bigr)&\cdot\Bigl(1 - n(\lambda)\bigl( 1+ \tfrac{1}{2n(\lambda)t(\lambda)} -c(\lambda)\bigr)\Bigr)\\
    &\geq 1 - \tfrac{1}{2n(\lambda)t(\lambda)} - n(\lambda)\bigl( 1+ \tfrac{1}{2n(\lambda)t(\lambda)} -c(\lambda)\bigr)\Bigr)\\
    &\geq 1 - \tfrac{1}{t(\lambda)} - n(\lambda)(1-c(\lambda))\\
    &> 1 - (1-s(\lambda))\\
    &= s(\lambda)
\end{align*}
which contradicts the security of the underlying weak minischeme.\\

To conclude the proof we note that $(n(\lambda), 1-\mathsf{negl}(\lambda), 1-\tfrac{1}{p(\lambda)})$ weak minischemes imply $n(\lambda)$-weak minischemes by a simple parallel repetition argument . \kabir{maybe parallel repetition isn't so simple, should cite/clarify}
\end{proof}

\section{Generic Lower Bounds for Constant-Round PoQ}\label{sec:generic}
In this section we show generic lower bounds for proofs of quantumness. 
Existing literature on interactive protocols allows us to make some simple observations. First we note that in the absence of (a variant of) one-way functions, we can assume the PoQ is public coin "for free".
\begin{theorem}\label{thm:private-to-public}Constant-round PoQ either imply \emph{public-coin} constant-round PoQ or imply classically-secure uniform auxiliary-input infinitely-often one-way functions.
\end{theorem}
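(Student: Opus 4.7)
The plan is the following transformation. Given a constant-round private-coin PoQ $(P,V)$ with $V$'s $i$-th next-message function denoted $V_i(r,\tau)$ (taking private randomness $r$ and transcript-so-far $\tau$ to a message $v_i$), define a public-coin verifier $V'$ that, in each round, sends a uniformly random string $u_i$ in place of $v_i$. Both $V'$ and the prover deterministically recompute $v_i := V_i(u_i,\tau)$ from the public coin $u_i$ and transcript, and at the end $V'$ applies $V$'s acceptance predicate to these recomputed $v_i$'s together with the prover's messages. This is manifestly public coin: $V'$'s messages are uniform strings and final acceptance is a deterministic function of the transcript.

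Completeness is immediate. The honest quantum prover $P'$ runs $P$ internally; upon receiving $u_i$ it feeds $V_i(u_i,\tau)$ to $P$ as the $i$-th verifier message. Since $u_i$ is uniform, $V_i(u_i,\tau)$ has exactly the distribution of $V$'s $i$-th private-coin message, so $P'$'s view is identical in distribution to $P$'s view against $V$, and $V'$ accepts with probability at least $c(\lambda)$.

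Soundness is where the one-way function assumption enters. Suppose $\widetilde P'$ is a classical PPT that makes $V'$ accept with probability $s(\lambda)+1/p'(\lambda)$. I would build a classical PPT $\widetilde P$ against $V$ as follows: upon receiving $V$'s $i$-th private message $v_i$, $\widetilde P$ samples a uniform $u_i$ conditioned on $V_i(u_i,\tau) = v_i$, feeds $u_i$ to $\widetilde P'$, and forwards $\widetilde P'$'s reply to $V$. If conditional sampling can be done efficiently, then the joint distribution of transcripts in $\widetilde P$'s attack against $V$ is identical to that in $\widetilde P'$'s attack against $V'$, so $\widetilde P$ wins with the same probability, contradicting soundness of the original PoQ. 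Conditional sampling of $u_i$ given $V_i(u_i,\tau) = v_i$ is exactly distributional inversion of the efficient classical circuit $V_i(\cdot,\tau)$, treating $\tau$ as auxiliary input. By the Impagliazzo--Luby characterization, efficient uniform classical distributional inverters for every efficient auxiliary-input family exist if and only if there are no classically-secure uniform auxiliary-input one-way functions. Hence either such inverters exist, yielding a public-coin PoQ, or such one-way functions exist.

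The main obstacle I anticipate is a clean handling of inversion error together with the infinitely-often quantifier. The Impagliazzo--Luby inverter produces samples only statistically close (to an arbitrarily small inverse polynomial) to the true conditional distribution, so in the soundness reduction one incurs a loss that can be bounded by a hybrid argument over the $O(1)$ rounds and made negligible relative to $1/p'(\lambda)$. The infinitely-often quantifier is handled symmetrically: if $V'$ is unsound on infinitely many $\lambda$, the reduction's inversion step must fail on infinitely many $\lambda$, which by the contrapositive of Impagliazzo--Luby yields a uniform auxiliary-input one-way function secure on infinitely many $\lambda$. The rest is routine bookkeeping, and this private-to-public-coin transformation is standard in the classical literature cited in the paper.
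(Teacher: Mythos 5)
The central transformation in your proposal does not preserve the protocol's semantics: you replace the verifier's single persistent random tape $r$ (which the verifier reuses across all rounds and in its final acceptance predicate) with an \emph{independent} fresh string $u_i$ per round. Private-coin verifiers in general maintain secret state across rounds via $r$, and this state is not recoverable from the public transcript $\tau$. Your completeness claim ``$P'$'s view is identical in distribution to $P$'s view against $V$'' is therefore false. Concretely, consider a verifier that sends $g(r)$ in an early round and later sends $r$ itself (or accepts only if the prover's response is consistent with the same $r$); in your transformed protocol, round 3's recomputed message $V_3(u_3,\tau)$ uses a fresh $u_3$ unrelated to the $u_1$ that produced round 1's message, so the honest prover sees an inconsistent pair and the transcript fails verification. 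The same correlation problem breaks your soundness reduction in the other direction: when $\widetilde P$ distributionally inverts each $V_i(\cdot,\tau)$ at the $v_i$'s it receives from the real $V$ (which are all consistent with one $r$), the resulting $u_i$'s are \emph{correlated} across rounds, whereas $\widetilde P'$ expects i.i.d. uniform public coins. The simulated view does not match.

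The missing idea is that one must use the inverter not merely in the soundness reduction, but in the forward construction, to publicly \emph{resample} a plausible $\hat r$ from the conditional distribution of $r$ given the transcript so far, in every round. That is, $V'$ should send $u_i$, both parties compute $\hat r_i := \mathrm{SampleCond}(u_i,\tau_i)$ distributed (approximately) as $r \mid \tau_i$, and then set $v_i := V_i(\hat r_i,\tau_i)$; the final acceptance likewise applies $V$'s predicate to a final publicly-resampled $\hat r$. This way the per-round conditional distribution of $v_i$ matches the original, which fixes completeness, and for soundness one inverts the composite map $u_i \mapsto V_i(\mathrm{SampleCond}(u_i,\tau_i),\tau_i)$ rather than the bare $V_i$. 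The Impagliazzo--Luby dichotomy (either uniform auxiliary-input distributional inverters exist for the relevant efficiently-computable families, or uniform aux-input i.o.~OWFs exist), the hybrid over $O(1)$ rounds to control inversion error, and the i.o.~bookkeeping, as you sketch, are then the right tools — but they must be invoked already for completeness, not only for soundness. As written your proof has a genuine gap. (The paper itself simply cites Theorem 8 of the private-to-public reference and does not spell out a proof, so there is no direct comparison to be made beyond flagging this issue.)
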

\begin{proof}
    Follows from Theorem 8 in \cite{private-to-public}. 
\end{proof}
\noindent Second, we note that for public-coin PoQ, an inverse-polynomial gap between completeness and soundness is sufficient to obtain public-coin $(1-\negl(\lambda), \negl(\lambda))$-PoQ by Chernoff-style parallel repetition.
\begin{theorem}\label{thm:poq-parallel-rep}
    Public-coin $(c(\lambda), s(\lambda))$-PoQ (which satisfy non-triviality as in Definition \ref{def:poq}) imply the existence of public-coin $(1-\negl(\lambda), \negl(\lambda))$-PoQ. Additionally, the new PoQ is constructed by parallel repetition of the original protocol where the final verifier accepts if the number of accepting repetitions is above a specified threshold.
\end{theorem}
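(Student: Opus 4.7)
The plan is to set up a standard parallel-repetition construction. Fix $k(\lambda) = \lambda \cdot t(\lambda)^2$, where $t(\lambda)$ is the non-triviality polynomial from \defref{poq}, and define the $k$-fold verifier $V^*$ to run $k$ parallel instances of $V$. In each round, $V^*$ samples $k$ independent uniformly random strings (one per instance) and the prover answers all $k$ copies at once; this is again a public-coin protocol since the verifier's messages are uniform strings and its decision is a deterministic function of the transcript. Set the threshold to
\[
\tau(\lambda) \;=\; \left(s(\lambda) + \tfrac{c(\lambda)-s(\lambda)}{2}\right)\cdot k(\lambda),
\]
i.e.\ the midpoint between $s$ and $c$, and have $V^*$ accept iff at least $\tau(\lambda)$ of the $k$ individual instances accept.

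For completeness, the honest QPT prover $P^*$ runs $k$ independent copies of the original honest $P$. Each copy accepts with probability at least $c(\lambda)$, so the expected fraction of accepting copies is at least $c(\lambda)$, which exceeds $\tau(\lambda)/k(\lambda)$ by at least $(c-s)/2 \geq 1/(2t(\lambda))$. A standard Chernoff bound in this gap gives that the fraction of accepting copies falls below $\tau/k$ with probability at most $\exp(-\Omega(k/t^2)) = \exp(-\Omega(\lambda))$, so the $k$-fold completeness is $1-\negl(\lambda)$.

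For soundness, I would invoke a threshold parallel-repetition theorem for public-coin computationally-sound protocols (e.g.\ H{\aa}stad--Pass--Pietrzak--Wikstr\"om, or Haitner): if the single-shot protocol has computational soundness at most $s(\lambda)$ against PPT provers, then the $k$-fold protocol, where the verifier accepts whenever at least $(s+\epsilon)k$ coordinates accept, has computational soundness at most $\exp(-\Omega(\epsilon^2 k))$ against PPT provers, with a black-box reduction that runs the $k$-fold cheater a polynomial number of times. Plugging in $\epsilon = (c-s)/2 \geq 1/(2t)$ and $k = \lambda t^2$ yields soundness error $\exp(-\Omega(\lambda))$ on exactly the infinitely many $\lambda$ on which the base protocol is sound (the reduction is uniform and preserves the ``infinitely often'' quantifier in \defref{poq}). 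Combined with the completeness bound above, $(P^*, V^*)$ is a public-coin $(1-\negl(\lambda),\negl(\lambda))$-PoQ.

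The main obstacle is exactly the soundness step: the obvious single-coordinate extraction (sample $i \gets [k]$, embed $V$'s messages in position $i$, run the $k$-fold cheater, output position $i$'s response) only gives a Markov-style bound, since the cheating prover may arbitrarily correlate its answers across instances. A direct Chernoff bound fails for the same reason. The external direct-product/threshold-repetition theorem for public-coin protocols is what upgrades this Markov bound into exponential concentration; it crucially relies on public-coinness so that partial transcripts can be resampled to extract a good single-shot strategy. Everything else, including the preservation of the public-coin structure, the polynomial choice of $k$, and the Chernoff calculation for completeness, is routine.
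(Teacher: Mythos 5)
Your proposal is correct and takes essentially the same route as the paper: the paper's proof is a one-line citation of an external threshold parallel-repetition theorem for public-coin protocols ("Follows from Theorem 1 in \cite{parallel-rep}"), and you unpack exactly that — the $k$-fold threshold verifier, the Chernoff completeness bound, and the appeal to a public-coin threshold parallel-repetition theorem for the computational-soundness direction. You also correctly pinpoint the one nontrivial step, namely that a naive single-coordinate embedding only gives a Markov bound and the external theorem is what supplies the exponential decay, which is precisely why the paper does not attempt an elementary argument here.
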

\begin{proof}
    Follows from Theorem 1 in \cite{parallel-rep}.
\end{proof}
\noindent Finally, we show the main result of this section: a round collapse theorem for public-coin PoQs in the absence of certain types of cryptography.
\begin{theorem}
Let $\langle P,V\rangle$ be an $\ell$-round public-coin PoQ for constant $\ell \geq 4$.  
Then one of the following must hold.
\begin{itemize}
    \item $(\ell-2)$-round public-coin PoQs exist.
    \item quantum-secure uniform i.o.~OWFs exist.
    \item Uniform i.o.~$(\poly(\lambda))$-weak signature tokens (Definition \ref{def:weak-tokens}) exist.
\end{itemize} 
If $\ell = 4$ we additionally obtain that either there exists a $2$-round public-coin PoQ or uniform i.o.~OWFs exist or uniform i.o.~$(\poly(\lambda), 1-\negl, 0.51)$-weak OSS (Definition \ref{def:weak-sig}) exist.
\end{theorem}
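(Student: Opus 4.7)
The plan is to adapt the round-collapse theorem of \cite{PassV20} (Theorem~\ref{thm:pass-v}) to the PoQ setting, with cloning hardness playing the role of the ``quantum rewinding'' needed to turn an inefficient classical honest prover into an efficient QPT one. By Theorem~\ref{thm:poq-parallel-rep}, I may assume $\langle P,V\rangle$ has $(1-\negl,\negl)$ completeness/soundness. Viewing $\langle P,V\rangle$ as an $\ell$-round (public-coin) interactive puzzle---its QPT prover is in particular an inefficient classical prover---Theorem~\ref{thm:pass-v} yields either uniform i.o.\ OWFs (one of the desired conclusions) or an $(\ell-2)$-round interactive puzzle of the standard collapsed form: $P'$ runs the first $\ell-2$ messages of the original protocol and then transmits a list of $t(\lambda)$ independently sampled pairs $(x_{\ell-1}^{(i)},x_\ell^{(i)})$, where each $x_{\ell-1}^{(i)}$ is a fresh uniformly random string and each $x_\ell^{(i)}$ is a valid completion; $V'$ re-runs $V$ on the initial segment and on a randomly chosen one of the submitted pairs. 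Soundness of $\langle P',V'\rangle$ against PPT cheating provers is inherited from the \cite{PassV20} analysis: any such cheater either implies OWFs or yields a PPT cheater for the original $\langle P,V\rangle$, contradicting its soundness.

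It remains to supply an efficient QPT prover for $\langle P',V'\rangle$, whose core task is to produce $t(\lambda)$ valid continuation pairs starting from the intermediate state $\ket{\psi_{\ell-2}}$. I encode this task as a candidate weak signature token scheme $(\Samp,\Sign,\Ver)$ per Definition~\ref{def:weak-tokens}: $\Samp(1^\lambda)$ runs $P$ against a simulated public-coin verifier for $\ell-2$ messages, producing $\pp=(x_1,\ldots,x_{\ell-2})$ and the unique pure state $\ket{\psi_\pp}=\ket{\psi_{\ell-2}}$; $\Sign(\pp,\ket{\psi},r)$ applies $P$'s last-round operation with challenge $r=x_{\ell-1}$ and outputs $\sig=x_\ell$; $\Ver(\pp,r,\sig)$ runs $V$'s final check on $(\pp,r,\sig)$. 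Correctness follows from the $(1-\negl)$-completeness of the original PoQ. If this candidate is a $\poly(\lambda)$-weak signature token (uniform i.o.), we obtain the third desired conclusion. Otherwise, there is a QPT adversary that, given $(\pp,\ket{\psi_\pp})$, produces valid signatures on $t(\lambda)$ fresh uniform challenges with non-negligible probability; plugging this adversary in for $P'$'s final step gives the needed efficient QPT prover, and any loss of completeness can be restored by a Chernoff-style amplification in the spirit of Theorem~\ref{thm:minischeme-amp} and Theorem~\ref{thm:poq-parallel-rep}.

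For the additional $\ell=4$ conclusion, the crucial observation is that $\ket{\psi_2}$ depends only on the verifier's first (public-coin) message $x_1$, which the prover can sample itself with no prior prover-side context to commit to. The candidate therefore naturally recasts as a weak OSS per Definition~\ref{def:weak-sig}: $\Setup$ produces trivial public parameters, and the adversary (playing the OSS sampler) freely chooses $\srno=(x_1,x_2)$ together with $\sigma=\ket{\psi_2}$, while $\Sign$ and $\Ver$ are as before. Any attack on this candidate OSS once again supplies the required efficient QPT prover, now for the collapsed 2-round PoQ, and the same Chernoff-style bookkeeping yields the specific $(\poly(\lambda),1-\negl,0.51)$ parameters. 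The main obstacle throughout is the quantum completeness step: \cite{PassV20} freely rewinds their inefficient classical prover, but here no-cloning blocks the analogous move, and precisely identifying the ``many copies of $\ket{\psi_{\ell-2}}$'' capability as an attack on a weak signature token (or OSS in the 4-round case) is what allows cloning hardness to enter as the only new ingredient. A secondary subtlety is tracking the uniform i.o.\ security regime inherited from Theorem~\ref{thm:pass-v} through all three conclusions.
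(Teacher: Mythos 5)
Your high-level plan matches the paper's: apply the Pass--Venkitasubramaniam round collapse, observe that the bottleneck is producing many last-round continuations from a single copy of the intermediate quantum state $\ket{\psi_{\ell-2}}$, cast that capability as an adversary against a weak signature-token scheme (and, for $\ell=4$, against a weak OSS), and run a win-win. Two steps, however, are genuinely wrong as written.

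First, the collapsed protocol you describe is not sound. You have $P'$ itself sample the challenges $x_{\ell-1}^{(i)}$ and $V'$ re-run $V$ on a single randomly chosen submitted pair. A cheating PPT $P'$ can then draw all $x_{\ell-1}^{(i)}$ from whatever (possibly huge but negligible-measure) subset of challenges it already knows how to answer --- or simply repeat one precomputed pair $t$ times --- and $V'$ accepts. In the collapse actually used (from \cite{PassV20}), the challenges $r_1,\ldots,r_{p(\lambda)}$ are sampled and sent by the \emph{verifier} in its $(\ell-3)$-th message, and the verifier accepts only if \emph{all} $p(\lambda)$ responses verify. That the challenges are externally supplied is precisely what turns answering all of them from one copy of $\ket{\psi_{\ell-2}}$ into a quantum-rewinding problem, and hence into the weak signature-token security game; with prover-chosen challenges there is no unclonability at stake and your appeal to the \cite{PassV20} soundness argument does not apply. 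Relatedly, for $\ell=4$ the paper does \emph{not} take $\pp$ to be trivial: $\Setup(1^\lambda)$ outputs the honest uniformly-random first verifier message as $\pp$, and only the first prover message $\srno$ is adversarially chosen; if the OSS adversary also controlled $x_1$ it could steer the verifier into a degenerate coin sequence and the resulting primitive would not be the one you need.

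Second, the OWF branch does not immediately land in ``one of the desired conclusions.'' Theorem~\ref{thm:pass-v} yields only \emph{classically}-secure uniform i.o.\ OWFs, whereas the theorem's second bullet asserts \emph{quantum-secure} uniform i.o.\ OWFs. The paper closes this gap with one more case split at the end: if the classically-secure OWF $f$ is also a quantum-secure uniform i.o.\ weak OWF, amplify via Yao; otherwise a QPT inverter for $f$ succeeding with probability $1-1/\poly(\lambda)$ directly gives a two-message public-coin PoQ (verifier sends $f(x)$, prover returns a preimage), which after parallel repetition satisfies the first bullet. Without this step your proof proves something strictly weaker than the stated theorem.
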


\begin{proof}
By Theorem \ref{thm:poq-parallel-rep} we may assume $1-\negl(\lambda)$ correctness and $\negl(\lambda)$ soundness without loss of generality. By Theorem \ref{thm:pass-v}, for any $\ell$-round public-coin interactive puzzle, either there exists an $(\ell-2)$-round public-coin interactive puzzle or classically-secure uniform i.o.~OWFs exist.  
Additionally, opening up the proof of Theorem \ref{thm:pass-v} in \cite{PassV20},  the round-reduced construction is as follows.  

For the first $\ell-4$ rounds, the prover and verifier proceed identically to the original interactive puzzle, obtaining transcript $\tau$.
In the $(\ell-3)$-th round, the verifier additionally samples $p(\lambda)$ random strings $r_1,\ldots,r_{p(\lambda)}$ for some polynomial $p(\lambda)$, and sends them along with the original verifier's $(\ell-3)$-th message $r$.  
In the $(\ell-2)$-th message, the prover sends $m$, which is the original prover's message on transcript $\tau\|r$, and $(m_1',\ldots,m_{p(\lambda)}')$ where $m_i$ is the original prover's $\ell$th message on transcript $(\tau\|r\|m\|r_i)$.  
The verifier accepts if for all $i \in [p(\lambda)]$ the original verifier accepts $(\tau \| r \| m \| r_i \| m'_{\ell})$.  
This round-collapsed protocol is an $(\ell-2)$-round $(1-\mathsf{negl}(\lambda), \tfrac{1}{2})$-interactive puzzle, which can be amplified via parallel repetition.  

Since PoQ are interactive puzzles, in the absence of uniform classically-secure i.o.~OWFs the above round collapse can be applied to a $\ell$-round public-coin PoQ to obtain an $(\ell-2)$-round public coin protocol that preserves soundness. However, this alone does not give an $(\ell-2)$-round public-coin PoQ. This is because the honest prover in the round-reduced protocol must an $m'_{i}$ for each $i$. Producing this message may require a measurement of the prover's internal state, and therefore it may not be possible to produce more than one such message, i.e., it may not be possible to "rewind" to quantum prover. Luckily, we are able to harness this hardness of rewinding to build cloning hardness. Specifically we show that in the absence of uniform i.o.~weak signature tokens, the round collapsed protocol has an efficent quantum prover, which gives a $(\ell-2)$-round public coin PoQ. 

Define $\mathsf{Samp}, \mathsf{Sign}, \mathsf{Ver}$ as follows:

\begin{itemize}
    \item $\mathsf{Samp}(1^\lambda):$
    \begin{itemize}
        \item Run honest quantum prover with the honest verifier for the original protocol up to the $(\ell-2)$-th round, obtaining transcript $\tau \| r \| m$ and (purified) internal prover state $\ket{\psi}$.  
        \item Output $(\tau \| r \| m, \ket{\psi})$.
    \end{itemize}

    \item $\mathsf{Sign}(\tau \| r \| m, \ket{\psi}, r'):$  
    \begin{itemize}
        \item Using the transcript, internal state, and $r'$ interpreted as the $(\ell-1)$-th message, compute the final prover message $m'$ for the original protocol and output $m'$.
    \end{itemize}

    \item $\mathsf{Ver}(\tau \| r \| m, r', m'):$  
    \begin{itemize}
        \item Accept if the original verifier accepts $\tau \| r \| m \| r' \| m'$.
    \end{itemize}
\end{itemize}
By the correctness of the original PoQ, $(\mathsf{Samp},\mathsf{Sign},\mathsf{Ver})$ satisfies correctness of a $p(\lambda)$-weak token scheme.  
Consider the two possibilities:
\begin{itemize}
    \item For all QPT adversaries $\mathcal{A}$, there exist infinitely many $\lambda$ such that
    \[
    \Pr\Big[ \forall i \in [p(\lambda)],\ \mathsf{Ver}(\mathsf{pp}, r_i, s_i) = 1 \Big] \le 0.51,
    \]
    where $(r_1,\ldots,r_{p(\lambda)}) \gets \{0,1\}^{*}$, $(\mathsf{pp}, \ket{\psi}) \gets \mathsf{Samp}(1^\lambda)$, \\and $ (s_1,\ldots,s_{p(\lambda)}) \gets \mathcal{A}(\mathsf{pp},\ket{\psi},r_1,\ldots,r_{p(\lambda)})$.

    \item There exists a QPT adversary $\mathcal{A}$ such that for all $\lambda$,
    \[
    \Pr\Big[ \forall i \in [p(\lambda)],\ \mathsf{Ver}(\mathsf{pp}, r_i, s_i) = 1 \Big] > 0.51
    \]
    where $(r_1,\ldots,r_{p(\lambda)}) \gets \{0,1\}^{*}$, $(\mathsf{pp}, \ket{\psi}) \gets \mathsf{Samp}(1^\lambda)$, \\and $ (s_1,\ldots,s_{p(\lambda)}) \gets \mathcal{A}(\mathsf{pp},\ket{\psi},r_1,\ldots,r_{p(\lambda)})$.
\end{itemize}
In the former case, we obtain a uniform i.o.~$(p(\lambda), 1-\mathsf{negl}(\lambda), 0.51)$-weak token scheme, which can be amplified by parallel repetition \kabir{cite}.

In the latter case, we can use $\mathcal{A}$ to compute the final prover message of the round-collapsed protocol by using the honest prover to interact with the honest verifier up to the $(\ell-1)$-th round, receiving $(r, r_1,\ldots,r_{p(\lambda)})$ from the verifier, sending $r$ to the prover to obtain $m$ and internal state $\ket{\psi}$, running $\mathcal{A}(\tau \| r \| m, \ket{\psi}, r_1,\ldots,r_{p(\lambda)})$ to obtain $m_1',\ldots,m_{p(\lambda)}'$, and sending $m'$ to the verifier.  
This new prover achieves completeness $0.51$ and therefore the round-reduced protocol is a public-coin $(0.51, \tfrac{1}{2})$-PoQ in $\ell-2$ rounds, which can then be amplified as in Theorem \ref{thm:poq-parallel-rep}.\\

\noindent In the case where $\ell=4$ we obtain a uniform i.o.~$(p(n), 1-\negl, 0.51)$-weak one-shot signature or a public coin $(\ell-2)$-PoQ as follows. 
\begin{itemize}
    \item $\mathsf{Setup}(1^\lambda):$
    \begin{itemize}
        \item Output the first verifier message as $\pp$.
    \end{itemize}
    \item $\mathsf{Samp}(\pp):$
    \begin{itemize}
        \item Run the honest prover on the first verifier message to obtain $\srno$ and (purified) prover state $\ket{\psi_\srno}$.
    \end{itemize}

    \item $\mathsf{Sign}(\pp, \srno,  \ket{\psi_\srno}, r):$  
    \begin{itemize}
        \item Using the transcript, internal state, and $r$ interpreted as the second verifier message, compute the final prover message $\sig$ for the original protocol and output $\sig$.
    \end{itemize}

    \item $\mathsf{Ver}(\pp,\srno,r,\sig):$  
    \begin{itemize}
        \item Accept if the original verifier accepts $\pp\|\srno\|r\|\sig$.
    \end{itemize}
\end{itemize}
The win-win follows from an almost identical argument as above, although we do not obtain the same amplification for the weak OSS. Since the security game for weak OSS is four round, parallel repetition is not trivial. While it is possible that existing literature on parallel repetition for public coin protocols can amplify security for weak OSS we do not show this here. \kabir{cite someone?}

All that remains is to show that classically-secure uniform i.o.~OWFs imply either quantum-secure uniform i.o.~OWFs or 2-round PoQ. Let $f$ be a classically-secure uniform i.o.~OWF. If $f$ is a quantum-secure uniform i.o.~weak OWF we are done since by \cite{Yao} these can be amplified to give quantum-secure uniform i.o.~OWF. If not then there exists a QPT $\cA$ that inverts $f$ with probability $1-1/\poly(\lambda)$. We define a two round $(1-1/\poly(\lambda), \negl(\lambda))$-PoQ as follows: The verifier samples a random $x$ and sends $f(x)$. The verifier accepts if the prover's message is $x'$ such that $f(x) = f(x')$. Soundness holds by the security of the one-way function and $(1-1/\poly(\lambda))$-completeness holds since a quantum prover can run $\cA$ and succeed whenever $\cA$ successfully inverts. This PoQ can then be amplified by parallel repetition.
\end{proof}

\begin{corollary}
The existence of constant-round public-coin proofs of quantumness implies one of the following:
\begin{itemize}
    \item There exist quantum-secure uniform i.o.~OWFs
    \item There exist uniform i.o.~($\poly(\lambda)$)-weak token schemes.
    \item There exist 2-round public-coin proofs of quantumness
\end{itemize}
\end{corollary}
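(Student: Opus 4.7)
The plan is to iterate the preceding theorem until either one of its cryptographic conclusions fires or the round complexity of the surviving public-coin PoQ collapses to $2$. Fix a constant-round public-coin PoQ with $\ell$ rounds. If $\ell \leq 2$ there is nothing to prove. If $\ell = 3$, I first append a dummy verifier message consisting of a uniformly random but ignored string (at the front for a prover-first protocol, at the back for a verifier-first protocol), producing a $4$-round public-coin PoQ with identical completeness and soundness: the honest quantum prover simply discards the dummy message, and any classical cheater against the padded protocol yields one against the original by internally sampling the dummy message itself. Thus it suffices to treat $\ell \geq 4$.

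For $\ell \geq 4$, invoke the preceding theorem. One of three outcomes holds: quantum-secure uniform i.o.\ OWFs exist, matching the first bullet of the corollary; uniform i.o.\ $\poly(\lambda)$-weak signature token schemes exist, matching the second bullet (these are exactly the ``weak token schemes'' named by the corollary, per \defref{weak-tokens}); or an $(\ell-2)$-round public-coin PoQ exists, in which case I recurse on the shorter protocol, again invoking the dummy-padding step above if the round count drops to $3$. Since $\ell$ is a fixed constant, the recursion terminates after at most $\lceil \ell/2 \rceil$ unwinding steps, ending either at a $2$-round public-coin PoQ or in one of the cryptographic branches.

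I expect no serious technical obstacle: the entire argument is bookkeeping on top of iterated invocations of the main theorem. The only subtlety is the odd-parity termination, handled by the padding observation, which is trivially sound because appending a message that is independent of all other protocol data preserves both the completeness of the honest quantum prover and the soundness against classical provers.
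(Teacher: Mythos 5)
Your proposal is correct and follows the same approach as the paper: the paper's own proof is a one-liner, "repeated application of the theorem" (noting that a constant number of iterations keeps all runtimes polynomial), and you have correctly expanded that into a termination argument. You also helpfully address a detail the paper elides, namely that the preceding theorem is only stated for $\ell \geq 4$, so an odd starting round count leads to a $3$-round protocol on which the theorem cannot be directly invoked; your dummy-padding step handles this cleanly. One small imprecision: for a verifier-first $3$-round protocol, appending a dummy verifier message "at the back" would place two verifier messages back to back, which is not a well-formed alternating transcript; the correct observation there is the dual one, that a trailing verifier message (with no prover reply before the accept/reject decision) is vacuous and can simply be dropped, so a verifier-first $3$-round public-coin PoQ is already effectively $2$-round. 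This does not affect the soundness of the overall argument.
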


\begin{proof}
The statement follows by repeated application of the theorem.  
Since there are only constant rounds, the runtimes remain polynomial.
\end{proof}
\kabir{We include for completeness the following theorem, which provides a generic lower bound for public-coin constant-round PoQ without win-wins}

\section{Separating Constant-Round IV-PoQ from iO and OWP}\label{sec:iOimposs}
In this section we rule out the existence of fully black-box constructions of constant-round PoQ from (even sub-exponentially secure) indistinguishability obfuscation (iO) and one-way permutations (OWPs). 
We will be building upon the framework introduced by~\cite{pdqp} (specifically, the sequence of hybrids in Subsection \ref{pdqp-subsec}) \kabir{flag in sections too} for ruling out constructions of forms of quantum collision resistance from iO and OWPs, although we include the full details of the separation here for completeness. 

Our proof will consist of assuming the existence of a black-box construction of $\ell$-round PoQ from iO and OWPs for some constant $\ell$. We will then construct an oracle $(O,P)$ such that
\begin{itemize}
    \item There exists a black-box construction $C^O$ of iO and OWP that is secure in the presence of $P$.
    \item There \textit{does not} exist a black box construction $\langle \cP^O, \cV^O\rangle$ of $\ell$-round IV-PoQ that is secure in the presence of $P$.
\end{itemize}
This contradicts the existence of a fully black-box construction (i.e with black-box construction and black-box reduction) of $\ell$-round PoQ from iO and OWP. This is because we may instantiate the construction using oracle $S$ (which provides secure constructions of iO and OWPs) obtaining a candidate construction of $\ell$-round PoQ. However, no construction of $\ell$-round PoQ from oracle $O$ is secure in the presence of oracle $P$, i.e there exists an efficient classical adversary $A^{S, P}$ that breaks the security of the candidate PoQ. Combined with the black-box reduction this results in an efficient adversary that uses $S$ and $P$ to break the security of our constructed iO and OWP. Since our constructions are shown to be secure, this leads to contradiction.

\kabir{It feels like the section numbering is not the order a reader should encounter them in. I think i will reorder things to make it clearer. The order below is how the concepts should flow}
Fully black-box constructions of constant-round PoQ are defined in Section 6.5. We then define the oracles O and P using the distributions defined in section 6.3. We show in sections 6.5.1 and 6.5.2 that there exist constructions of iO and OWPs from $O$ that are secure using $P$. To do so we crucially use the discussion on simulating classical oracles using unitaries from sections 6.1 and 6.2, as well as the one-way to hiding properties of the simulating unitary for $P$ proven in section 6.4. Next, in section 6.5.3 we show that $P$ can be used to break every construction of $\ell$-round PoQ from $O$. Finally, this leads to theorem 6.7 that rules out fully black-box constructions.

For $q \in \bbN$, we define a $q$-query algorithm as an algorithm that makes at most $q$ quantum queries, each of length at most $q$. We also assume that the output of a $q$-query algorithm is of length at most $q$. If $q$ is a function of the security parameter $\lambda$ the queries and output lengths are understood to be bounded by $q(\lambda)$.
\subsection{Defining Quantum Oracle Queries}
\label{subsec:def-quantum-queries}
Let $F$ 
be a function from bitstrings to bitstrings.
Quantum queries to $F$ may be modeled as follows. The query prepares registers 
$\reg{X}\reg{Y}\reg{Z}$ where $\reg{X}$ contains the query string, $\reg{Y}$ is 
the response register, and $\reg{Z}$ may contain auxiliary information. 
A query consists of applying the unitary
\[
  \sum_x \ket{x}\!\bra{x}_{\reg{X}} \otimes X^{F(x)}_{\reg{Y}} .
\]
Equivalently, we may initialize register $\reg{D} = \bigotimes_{x \in \{0,1\}^*} \reg{D}_x$ 
to the truth table of $F$. Formally, $\reg{D}$ is initialized to 
$\bigotimes_x \ket{F(x)}_{\reg{D}_x}$. A query may now be modeled as
\[
  \sum_x \ket{x}\!\bra{x}_{\reg{X}} \otimes \mathsf{CNOT}_{\reg{D}_x \reg{Y}} .
\]
where $\mathsf{CNOT}_{\reg{D}_x \reg{Y}}$ maps $\ket{s_1}_{\reg{D}_x}\ket{s_2}_\reg{Y}$ to $\ket{s_1}_{\reg{D}_x}\ket{s_2 \oplus s_1}_\reg{Y}$.

\subsection{Non-uniform oracles}
\label{subsec:def-comp-oracle-for-products}
We introduce tools for analyzing quantum queries to a class of random oracles. 
Let $\{\cD_x\}_{x \in \{0,1\}^*}$ be a collection of distributions over strings. 
Let $\cD$ be the product distribution induced by these distributions, i.e., $\cD := \bigotimes_x \cD_x$. 
We say an oracle $O$ is sampled from $\cD$ when the $x$-th row of its truth table 
is sampled independently from $\cD_x$. Equivalently, the truth table $O$ is sampled from $\cD$.

We define the \emph{compressed oracle} $\mathsf{CStO}_\cD$ as follows. 
Initialize register $\reg{D} := \bigotimes_x \reg{D}_x$ with $\ket{\bot}_{\reg{D}} := \bigotimes_x \ket{\bot}_{\reg{D}_x}$. 
For each $x$, let $\ket{\cD_x}$ represent a pure state of the form
\[
  \ket{\cD_x} :=\sum_z \sqrt{\Pr_{\cD_x}[z]} \ket{z}.
\]
Define 
\[
  U_x := \ket{\cD_x}\!\bra{\bot}_{\reg{D}_x} + \ket{\bot}\!\bra{\cD_x}_{\reg{D}_x}
  + \big(\bbI - \ket{\bot}\!\bra{\bot}_{\reg{D}_x} - \ket{\cD_x}\!\bra{\cD_x}_{\reg{D}_x}\big).
\]
Then define
\[
  U := \sum_x \ket{x}\!\bra{x}_{\reg{X}} \otimes U_x.
\]
We refer to $U$ as the \emph{compression unitary} for $\cD$. 
The querier prepares registers $\reg{XYZ}$ and a query to $\mathsf{CStO}_\cD$ is implemented as
\[
  \sum_x \ket{x}\!\bra{x}_{\reg{X}} \otimes (U_x \circ \mathsf{CNOT}_{\reg{D}_x \reg{Y}} \circ U_x).
\]
We note that $\mathsf{CStO}_\cD$ can be implemented by initializing $\reg{D}$ to $\ket{\bot}_\reg{D}$ and making two queries to $U$ per $\mathsf{CStO}_\cD$ query.

\begin{theorem}[Compressed Oracle]
\label{thm:comp-oracle}
For any adversary $\adv$,
\[
  \Pr[\adv^{\mathsf{CStO}_\cD} = 1] = \Prr_{O \leftarrow \cD}[\adv^O = 1].
\]
\end{theorem}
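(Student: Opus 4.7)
The plan is to prove this via Zhandry's purification trick, generalized to the product distribution setting. I will introduce an intermediate \emph{purified} oracle $\mathsf{PStO}_\cD$ which initializes $\reg{D}$ in the state $\ket{\cD}_{\reg{D}} := \bigotimes_x \ket{\cD_x}_{\reg{D}_x}$ and implements each query by the uncompressed operation
\[
Q^{\mathsf{pur}} := \sum_x \ket{x}\!\bra{x}_{\reg{X}} \otimes \mathsf{CNOT}_{\reg{D}_x \reg{Y}} .
\]
The theorem then follows from two equivalences: (i) $\adv^{\mathsf{PStO}_\cD}$ has the same output distribution as $\adv^O$ with $O \leftarrow \cD$, and (ii) $\adv^{\mathsf{CStO}_\cD}$ has the same output distribution as $\adv^{\mathsf{PStO}_\cD}$.

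For (i), note that the only operation a query to $\mathsf{PStO}_\cD$ applies to register $\reg{D}_x$ is $\mathsf{CNOT}_{\reg{D}_x \reg{Y}}$, which commutes with a computational-basis measurement on $\reg{D}_x$. Thus without changing the adversary's output I may defer the query by measuring every $\reg{D}_x$ in the computational basis before the first query. Since $\ket{\cD_x} = \sum_z \sqrt{\Pr_{\cD_x}[z]}\,\ket{z}$, this measurement yields a string $O_x$ distributed according to $\cD_x$, independently across $x$, so the joint outcome is a sample $O \leftarrow \cD$. Conditioned on this outcome, each subsequent $\mathsf{CNOT}_{\reg{D}_x \reg{Y}}$ acts exactly like a standard oracle query to $O$.

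For (ii), I will use the ``global'' unitary $V := \bigotimes_x U_x$ acting only on $\reg{D}$. Because each $U_x$ swaps $\ket{\bot}_{\reg{D}_x}$ with $\ket{\cD_x}_{\reg{D}_x}$ and acts as identity on their orthogonal complement, $U_x$ is Hermitian and an involution, so $V^\dagger = V$ and $V\ket{\bot}_{\reg{D}} = \ket{\cD}_{\reg{D}}$. A direct calculation shows $V\,Q^{\mathsf{comp}}\,V^\dagger = Q^{\mathsf{pur}}$ where $Q^{\mathsf{comp}} := \sum_x \ket{x}\!\bra{x}_{\reg{X}} \otimes (U_x \mathsf{CNOT}_{\reg{D}_x \reg{Y}} U_x)$: for $y \neq x$, $U_y$ commutes with both $\ket{x}\!\bra{x}_{\reg{X}}$ and $\mathsf{CNOT}_{\reg{D}_x \reg{Y}}$ and cancels against its partner in $V^\dagger$, while the two $U_x$ factors flanking the $\mathsf{CNOT}$ in $Q^{\mathsf{comp}}$ are cancelled by the outer $U_x$ factors coming from $V$ and $V^\dagger$.

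Since $V$ acts trivially on the adversary's registers $\reg{XYZ}$, it commutes with every adversary unitary applied between queries. A straightforward induction on the number of queries then shows that the joint post-interaction states satisfy $V \ket{\Psi_k^{\mathsf{comp}}} = \ket{\Psi_k^{\mathsf{pur}}}$, where $\ket{\Psi_k^{\mathsf{comp}}}$ and $\ket{\Psi_k^{\mathsf{pur}}}$ denote the states after $k$ queries in the compressed and purified experiments respectively. Because the adversary's output bit is produced by a measurement on $\reg{XYZ}$ alone, tracing out $\reg{D}$ annihilates the unitary $V$ and yields identical output distributions. I do not expect a real obstacle; the only point to handle carefully is the cancellation computation $V Q^{\mathsf{comp}} V^\dagger = Q^{\mathsf{pur}}$, which relies on the fact that the registers $\{\reg{D}_y\}_{y\neq x}$ are disjoint from $\reg{D}_x$ and $\reg{Y}$ and that each $U_x$ is its own inverse.
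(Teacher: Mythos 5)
Your proof is correct and is essentially the standard purification argument from Zhandry's compressed oracle paper, which is what the paper's proof defers to via the citation to Lemma 4 of \cite{recording}; the generalization to a product of non-uniform per-point distributions goes through exactly as you observe, because each $U_x$ is still a self-inverse swap of $\ket{\bot}$ and $\ket{\cD_x}$. The key identities you invoke — that $V := \bigotimes_x U_x$ is a self-inverse unitary on $\reg{D}$ alone satisfying $V\ket{\bot}_{\reg{D}} = \ket{\cD}_{\reg{D}}$ and $V\,Q^{\mathsf{comp}}\,V^\dagger = Q^{\mathsf{pur}}$, and that $\mathsf{CNOT}_{\reg{D}_x\reg{Y}}$ commutes with a computational-basis measurement of $\reg{D}_x$ — are all checked correctly.
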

\begin{proof} Follows from the proof of Lemma 4 in \cite{recording} with minimal modification.
\end{proof}
\begin{theorem}[One-way to hiding for compression unitaries]
\label{thm:ow2h-comp}
Let $\cD$ and $\cD'$ be product distributions where $\cD := \bigotimes_x \cD_x$ and $\cD' := \bigotimes_x \cD'_x$ and let $U$ and $U'$ 
be the corresponding compression unitaries.  For any $q$-query adversary $\adv$, 
let $\advB$ be the algorithm that samples $i \gets [q]$ and runs $\adv$ up to just before 
the $i$-th query, then measures the query register in the computational basis 
and returns the measurement output. Let $\Delta := \td(\ket{\psi}, \ket{\psi'})$ 
where $\ket{\psi}$ and $\ket{\psi'}$ are the final states of $\adv^U$ and $\adv^{U'}$ 
respectively. Then
\[
  \Exp_{x \leftarrow \cB^U} 
  \big[ \SD(\cD_x, \cD'_x) \big] 
  \geq\frac{\Delta^2}{16 q^2}.
\]
\end{theorem}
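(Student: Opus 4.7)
My plan is a query-by-query hybrid argument adapted to the compression-unitary framework. For each $i \in \{0,1,\dots,q\}$, let $\ket{\phi_i}$ denote the state of $\adv$ after being run with its first $i$ queries answered by $U$ and its remaining $q-i$ queries answered by $U'$, so that $\ket{\phi_q} = \ket{\psi}$ and $\ket{\phi_0} = \ket{\psi'}$. By the triangle inequality for trace distance combined with \theoremref{trace-dist-and-euclidean-dist} and the unitary invariance of trace distance applied to the (identical) post-$i$-th-query evolution of the two neighboring hybrids, I obtain
\[
\Delta \;\leq\; \sum_{i=1}^q \td(\ket{\phi_i},\ket{\phi_{i-1}}) \;=\; \sum_{i=1}^q \td\!\bigl(Q\ket{\sigma_i},Q'\ket{\sigma_i}\bigr) \;\leq\; \sum_{i=1}^q \|(Q - Q')\ket{\sigma_i}\|,
\]
where $Q := \sum_x \ket{x}\!\bra{x}_{\reg{X}} \otimes (U_x \circ \mathsf{CNOT}_{\reg{D}_x\reg{Y}} \circ U_x)$, $Q'$ is defined analogously, and $\ket{\sigma_i}$ is the state of $\adv^U$ just before its $i$-th query. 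Crucially, this state coincides in both neighboring hybrids $\ket{\phi_i}$ and $\ket{\phi_{i-1}}$ and with the state before the $i$-th query in a pure $\adv^U$ execution, since the two hybrids differ only from query $i$ onward.

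Next, I will decompose $\ket{\sigma_i} = \sum_x \ket{x}_{\reg X} \otimes \ket{\xi^i_x}$ so that
\[
\|(Q - Q')\ket{\sigma_i}\|^2 \;=\; \sum_x \|(Q_x - Q'_x)\ket{\xi^i_x}\|^2,
\]
with $Q_x := U_x\,\mathsf{CNOT}\,U_x$. The key local estimate is $\|(Q_x - Q'_x)\ket{\xi^i_x}\| \leq C \,\|\ket{\cD_x}-\ket{\cD'_x}\| \cdot \|\ket{\xi^i_x}\|$ for an absolute constant $C$. I will establish this in two steps: writing $Q_x - Q'_x = (U_x - U'_x)\,\mathsf{CNOT}\,U_x + U'_x\,\mathsf{CNOT}\,(U_x - U'_x)$ reduces the task to bounding $\|U_x - U'_x\|_{\mathrm{op}}$, and an explicit calculation inside the at-most-three-dimensional subspace $\mathrm{span}\{\ket{\bot},\ket{\cD_x},\ket{\cD'_x}\}$ (the only place where $U_x - U'_x$ acts nontrivially) shows that $\|U_x - U'_x\|_{\mathrm{op}} \leq O(1)\,\|\ket{\cD_x}-\ket{\cD'_x}\|$ via a Frobenius-norm upper bound. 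Combining this with \lemref{ED-to-SD} yields
\[
\|(Q - Q')\ket{\sigma_i}\|^2 \;\leq\; O(1) \cdot \Exp_{x \sim P_i}\!\SD(\cD_x,\cD'_x),
\]
where $P_i(x) = \|\ket{\xi^i_x}\|^2$ is exactly the distribution obtained by measuring the $\reg{X}$ register of $\ket{\sigma_i}$ in the computational basis. Finally, applying Cauchy--Schwarz to bound $\Delta^2$ by $q\sum_i \|(Q-Q')\ket{\sigma_i}\|^2$, and using the identity $\sum_{i=1}^q \Exp_{x\sim P_i}\!\SD(\cD_x,\cD'_x) = q\cdot\Exp_{x \gets \cB^U}\!\SD(\cD_x,\cD'_x)$ (which is immediate from the definition of $\cB^U$, since $\cB^U$ uniformly samples $i \in [q]$ and returns a measurement of the query register at step $i$), gives $\Delta^2 \leq O(q^2)\cdot\Exp_{x\gets\cB^U}\!\SD(\cD_x,\cD'_x)$; careful tracking of the numerical constants yields the stated bound $\Delta^2/(16 q^2)$.

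The main technical obstacle is the local operator-norm bound $\|U_x - U'_x\|_{\mathrm{op}} \leq O(1)\,\|\ket{\cD_x}-\ket{\cD'_x}\|$. The two compression unitaries each swap the flag $\ket{\bot}$ with a different superposition and act as the identity on their respective orthogonal complements, so they have distinct fixed subspaces; in particular one cannot simply argue $U_x \approx U'_x$ from $\ket{\cD_x}\approx\ket{\cD'_x}$ without using the specific structure of $U_x$. The bound must be obtained by an explicit computation in a convenient orthonormal basis of the (at most three-dimensional) joint support of $U_x-U'_x$, controlling each column in terms of $\|\ket{\cD_x}-\ket{\cD'_x}\|^2 = 2(1-\langle\cD_x|\cD'_x\rangle)$, which is also where the final constant in the theorem is driven from.
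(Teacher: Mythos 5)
Your plan matches the paper's proof in its overall structure — the per-query hybrid interpolating between $\adv^{U'}$ and $\adv^U$, bounding trace distance by Euclidean distance via \theoremref{trace-dist-and-euclidean-dist}, a per-$x$ local bound combined with \lemref{ED-to-SD}, and a final Cauchy--Schwarz/Jensen averaging — but there is a concrete misreading of the query model that would leave you short of the stated constant.

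In \theoremref{ow2h-comp} the adversary $\adv$ queries the compression unitary $U = \sum_x \ket{x}\!\bra{x}\otimes U_x$ \emph{directly}; a query is a single application of $U$, not of the full compressed-oracle query operator $\sum_x \ket{x}\!\bra{x}\otimes(U_x\circ\mathsf{CNOT}_{\reg{D}_x\reg{Y}}\circ U_x)$. (This is precisely why the paper phrases the theorem at the level of $U$ and separately notes that one $\mathsf{CStO}_\cD$-query costs two $U$-queries.) Consequently your displayed identity
\[
\td(\ket{\phi_i},\ket{\phi_{i-1}}) = \td\bigl(Q\ket{\sigma_i},Q'\ket{\sigma_i}\bigr),
\qquad Q := \sum_x \ket{x}\!\bra{x}\otimes\bigl(U_x\,\mathsf{CNOT}\,U_x\bigr),
\]
is incorrect as written: by unitary invariance the right-hand side should be $\td(U\ket{\sigma_i},U'\ket{\sigma_i})$, and the proposed decomposition $Q_x - Q'_x = (U_x-U'_x)\,\mathsf{CNOT}\,U_x + U'_x\,\mathsf{CNOT}\,(U_x-U'_x)$ is both unnecessary and lossy — if carried out, it costs an extra factor of $2$ in the per-query estimate and hence a factor of $4$ in the final bound, landing you at $\Delta^2/(64q^2)$ rather than $\Delta^2/(16q^2)$. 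The paper works directly with $\varepsilon_i = \|(U-U')\ket{\widetilde\psi_i}\|$, expands $U_x - U'_x = \ket{\xi_x}\!\bra{\bot} + \ket{\bot}\!\bra{\xi_x} - \ket{\cD'_x}\!\bra{\xi_x} - \ket{\xi_x}\!\bra{\cD_x}$ with $\ket{\xi_x}:=\ket{\cD_x}-\ket{\cD'_x}$, and bounds $\|(U_x-U'_x)\ket{\phi_x}\|\le 2\sqrt2\,\|\ket{\xi_x}\|\le 4\sqrt{\SD(\cD_x,\cD'_x)}$, which is exactly where the $16$ comes from. Your Frobenius-norm strategy for $\|U_x-U'_x\|_{\mathrm{op}}$ inside $\mathrm{span}\{\ket{\bot},\ket{\cD_x},\ket{\cD'_x}\}$ is a reasonable alternative to the paper's explicit rank-two rewriting, but first drop the $\mathsf{CNOT}$ conjugation entirely: the oracle in this theorem is $U$ itself.
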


\begin{proof}
Consider the following sequence of hybrids $\{\cH_i\}_{i\in[0,q]}$, where $\cH_i$ consists of running $\adv$, answering the first $i$ queries using $U$, and answering the remaining queries using $U'$. 
Let the (purified) final state of $\adv$ in $H_i$ be $\ket{\psi_i}$. Note that $\ket{\psi_0} = \ket{\psi'}$ and $\ket{\psi_q} = \ket{\psi}$
By the triangle inequality,
\[
  \sum_{i=1}^q \td(\ket{\psi_i},\ket{\psi_{i-1}}) \geq\Delta.
\]
Define $\varepsilon_i := \| \ket{\psi_i} - \ket{\psi_{i-1}} \|$. Note that by Theorem \ref{thm:trace-dist-and-euclidean-dist}, $\sum_i \varepsilon_i \geq \Delta$.
The state just before the $i$-th query in $\cH_i$ and $\cH_{i-1}$ is identical since both hybrids are identical until the $i$-th query. Let this state be $\ket{\widetilde{\psi_i}}$. If we write the operation of $\adv$ between queries as the unitary $A$ and consider the initial state to be $\ket{0}$ without loss of generality then
\[
  \ket{\widetilde{\psi}_i} = (AU)^{i-1} A \ket{0}
\]
The final state in $\cH_i$ is $(AU')^{q-i} A (U \ket{\widetilde{\psi}_i})$ 
and in $H_{i-1}$ it is $(AU')^{q-i} A (U' \ket{\widetilde{\psi}_i})$ .  Thus,

\begin{align*}
  \varepsilon_i &= \|\ket{\psi_i} - \ket{\psi_{i-1}}\| \\ 
  &= \|(AU')^{q-i} A (U \ket{\widetilde{\psi}_i}) - (AU')^{q-i} A (U' \ket{\widetilde{\psi}_i})\|\\
  &= \|U \ket{\widetilde{\psi}_i} - U' \ket{\widetilde{\psi}_i}\|\\
  &= \|(U - U') \ket{\widetilde{\psi}_i}\|
\end{align*}

We can decompose $\ket{\widetilde{\psi}_i}$ into
\[
  \ket{\widetilde{\psi}_i} = \sum_x \sqrt{p_x}\ket{x}\ket{\phi_x}.
\]
where the first register is the query register and $p_x$ is the probability that measureing the query register in the computational basis returns $x$. Applying $U$ gives
\[
  U\ket{\widetilde{\psi}_i} = \sum_x \sqrt{p_x}\ket{x} \otimes U_x \ket{\phi_x},
\]
while applying $U'$ gives
\[
  U'\ket{\widetilde{\psi}_i} = \sum_x \sqrt{p_x}\ket{x} \otimes U'_x \ket{\phi_x}.
\]
Thus
\begin{align}
 \label{eq:ow2h-helper}   
  \varepsilon_i^2 = \|(U - U')\ket{\widetilde{\psi}_i}\|^2 
  = \sum_x p_x \| (U_x - U'_x)\ket{\phi_x} \|^2.
\end{align}
Let $\ket{\xi_x} := \ket{\cD_x} - \ket{\cD'_x}$ be an un-normalized state. By the definition of $U_x$ and $U'_x$
\begin{align*}
  U_x &= \ket{\cD_x}\!\bra{\bot} + \ket{\bot}\!\bra{\cD_x}
  + \big(\bbI - \ket{\bot}\!\bra{\bot} - \ket{\cD_x}\!\bra{\cD_x}\big)\\
 U'_x &= \ket{{\cD'}_x}\!\bra{\bot} + \ket{\bot}\!\bra{{\cD'}_x}
  + \big(\bbI - \ket{\bot}\!\bra{\bot} - \ket{{\cD'}_x}\!\bra{{\cD'}_x}\big)
\end{align*}
Taking their difference,
\begin{align*}
  U_x - U'_x &= \ket{\xi_x}\!\bra{\bot} + \ket{\bot}\!\bra{\xi_x} + (\ket{\cD'_x}\!\bra{\cD'_x} - \ket{\cD_x}\!\bra{\cD_x})\\
  &= \ket{\xi_x}\!\bra{\bot} + \ket{\bot}\!\bra{\xi_x} + \Big(\ket{\cD'_x}(\bra{\cD_x} - \bra{\xi_x}) - (\ket{\cD'_x} + \ket{\xi_x})\bra{\cD_x}\Big)\\
  &= \ket{\xi_x}\!\bra{\bot} + \ket{\bot}\!\bra{\xi_x} -\ket{\cD'_x}\!\bra{\xi_x}) - \ket{\xi_x}\!\bra{\cD_x}
\end{align*}
Applying the operator to $\ket{\phi_x}$ and noting that $\cD_x$ and $\cD'_x$ are not supported on $\bot$,
\begin{align*}
  \|(U_x - U'_x)\ket{\widetilde{\psi}_i}\| 
    &= \| \Big(\ket{\xi_x}\!\bra{\bot} + \ket{\bot}\!\bra{\xi_x} -\ket{\cD'_x}\!\bra{\xi_x}) - \ket{\xi_x}\!\bra{\cD_x}\Big)\ket{\phi_x}\| \\
    &=\| \Big(\sqrt{2}\cdot\ket{\xi_x}\tfrac{\bra{\bot} - \bra{\cD_x}}{\sqrt{2}}  + \sqrt{2}\cdot\tfrac{\ket{\bot} - \ket{\cD'_x}}{\sqrt{2}}\bra{\xi_x}\Big)\ket{\phi_x}\| \\
    &\leq2\sqrt{2} \|\ket{\xi_x}\| \\
    &\leq2\sqrt{2} \|\ket{\cD_x} - \ket{\cD'_x}\| \\
    &\leq4 \sqrt{\SD(\cD_x,\cD'_x)} \quad\text{(by Lemma \ref{lem:ED-to-SD})}
\end{align*}
Plugging this bound back into \eqref{eq:ow2h-helper}
\[\Rightarrow \varepsilon_i^2 \leq\sum_x p_x \cdot 16 \cdot \SD(\cD_x,\cD'_x)
   \leq16 \cdot \Exp_x[ \SD(\cD_x,\cD'_x)] .
\]
where $x$ is sampled by measuring the query register of $\ket{\widetilde{\psi}_i}$, which is exactly the distribution output by $\cB$ conditioned on sampling $i$.

\begin{align*}
\Exp_{z \leftarrow \cB} \left[ \SD(\cD_z,\cD'_z) \right]
   &\geq \Exp_{i \leftarrow [q]} \left[ \frac{\varepsilon_i^2}{16} \right] \\
   &\geq \frac{1}{16}\cdot \Exp_{i \leftarrow [q]}\left[\varepsilon_i \right]^2\\
   &=\frac{1}{16}\left( \frac{\sum_i \varepsilon_i}{q}\right)^2\\ 
   &=\frac{\Delta^2}{16q^2}.
\end{align*}

\end{proof}
\kabir{use actual BBBV}
\begin{corollary}\label{cor:ow2h-classical}
For any oracles $\cO, \cO'$,for any $q$-query adversary $\adv$, 
let $\advB$ be the algorithm that samples $i \gets [q]$ and runs $\adv$ up to just before 
the $i$-th query, then measures the query register in the computational basis 
and returns the measurement output. Let $\Delta := \td(\ket{\psi}, \ket{\psi'})$ 
where $\ket{\psi}$ and $\ket{\psi'}$ are the final states of $\adv^U$ and $\adv^{U'}$ 
respectively. Then
\[
  \Prr_{x \leftarrow \cB^U} 
  \big[ \cO(x) \neq \cO'(x)\big] 
  \geq\frac{\Delta^2}{16 q^2}.
\]
\end{corollary}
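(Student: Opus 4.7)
The plan is to mirror the proof of Theorem~\ref{thm:ow2h-comp} directly, using the standard classical-query unitary $U_\cO$ defined by $U_\cO\ket{x, y, z} = \ket{x, y \oplus \cO(x), z}$ in place of the compression unitary $U$. Because classical oracles are deterministic and $U_\cO - U_{\cO'}$ has a very transparent action on computational basis states, the argument is actually simpler than in the theorem. Throughout the proof, I will identify $\adv^U$ in the corollary statement with $\adv^\cO$ (and analogously for $\cO'$), interpreting the classical oracle as a unitary query.

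First I would set up the same hybrid sequence $\{\cH_i\}_{i \in [0,q]}$, where $\cH_i$ answers the first $i$ queries with $U_\cO$ and the remaining with $U_{\cO'}$. Let $\ket{\psi_i}$ be the final state of $\cH_i$ and $\ket{\widetilde{\psi}_i}$ the state just before the $i$-th query, noting that $\ket{\widetilde{\psi}_i}$ is identical in $\cH_i$ and $\cH_{i-1}$. Setting $\varepsilon_i := \|\ket{\psi_i} - \ket{\psi_{i-1}}\|$, Theorem~\ref{thm:trace-dist-and-euclidean-dist} together with the triangle inequality for trace distance gives $\sum_i \varepsilon_i \geq \Delta$; and by unitarity of the post-$i$ operations, $\varepsilon_i = \|(U_\cO - U_{\cO'})\ket{\widetilde{\psi}_i}\|$, exactly as in the theorem's proof.

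Next I would compute this norm. Decomposing $\ket{\widetilde{\psi}_i} = \sum_x \sqrt{p_x}\,\ket{x}_\reg{X} \otimes \ket{\phi_x}_{\reg{YZ}}$ where $p_x$ is the probability of measuring $x$ on the query register, one observes that $(U_\cO - U_{\cO'})$ annihilates $\ket{x}\ket{\phi_x}$ whenever $\cO(x) = \cO'(x)$, and otherwise sends each basis state $\ket{x, y, z}$ to $\ket{x} \otimes (\ket{y \oplus \cO(x)} - \ket{y \oplus \cO'(x)}) \otimes \ket{z}$, a difference of orthogonal computational basis states with squared norm $2$. Summing over $x$,
\[
\varepsilon_i^2 \;=\; 2\!\!\sum_{x:\cO(x)\neq\cO'(x)} p_x \;=\; 2\,\Prr_{x \leftarrow \cB^\cO}\bigl[\cO(x)\neq\cO'(x)\,\bigm|\,i\bigr],
\]
where the conditioning is on the index $i\in[q]$ sampled by $\cB$. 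Averaging over $i \leftarrow [q]$ and applying Jensen's inequality as in the theorem,
\[
\Prr_{x \leftarrow \cB^\cO}[\cO(x) \neq \cO'(x)] \;\geq\; \tfrac{1}{2}\,\Exp_i[\varepsilon_i^2] \;\geq\; \tfrac{1}{2}\,\Exp_i[\varepsilon_i]^2 \;\geq\; \tfrac{\Delta^2}{2q^2} \;\geq\; \tfrac{\Delta^2}{16q^2}.
\]

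I do not expect any real obstacle here; the computation is cleaner than in the theorem because the classical query unitary is a permutation of computational basis states, so one avoids the delicate bound on $U_x - U'_x$ involving $\ket{\bot}$ and the states $\ket{\cD_x}$, $\ket{\cD'_x}$. One could alternatively try to derive the corollary by instantiating Theorem~\ref{thm:ow2h-comp} with delta product distributions $\cD_x$ supported on $\cO(x)$, using that $\SD(\cD_x,\cD'_x) = \mathbb{1}[\cO(x) \neq \cO'(x)]$. However, that reduction pays a factor-of-$4$ loss because each compressed oracle query is implemented by two applications of $U$, so the $2q$ factor in the denominator of Theorem~\ref{thm:ow2h-comp} would yield only $\Delta^2/(64q^2)$; the direct approach above is therefore preferable.
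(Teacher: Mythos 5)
Your overall strategy --- a direct hybrid argument mirroring Theorem~\ref{thm:ow2h-comp}, but with the standard XOR query unitaries $U_\cO$, $U_{\cO'}$ in place of the compression unitaries --- is the right one, and is exactly what the paper's margin note ``use actual BBBV'' intends (no proof is given in the paper). The hybrid setup and the Jensen step at the end are fine. However, the key numerical claim, $\varepsilon_i^2 = 2\sum_{x:\cO(x)\neq\cO'(x)} p_x$, is \emph{false} as an equality, and the reasoning that produces it (``sends each basis state to a norm-$\sqrt{2}$ vector; summing over $x$'') does not follow. The images of distinct computational basis states under $U_\cO - U_{\cO'}$ are \emph{not} orthogonal: writing $a := \cO(x)$, $b := \cO'(x)$ with $a \neq b$, the restricted operator $X^a - X^b$ on the response register annihilates $\ket{y} + \ket{y\oplus a\oplus b}$ (so the uniform superposition over $\reg{Y}$ gives $\varepsilon_i$-contribution $0$), while it sends $\ket{y} - \ket{y\oplus a\oplus b}$ to a vector of twice its norm. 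Thus $\| (U_\cO - U_{\cO'})\ket{x}\ket{\phi_x}\|^2$ can be anything in $[0,4]$ for $x$ with $\cO(x)\neq\cO'(x)$; it is not pinned to $2\|\ket{\phi_x}\|^2$.

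The fix is the same kind of bound the paper uses inside Theorem~\ref{thm:ow2h-comp}: bound the operator norm, $\|X^a - X^b\|_{\mathrm{op}} = 2$ (since $X^a - X^b = X^a(I - X^{a\oplus b})$ and $X^{a\oplus b}$ has eigenvalues $\pm 1$), giving the inequality
\[
\varepsilon_i^2 \leq 4 \sum_{x:\cO(x)\neq\cO'(x)} p_x.
\]
Carrying this through your Jensen argument yields $\Prr_{x\leftarrow \cB^U}[\cO(x)\neq\cO'(x)] \geq \Delta^2/(4q^2) \geq \Delta^2/(16q^2)$, so the corollary still follows, indeed with room to spare --- but the proof as written is not correct, and the equality must be replaced by this inequality. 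Your side remark that instantiating Theorem~\ref{thm:ow2h-comp} with point-mass distributions loses a factor of $4$ (the simulation uses $2q$ compression-unitary queries, so the bound degrades to $\Delta^2/(64q^2)$, which no longer implies the stated corollary) is accurate, and does justify preferring the direct calculation.
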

\subsection{Defining the Breaking Oracle}
Here we describe a set of oracles that can be defined alongside arbitrary oracles $O$ and which will allow an efficient classical prover to break the soundness of proofs of quantumness constructed \kabir{use S} using $O$.

\noindent For any oracle $O$ and any $v = (C_1^{(\cdot)}, s_1, \ldots, C_t^{(\cdot)}, s_t)$, define the state $\ket{\psi^O_v}$ via the following iterative process:
\begin{itemize}
\item Define $\ket{\phi_0} := \ket{0}$.  
\item For $j=1$ to $t$:  
\begin{itemize}
   \item $\Pi_j := \ket{s_j}\!\bra{s_j} \otimes \bbI$  
   \item $\ket{\widetilde{\phi_j}} := C_j^O \ket{\phi_{j-1}}$  
   \item $\ket{\phi_j} := \frac{\Pi_j \ket{\widetilde{\phi_j}}}{\|\Pi_j \ket{\widetilde{\phi_j}}\|}$  
\end{itemize}
\item $\ket{\psi^O_v} := \ket{\phi_t}$.  
\end{itemize}
Fix any constant $\ell \in \mathbb{N}$. For any set of oracles $S = (H,\chk,O)$, every input of form $(v,\sigma,C)$, for each $i \in [\ell]$, define the distribution $\cD^{S,i}_{(v,\sigma,C)}$ as:
\begin{itemize} 
\item If $\chk((v,i-1),\sigma)\ne 1$ output $\bot$.  If $i=1$ then skip this step.
\item Parse $v$ as $(C_1^{(\cdot)}, s_1, \ldots, C_{i-1}^{(\cdot)}, s_{i-1})$.
\item Prepare $\ket{\psi} := C^O \ket{\psi_v^O} = \sum_s \alpha_s \ket{s}\ket{\psi_s}$.  
\item Sample $s$ from the distribution obtained by measuring the first register of $\ket{\psi}$ in the computational basis.  
\item Return $(s, H(v,C,s,i))$.  
\end{itemize}
Define $\cD^{S,i} :=\bigotimes_{(v,\sigma,C)}\cD^{S,i}_{(v,\sigma,C)}$ and let $W^S_i$ be the corresponding compression unitary.  
We will use the shorthand $W^S$ to refer to the collection $\{W^S_i\}_{i\in [\ell]}$.  

\begin{lemma}
\label{lem:ow2h-multi}
For any $S$ and $S'$, let $A$ be a $q$-query adversary. Let 
\[
    \varepsilon := |\Pr[\cA^{W^S}=1] - \Pr[\cA^{W^{S'}}=1]|
\]
Let $\cB$ be the algorithm that samples $t \leftarrow [q]$, measures the query register of the $t$-th query of $\cA^{W^S}$, obtains result $x$, and outputs $(x,i)$ where the $t$-th query is made to oracle $W^S_i$.  
Then
\[
\Exp_{(x,i)\leftarrow \cB^{W^S}}\Big[ \SD(\cD_x^{S,i}, \cD_x^{S',i}) \Big] 
  \geq\frac{\varepsilon^2}{16q^2}.
\]
\end{lemma}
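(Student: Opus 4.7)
The plan is to reduce to the single-oracle one-way-to-hiding bound proven as Theorem \ref{thm:ow2h-comp}, by bundling the $\ell$ compression unitaries $\{W^S_i\}_{i \in [\ell]}$ into a single oracle $W^S := \sum_i \ket{i}\!\bra{i}_{\reg{I}} \otimes W^S_i$ whose query register is $(\reg{I},\reg{X})$. A query to $W^S$ is then controlled on the index $i \in [\ell]$, and measuring the query register yields a pair $(x,i)$, matching the output format of $\cB^{W^S}$ stated in the lemma. Writing $W^S_i = \sum_x \ket{x}\!\bra{x}_{\reg{X}} \otimes U^{S,i}_x$, the single-oracle algebra from Theorem \ref{thm:ow2h-comp} now applies block-by-block to $U^{S,i}_x - U^{S',i}_x$.

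Next, I would set up the same sequence of hybrids $\cH_0,\dots,\cH_q$ as in the proof of Theorem \ref{thm:ow2h-comp}: in $\cH_t$ the first $t$ queries are answered by $W^S$ and the remaining queries by $W^{S'}$. Let $\ket{\psi_t}$ be the final state of $\cH_t$ and $\ket{\widetilde{\psi}_t}$ be the state just before the $t$-th query (which is identical in $\cH_t$ and $\cH_{t-1}$, and also identical to the state just before the $t$-th query in $\cA^{W^S}$). Setting $\varepsilon_t := \|\ket{\psi_t} - \ket{\psi_{t-1}}\|$, unitary invariance of the norm gives $\varepsilon_t = \|(W^S - W^{S'})\ket{\widetilde{\psi}_t}\|$. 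Decomposing $\ket{\widetilde{\psi}_t} = \sum_{i,x} \sqrt{p_{i,x,t}} \ket{i}\ket{x}\ket{\phi_{i,x,t}}$, where $p_{i,x,t}$ is the probability of obtaining $(x,i)$ when the query register is measured at step $t$ of $\cA^{W^S}$, yields
\[
\varepsilon_t^2 \;=\; \sum_{i,x} p_{i,x,t}\,\bigl\|(U^{S,i}_x - U^{S',i}_x)\ket{\phi_{i,x,t}}\bigr\|^2.
\]
The algebraic bound from the proof of Theorem \ref{thm:ow2h-comp} applied individually to each pair $(U^{S,i}_x, U^{S',i}_x)$ gives $\|(U^{S,i}_x - U^{S',i}_x)\ket{\phi_{i,x,t}}\| \leq 4\sqrt{\SD(\cD^{S,i}_x, \cD^{S',i}_x)}$, and hence
\[
\varepsilon_t^2 \;\leq\; 16 \cdot \mathbb{E}_{(x,i)\text{ at step }t}\!\bigl[\SD(\cD^{S,i}_x, \cD^{S',i}_x)\bigr].
\]

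To relate $\sum_t \varepsilon_t$ to the distinguishing advantage $\varepsilon$: the probability difference is bounded by the trace distance of the output states, which in turn is bounded by the Euclidean distance via Theorem \ref{thm:trace-dist-and-euclidean-dist}, so by the triangle inequality $\sum_t \varepsilon_t \geq \td(\ket{\psi_q},\ket{\psi_0}) \geq \varepsilon$. Finally, since $\cB^{W^S}$ samples $t$ uniformly from $[q]$ and then measures the query register of $\cA^{W^S}$ just before the $t$-th query,
\[
\mathbb{E}_{(x,i)\leftarrow \cB^{W^S}}\!\bigl[\SD(\cD^{S,i}_x, \cD^{S',i}_x)\bigr] \;=\; \tfrac{1}{q}\sum_t \mathbb{E}_{(x,i)\text{ at step }t}\!\bigl[\SD(\cD^{S,i}_x, \cD^{S',i}_x)\bigr] \;\geq\; \tfrac{1}{16q}\sum_t \varepsilon_t^2 \;\geq\; \tfrac{(\sum_t \varepsilon_t)^2}{16q^2} \;\geq\; \tfrac{\varepsilon^2}{16q^2},
\]
where the penultimate inequality is Cauchy–Schwarz.

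I do not expect a serious obstacle: the only thing that changes relative to Theorem \ref{thm:ow2h-comp} is the presence of the index register $\reg{I}$. The mild technical point to be careful about is that $\cB$ runs with the fully real oracle $W^S$ rather than with a hybrid, but exactly as in the single-oracle case the state just before the $t$-th query in $\cA^{W^S}$ coincides with $\ket{\widetilde{\psi}_t}$ because the hybrids $\cH_t$ and $\cH_q$ agree on the first $t$ queries; so the per-step measurement distributions used above are indeed the distributions sampled by $\cB^{W^S}$ conditioned on each value of $t$.
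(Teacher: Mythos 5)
Your proposal is correct and takes essentially the same approach as the paper: both observe that the collection $\{W^S_i\}_{i\in[\ell]}$ can be bundled into a single controlled unitary indexed by $(i,x)$, and both then run the one-way-to-hiding hybrid argument to get the bound. The only difference is cosmetic: the paper additionally observes that the bundled unitary is itself the compression unitary of the product distribution $\bigotimes_{i,x}\cD_x^{S,i}$, which lets it invoke Theorem~\ref{thm:ow2h-comp} directly as a black box (with the index $i$ absorbed into the ``$x$'' of that theorem), whereas you re-derive the hybrid chain and the block-by-block bound $\bigl\|(U^{S,i}_x - U^{S',i}_x)\ket{\phi}\bigr\|\le 4\sqrt{\SD(\cD^{S,i}_x,\cD^{S',i}_x)}$ inline; the algebra is identical either way, so nothing is gained or lost.
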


\begin{proof}
Let $U$ be the unitary that maps $\ket{i}\ket{\psi}$ to $\ket{i} \otimes W^S_i \ket{\psi}$ and 
let $U'$ be the unitary that maps $\ket{i}\ket{\psi}$ to $\ket{i} \otimes W^{S'}_i \ket{\psi}$.  
Any query to $W^S$ or $W^{S'}$ on state $\ket{\psi}$ can therefore be implemented by querying $U$ or $U'$ on $\ket{i}\ket{\psi}$.  
Additionally, $U$ and $U'$ are the corresponding compression unitaries of $\bigotimes_{i,x} \cD_x^{S,i}$ and $\bigotimes_{i,x} \cD_x^{S',i}$  
Putting these together, let $\widetilde{A}$ be the algorithm that runs $A$ and answers each $W^S$ or $W^{S'}$ query using $U$ or $U'$, and let $\widetilde{\cB}$ be the algorithm that samples $t \leftarrow [q]$, measures the query register of the $t$-th query of $\widetilde{A}$, and outputs the measurement result $(x,i)$.  
Since
\[
|\Pr[\widetilde{A}^U=1] - \Pr[\widetilde{A}^{U'}=1]| = \varepsilon,
\]
by Theorem \ref{thm:ow2h-comp}
\[
\Exp_{(x,i)\leftarrow \widetilde{\cB}^U}\Big[ \SD(\cD_x^{S,i}, \cD_x^{S',i}) \Big] 
  \geq\frac{\varepsilon^2}{16q^2}.
\]
The claim follows from the observation that $\widetilde{\cB}^U$ and $\cB^{W^S}$ have identical output distributions.
\end{proof}
\subsection{One-Way to Hiding for the Breaking Oracle}
\kabir{informal description/proof sketch}
\begin{theorem}
\label{thm:sep-core}
Let $(O,O',z)$ be joint random variables where $O$ and $O'$ represent oracles and $z$ represents some side information. Let $T$ be the set $\{x: O(x)\ne O'(x)\}$.  
Let $H$ be a random oracle from $\{0,1\}^* \to \{0,1\}^\lambda$.  
Define $\chk(x,y) := \ind\{H(x)=y \ne \bot\}$.  
Let $S := (H,\chk,O)$ and $S' := (H,\chk,O')$.  
Let $q \leq2^{o(\lambda)}$.  
Suppose for all $q$-query algorithms $\cB$,
\[
\Delta := \max_\cB \Big( \Prr_{O,O',z,H}[\cB^{W^S}(z) \in T] \Big)
\]
then for all $q$-query adversaries $\cA$ and for all measurements $\cM$ that are allowed to depend on $O,O',H,z$,
\[
\Big| \Prr_{O,O',H,z}[\cM(\cA^{W^S}(z))=1] - \Prr_{O,O',H,z}[\cM(\cA^{W^{S'}}(z)=1] \Big| 
   \leq\poly(q)\cdot \widetilde{\Delta}^{\frac{1}{2\cdot 4^\ell}},
\]
where $\widetilde{\Delta} := \max(\Delta, 1/2^{\lambda/2})$
\kabir{where W is ... from section...}
\end{theorem}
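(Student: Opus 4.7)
The plan is to prove the theorem by induction on $\ell$, traversing the tree of state-preparation circuits encoded in transcripts $v$. For the base case $\ell=1$, the state $\ket{\psi_v^O}=\ket{0}$ is independent of $O$, so the only way the sampling distributions $\cD^{S,1}_{(v,\sigma,C)}$ and $\cD^{S',1}_{(v,\sigma,C)}$ can disagree is if $C^O$ queries $T$ on input $\ket{0}$. Applying \lemref{ow2h-multi} to extract a query index, followed by standard one-way-to-hiding inside the circuit $C$ (\corref{ow2h-classical}), yields a $T$-point with probability polynomially related to $\varepsilon$ raised to a constant power. For the inductive step, starting from a distinguisher $\cA$ with advantage $\varepsilon$, I would first invoke \lemref{ow2h-multi} to obtain a measurement algorithm producing a query index $(x,i)$ with $x=(v,\sigma,C)$ and $\E[\SD(\cD_x^{S,i},\cD_x^{S',i})] \geq \varepsilon^2/(16q^2)$.

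Next, I would argue that a positive statistical-distance contribution requires $\chk((v,i-1),\sigma)=1$, i.e.\ $\sigma=H(v,i-1)$; because $H$ is an independent random oracle with $\lambda$-bit outputs and $q\leq 2^{o(\lambda)}$, the probability of producing such a $\sigma$ without having queried $H$ on $(v,i-1)$ is at most $q\cdot 2^{-\lambda}$, which produces the $\widetilde{\Delta}\geq 2^{-\lambda/2}$ floor. Under a valid signature, the statistical distance is bounded by $\TD(C^O\ket{\psi_v^O},C^{O'}\ket{\psi_v^{O'}})$, which by the triangle inequality and unitary invariance of trace distance splits into a ``circuit part'' $\TD(C^O\ket{\psi_v^{O'}},C^{O'}\ket{\psi_v^{O'}})$ and a ``state part'' $\TD(\ket{\psi_v^O},\ket{\psi_v^{O'}})$. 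In the circuit part, applying O2H (\corref{ow2h-classical}) to the fixed circuit $C$ acting on a state the extractor produces (using $W^S$ together with the classical prefix in $v$) yields a $T$-point with a further quadratic loss. In the state part, the state $\ket{\psi_v^O}$ is built from a depth-$(i-1)$ chain of circuits $C_1^O,\dots,C_{i-1}^O$ interleaved with projections onto $s_1,\dots,s_{i-1}$, and the inductive hypothesis, applied to a sub-distinguisher that simulates these states by querying $W^S_1,\dots,W^S_{i-1}$, extracts a $T$-point from the preparation queries.

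Each recursive descent incurs two squarings in the loss --- one from the outer use of \lemref{ow2h-multi} that picks the query index, one from the inner O2H that extracts within the identified circuit --- giving the $4^\ell$ factor in the exponent after $\ell$ levels, together with an initial factor of $2$ at the top level to reach the stated $\widetilde{\Delta}^{1/(2\cdot 4^\ell)}$ bound. The main obstacle I anticipate is the state-part case: the extractor has only black-box access to $W^S$ and cannot directly inspect the breaking oracle's internal state, so reproducing (or even approximating coherently) $\ket{\psi_v^O}$ requires running the classical prefix circuits $C_1,\dots,C_{i-1}$ and conditioning on outcomes $s_1,\dots,s_{i-1}$, which both costs queries and shifts the probability accounting. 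Managing the difference in post-selection success probabilities under $O$ versus $O'$ --- so that a small trace distance between the two preparation trajectories is not amplified by renormalization into a large one --- together with correctly composing polynomial losses across $\ell$ nested O2H applications so that the final bound matches $\poly(q)\cdot\widetilde{\Delta}^{1/(2\cdot 4^\ell)}$, is the most delicate technical task.
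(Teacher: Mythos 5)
Your proposal uses a different overall strategy from the paper, and that strategy has a gap that you yourself put your finger on but do not resolve. The paper does not prove the theorem by descending recursively into the state-preparation chain encoded in $v$. Instead it introduces a family of ``bad'' sets $\mathsf{Bad}_i := \{v : \TD(\ket{\psi_v^O},\ket{\psi_v^{O'}}) > \tau_i\}$ with thresholds $\tau_i := \widetilde{\Delta}^{1/4^i}$, and a corresponding family of \emph{modified} hash oracles $H_i$ that return $\bot$ on inputs $(v,j)$ with $j \leq i$ and $v \in \mathsf{Bad}_i$. The argument is then a hybrid chain over the oracle families $S_t = (H_t,\chk_t,O)$, $S'_t = (H_{t+1},\chk_t,O)$, $S_\ell = (H_\ell,\chk_\ell,O)$: Claim~\ref{clm:sep-core-1} bounds the distance between $S_t$ and $S'_t$ by combining \lemref{ow2h-multi} with the observation that a valid $\chk_t$-signature on $(v,t)$ forces $v \notin \mathsf{Bad}_t$, i.e.\ $\TD(\ket{\psi_v^O},\ket{\psi_v^{O'}}) \leq \tau_t$; Claim~\ref{clm:sep-core-2} bounds the distance between $S'_t$ and $S_{t+1}$ by noting that the two only differ on transcripts in $\mathsf{Bad}_{t+1}$ whose hash value is never revealed, giving a $q/2^{\lambda/2}$ bound. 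Claim~\ref{clm:sep-core-combined} stitches these together by a short induction on the bound $\Delta_i \leq \poly(q)\cdot\widetilde{\Delta}^{1/4^i}$, after which the final argument splits the statistical distance into a ``circuit part'' and a ``state part'' and handles the state part via the pre-established closeness $\tau_\ell$ together with the independence of $H$ on unqueried bad transcripts. No recursion into the preparation circuits is ever performed.

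The gap in your approach is precisely your ``state-part case.'' Querying $W^S_1,\dots,W^S_{i-1}$ returns classical samples from the conditional distributions $\cD^{S,j}_{(\cdot)}$; it does \emph{not} give you access to the intermediate quantum states $\ket{\psi_{(C_1,s_1,\dots,C_j,s_j)}^O}$, which are defined by post-selecting on the outcomes $s_1,\dots,s_j$ and are held internal to the (idealized) oracle. A ``sub-distinguisher that simulates these states'' would have to re-run the preparation chain under $O$, condition on the specific $s_1,\dots,s_{i-1}$, and renormalize --- but without knowing $O$ this is exactly the task the oracle is there to make hard, the post-selection probabilities may be exponentially small, and the probability of landing on a given $(s_1,\dots,s_{i-1})$ under $O$ versus $O'$ need not even be comparable, so the renormalization step blows up rather than preserving trace distance. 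You flag exactly this as ``the most delicate technical task,'' but offer no mechanism to overcome it. The paper's mechanism is the signature/hash device: the structure of $\chk$ restricts the adversary to transcripts that were authenticated by $H$, the modified oracles $H_t$ prune out transcripts where the two preparation trajectories diverge by more than $\tau_t$, and the indistinguishability of $H_t$ from $H_{t+1}$ (Claim~\ref{clm:sep-core-2}) replaces the recursive extraction you would need. That device, not a deeper recursion into the preparation chain, is what supplies the $4^\ell$ loss via the geometric thresholds $\tau_i = \widetilde{\Delta}^{1/4^i}$, and without it your induction does not close.
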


\begin{proof}
Let $\tau_0 := 0$ and let $\tau_i := \Delta^{1/4^i}$ for $i>0$.  
For each $i\in[\ell]$, define random variable $\mathsf{Bad}_i$ in terms of $O,O'$ as follows:
\[
\mathsf{Bad}_i := \{ v: v \text{ of form } (C_1,s_1,\ldots,C_i,s_i) \text{ and } \TD(\ket{\psi_v^O}\,\ket{\psi_v^{O'}})>\tau_i\}.
\]
For each $i\in[\ell]$, define $H_i$ in terms of $H,O,O'$ as
\[
H_i(x) := \begin{cases}
\bot & \text{if $x$ is of form $(v,j)$ for $j\leq i$ and $v \in \mathsf{Bad}_i$}, \\
H(x) & \text{otherwise}.
\end{cases}
\]
Also define $\chk_i(x,y) := \ind\{H_i(x)=y\ne \bot\}$.

\noindent For $t\in[0,\ell-1]$,  define the following random variables in terms of $H,O,O'$:  
\begin{itemize}
  \item $S_t := (H_t,\chk_t,O)$ 
  \item $S'_t := (H_{t+1},\chk_t,O)$  
  \item $S_\ell := (H_\ell,\chk_\ell,O)$  
\end{itemize}

\begin{claim}
\label{clm:sep-core-1}
For all $q$-query algorithms $\cA$ and for all measurements $\cM$ that may depend on $O,O',H,z$, for all $t\in[0,\ell-1]$,
\[
\Big| \Prr_{O,O',H,z}[\cM(\cA^{W^{S_t}}(z))=1] - \Prr_{O,O',H,z}[\cM(\cA^{W^{S_t'}}(z))=1] \Big| 
   \leq\frac{4\sqrt{2}q(4q\sqrt{\Delta'}+\tau_t)}{\tau_{t+1}},
\]
where $\Delta' := \max_{q\text{-query } \cB} \Prr_{O,O',H,z}[\cB^{W^{S_t}}(z)\in T]$
\end{claim}

\begin{proof}
For any fixing of $O,O',H,z$, let $\varepsilon$ be defined as
\[
\varepsilon := \Big|\Pr[\cM(\cA^{W^{S_t}}(z))=1] - \Pr[\cM(\cA^{W^{S_t'}}(z))=1]\Big|.
\]
Let $\cB(z)$ be the algorithm that samples $t \leftarrow [q]$, measures the query register of the $t$-th query of $\cA(z)$, obtains result $x$, and outputs $(x,i)$ where the $t$-th query is made to the $i$-th oracle. Therefore, by Lemma \ref{lem:ow2h-multi}
\[
\Exp_{(x,i)\leftarrow \cB^{W^{S_t}}(z)}[\SD(\cD_x^{S_t,i},\cD_x^{S_t',i})] \geq\frac{\varepsilon^2}{16q^2}.
\]
Let $\widetilde{\cB}(z)$ be the algorithm that runs $B^{W^{S_t}}(z)$ to obtain $(x,i)$, where $x$ is of form $(v,\sigma,C)$. The algorithm then samples from $\cD_{(v,\sigma,\widetilde{C})}^{S_t,i}$ where $\widetilde{C}$ is defined as the circuit that samples $j\leftarrow[q]$, runs $C^O$, measures the $j$-th query register, and outputs the value.  Note that this can be done efficiently by querying $W^{S_t}_i$.  

For any $x = (v,\sigma, C)$, consider $\SD\left(\cD_x^{S_t,i}, \cD_x^{S_t',i}\right)$. Let $E$ be the event that $i=t+1$ and $\chk_t((v,t),\sigma)=1$ and let $\bbS$ be the set of $s$ such that $(v,C,s)\in \mathsf{Bad}_{t+1}$. If $E$ does not occur then the statistical distance is zero. If $E$ occurs, the distributions only differ upon sampling $s \in \bbS$. Let $C^{O}\ket{\psi^O_v} = \sum_s \sqrt{p_s}\ket{s}\ket{\psi_s}$ for real, positive $p_s$ values. Then,
\[
    \delta := \SD(\cD_x^{S_t,i}, \cD_x^{S_t',i}) = \ind\{E\} \cdot \sum_{s\in \bbS} p_s
\]
Additionally, if $E$ occurs then $\chk_t((v,t),\sigma)=1$ and as a result $v\notin \mathsf{Bad}_t$. Therefore,
\[
\ind\{E\}\cdot \td(\ket{\psi^O_v}, \ket{\psi^{O'}_v}) \leq \tau_t.
\]
This implies that by the triangle inequality
\begin{align*}
\ind\{E\}\cdot\td(C^O\ket{\psi^O_v}, C^{O'}\ket{\psi^O_v}) &\geq \ind\{E\}\cdot\td(C^O\ket{\psi^O_v}, C^{O'}\ket{\psi^{O'}_v}) - \ind\{E\}\cdot\td(C^{O'}\ket{\psi^O_v}, C^{O'}\ket{\psi^{O'}_v})\\
&= \ind\{E\}\cdot\td(C^O\ket{\psi^O_v}, C^{O'}\ket{\psi^{O'}_v}) - \ind\{E\}\cdot\td(\ket{\psi^O_v}, \ket{\psi^{O'}_v})\\
&\geq \ind\{E\}\cdot\td(C^O\ket{\psi^O_v}, C^{O'}\ket{\psi^{O'}_v}) - \tau_t
\end{align*}
We will now bound the term on the RHS in terms of $\tau_{t+1}$ and $\delta$.  
First suppose $C^{O'}\ket{\psi^{O'}_v} = \sum_s \sqrt{p'_s} \ket{s}\ket{\psi'_s}$ for non-negative $p'_s$ values. For any real $\theta$,
\begin{align*} 
\|C^O\ket{\psi^O_v} - e^{i\theta}C^{O'}\ket{\psi^{O'}_v}\|^2 &= \left\|\sum_s \left(\sqrt{p_s}\ket{\psi_s} - e^{i\theta}\sqrt{p'_s}\ket{\psi'_s}\right)\ket{s}\right\|^2\\
&= \sum_s \left\|\sqrt{p_s}\ket{\psi_s} - e^{i\theta}\sqrt{p'_s}\ket{\psi'_s}\right\|^2,
\end{align*}
Now, for any $s$
\begin{align*} 
\left\| \sqrt{p_s}\,|\psi_s\rangle - e^{i\theta} \sqrt{p'_s}\,|\psi'_s\rangle \right\|^2 
&\geq p_s + p'_s - 2\sqrt{p_s p'_s}\, \big|\langle \psi_s | \psi'_s \rangle\big|\\
&= \left( \sqrt{p'_s} - \sqrt{p_s}\,|\langle \psi_s | \psi'_s \rangle| \right)^2 
+ p_s \left( 1 - |\langle \psi_s | \psi'_s \rangle|^2 \right)\\
&\geq p_s \left( 1 - |\langle \psi_s | \psi'_s \rangle|^2 \right)\\
& = p_s \cdot \TD\big(|\psi_s\rangle, |\psi'_s\rangle\big)^2
\end{align*}
which implies
\begin{align*}  
\|C^O\ket{\psi^O_v} - e^{i\theta}C^{O'}\ket{\psi^{O'}_v}\|^2 &\geq\sum_s p_s\cdot \td(\ket{\psi_s},\ket{\psi'_s})^2\\
&\geq \sum_{s\in\bbS} p_s\cdot \td(\ket{\psi_s},\ket{\psi'_s})^2\\
&\geq \sum_{s\in\bbS} p_s\cdot \tau_{t+1}^2
\end{align*}
Since $\theta$ was chosen to be arbitrary, by Lemma \ref{lem:thetaED-to-TD}
\[
\td(C^O\ket{\psi^O_v}, C^{O'}\ket{\psi^{O'}_v}) \geq\tau_{t+1}\sqrt{\sum_{s\in\bbS} p_s/2}
\]
which by the definition of $\delta$ gives
\[
\ind\{E\}\cdot\td(C^O\ket{\psi^O_v}, C^{O'}\ket{\psi^{O'}_v}) \geq\tau_{t+1}\sqrt{\delta/2}
\]
which implies
\[
\ind\{E\}\cdot\td(C^O\ket{\psi^O_v}, C^{O'}\ket{\psi^O_v}) \geq\tau_{t+1}\sqrt{\delta/2} - \tau_t
\]
Let $d := \max(0,\tau_{t+1}\sqrt{\delta/2} - \tau_t)$. Then by Corollary \ref{cor:ow2h-classical} 
\[
\Pr[\ind\{E\}\cdot\widetilde{C}^O\ket{\psi^O_v} \in T] \geq\frac{d^2}{16q^2}.
\]
where $\widetilde{C}$ is the circuit that samples $j\leftarrow[q]$, runs $C^O$, measures the $j$-th query register, and outputs the value.
Recall that $\widetilde{\cB}(z)$ samples $((v,\sigma, C), i) \leftarrow\cB^{W^{S_t}}(z)$ and then outputs a sample from $\cD_{(v,\sigma,\widetilde{C})}^{S_t,i}$. Now, by the definition of the distribution, the probability that a sample lies in $T$ is the probability that $\chk_t((v,i),\sigma) =1$ and $\widetilde{C}^O\ket{\psi_v^O} \in T$. Therefore
\begin{align*}
\Pr\left[ \widetilde{\cB}^{W^{S_t}}(z) \in T \right]
&= \Exp_{(v, \sigma, c), i \leftarrow \cB^{W^{S_t}}(z)}
\left[ \ind\{\chk_t((v,i),\sigma) =1\}\cdot\Pr\left[\widetilde{C}^O\ket{\psi_v^O} \in T\right] \right] \\
&\geq \Exp_{(v, \sigma, c), i \leftarrow \cB^{W^{S_t}}(z)}
\left[ \ind\{E\}\cdot\Pr\left[\widetilde{C}^O\ket{\psi_v^O} \in T\right] \right]\\
&\geq \Exp_{(v, \sigma, c), i \leftarrow \cB^{W^{S_t}}(z)}
\left[ \frac{d^2}{16q^2}\right]\\
&\geq \frac{1}{16q^2}\Exp_{(v, \sigma, c), i \leftarrow \cB^{W^{S_t}}(z)}
\left[ \max(0,\tau_{t+1}\sqrt{\delta/2} - \tau_t)^2\right]
\end{align*}
Observing that $\max(0,\tau_{t+1}\sqrt{\delta/2} - \tau_t)^2$ is convex in $\delta$ and applying Jensen's inequality,
\begin{align*}
   \Pr\left[ \widetilde{\cB}^{W^{S_t}}(z) \in T \right]
   &\geq 
 \frac{1}{16 q^2}
\left( \max\left(0,\tau_{t+1}\sqrt{\Exp_{(v, \sigma, c), i \leftarrow \cB^{W^{S_t}}(z)}[\delta/2]} - \tau_t\right)\right)^2\\
 &\geq \frac{1}{16 q^2}
\left( \max\left( 0, \frac{\varepsilon \tau_{t+1}}{4 \sqrt{2}q } - \tau_t \right) \right)^2 \\
&\geq \frac{1}{16 q^2}
\left(\frac{\varepsilon \tau_{t+1}}{4 \sqrt{2}q } - \tau_t\right)^2 
\end{align*}
Taking expectation over $O, O', H, z$ and applying Jensen's inequality:
\begin{align*} 
\Prr_{O,O',H,z}\left[ \widetilde{\cB}^{W^{S_t}}(z) \in T \right]
&\geq \frac{1}{16q^2}
\left(\frac{\Exp_{O,O',H,z}[\varepsilon] \tau_{t+1}}{4\sqrt{2} q } - \tau_t\right)^2 
\end{align*}
Noting that the LHS is at most $\Delta'$ and rearranging
\begin{align*}
\Exp_{O,O',H,z}[\varepsilon]
&\leq \frac{4\sqrt{2} q  \big( 4 q \sqrt{\Delta'} + \tau_t \big)}{\tau_{t+1}}.
\end{align*}
which concludes the proof of the claim.
\end{proof}

\begin{claim}
\label{clm:sep-core-2}
For all $q$-query algorithms $\cA$ and measurements $\cM$ that may depend on $O,O',H,z$, for all $t\in[0,\ell-1]$,
\[
\Big|\Prr_{O,O',H,z}[\cM(\cA^{W^{S'_t}}(z))=1] - \Prr_{O,O',H,z}[\cM(\cA^{W^{S_{t+1}}}(z))=1]\Big| \leq4q/\sqrt{2^\lambda}.
\]
\end{claim}

\begin{proof}
For any fixing of $O,O',H,z$, define
\[\varepsilon := \left|\Pr[\cM(\cA^{W^{S'_t}}(z))=1] - \Pr[\cM(\cA^{W^{S_{t+1}}}(z))=1]\right|\] 
Let $\cB(z)$ be the algorithm that samples $t \leftarrow [q]$, measures the query register of the $t$-th query of $\cA(z)$, obtains result $x$, and outputs $(x,i)$ where the $t$-th query is made to the $i$-th oracle. Therefore, by Lemma \ref{lem:ow2h-multi}
\[
\Exp_{(x,i)\leftarrow \cB^{W^{S_{t+1}}}(z)}[\SD(\cD_x^{S_t,i},\cD_x^{S_{t+1},i})] \geq\frac{\varepsilon^2}{16q^2}.
\]
For $x = (v,\sigma,C)$, $\cD_x^{S_t,i}$ and $\cD_x^{S_{t+1}^O,i}$ are identical unless the following conditions hold
\begin{itemize}
    \item $i = t+2$
    \item $\chk_t((v,t+1),\sigma)=1$
    \item $\chk_{t+1}((v,t+1),\sigma)\neq1$
\end{itemize}
in which case the distributions are at distance $1$ from each other. The latter two conditions are met only if $v\in\mathsf{Bad}_{t+1}$ and $H_{t+1}(v,t+1)=\sigma$.  
\[
\Prr_{(x,i)\leftarrow\cB^{W^{S_{t+1}}}(z)}[v\in\mathsf{Bad}_{t+1}\wedge H_{t+1}(v,t+1)=\sigma] \geq\frac{\varepsilon^2}{16q^2}.
\]
Taking expectation over $O,O',H, z$ and applying Jensen's inequality
\[
\Prr_{\substack{O,O',H,z\\(x,i)\leftarrow\cB^{W^{S_{t+1}}}(z)}}[v\in\mathsf{Bad}_{t+1}\wedge H_{t+1}(v,t+1)=\sigma] \geq\frac{\Exp_{O,O',H,z}[\varepsilon]^2}{16q^2}.
\]
But for all $v\in\mathsf{Bad}_{t+1}$, the value of $H(v,t+1)$ is sampled randomly independent of the view of $\cB$. Therefore
\[
\frac{1}{2^\lambda} \geq\frac{\Exp_{O,O',H,z}[\varepsilon]^2}{16q^2}.
\]
which implies
\[
\Exp(\varepsilon) \leq\frac{4q}{2^{\lambda/2}}
\]
concluding the proof of the claim.
\end{proof}
\begin{claim}
\label{clm:sep-core-combined}
For all $q$-query algorithms $\cA$ and for all measurements $\cM$ that may depend on $O,O',H,z$,
\begin{align*}
\Big|\Prr_{O,O',H,z}\big[\cM\left(\cA^{W^{S_0}}(z)\right)=1\big]-\Prr_{O,O',H,z}\big[\cM\left(\cA^{W^{S_\ell}}(z)\right)=1\big]\Big|
\leq\poly(q)\cdot \widetilde{\Delta}^{1/4^\ell}
\end{align*}
\end{claim}

\begin{proof}
For $i \in [\ell]$, let $\delta_i$ and $\Delta_i$ be defined as
\begin{align*}
\delta_i&:=\max_{\cA,\cM}\Big|\Prr_{O,O',H,z}\big[\cM(\cA^{W^{S_{i-1}}}(z))=1\big]-\Prr_{O,O',H,z}\big[M(\cA^{W^{S_{i}}}(z))=1\big]\Big|\\
\Delta_i&:=\max_{\substack{q\text{-query} \cB}}\Prr_{O,O',H,z}\big[\cB^{W^{S_{i}}}(z)\in T\big]
\end{align*}
and let $\Delta_0 := \Delta$.
First observe that by Claims \ref{clm:sep-core-1} and \ref{clm:sep-core-2},
\begin{align*}
\delta_i\le\frac{4\sqrt{2}q\big(4q\sqrt{\Delta_{i-1}}+\tau_{i-1}\big)}{\tau_i}+\frac{4q}{2^{\lambda/2}}.
\end{align*}
By setting $\cM$ to be the measurement that accepts inputs that belong to $T$, we can see that
\begin{align*}
\Delta_i\leq\Delta_{i-1}+\frac{4\sqrt{2}q\big(4q\sqrt{\Delta_{i-1}}+\tau_{i-1}\big)}{\tau_i}+\frac{4q}{2^{\lambda/2}}.
\end{align*}
Note that $\Delta_0=\Delta\leq\widetilde{\Delta}$. Additionally recall that $\tau_0=0$ and $\tau_i:=\widetilde{\Delta}^{1/4^i}$.
We prove by induction on $i$ that $\Delta_i\leq\poly(q)\cdot \widetilde{\Delta}^{1/4^i}$. For $i=0$ this holds trivially. Now suppose that $\Delta_{i-1}\leq\poly(q)\cdot \widetilde{\Delta}^{1/4^{i-1}}$. Then for large enough $\lambda$,
\begin{align*}
\Delta_i&\leq\Delta_{i-1}+\frac{\poly(q)\sqrt{\Delta_{i-1}}+\poly(q)\cdot \widetilde{\Delta}^{1/4^{i-1}}}{\widetilde{\Delta}^{1/4^i}}+\frac{\poly(q)}{2^{\lambda/2}}\\
&\leq\poly(q)\cdot \widetilde{\Delta}^{1/4^{i-1}}
+\frac{\poly(q)\cdot \widetilde{\Delta}^{2/4^{i-1}}+\poly(q)\cdot \widetilde{\Delta}^{4/4^i}}{\widetilde{\Delta}^{1/4^i}}
+\frac{\poly(q)}{2^{\lambda/2}}\\
&\leq \poly(q)\cdot \widetilde{\Delta}^{1/4^i}.
\end{align*}
which proves the induction hypothesis. As a result, we may now also bound
\begin{align*}
\delta_i
&\leq\frac{\poly(q)\cdot \widetilde{\Delta}^{2/4^i}+\poly(q)\cdot \widetilde{\Delta}^{4/4^i}}{\widetilde{\Delta}^{1/4^i}}
+\frac{\poly(q)}{2^{\lambda/2}}\\
&\leq \poly(q)\cdot \widetilde{\Delta}^{1/4^i}.
\end{align*}
Finally, we obtain the bound in the claim by adding up
\begin{align*}
\sum_{i=1}^{\ell}\delta_i&\leq\sum_{i=1}^{\ell}\poly(q)\cdot \widetilde{\Delta}^{1/4^i}\\
&\leq \poly(q)\cdot \widetilde{\Delta}^{1/4^\ell}
\end{align*}
which concludes the proof of the claim.
\end{proof}
We now have the necessary tools to complete the proof of Theorem \ref{thm:sep-core}. Suppose there exists a $q$-query adversary along with measurement $\cM$ such that
\begin{align*}
\Big|\Prr_{O,O',H,z}\big[\cM(\cA^{W^{S}}(z))=1\big]-\Prr_{O,O',H,z}\big[\cM(\cA^{W^{S'}}(z))=1\big]\Big|\geq d(\lambda)
\end{align*}
for some function $d(\cdot)$. For some fixing of $O,O',H$, let $\varepsilon$ be the random variable defined as
\begin{align*}
\varepsilon:=\Big|\Pr\big[\cM(\cA^{W^{S}}(z))=1\big]-\Pr\big[\cM(\cA^{W^{S'}}(z))=1\big]\Big|.
\end{align*}
Let $\cB(z)$ be the algorithm that samples $t \leftarrow [q]$, measures the query register of the $t$-th query of $\cA(z)$, obtains result $x$, and outputs $(x,i)$ where the $t$-th query is made to the $i$-th oracle. Therefore, by Lemma \ref{lem:ow2h-multi}
\begin{align*}
\Exp_{x,i\leftarrow \cB^{W^{S}}(z)}\left[\SD\left(\cD^{S,i}_{x},\cD^{S',i}_{x}\right)\right]\geq\frac{\varepsilon^{2}}{16q^{2}}
\end{align*}
and by Jensen’s inequality,
\begin{align*}
\Exp_{O,O',H,z}\Exp_{x,i\leftarrow \cB^{W^{S}}(z)}\left[\SD\left(\cD^{S,i}_{x},\cD^{S',i}_{x}\right)\right]
\geq \Exp_{O,O',H,z}\left[\frac{\varepsilon^{2}}{16q^{2}}\right]\geq\frac{d^{2}(\lambda)}{16q^{2}}.
\end{align*}
For $x$ of form $(v,\sigma,C)$, the process of sampling from $\cD^{S,i}_{x}$ first involves checking that $H(v,i-1)=\sigma$ and outputs $\bot$ if not. Then it samples from the measurement distribution of $C^{O}\ket{\psi_v^{O}}$. Sampling from $\cD^{S',i}_{x}$ is identical except it samples from $C^{O'}\ket{\psi_v^{O'}}$. Therefore, for any $x$ of form $(v,\sigma,C)$,
\begin{align*}
\SD\left(\cD^{S,i}_{x},\cD^{S',i}_{x}\right)
&\leq\ind\big\{H(v,i-1)=\sigma\big\}\cdot \TD\left(C^{O}\ket{\psi_v^{O}},C^{O'}\ket{\psi_v^{O'}}\right)\\
&\leq\ind\big\{H(v,i-1)=\sigma\big\}\cdot\Big(\TD\left(C^{O}\ket{\psi_v^{O}},C^{O'}\ket{\psi_v^{O}}\right)+\TD\left(C^{O'}\ket{\psi_v^{O}},C^{O'}\ket{\psi_v^{O'}}\right)\Big)\\
&= \ind\big\{H(v,i-1)=\sigma\big\}\cdot\Big(\TD\left(C^{O}\ket{\psi^O_v},C^{O'}\ket{\psi^O_v}\right)+\TD\left(\ket{\psi_v^{O}},\ket{\psi_v^{O'}}\right)\Big),
\end{align*}
where $\ind\big\{H(v,i-1)=\sigma\big\}$ is $0$ if $x$ is not of form $(v,\sigma,C)$. Define $\alpha_x$ and $\beta_x$ as 
\begin{align*}
    \alpha_x &:= \ind\big\{H(v,i-1)=\sigma\big\}\cdot \TD\left(C^{O}\ket{\psi_v^{O}},C^{O'}\ket{\psi_v^{O'}}\right)\\
    \beta_x &:=  \ind\big\{H(v,i-1)=\sigma\big\}\cdot \TD\left(\ket{\psi_v^{O}},\ket{\psi_v^{O'}}\right)
\end{align*} Hence
\begin{align*}
\Exp_{O,O',H,z}\Exp_{x,i\leftarrow \cB^{W^{S}}(z)}\left[\alpha_x+\beta_x\right]
\geq
\Exp_{O,O',H,z}\Exp_{x,i\leftarrow \cB^{W^{S}}(z)}\left[\SD\left(\cD^{S,i}_{x},\cD^{S',i}_{x}\right)\right]
\geq\frac{d^{2}}{16q^{2}}.
\end{align*}
Therefore it must be the case that either
\begin{align*}
\Exp_{O,O',H,z}\Exp_{x,i\leftarrow \cB^{W^{S}}(z)}\left[\alpha_x\right]\ge\frac{d^{2}}{32q^{2}}
\quad\text{or}\quad
\Exp_{O,O',H,z}\Exp_{x,i\leftarrow \cB^{W^{S}}(z)}\left[\beta_x\right]\ge\frac{d^{2}}{32q^{2}}.
\end{align*}
First consider the case when the former holds, i.e., $\Exp_{O,O',H,z}\Exp_{x,i\leftarrow \cB^{W^{S}}(z)}\left[\alpha_x\right]\ge\frac{d^{2}}{32q^{2}}$.
Let $\widetilde{\cB}(z)$ be the algorithm that runs $\cB^{W^{S}}(z)$ to obtain $(x,i)$, where $x$ is of form $(v,\sigma,C)$, and then samples from $\cD^{S,i}_{(v,\sigma,\widetilde{C})}$, where $\widetilde{C}$ is the circuit that samples $j\leftarrow[q]$, runs $C$, measures the $j$-th query register, and outputs the value. Note that this can be done efficiently by querying $W^{S}_i$. Then
\begin{align*}
\frac{d^{2}}{32q^{2}}
&\leq\Exp_{O,O',H,z}\Exp_{x,i\leftarrow \cB^{W^{S}}(z)}\left[\ind\big\{H(v,i-1)=\sigma\big\}\cdot \TD\left(C^{O}\ket{\psi_v^{O}},C^{O'}\ket{\psi_v^{O'}}\right)\right]\\
&\leq\Exp_{O,O',H,z}\Exp_{x,i\leftarrow \cB^{W^{S}}(z)}\left[\ind\big\{H(v,i-1)=\sigma\big\}\cdot 4q\sqrt{\Pr\big[\widetilde{C}^O\ket{\psi_v^{O}}\in T\big]}\right]\\
&=4q\Exp_{O,O',H,z}\left(\Exp_{x,i\leftarrow \cB^{W^{S}}(z)}\left[\sqrt{\ind\big\{H(v,i-1)=\sigma\big\}\cdot \Pr\big[\widetilde{C}^O\ket{\psi_v^{O}}\in T\big]}\right]\right)\\
&\leq4q\Exp_{O,O',H,z}\left(\sqrt{\Exp_{x,i\leftarrow \cB^{W^{S}}(z)}\left[\ind\big\{H(v,i-1)=\sigma\big\}\cdot \Pr\big[\widetilde{C}^O\ket{\psi_v^{O}}\in T\big]\right]}\right)\\
&\leq4q,\sqrt{\Prr_{O,O',H,z}\left[\widetilde{\cB}^{W^{S}}\in T\right]},
\end{align*}
where the second step uses Corollary \ref{cor:ow2h-classical}  and the fourth step uses Jensen's inequality. By the definition of $\Delta$ this implies $d\leq 8q\sqrt{2}\Delta^{1/4} \leq 8q\sqrt{2}\widetilde{\Delta}^{1/4}$.

If instead it is the case that $\Exp_{O,O',H,z}\Exp_{x,i\leftarrow \cB^{W^{S}}(z)}\left[\beta_x\right]\ge\frac{d^{2}}{32q^{2}}$, i.e.,
\begin{align*}
\Exp_{O,O',H,z}\Exp_{x,i\leftarrow \cB^{W^{S}}(z)}\left[\ind\big\{H(v,i-1)=\sigma\big\}\cdot \TD\left(\ket{\psi_v^{O}},\ket{\psi_v^{O'}}\right)\right]\ge\frac{d^{2}}{32q^{2}},
\end{align*}
then for any fixing of $O,O',H,z$, let the random variable $\varepsilon$ be defined as
\begin{align*}
\varepsilon:=\Exp_{x,i\leftarrow \cB^{W^{S}}(z)}\left[\ind\big\{H(v,i-1)=\sigma\big\}\cdot \TD\left(\ket{\psi_v^{O}},\ket{\psi_v^{O'}}\right)\right].
\end{align*}
Let $\bbX_{H,O,O'}$ be the set defined as
\begin{align*}
\bbX_{H,O,O'}:=\left\{(x,i): H(v,i-1)=\sigma\ \text{and}\ \TD\left(\ket{\psi_v^{O}},\ket{\psi_v^{O'}}\right)\geq\tau_\ell\right\}
\end{align*}
and let $\cM:=\sum_{(x,i)\in \bbX_{H,O,O'}}\ket{x,i}\bra{x,i}$. Then by a Markov argument,
\begin{align*}
\Pr\big[\cM(\cB^{W^{S}}(z))=1\big]\ge\varepsilon-\tau_\ell
\end{align*}
Taking expectation,
\begin{align*}
\Prr_{O,O',H,z}\big[\cM(\cB^{W^{S}}(z))=1\big]\geq\frac{d^{2}}{32q^{2}}-\tau_\ell,
\end{align*}
Now by Claim \ref{clm:sep-core-combined} and noting that $S_0 = S$
\begin{align*}
\Prr_{O,O',H,z}\big[\cM(\cB^{W^{S_\ell}}(z))=1\big]\geq\frac{d^{2}}{32q^{2}}-\poly(q)\cdot \widetilde{\Delta}^{1/4^\ell}-\tau_\ell
\ge\frac{d^{2}}{32q^{2}}-\poly(q)\cdot \widetilde{\Delta}^{1/4^\ell}.
\end{align*}
However, note $H(v,i-1)$ for $(x,i)\in \bbX_{H,O,O'}$ is sampled independently at random. Therefore,
\begin{align*}
\frac{1}{2^{\lambda}}\ge\frac{d^{2}}{32q^{2}}-\poly(q)\cdot \widetilde{\Delta}^{1/4^\ell}
\end{align*}
which implies that
\begin{align*}
d\leq \poly(q)\cdot \widetilde{\Delta}^{1/(2\cdot 4^{\ell})}.
\end{align*}
which concludes the proof of the theorem.
\end{proof}
\subsection{Ruling Out Fully Black-Box Constructions}
\label{pdqp-subsec}
\begin{definition}[Fully Black-Box Construction of $\ell$-Round IV-PoQ from iO and OWPs]
    For any constant $\ell \in\bbN$, a fully black-box construction of an $\ell$-round ($c(\lambda), s(\lambda)$)-IV-PoQ from iO and OWPs consists of an oracle-aided QPT prover $\cP$ and an oracle aided PPT (with unbounded final step) verifier $\cV$ along with an oracle-aided $q(\lambda)$-query reduction $R$ for some polynomial $q$ and some functions $c(\lambda) > s(\lambda)$. The algorithms must satisfy the following:
    \begin{itemize}
        \item \textbf{Correctness:} For all large enough $\lambda\in\bbN$, for any permutation $f:\bin^\lambda \rightarrow\bin^\lambda$, and for any pair of functions $(\obf,\eval)$ such that $\eval(\obf(C),x) = C{^f}(x)$ for all classical circuits $C$, 
        \[
            \Pr[\langle \cP^{f,\obf,\eval},\cV^{f,\obf,\eval}\rangle(1^\lambda) = 1] \geq c(\lambda)
        \]
        \item \textbf{Black-Box Security Proof:} For any family of permutations $f:\bin^\lambda \rightarrow\bin^\lambda$, and for any pair of functions $(\obf,\eval)$ such that $\eval(\obf(C),x) = C{^f}(x)$ for all classical circuits $C$, 
        for any polynomial $p$, and for any classical algorithm $A$, if for infinitely many $\lambda$,
        \[
            \Pr[\langle A,\cV^{f,\obf,\eval}\rangle(1^\lambda) = 1] \geq c(\lambda)
        \]
       then there exist polynomials $p_1$ and $p_2$ such that for infinitely many $\lambda$,
        \begin{align*}
            \Prr_{x\leftarrow\bin^\lambda} \left[ R^{A,f,\obf,\eval} (f(x))=x \right] \ge \frac{1}{p_1(\lambda)}
        \end{align*}
        or 
        \begin{align*}
            \left| \Pr\left[ \cG^{\mathsf{iO}}_{0,\lambda}(R^{A,f,\obf,\eval})=1\right]-\Pr\left[\cG^{\mathsf{iO}}_{1,\lambda}(R^{A,f,\obf,\eval})=1 \right] \right| \ge \frac{1}{p_2(\lambda)}
        \end{align*}
        where $\cG^{\mathsf{iO}}_{b,\lambda}$ is the iO security game defined as follows for any adversary $\cA$:
        \begin{itemize}
            \item $\cG^{\mathsf{iO}}_{b,\lambda}(\cA)$:
        \begin{itemize}
            \item $(C_0,C_1)\leftarrow \cA^{f,\obf,\eval}(1^\lambda)$
            \item If $C_0$ is not functionally equivalent to $C_1$, or $C_1\notin\{0,1\}^{\lambda}$, or $C_0\notin\{0,1\}^{\lambda}$, return $\bot$
            \item $r\leftarrow\{0,1\}^\lambda$
            \item Return $\cA^{f,\obf,\eval}\big(\obf(C_b,r)\big)$
            \end{itemize}
        \end{itemize}
    \end{itemize}
\end{definition}

\noindent To show the impossibility of black-box constructions, we define a set of oracles $P$ such that 
\begin{itemize}
    \item we can instantiate $f, \eval, \obf$ that satisfy the correctness requirements using $P$,
    \item no query-bounded adversary can break the security of $f$ or $(\eval, \obf)$,
    \item and there exists a query-bounded classical algorithm $A$ such that
        \[
            \Pr[\langle A,V^{f,\obf,\eval}\rangle(1^\lambda) = 1] \geq c(\lambda)
        \] for all large enough $\lambda$.
\end{itemize}

\noindent Formally, we define a set of  oracles\footnote{Note that although we sample the oracles randomly, we sample the whole truth table at once and fix the oracle. i.e., the oracle is not a \textit{randomized} oracle. \kabir{this may be confusing} It is in fact possible to show the existence of some fixed oracle $P$ for which the result holds, but it is standard to omit this as it is unnecessary for fully black-box separations.}:
\begin{itemize}
\item Define $f$ such that for all $\lambda$, for all $x\in\bin^{\lambda}$, $f(x)=f_\lambda(x)$ where $f_\lambda$ is a random permutation on $\lambda$-bit strings.
\item Define $\obf$ such that for all $\lambda$, for all $C\in\bin^{\lambda}$, for all $r\in\bin^{\lambda}$, $\obf(C,r)=\obf_\lambda(C,r)$ where $\obf_\lambda : \bin^\lambda \times \bin^\lambda \rightarrow \bin^{3\lambda}$
is a random injective function.
\item For any oracle $O$, define $\eval^{O}$ that on input $(\obfC, z)$ performs the following:
\begin{itemize}
\item If $\obfC \notin \mathsf{Image}(\obf)$, return $\bot$.
\item Let $(C,r):=\obf^{-1}(\obfC)$. Parse $C$ as a (classical) circuit.
\item Return $C^{O}(z)$.
\end{itemize}
\item For each $\lambda$, let $H_\lambda$ be a random function from $\bin^*$ to $\bin^{\lambda}$ and define $\chk_\lambda(x,y):=\ind\{H_\lambda(x)=y\neq\bot\}$.
\item For each $\lambda$, for each $i\in[\ell]$, let $P_{i,\lambda}$ be drawn from $\cD^{S_{\lambda},i}$ where $S_\lambda$ is defined as $\big((f,\obf,\eval^{\obf,f}),H_\lambda,\chk_\lambda\big)$. We use the shorthand $P_\lambda$ to refer to $\{P_{i,\lambda}\}_{i\in[\ell]}$.
\end{itemize}


\subsubsection{One-Way Permutation Security
}\begin{theorem}[One-Way Permutation Security]
\label{thm:owp}
For all $q\leq2^{o(\lambda)}$, for all $q$-query adversaries $\cA$, and for large enough $\lambda$,
\begin{align*}
\Prr_{P_\lambda,x\leftarrow\bin^{\lambda}}\Big[\cA^{P_\lambda}\big(f(x)\big)=x\Big]\le\frac{1}{2^{\lambda/4^{2(\ell+2)}}}
\end{align*}
\end{theorem}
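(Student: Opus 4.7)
The plan is to apply \thmref{sep-core} to a punctured variant of the oracle $O := (f, \obf, \eval^{\obf,f})$ with side information $z = y := f(x)$. The punctured oracle $O'$ will differ from $O$ only on a tiny set $T$ tightly tied to the challenge preimage $x$, so that distinguishing $O$ from $O'$ is essentially equivalent to inverting $f$. The theorem then reduces the distinguishing advantage to the probability of querying $T$, which in turn reduces back to the inversion probability, yielding a self-referential inequality that can be solved to give the desired bound.

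Concretely, I would sample a random challenge $x \leftarrow \bin^\lambda$, set $y := f(x)$, and sample an auxiliary $x' \leftarrow \bin^\lambda \setminus \{x\}$. I would define $f'$ to be $f$ with the images of $x$ and $x'$ swapped, and set $O' := (f', \obf, \eval^{\obf, f'})$. The disagreement set $T$ then consists of $\{x, x'\}$ as inputs to $f$, together with the set of $\eval$-inputs $(\obfC, w)$ on which $C^f(w) \ne C^{f'}(w)$. A standard coupling argument shows that the joint distribution of $(O', y)$ is identical to that of a fresh random permutation with a random image point, so conditioned on $(O', y)$ the challenge $x$ is uniform over $\bin^\lambda \setminus \{x'\}$. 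In particular, the measurement $\cM$ that accepts iff the adversary's output equals $x$ accepts with probability at most $1/(2^\lambda - 1)$ in the $O'$ world.

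Letting $\epsilon^*$ denote the maximum inversion probability over $q$-query adversaries, \thmref{sep-core} then yields
\[
  \epsilon^* - \frac{1}{2^\lambda - 1} \leq \poly(q) \cdot \widetilde{\Delta}^{1/(2 \cdot 4^\ell)},
\]
where $\Delta$ is the maximum, over $q$-query $\cB$, of $\Pr[\cB^{W^S}(y) \in T]$. To close the loop, I need to bound $\Delta$ in terms of $\epsilon^*$ plus a negligible term. Directly outputting $x$ (or querying $f$ at $x$) contributes at most $\epsilon^*$; outputting $x'$ contributes at most $1/2^\lambda$ because $x'$ is uniform and independent of $\cB$'s $W^S$-view (which only depends on $f$, not on the swap); and for $\eval$-queries $(\obfC, w)$ with $C^f(w) \ne C^{f'}(w)$, I would apply a \corref{ow2h-classical}-style extractor that runs $C^f$ on $w$ and measures a random query position, recovering $x$ (or, with probability $1/2^\lambda$, $x'$) with loss polynomial in the size of $C$ and in $q$.

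The main obstacle will be this last extraction argument: one must argue that any output of $\cB$ landing in $T$ via an $\eval$-input can be converted, with at most a $\poly(q)$ multiplicative loss, into a genuine preimage of $y$, so that $\Delta \leq \poly(q) \cdot \epsilon^* + q \cdot 2^{-\lambda}$. Once this is in hand, the resulting inequality $\epsilon^* \leq \poly(q) \cdot (\epsilon^*)^{1/(2 \cdot 4^\ell)} + \negl(\lambda)$ can be solved by choosing $q \leq 2^{o(\lambda)}$ small enough that the $\poly(q)$ factor is dominated by the exponent $1/(2 \cdot 4^\ell)$; since $\widetilde{\Delta} \geq 2^{-\lambda/2}$ always, routine algebra then yields $\epsilon^* \leq 1/2^{\lambda/4^{2(\ell+2)}}$ for all sufficiently large $\lambda$, which is exactly the claimed bound.
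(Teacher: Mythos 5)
Your high-level plan — puncture the oracle near the challenge, apply \thmref{sep-core}, and reduce the disagreement-set probability back to inversion — is on the right track, and the information-theoretic claim that $x$ is uniform given $(O',y)$ is sound. But the final ``routine algebra'' step has a genuine gap. The inequality you arrive at has the shape $\epsilon^* \le \poly(q)\cdot(\epsilon^*)^{\alpha} + \negl(\lambda)$ with $\alpha = 1/(2\cdot 4^{\ell}) < 1$, and this is \emph{vacuous}: it is already satisfied by $\epsilon^*=1$ (indeed by any $\epsilon^*\le 1$, since $\poly(q)\ge 1$), so it does not force $\epsilon^*$ to be small. Dividing by $(\epsilon^*)^\alpha$ gives only $\epsilon^*\le(2\poly(q))^{1/(1-\alpha)}$, an upper bound that is much larger than $1$. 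Your remark that $\widetilde\Delta\ge 2^{-\lambda/2}$ ``always'' is also pointing the wrong way: that is a lower bound on $\widetilde\Delta$, which makes the right-hand side of \thmref{sep-core} larger, not smaller, so it cannot help close the loop. And you cannot ``choose $q$ small enough'' — the theorem must hold for every $q\le 2^{o(\lambda)}$.

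The paper avoids this self-reference entirely. Instead of one jump from $O$ to $O'$, it walks a chain of four worlds $S_1\to S_2\to S_3\to S_4$, puncturing $f$ at $x'$, then at $\{x,x'\}$, then at $x$, then nowhere. The critical structural choice is that at every application of \thmref{sep-core}, the disagreement set $T$ for that step consists of elements already shown to be hard to hit \emph{in the source world} — by the \emph{previous} claim, never by the very quantity $\epsilon^*$ one is trying to bound. Concretely: in \clmref{owp-2}, $T$ involves only $x$, and $x$ is information-theoretically hidden in $S_1$ given $f(x')$ (\clmref{owp-1}); in \clmref{owp-5}, $T$ involves only $x'$, information-theoretically hidden in $S_3$ given $f(x)$ (\clmref{owp-4}); and in \clmref{owp-6}, $T$ involves $x$, whose hardness in $S_3$ is inherited from \clmref{owp-5} via the symmetry step \clmref{owp-3}. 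Each application of \thmref{sep-core} then feeds forward an explicit, non-circular quantitative bound, degrading the exponent from $\lambda$ to $\lambda/4^{\ell+1}$ to $\lambda/4^{2\ell+3}$, which yields the final $2^{-\lambda/4^{2(\ell+2)}}$. If you keep the ``swap $x\leftrightarrow x'$'' picture, you still need to insert the intermediate doubly-punctured world so that the $\Delta$ bound at each transition references the previous step rather than $\epsilon^*$ itself; the one-shot version cannot be repaired by algebra alone.
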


\begin{proof}
We drop $\lambda$ from subscripts when clear by context. We first note that we may replace the oracle $P_\lambda$ with its corresponding compressed-oracle simulation. Additionally, the compressed oracle may be implemented by two queries to the corresponding compression unitaries $W^{S}=\{W^{S}_i\}_{i\in[\ell]}$. Let $\widetilde{\cA}$ be the $2q$-query algorithm that runs $\cA$ by answering its queries as above using $W^{S}$. Then
\begin{align*}
\Prr_{S, x\leftarrow\bin^{\lambda}}\Big[\widetilde{\cA}^{W^{S}}\big(f(x)\big)=x\Big]
=\Prr_{S, x\leftarrow\bin^{\lambda}}\Big[\cA^{P_\lambda}\big(f(x)\big)=x\Big].
\end{align*}
We will now show that for any fixing of $\obf$, for any $q \leq 2^{o(\lambda)}$, for any $q$-query adversary $\cB$,
\begin{align*}
\Prr_{f,H,x}\Big[\cB^{W^{S}}\big(f(x)\big)=x\Big]\le\frac{1}{2^{\lambda/4^{2(\ell+2)}}}
\end{align*}
This suffices to show that
\begin{align*}
\Prr_{S,x}\Big[\widetilde{\cA}^{W^{S}}\big(f(x)\big)=x\Big]\le\frac{1}{2^{\lambda/4^{2(\ell+2)}}}
\end{align*}
For the remainder of the proof, fix any $\obf$.

\begin{claim}
\label{clm:owp-find}
For any $f',f''$ and $y$ such that $f'(y)\ne f''(y)$ or $\eval^{\obf,f'}(y)\ne \eval^{\obf,f''}(y)$,
\begin{align*}
\Pr\Big[f'(y')\ne f''(y')\Big|y'\leftarrow \mathsf{Find}^{f'}(y)\Big]\ge\frac{1}{2|y|},
\end{align*}
where $\mathsf{Find}^{f'}(y)$ does the following:
\begin{itemize}
\item With probability $1/2$ output $y$.
\item With probability $1/2$, interpret $y$ as $(\obfC, z)$ and do the following:
\begin{itemize}
\item Compute $(C,\_)=\obf^{-1}(\obfC)$.
\item Run $C^{\obf,f'}(z)$ and keep track of the $f'$-queries made by $C$.
\item Return a random query.
\end{itemize}
\end{itemize}
\end{claim}

\begin{proof}
If $\eval^{\obf,f'}(y)= \eval^{\obf,f''}(y)$ then it must be the case that $f'(y) \neq f''(y)$ already.
If $\eval^{\obf,f'}(y)\ne \eval^{\obf,f''}(y)$, then $C^{\obf,f'}$ must query some $y'$ such that $f'(y')\ne f''(y')$. The claim follows from observing that $C$ is a circuit of size less than $|y|$ and so makes less than $|y|$ queries.
\end{proof}
\noindent For any set $\bbX$ and all $x^*\in\{0,1\}^*$, we define
\[
f_{\bbX}(x^*)=
\begin{cases}
f(x^*) & \text{if } x^*\notin\bbX,\\
\bot & \text{if } x^*\in\bbX.
\end{cases}
\]
Let $x'$ be sampled uniformly from $\{0,1\}^\lambda$. Define the random variables
\[
\cO_1:=(f_{\{x'\}},\obf,\eval^{\obf,f_{\{x'\}}}),\quad S_1:=(\cO_1,H,\chk),
\]
\[
\cO_2:=(f_{\{x,x'\}},\obf,\eval^{\obf,f_{\{x,x'\}}}),\quad S_2:=(\cO_2,H,\chk),
\]
\[
\cO_3:=(f_{\{x\}},\obf,\eval^{\obf,f_{\{x\}}}),\quad S_3:=(\cO_3,H,\chk),
\]
\[
\cO_4:=(f,\obf,\eval^{\obf,f}),\quad S_4:=(\cO_4,H,\chk).
\]
Note that the random variables are correlated with the choice of $x$ and $x'$ which are both sampled uniformly at random as stated above.

\begin{claim}
\label{clm:owp-1}
For all adversaries $\cB$,
\[
\Prr_{S,x,x'}\big[\cB^{W^{S_1}}(f(x'))=x\big]\le \frac{1}{2^\lambda}.
\]
\end{claim}
\begin{proof}
Follows from the observation that $x$ is sampled independent of the adversary’s view.
\end{proof}

\begin{claim}
\label{clm:owp-2}
For all $q\le 2^{o(\lambda)}$, all $q$-query adversaries $\cB$, and large enough $\lambda$,
\[
\Prr_{S,x,x'}\big[\cB^{W^{S_2}}(f(x))=x\big]\le \frac{\poly(q)}{2^{\lambda/4^{\ell+1}}}.
\]
\end{claim}
\begin{proof}
Let $T:=\{y:\cO_1(y)\ne \cO_2(y)\}$. Then for all $q$-query adversaries $\cB$,
\[
\max_{\cB}\Prr_{S,x,x'}\big[\cB^{W^{S_1}}(f(x'))\in T\big]\le \frac{2q}{2^\lambda}
\]
If $\cB$ could exceed this bound, then by Claim \ref{clm:owp-find} running $\mathsf{Find}^{f_{\{x'\}}}$ on the output would result in $x$ with probability $>1/2^\lambda$. Since a call to $\mathsf{Find}^{f_{\{x'\}}}$ on an input of size at most $q$ can be implemented with at most $q$ calls to $W^{S_1}$, this gives a $2q$ query algorithm that finds x with probability greater than $1/2^\lambda$, contradicting Claim \ref{clm:owp-1}. Let $\cM$ be the measurement that measures in computational basis and accepts if the output is  $x$. By Theorem \ref{thm:sep-core} applied to $\cM,\cO_1,\cO_2$, for all $q$-query adversaries $\cB$,
\[
\Big|\Prr_{S,x,x'}\big[\cB^{W^{S_1}}(f(x'))=x\big]-\Prr_{S,x,x'}\big[\cB^{W^{S_2}}(f(x'))=x\big]\Big|\leq \frac{\poly(q)}{2^{\lambda/4^{\ell+1}}},
\]
which along with Claim \ref{clm:owp-1} and the fact that $q\le 2^{o(\lambda)}$ implies the claim.
\end{proof}

\begin{claim}
\label{clm:owp-3}
For all $q\le 2^{o(\lambda)}$, all $q$-query adversaries $\cB$, and large enough $\lambda$,
\[
\Prr_{S,x,x'}\big[\cB^{W^{S_2}}(f(x'))=x'\big]=\Prr_{S,x',x}\big[\cB^{W^{S_2}}(f(x))=x\big]\leq \frac{\poly(q)}{2^{\lambda/4^{\ell+1}}}.
\]
\end{claim}
\begin{proof}
Follows from Claim \ref{clm:owp-2} and the fact that $x$ and $x'$ are symmetric in the view of $\cB$.
\end{proof}

\begin{claim}
\label{clm:owp-4}
For all adversaries $\cB$,
\[
\Prr_{S,x,x'}\big[\cB^{W^{S_3}}(f(x))=x'\big]\le \frac{1}{2^\lambda}
\]
\end{claim}
\begin{proof}
Follows from the observation that $x'$ is sampled independent of the adversary’s view.
\end{proof}

\begin{claim}
\label{clm:owp-5}
For all $q\le 2^{o(\lambda)}$, all $q$-query adversaries $\cB$, and large enough $\lambda$,
\[
\Prr_{S,x,x'}\big[\cB^{W^{S_3}}(f(x))=x\big]\leq \frac{\poly(q)}{2^{\lambda/4^{\ell+1}}}.
\]
\end{claim}
\begin{proof}
Let $T:=\{y:\cO_2(y)\ne\cO_3(y)\}$. Then for all $q$-query adversaries $\cB$,
\[
\max_{\cB}\Prr_{S,x,x'}\big[\cB^{W^{S_3}}(f(x))\in T\big]\le \frac{2q}{2^\lambda}
\]
If $\cB$ could exceed this bound, then by Claim \ref{clm:owp-find} running $\mathsf{Find}^{f_{\{x\}}}$ on the output would result in $x'$ with probability $>1/2^\lambda$. Since a call to $\mathsf{Find}^{f_{\{x\}}}$ on an input of size at most $q$ can be implemented with at most $q$ calls to $W^{S_3}$, this gives a $2q$ query algorithm that finds x' with probability greater than $1/2^\lambda$, contradicting Claim \ref{clm:owp-4}. Let $\cM$ be the measurement that measures in computational basis and accepts if the output is  $x$. By Theorem \ref{thm:sep-core} applied to $\cM,\cO_2,\cO_3$, for all $q$-query adversaries $\cB$,
\[
\Big|\Prr_{S,x,x'}\big[\cB^{W^{S_2}}(f(x))=x\big]-\Prr_{S,x,x'}\big[\cB^{W^{S_3}}(f(x))=x\big]\Big|\leq \frac{\poly(q)}{2^{\lambda/4^{\ell+1}}},
\]
which along with Claim \ref{clm:owp-3} and the fact that $q\le 2^{o(\lambda)}$ implies the claim.
\end{proof}

\begin{claim}
\label{clm:owp-6}
For all $q\le 2^{o(\lambda)}$, all $q$-query adversaries $\cB$, and large enough $\lambda$,
\[
\Prr_{S,x,x'}\big[\cB^{W^{S_4}}(f(x))=x\big]\le \frac{\poly(q)}{2^{\lambda/ 4^{2\ell+3}}}.
\]
\end{claim}
\begin{proof}
Let $T:=\{y:\cO_3(y)\ne\cO_4(y)\}$. Then for all $q$-query adversaries $\cB$,
\[
\max_{\cB}\Prr_{S,x,x'}\big[\cB^{W^{S_3}}(f(x))\in T\big]\le \frac{1}{2^{\lambda /4^{\ell+2}}}
\]
If $\cB$ could exceed the bound, then by Claim \ref{clm:owp-find} running $\mathsf{Find}^{f_{\{x\}}}$ on the output would result in $x$ with probability $>\frac{1}{2q\cdot2^{\lambda /4^{\ell+2}}}$. Since a call to $\mathsf{Find}^{f_{\{x\}}}$ on an input of size at most $q$ can be implemented with at most $q$ calls to $W^{S_3}$, this gives a $2q$ query algorithm that finds $x$ with probability greater than $\frac{1}{2q\cdot2^{\lambda /4^{\ell+2}}}$, which contradicts Claim \ref{clm:owp-5} since $q \leq 2^{o(\lambda)}$. Let $\cM$ be the measurement that measures in computational basis and accepts if the output is  $x$. By Theorem \ref{thm:sep-core} applied to $\cM,\cO_3,\cO_4$, for all $q$-query adversaries $\cB$,
\[
\Big|\Prr_{S,x,x'}\big[\cB^{W^{S_3}}(f(x))=x\big]-\Prr_{S,x,x'}\big[\cB^{W^{S_4}}(f(x))=x\big]\Big|\leq \frac{\poly(q)}{2^{\lambda/4^{2\ell+3}}},
\]
which along with Claim \ref{clm:owp-5} and the fact that $q\le 2^{o(\lambda)}$ implies the claim.
\end{proof}
\noindent Since Since $S_4=S$ and $q\le 2^{o(\lambda)}$, the theorem follows from Claim \ref{clm:owp-6}.
\end{proof}

\subsubsection{Indistinguishability Obfuscation Security}
Define $\cG_{b,\lambda}(\cA)$ as follows:
\begin{itemize}
\item $(C_0,C_1)\leftarrow \cA^{P_\lambda}(1^\lambda)$
\item If $C_0$ is not functionally equivalent to $C_1$, or $C_1\notin\{0,1\}^{\lambda}$, or $C_0\notin\{0,1\}^{\lambda}$, return $\bot$
\item $r\leftarrow\{0,1\}^\lambda$
\item Return $\cA^{P_\lambda}\big(\obf(C_b,r)\big)$
\end{itemize}

\begin{theorem}\label{thm:io-security}
For all $q\le 2^{o(\lambda)}$, all $q$-query adversaries $\cA$, and large enough $\lambda$,
\[
\Big|\Prr_{P_\lambda}\big[\cG_{0,\lambda}(\cA)=1\big]-\Prr_{P_\lambda}\big[\cG_{1,\lambda}(\cA)=1\big]\Big|\leq \frac{1}{2^{\lambda/4^{2(\ell+2)}}}.
\]
\end{theorem}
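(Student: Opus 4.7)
The plan is to follow the hybrid structure of Theorem~\ref{thm:owp}: first replace $P_\lambda$ with its compressed simulation $W^S$ via Theorem~\ref{thm:comp-oracle}, and then use Theorem~\ref{thm:sep-core} to transition from the real iO game to a ``punctured'' hybrid in which the bit $b$ is information-theoretically hidden.

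Fix any $\obf$ and consider the game with fresh $f, H, r$ and adversary challenge circuits $C_0, C_1$, with $\hat{C}^* := \obf(C_b, r)$. I would define a hybrid oracle $\cO^* := (f, \obf^*, \eval)$ by puncturing $\obf$ at both $(C_0, r)$ and $(C_1, r)$ (setting these entries to $\bot$) while leaving $\eval$ unchanged, so that $\eval$ still evaluates circuits against the original $\obf$. The key observation is that in this hybrid neither $\obf^*$ nor $\eval$ depends on $b$: the latter because $C_0^{\obf,f}(z) = C_1^{\obf,f}(z)$ for all $z$ by functional equivalence. Conditioned on $\obf^*$, the pair $(\obf(C_0,r), \obf(C_1,r))$ consists of two distinct uniform elements of $\bin^{3\lambda} \setminus \mathsf{Image}(\obf^*)$, whose joint distribution is symmetric in the two coordinates. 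Since $\hat{C}^*$ is whichever of these two elements corresponds to $b$, its marginal distribution given $\cO^*$ is identical for $b=0$ and $b=1$, so the adversary's acceptance probability in this hybrid is the same for both bits.

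It remains to bound the distinguishing advantage between the real game and the hybrid via Theorem~\ref{thm:sep-core}, taking $O = \cO$, $O' = \cO^*$, and $z = (C_0, C_1, \hat{C}^*)$. The difference set is $T = \{(C_0,r), (C_1,r)\}$ on the $\obf$-component. I would bound $\Delta = \max_\cB \Pr[\cB^{W^S}(z) \in T]$ by an adaptation of Claim~\ref{clm:owp-find}: any element of $T$ reveals $r$, and the only information about $r$ in the adversary's view is $\hat{C}^* = \obf(C_b, r)$, a random injection of $r$. Since each $W^S$-query internally executes a circuit of size at most $q$ making at most $q$ queries to $\obf$, the total number of ``effective'' probes of $\obf(C_b, \cdot)$ at random inputs is at most $q^2$, giving $\Delta \leq O(q^2/2^\lambda)$. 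For $q \leq 2^{o(\lambda)}$ we have $\widetilde{\Delta} = 2^{-\lambda/2}$, and Theorem~\ref{thm:sep-core} yields a distinguishing advantage bounded by $\poly(q) \cdot 2^{-\lambda/(2 \cdot 4^{\ell+1})}$, well within the claimed $1/2^{\lambda/4^{2(\ell+2)}}$.

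The hard part will be making the ``effective query'' counting rigorous in the compressed-oracle framework, since $W^S$ acts coherently on quantum states and the circuits it executes are run on superpositions rather than on classical inputs. I expect this to require a $\mathsf{Find}$-style reduction analogous to Claim~\ref{clm:owp-find}, possibly combined with a short sub-hybrid chain mirroring Claims~\ref{clm:owp-1}--\ref{clm:owp-6}, showing that any adversary that finds an element of $T$ with non-negligible probability can be converted into one that effectively inverts the random injection $\obf(C_b, \cdot)$ at $\hat{C}^*$, which by a union-bound argument succeeds with probability at most $O(q^2/2^\lambda)$.
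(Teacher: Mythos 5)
Your proposal shares the skeleton of the paper's argument (compress $P_\lambda$ to $W^S$, then use Theorem~\ref{thm:sep-core} to puncture and hide $b$), but it deviates in two substantive ways, and the deviations introduce gaps that are not merely ``the hard part I've deferred.''

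\textbf{The $\Delta$ bound is not establishable as you state it.} The core difficulty is a chicken-and-egg problem that your single application of Theorem~\ref{thm:sep-core} does not resolve. Theorem~\ref{thm:sep-core} converts a bound on $\Delta := \max_\cB \Pr[\cB^{W^S}(z) \in T]$ into a distinguishing bound, but $\Delta$ is measured against the \emph{real} oracle $W^S$, whose inner breaking oracle runs circuits against the un-punctured $\obf$ — including at $(C_b, r)$, which returns $\hat{C}^*$. Your assertion that this reduces to ``$q^2$ effective probes of a random injection, hence $\Delta \leq O(q^2/2^\lambda)$'' is precisely what needs to be proven, and the proof cannot be a one-liner: $W^S$ is a coherent compression unitary over a family of quantum-circuit-induced distributions, not $q^2$ plain classical queries to $\obf$. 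The paper resolves the circularity by bootstrapping through a chain of hybrids $\cO_1, \cO_2, \cO_3, \cO_4$ that introduce a \emph{fresh independent} $r'$: Claims~\ref{clm:obf-2} and~\ref{clm:obf-5} get a trivial $2^{-\lambda}$ bound because the target is information-theoretically independent of the adversary's view when the \emph{other} slice is punctured, Claim~\ref{clm:obf-4} uses the $r \leftrightarrow r'$ symmetry to transfer that bound, and Claims~\ref{clm:obf-3},~\ref{clm:obf-6},~\ref{clm:obf-7} each apply Theorem~\ref{thm:sep-core} using the \emph{previous} claim's $\Delta$ bound. That iterated structure — with its cost appearing as the $4^{2(\ell+2)}$ exponent rather than $4^{\ell}$ — is the bulk of the proof, not a short afterthought; your phrase ``a short sub-hybrid chain mirroring Claims~\ref{clm:owp-1}--\ref{clm:owp-6}'' substantially undersells it.

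\textbf{The $b$-independence claim for your hybrid is not exactly correct.} You condition ``on $\obf^*$'' and argue symmetry, but the hybrid oracle is $\cO^* = (f, \obf^*, \eval^{f,\obf})$, and $\eval^{f,\obf}$ still depends on the un-punctured values $\obf(C_0,r)$ and $\obf(C_1,r)$: both through $\obf^{-1}$ (which maps $\obf(C_0,r) \mapsto (C_0,r)$ versus $(C_1,r)$) and through any circuit $C$ evaluated by $\eval$ that queries $\obf$ at those points. Under the swap $\obf(C_0,r) \leftrightarrow \obf(C_1,r)$, $\eval^{f,\obf}$ is \emph{not} invariant in general, so exact $b$-independence fails. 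The paper's choice of puncturing the entire slice $(*,r)$ and defining $\eval^{f,\obf_{(*,r)}}$ — with inner circuit queries also redirected to the punctured $\obf_{(*,r)}$, and supporting $\cB$ with $\obf(*,r)$ and $\mathsf{Rev}$ as advice — is tailored to make the symmetry argument and the $r \leftrightarrow r'$ exchange go through cleanly. With only two punctured points and the un-punctured $\eval$, those pieces do not snap together: you would need to bound the probability that any evaluated circuit queries $\obf$ at $(C_0,r)$ or $(C_1,r)$, turning your claimed equality into an inequality with an error term that must propagate through the rest of the argument.

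In short, your narrower puncturing and single-shot use of Theorem~\ref{thm:sep-core} is a genuinely different route, but the two ingredients the paper introduces — the full-slice puncture with the adapted $\eval$, and the bootstrapping chain through a fresh $r'$ — are each doing essential work that your sketch leaves unaddressed.
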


\begin{proof}
We drop $\lambda$ from subscripts when clear. We first note that we can replace $P_\lambda$ with its corresponding compressed-oracle simulation. The compressed oracle may be implemented by two queries to compression unitaries $W^S=\{W_i^S\}_{i\in[\ell]}$. Let $\widetilde{\cA}$ be the $2q$-query algorithm that runs $\cA$ by answering queries using $W^S$. By Theorem \ref{thm:comp-oracle}, replacing $\cA^{P_\lambda}$ with $\widetilde{\cA}^{W^S}$ does not affect the adversary’s advantage in distinguishing the games.

For $b\in\{0,1\}$, define game $\cG'_{b,\lambda}(\widetilde{\cA})$ as:
\begin{itemize}
\item $(C_0,C_1)\leftarrow \widetilde{\cA}^{W^S}(1^\lambda)$
\item If $C_0$ is not functionally equivalent to $C_1$, or $C_1\notin\{0,1\}^{\lambda}$, or $C_0\notin\{0,1\}^{\lambda}$, return $\bot$
\item $r\leftarrow\{0,1\}^\lambda$
\item Return $\widetilde{\cA}^{W^S}\big(\obf(C_b,r)\big)$
\end{itemize}
Then for $b\in\{0,1\}$,
\[
\Prr_{r}\big[\cG'_{b,\lambda}(\widetilde{\cA})=1\big]=\Prr_{r}\big[\cG_{b,\lambda}(\cA)=1\big].
\]
\noindent For any set $\bbX$ and all $x^*\in\{0,1\}^*$, we define
\[
    \obf_{\bbX}(x^*)=
    \begin{cases}
    \obf(x^*) & \text{if } x^*\notin\bbX,\\
    \bot & \text{if } x^*\in\bbX.
    \end{cases}
\]
Also for any $r^* \in \bin^\lambda$ define the set 
\[
    (*,r^*) : = \{(C,r^*): C\in\bin^\lambda\}
\]
Next, for all $r^*\in\{0,1\}^\lambda$ define
\[
\cO_{r^*}:=\big(f,\obf_{(*,r^*)},\eval^{f,\obf_{(*,r^*)}}\big),\qquad S_{r^*}:=(\cO_{r*},H,\chk),
\]
Note that $\eval^{f,\obf_{(*,r^*)}}$ continues to use $\obf^{-1}$ to compute $C$, and the change in oracles only affects oracle queries made by $C$.
For all $b\in\{0,1\}$, define $\cG''_{b,\lambda}(\cA)$ as:
\begin{itemize}
\item $r\leftarrow\{0,1\}^\lambda$
\item $(C_0,C_1)\leftarrow \widetilde{\cA}^{W^{S_r}}(1^\lambda)$,
\item If $C_0$ is not functionally equivalent to $C_1$, or $C_1\notin\{0,1\}^{\lambda}$, or $C_0\notin\{0,1\}^{\lambda}$, return $\bot$
\item Return $\widetilde{\cA}^{W^{S_{r}}}\big(\obf(C_b,r)\big)$.
\end{itemize}

\begin{claim}
\label{clm:obf-1}
For all $\cA$, $\Prr\big[\cG''_{0,\lambda}(\cA)=1\big]=\Prr\big[\cG''_{1,\lambda}(\cA)=1\big]$.
\end{claim}
\begin{proof}
In $\cA$’s view, $\obf(C_0,r)$ and $\obf(C_1,r)$ are sampled under the constraint that they do not collide with $\obf(C,r')$ for any $r'\ne r$ or any $C\notin\{C_0,C_1\}$, and under the constraint that
\[
\eval^{f,\obf_{(*,r)}}\big(\obf(C_0,r),x\big)=\eval^{f,\obf_{(*,r)}}\big(\obf(C_1,r),x\big)=C_0^{f,\obf_{(*,r)}}(x).
\]
By symmetry both games are identical.
\end{proof}

By Claim \ref{clm:obf-1} it suffices to show that for all $b\in\{0,1\}$, all $q\leq 2^{o(\lambda)}$, all $q$-query adversaries $\cA$, and large enough $\lambda$,
\[
\Big|\Prr\big[\cG'_{b,\lambda}(\cA)=1\big]-\Prr\big[\cG''_{b,\lambda}(\cA)=1\big]\Big|\le \frac{1}{2}\cdot\frac{1}{2^{\lambda/4^{2(\ell+2)}}}.
\]
To prove this, define a sequence of random variables as follows. Let $r'\leftarrow\{0,1\}^\lambda$ and $r\leftarrow\{0,1\}^\lambda$. Define
\[
\cO_1:=(f,\obf_{(*,r')},\eval^{f,\obf_{(*,r')}}),\quad S_1:=(\cO_1,H,\chk),
\]
\[
\cO_2:=(f,\obf_{(*,r') \cup (*,r)},\eval^{f,\obf_{(*,r') \cup (*,r)}}) ,\quad S_2:=(\cO_2,H,\chk),
\]
\[
\cO_3:=(f,\obf_{(*,r)},\eval^{f,\obf_{(*,r)}}),\quad S_3:=(\cO_3,H,\chk),
\]
\[
\cO_4:=(f,\obf,\eval^{f,\obf}),\quad S_4:=(\cO_4,H,\chk).
\]
Let $\mathsf{Rev}$ be the truth table for the algorithm that on input $y = (\obfC, z)$ computes $(C,\_)=\obf^{-1}(\obfC)$ and returns $C$. For any $r^*\in\bin^\lambda$ let $\obf(*,r^*):= \{\obf(C,r^*): C \in \bin^\lambda\}$. Note that the random variables are correlated with $r$ and $r'$ which are both sampled uniformly at random as specified above.

\begin{claim}
\label{clm:obf-2}
For all adversaries $\cB$,
\[
\Prr_{r,r',s}\big[\cB^{W^{S_1}}(\obf(*,r'),\mathsf{Rev})=r\big]\le \frac{1}{2^\lambda}
\]
\end{claim}
\begin{proof}
Follows from the observation that $r$ is sampled independent of the adversary’s view.
\end{proof}

\begin{claim}
\label{clm:obf-3}
For all $q\le 2^{o(\lambda)}$, all $q$-query adversaries $\cB$, and large enough $\lambda$,
\[
\Prr_{r,r',S}\big[\cB^{W^{S_2}}(\obf(*,r'),\mathsf{Rev})=r\big]\le \frac{\poly(q)}{2^{\lambda/4^
{\ell+1}}}
\]
\end{claim}
\begin{proof}
Let $T:=\{y:\cO_1(y)\ne\cO_2(y)\}$. Then for all $q$-query adversaries $\cB$,
\[
\max_{\cB}\Prr_{r,r',S}\big[\cB^{W^{S_2}}(\obf(*,r'),\mathsf{Rev})\in T\big]\le \frac{2q}{2^\lambda}.
\]
This is because if $y\in T$ then either $y\in (*,r)$ or $\eval^{\obf_{(*,r')},f}(y)\neq \eval^{\obf_{(*,r') \cup (*,r)} ,f}(y)$. In the latter case, computing $C = \mathsf{Rev}(y)$, running $C^{\obf_{(*,r')}, f}(z')$ (where $y=(\_,z')$) and outputting a random query results in an element of $(*,r)$ with probability at least $1/q$. Therefore, given $y\in T$, we can output $r$ with probability at least $1/2q$. Combined with Claim \ref{clm:obf-2} this gives the above bound. Let $\cM$ be the measurement that measures in computational basis and accepts if the output is  $r$. By Theorem \ref{thm:sep-core} applied to $\cM,\cO_1,\cO_2$, and $z=(\obf(*,r'),\mathsf{Rev})$), for all $q$-query adversaries $\cB$,
\[
\Big|\Prr_{r,r',S}\big[\cB^{W^{S_1}}(\obf(*,r'),\mathsf{Rev})=r\big]-\Prr_{r,r',S}\big[\cB^{W^{S_2}}(\obf(*,r'),\mathsf{Rev})=r\big]\Big|\le \frac{\poly(q)}{2^{\lambda/ 4^{\ell+1}}},
\]
which, along with Claim \ref{clm:obf-2} and the fact that $q\le 2^{o(\lambda)}$, implies the claim.
\end{proof}

\begin{claim}
\label{clm:obf-4}
For all $q\le 2^{o(\lambda)}$, for all $q$-query adversaries $\cB$, and for large enough $\lambda$,
\[
\Pr_{r,r',S}\Big[\cB^{W^{S_2}}(\obf(*,r),\mathsf{Rev})=r\Big]
= \Pr_{r,r',S}\Big[\cB^{W^{S_2}}(\obf(*,r'),\mathsf{Rev})=r\Big] \leq \frac{\poly(q)}{2^{\lambda/4^
{\ell+1}}}
\]
\end{claim}
\begin{proof}
Follows from Claim \ref{clm:obf-3} and the fact that $r$ and $r'$ are symmetric in the view of $\cB$.
\end{proof}

\begin{claim}
\label{clm:obf-5}
For all adversaries $\cB$,
\[
\Pr_{r,r',S}\Big[\cB^{W^{S_3}}(\obf(*,r),\mathsf{Rev})=r'\Big] \leq \frac{1}{2^\lambda}.
\]
\end{claim}
\begin{proof}
Follows from the observation that $r'$ is sampled independent of the adversary’s view.
\end{proof}

\begin{claim}
\label{clm:obf-6}
For all $q\le 2^{o(\lambda)}$, all $q$-query adversaries $\cB$, and large enough $\lambda$,
\[
\Pr_{r,r',S}\Big[\cB^{W^{S_3}}(\obf(*,r),\mathsf{Rev})=r\Big]\le \frac{\poly(q)}{2^{\lambda/4^{\ell+1}}}.
\]
\end{claim}
\begin{proof}
Let $T:=\{y:\cO_2(y)\ne\cO_3(y)\}$. Then for all $q$-query adversaries $\cB$,
\[
\max_{\cB}\Pr_{r,r',S}\big[\cB^{W^{S_3}}(\obf(*,r),\mathsf{Rev})\in T\big]\le \frac{2q}{2^\lambda}.
\]
This is because if $y\in T$ then either $y\in (*,r')$ or $\eval^{\obf_{(*,r)},f}(y)\neq \eval^{\obf_{(*,r') \cup (*,r)} ,f}(y)$. In the latter case, computing $C = \mathsf{Rev}(y)$, running $C^{\obf_{(*,r)}, f}(z')$ (where $y=(\_,z')$) and outputting a random query results in an element of $(*,r')$ with probability at least $1/q$. Therefore, given $y\in T$, we can output $r'$ with probability at least $1/2q$. Combined with Claim \ref{clm:obf-5} this gives the above bound. Let $\cM$ be the measurement that measures in computational basis and accepts if the output is  $r$. By Theorem \ref{thm:sep-core} applied to $\cM,\cO_2,\cO_3$, and $z=(\obf(*,r'),\mathsf{Rev})$), for all $q$-query adversaries $\cB$,
\[
\Big|\Prr_{r,r',S}\big[\cB^{W^{S_2}}(\obf(*,r),\mathsf{Rev})=r\big]-\Prr_{r,r',S}\big[\cB^{W^{S_3}}(\obf(*,r),\mathsf{Rev})=r\big]\Big|\le \frac{\poly(q)}{2^{\lambda/ 4^{\ell+1}}}
\]
which, along with Claim \ref{clm:obf-4} and the fact that $q\le 2^{o(\lambda)}$, implies the claim.
\end{proof}

\begin{claim}
\label{clm:obf-7}
For all $q\le 2^{o(\lambda)}$, all $q$-query adversaries $\cB$, and large enough $\lambda$,
\[
\Big|\Prr_{r,r',S}\big[\cB^{W^{S_3}}(\obf(*,r),\mathsf{Rev})=1\big]-\Prr_{r,r',S}\big[\cB^{W^{S_4}}(\obf(*,r),\mathsf{Rev})=1\big]\Big|\le \frac{1}{2^{\lambda/ 4^{2\ell+3}}}
\]
\end{claim}
\begin{proof}
Let $T:=\{y:\cO_3(y)\ne\cO_4(y)\}$. Then for all $q$-query adversaries $\cB$,
\[
\max_{\cB}\Pr_{r,r',S}\big[\cB^{W^{S_3}}(\obf(*,r),\mathsf{Rev})\in T\big]\le \frac{1}{2^{\lambda/4^{\ell+2}}}.
\]
This is because if $y\in T$ then either $y\in (*,r)$ or $\eval^{\obf,f}(y)\neq \eval^{\obf_{(*,r)} ,f}(y)$. In the latter case, computing $C = \mathsf{Rev}(y)$, running $C^{\obf_{(*,r)}, f}(z')$ (where $y=(\_,z')$) and outputting a random query results in an element of $(*,r)$ with probability at least $1/q$. Therefore, given $y\in T$, we can output $r$ with probability at least $1/2q$. Combined with Claim \ref{clm:obf-6} and the fact that $q\le 2^{o(\lambda)}$, this gives the above bound. Let $\cM$ be the measurement that measures in computational basis and accepts if the output is  $1$. By Theorem \ref{thm:sep-core} applied to $\cM,\cO_3,\cO_4$, and $z=(\obf(*,r),\mathsf{Rev})$), for all $q$-query adversaries $\cB$,
\[
\Big|\Prr_{r,r',S}\big[\cB^{W^{S_3}}(\obf(*,r),\mathsf{Rev})=1\big]-\Prr_{r,r',S}\big[\cB^{W^{S_4}}(\obf(*,r),\mathsf{Rev})=1\big]\Big|\le \frac{\poly(q)}{2^{\lambda/ (2\cdot4^{2\ell+2})}}
\]
which, along with the fact that $q\le 2^{o(\lambda)}$, implies the claim.
\end{proof}
\noindent Now suppose there exist $q \leq 2^{o(\lambda)}$, $q$-query adversary $\cA$, and bit $b$ such that for all large enough $\lambda$
\[
\Big|\Prr\big[\cG'_{b,\lambda}(\cA)=1\big]-\Prr\big[\cG''_{b,\lambda}(\cA)=1\big]\Big|\ge \frac{1}{2\cdot 2^{\lambda/4^{2(\ell+2)}}}.
\]
We use this to construct $\cB$ that contradicts Claim \ref{clm:obf-7}. For $k\in\{3,4\}$, let $\cB^{W^{S_k}}(\obf(*,r),\mathsf{Rev})$ perform the following:
\begin{itemize}
\item $(C_0,C_1)\leftarrow \cA^{W^{S_k}}(1^\lambda)$
\item If $C_0$ is not functionally equivalent to $C_1$, or $C_1\notin\{0,1\}^{\lambda}$, or $C_0\notin\{0,1\}^{\lambda}$, return $\bot$
\item Return $\cA^{W^{S_k}}\big(\obf(C_b,r)\big)$. Note that $\cB$ receives the entire set $\obf(*,r)$ and can therefore compute $\obf(C_0,r)$.
\end{itemize}
Since $S_3\equiv S_r$ and $S_4\equiv S$, this implies that 
\[
\Prr_{r,r',S}\big[\cB^{W^{S_3}}(\obf(*,r),\mathsf{Rev})=1\big] = \Prr\big[\cG''_{b,\lambda}(\cA)=1\big]
\]
and
\[
\Prr_{r,r',S}\big[\cB^{W^{S_4}}(\obf(*,r),\mathsf{Rev})=1\big] = \Prr\big[\cG'_{b,\lambda}(\cA)=1\big]
\]
Therefore,
\[
\Big|\Prr_{r,r',S}\big[\cB^{W^{S_3}}(\obf(*,r),\mathsf{Rev})=1\big]-\Prr_{r,r',S}\big[\cB^{W^{S_4}}(\obf(*,r),\mathsf{Rev})=1\big]\Big|\ge \frac{1}{2\cdot 2^{\lambda/4^{2(\ell+2)}}}
\]
contradicting Claim \ref{clm:obf-7}.
\end{proof}
\subsubsection{Achieving Contradiction}
Finally, we show how a classical adversary can use $P$ to break every $\ell$-round IV-PoQ.
\begin{theorem}
    For every $QPT$ prover $\cP$ and verifier $\cV$ such that 
    for all large enough $\lambda\in\bbN$, 
        \[
            \Pr_{P_\lambda}[\langle \cP^{f,\obf,\eval^{\obf, f}},\cV^{f,\obf,\eval^{\obf,f}}\rangle(1^\lambda) = 1] \geq c(\lambda)
        \]
    there exists an oracle aided PPT adversary $\cA$ such that for all large enough $\lambda$
    \[
        \Pr_{P_\lambda}[\langle \cA^{P_\lambda},\cV^{f,\obf,\eval^{\obf,f}}\rangle(1^\lambda) = 1] \geq c(\lambda)
    \]
\end{theorem}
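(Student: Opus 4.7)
The plan is to construct a classical PPT adversary $\cA^{P_\lambda}$ that perfectly simulates the honest quantum prover $\cP^{f,\obf,\eval^{\obf,f}}$ round by round by outsourcing each round's state preparation and measurement to the appropriate component of the breaking oracle $P_\lambda$. By construction, $P_{i,\lambda}(v,\sigma,C)$ first checks (for $i>1$) that $\chk((v,i-1),\sigma)=1$ and, if so, samples a measurement outcome $s$ of the first register of $C^O\ket{\psi_v^O}$ and returns $(s,H(v,C,s,i))$, where the chained signature enables the adversary's next query. Concretely, $\cA$ will maintain across the $\ell$ rounds a partial transcript $v_{i-1}=(C_1,s_1,\dots,C_{i-1},s_{i-1})$ together with a signature $\sigma_{i-1}$ satisfying $\chk((v_{i-1},i-1),\sigma_{i-1})=1$. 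At round $i$, after receiving the verifier's messages so far, $\cA$ will assemble the oracle-aided circuit $C_i$ implementing $\cP$'s round-$i$ action (with the verifier messages hard-coded and with all intermediate measurements deferred so that only the final first-register measurement remains), query $P_{i,\lambda}$ on $(v_{i-1},\sigma_{i-1},C_i)$ to obtain $(s_i,\sigma_i)$, send $s_i$ to $\cV$, and set $v_i:=(v_{i-1},C_i,s_i)$.

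The core step will be an induction on $i$ showing that the joint distribution of $(s_1,\dots,s_i)$ produced by $\cA$, averaged over the sampling of $P_\lambda$, equals the joint distribution of the honest prover's first $i$ messages, and that the stored $\sigma_i$ equals $H((v_i,i))$. The base case $i=1$ is immediate since the signature check is skipped by definition, and $P_{1,\lambda}$ samples $s_1$ by measuring the first register of $C_1^O\ket{0}=C_1^O\ket{\psi_{v_0}^O}$, which by our construction of $C_1$ is exactly the honest prover's first-message distribution; the returned $\sigma_1=H(v_0,C_1,s_1,1)=H((v_1,1))$ sets up the next round. For the inductive step, the hypothesis gives $\sigma_{i-1}=H((v_{i-1},i-1))$ so $\chk$ passes, and $P_{i,\lambda}$ samples $s_i$ by measuring $C_i^O\ket{\psi_{v_{i-1}}^O}$. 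The iterative definition of $\ket{\psi_v^O}$ (apply $C_j^O$, project onto $\ket{s_j}\bra{s_j}\otimes\bbI$, renormalise) is precisely the purified state the honest prover would hold after emitting $(s_1,\dots,s_{i-1})$ conditioned on those outcomes, so the conditional distribution of $s_i$ matches; moreover, since the entries of $\cD^{S_\lambda,i}$ at different inputs are sampled independently, the correct conditional-independence structure across rounds is preserved. Finally $\sigma_i=H(v_{i-1},C_i,s_i,i)=H((v_i,i))$ closes the induction.

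Given the induction, the conclusion is immediate: the full joint distribution of prover messages produced by $\cA^{P_\lambda}$ equals that of $\cP^{f,\obf,\eval^{\obf,f}}$, so the verifier's acceptance probability is preserved and $\Pr_{P_\lambda}[\langle\cA^{P_\lambda},\cV^{f,\obf,\eval^{\obf,f}}\rangle(1^\lambda)=1]\ge c(\lambda)$ for every large enough $\lambda$. I do not anticipate a hard obstacle: the oracle $P_\lambda$ was designed precisely to enable this round-by-round simulation, and $\cA$ only performs efficient classical bookkeeping plus $\ell$ classical queries. The sole care-point I foresee is verifying that $\cP$'s round-$i$ action can indeed be captured as a single oracle-aided unitary $C_i$ acting on the current purified state with one final first-register measurement; this is handled by the standard deferred-measurement trick together with treating the verifier messages seen so far as classical hard-coded data, keeping $\lvert C_i\rvert$ polynomial since $\cP$ is QPT.
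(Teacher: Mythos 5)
Your proposal is correct and follows essentially the same route as the paper's proof: purify $\cP$ into a circuit $\cC$, hard-code the transcript seen so far to get $C_i$, and have $\cA$ chain queries $(v_{i-1},\sigma_{i-1},C_i)$ to $P_{i,\lambda}$ so that the signature returned by each round authorizes the next. The paper simply asserts that "$\cA$ perfectly imitates the behavior of the QPT prover"; your explicit induction showing that the measured state $C_i^O\ket{\psi_{v_{i-1}}^O}$ coincides with the honest prover's purified state after conditioning on $(s_1,\dots,s_{i-1})$, together with the independence of the entries of $\cD^{S_\lambda,i}$, is precisely what that assertion unpacks to.
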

\begin{proof}
We may purify the QPT prover's circuit to obtain a circuit $\cC$ that takes as input the transcript so far and the (purified) prover state after the previous message and returns an output register and a state register. The next message is sampled by measuring the output register, obtaining a next message string and residual prover state. The classical adversary $\cA$ produces the first prover message by querying $P_{1,\lambda}$ on $(\bot, \bot, \cC_1)$ and obtaining the message $s_1$ along with a hash string $\sigma_1$, where $\tau_1$ is the transcript so far (which only consists of the verifier first message, if any) and $\cC_1 = \cC(\tau_1,\cdot))$ is the circuit with the first input hardcoded. For $i>1$, $\cA$ obtains the $i$-th message by querying $P_{i,\lambda}$ on $((\cC_1, s_1, \cC_2, s_2, \ldots, \cC_{i-1}, s_{i-1}), \sigma_{i-1}, \cC_i)$ and obtaining the message $s_i$ along with a hash string $\sigma_i$, where $\tau_i$ is the transcript so far and $\cC_i = \cC(\tau_i,\cdot))$ is the circuit with the first input hardcoded. This allows $\cA$ to perfectly imitate the behavior of the QPT prover, and hence the verifier accepts with exactly the same probability.
\end{proof}
\noindent 
Putting these results together we can obtain our final theorem for this section.
\begin{theorem}
    Fully black-box constructions of constant-round IV-PoQ from indistinguishability obfuscation and one-way permutations do not exist.
\end{theorem}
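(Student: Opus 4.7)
The plan is to argue by contradiction by combining the three prior results in this section: the one-way permutation security (Theorem~\ref{thm:owp}), the iO security (Theorem~\ref{thm:io-security}), and the theorem showing that a classical adversary using $P_\lambda$ can perfectly simulate any QPT prover. Suppose, for contradiction, that there exists a fully black-box construction of an $\ell$-round IV-PoQ from iO and one-way permutations for some constant $\ell$, consisting of a prover $\cP$, verifier $\cV$, and reduction $R$ (all oracle-aided, with $R$ making $q(\lambda)$ queries for some polynomial $q$).

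First, I would instantiate the construction using the random oracles $f$, $\obf$, and $\eval^{\obf,f}$ drawn together with $P_\lambda$. By the correctness requirement of the fully black-box construction, for all large enough $\lambda$, $\langle \cP^{f,\obf,\eval^{\obf,f}}, \cV^{f,\obf,\eval^{\obf,f}}\rangle(1^\lambda)$ accepts with probability at least $c(\lambda)$. Applying the theorem immediately preceding the one we are proving, this yields an oracle-aided PPT classical adversary $\cA$ such that $\Pr_{P_\lambda}\!\big[\langle \cA^{P_\lambda}, \cV^{f,\obf,\eval^{\obf,f}}\rangle(1^\lambda)=1\big]\geq c(\lambda) > s(\lambda) + 1/\poly(\lambda)$ for all large enough $\lambda$.

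Next, I would invoke the black-box security proof property: there exists a polynomial $p_1$ or $p_2$ such that for infinitely many $\lambda$, either $R^{\cA^{P_\lambda},f,\obf,\eval^{\obf,f}}$ inverts $f$ on a random image with probability $\geq 1/p_1(\lambda)$, or it distinguishes the iO games $\cG^{\obf}_{0,\lambda}$ and $\cG^{\obf}_{1,\lambda}$ with advantage $\geq 1/p_2(\lambda)$. The key observation is that the composite algorithm $R^{\cA^{P_\lambda},f,\obf,\eval^{\obf,f}}$ can be viewed as a single oracle algorithm making queries only to $P_\lambda$: the reduction $R$ runs in polynomial time making polynomially many queries to $\cA$ and to $(f,\obf,\eval^{\obf,f})$; each invocation of $\cA^{P_\lambda}$ in turn runs the PoQ interaction, which requires only $\ell = O(1)$ queries to the breaking sub-oracles $\{P_{i,\lambda}\}$ plus polynomially many direct queries to $(f,\obf,\eval^{\obf,f})$ (all of which are components of $P_\lambda$). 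Hence the overall query complexity to $P_\lambda$ remains some polynomial $q'(\lambda) \leq 2^{o(\lambda)}$.

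This directly contradicts Theorem~\ref{thm:owp} in the first case and Theorem~\ref{thm:io-security} in the second case, both of which rule out any $2^{o(\lambda)}$-query $P_\lambda$-adversary from achieving such inverse-polynomial advantage for large enough $\lambda$. The main obstacle to be careful about is ensuring the query accounting: one must verify that the use of $\cA^{P_\lambda}$ by $R$ (potentially on adaptively chosen inputs arising during the reduction's emulation of the OWP or iO challenger) still yields a single coherent $P_\lambda$-query algorithm whose total query count is polynomial, so that the $q \leq 2^{o(\lambda)}$ hypothesis of the security theorems applies. Because $\ell$ is a fixed constant, $\cA$ makes only a constant number of $P_{i,\lambda}$ queries per emulation, and the reduction itself is polynomial, so this accounting goes through and the contradiction is complete.
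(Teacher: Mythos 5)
Your proposal follows essentially the same route as the paper: instantiate with the random oracles, invoke the preceding theorem to get a PPT classical $P_\lambda$-adversary $\cA$ that matches the honest prover's acceptance probability, apply the black-box security-proof clause to conclude $R^{\cA}$ inverts $f$ or distinguishes $\cG_0/\cG_1$, and contradict Theorems~\ref{thm:owp} and \ref{thm:io-security}. The paper's own proof is a terse three sentences doing exactly this chain; your query-accounting paragraph adds detail the paper leaves implicit, which is a welcome addition.

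One small imprecision worth tightening: you write that the direct $(f,\obf,\eval^{\obf,f})$ queries made by $R$ and $\cA$ are "components of $P_\lambda$." Strictly, $P_\lambda = \{P_{i,\lambda}\}_{i\in[\ell]}$ consists only of the breaking sub-oracles, and Theorems~\ref{thm:owp} and \ref{thm:io-security} are stated for $P_\lambda$-query adversaries. What one actually needs to observe is that a direct query to $f$ (resp.~$\obf$, $\eval$) can be simulated with one query to $P_{1,\lambda}$: take $v=\bot$, $\sigma=\bot$, and $C$ the trivial oracle-aided circuit that queries the desired oracle at the hardcoded input and writes the answer deterministically onto the output register; since $i=1$ bypasses the $\chk$ gate and the measurement of $C^O\ket{0}$ is then deterministic, $P_{1,\lambda}$ returns exactly the oracle answer (plus a throwaway hash). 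This incurs only an $O(1)$ blowup per direct oracle query, so the combined adversary $R^{\cA, f, \obf, \eval^{\obf, f}}$ is indeed a polynomial-query $P_\lambda$-adversary as the theorems require. With that observation plugged in, your query accounting goes through and the proof is complete.
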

\begin{proof}
    For every QPT prover $\cP$ and verifier $\cV$ such that 
    for all large enough $\lambda\in\bbN$, 
        \[
            \Pr_{P_\lambda}[\langle \cP^{f,\obf,\eval^{\obf, f}},\cV^{f,\obf,\eval^{\obf,f}}\rangle(1^\lambda) = 1] \geq c(\lambda)
        \]
    there exists an oracle aided PPT adversary $\cA$ such that for all large enough $\lambda$
    \[
        \Pr_{P_\lambda}[\langle \cA^{P_\lambda},\cV^{f,\obf,\eval^{\obf,f}}\rangle(1^\lambda) = 1] \geq c(\lambda)
    \]
    Therefore if $\langle P,V \rangle$ is an $\ell$-round $(c(\lambda), s(\lambda))$-IV-PoQ with a fully black-box reduction $R$, it must be the case that there exist polynomials $p_1$ and $p_2$ such that for infinitely many $\lambda$,
        \begin{align*}
            \Prr_{P_\lambda, x\leftarrow\bin^\lambda} \left[ R^{\cA,f,\obf,\eval^{\obf,f}} (f(x))=x \right] \ge \frac{1}{p_1(\lambda)}
        \end{align*}
        or 
        \begin{align*}
            \left| \Prr_{P_\lambda}\left[ \cG_{0,\lambda}(R^{\cA,f,\obf,\eval^{\obf,f}})=1\right]-\Prr_{P_\lambda}\left[\cG_{1,\lambda}(R^{A,f,\obf,\eval})=1 \right] \right| \ge \frac{1}{p_2(\lambda)}
        \end{align*}
        which would contradict Theorem \ref{thm:owp} or Theorem \ref{thm:io-security}.
\end{proof}

\section{IV-PoQ with Quantum Black-Box Reductions}\label{sec:bbreductions}
\subsection{Compression Framework}
In this subsection we show how to efficiently simulate quantum query access to a random classical oracle where the distribution on outputs for each query is non-uniform and which can only be efficiently (quantumly) sampled using a copy of advice state $\ket{\psi}$. Somewhat surprisingly, despite the fact that a quantum querier can query the oracle in superposition on super-polynomially many points, we only require $\poly(q)$ copies of the state where $q$ is an upper bound on the number of queries made to the oracle.\\

\noindent Let $\cD := \bigotimes_{x \in \{0,1\}^n} \cD_x$.  
Suppose there exists a state $\ket{\psi} = \sum_z \alpha_z \ket{z}$ and efficient unitary $U = \sum_x \ket{x}\!\bra{x} \otimes U_x$ such that $\cD_x$ is the distribution that results from measuring $U_x\ket{\psi}$ in the computational basis. For $a,b \in \mathbb{N}$, define
\[
\ket{S_{a,b}}:= \ket{\bot}^{\otimes a} \otimes \ket{\psi}^{\otimes b}.
\]
Define a database $D$ as a collection of $(x,y)$ pairs, stored in sorted order by the first parameter, where for any $x$ there exists at most one tuple of form $(x,y)$ in $D$.
We say that $D(x) = \bot$ if $D$ does not contain a tuple of form $(x,y)$.  
Otherwise if $(x,y) \in D$ then $D(x)=y$.  For $D'$ such that $D'(x) = \bot$,  $D' \cup (x,y)$ refers to the database that results from inserting $(x,y)$ in the appropriate spot in $D'$.

We define the operation $\Comp_x$ on states of form $\ket{D}\otimes \ket{S_{a,b}}$. The operation will never be applied to states not of this form and may therefore be defined arbitrarily on such states while maintaining unitarity.
\begin{enumerate}
\item If $D(x)=\bot$:  
\[
\Comp_x \ket{D}\otimes \ket{S_{a,b}}= \sum_z \alpha_z \ket{D\cup(x,z)}\otimes |S_{a+1,b-1}\rangle
\]
\item For any $\ket{\varphi} = \sum_z \alpha_z \ket{D'\cup(x,z)}$ for some $D'$ s.t. $D'(x)=\bot$:  
\[
\Comp_x \ket{\varphi} \otimes \ket{S_{a,b}}= \ket{D}\otimes |S_{a-1,b+1}\rangle
\]
\item For any $\ket{\varphi'} = \sum_z \alpha'_z \ket{D'\cup(x,z)}$ for some $D'$ s.t. $D'(x)=\bot$ and some $\{\alpha'_z\}_z$ such that $\sum_z \alpha'_z \ket{z}$ is orthogonal to $\ket{\psi}$:  
\[
\Comp_x \ket{\varphi} \otimes \ket{S_{a,b}}= \ket{\varphi} \otimes \ket{S_{a,b}}
\]
\end{enumerate}
Note that the defined states span the set of states of the form $\ket{D}\otimes \ket{S_{a,b}}$. Since this mapping preserves orthogonality, there exists a unitary implementing $\Comp_x$.

\begin{lemma}
\label{lem:comp-from-reflection}
Given oracle access to $P = \bbI - 2\ket{\psi}\bra{\psi}$, there exists an efficient algorithm to implement $\Comp_x$, where the runtime is polynomial in the size of the database and $a+b$.
\end{lemma}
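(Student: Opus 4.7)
The plan is to realize $\Comp_x$ as a controlled SWAP between the database slot $R_x$ (holding either $\ket{\bot}$ or the current value at key $x$) and a designated ``boundary'' auxiliary register $A$ inside $\ket{S_{a,b}}$ (lying either in the $\ket{\bot}$-segment or the $\ket{\psi}$-segment depending on the current $(a,b)$). A direct inspection of the three defining cases shows that on the rank-$4$ subspace $\mathcal{T} := \mathrm{span}(\ket{\bot},\ket{\psi}) \otimes \mathrm{span}(\ket{\bot},\ket{\psi})$ of $R_x \otimes A$, the required action of $\Comp_x$ coincides with the full SWAP: cases~1 and~2 demand $\ket{\bot\psi}\leftrightarrow\ket{\psi\bot}$, while $\ket{\bot\bot}$ and $\ket{\psi\psi}$ are unconstrained by the definition and admit the SWAP action as a consistent unitary extension (SWAP is the identity on these two states); case~3 forces identity on all of $\mathcal{T}^\perp$. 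Hence $\Comp_x$ is, modulo a purely classical rearrangement that moves the newly toggled register into the canonical ordering of $\ket{S_{a\pm1, b\mp1}}$, a controlled SWAP on $R_x \otimes A$ whose control is the indicator of $\mathcal{T}$.

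The key gadget is a coherent flag detecting the $\ket{\psi}$-component. Using the standard phase-kickback trick -- Hadamard on an ancilla $B$, controlled-$P$ on the target register with $B$ as the control, and a final Hadamard on $B$ -- one implements with a single query to $P$ the self-inverse unitary $F_\psi : \ket{b}_B \ket{v} \mapsto \ket{b \oplus [v = \psi]}_B \ket{v}$. Combining this with the trivial basis check $F_\bot$ for $\ket{\bot}$, XORing both outcomes into a fresh flag register, and then uncomputing $F_\psi$ and $F_\bot$ yields a subroutine $V_\pi$ which coherently writes the indicator of $\mathrm{span}(\ket{\bot}, \ket{\psi})$-membership for a single register, using $O(1)$ calls to $P$.

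The complete implementation of $\Comp_x$ runs $V_\pi$ on $R_x$ and on $A$, ANDs the two indicators into a master control bit via a Toffoli, performs controlled-SWAP between $R_x$ and $A$ conditioned on that control, and then runs the entire flag preparation in reverse to reset all ancillas. The step that looks worrying at first is the uncomputation after the contents of $R_x$ and $A$ have been swapped, but this works cleanly precisely because SWAP preserves the rank-$4$ subspace $\mathcal{T}$: the $V_\pi$-indicators on $R_x$ and $A$ evaluate to the same bits before and after the controlled-SWAP, so $V_\pi^{\dagger}$ applied to each register returns the ancillas to $\ket{0}$ exactly. The quantum circuit uses $O(1)$ queries to $P$ plus a constant number of Clifford and Toffoli gates on small ancillas, and the classical bookkeeping -- locating $R_x$ in the sorted encoding of $D$, shuttling its contents into the fixed working position, and permuting $\ket{S_{a,b}}$ into the canonical ordering $\ket{S_{a\pm1, b\mp1}}$ -- is polynomial in $|D|$ and $a+b$, giving the claimed runtime. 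The main subtle point is fixing a consistent choice of the ``boundary'' register so that the composition of the controlled SWAP with the classical rearrangement really yields the canonical output state on each of cases~1 and~2; once that indexing convention is pinned down, the verification of correctness on cases~1--3 reduces to the four-dimensional calculation above.
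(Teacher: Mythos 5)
Your approach is correct but takes a somewhat different route from the paper's. The paper branches directly on whether $D(x)=\bot$: when it is, the swap with the first $\ket{\psi}$ slot of $\reg{S}$ is \emph{unconditional}, so no flag is ever needed there; only in the remaining case does it compute a one-bit flag $[y=\psi]$ via phase kickback with $P$, perform a controlled swap with the last $\ket{\bot}$ slot, and then uncompute the flag by instead testing $[y'=\bot]$ on the post-swap value (the identity $[y=\psi] = [y'=\bot]$ after the controlled swap being what makes uncomputation possible). Your construction instead unifies both active cases and the fixed-point case into a single controlled SWAP on the rank-4 subspace $\mathcal{T}$, using span-membership flags on \emph{both} $R_x$ and $A$ that are invariant under the SWAP (which is your version of the paper's uncomputation identity). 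The unification is cleaner conceptually and makes the correctness check genuinely a four-dimensional calculation, but it is a bit redundant: the flag on $A$ is always $1$ in the states that ever arise, and your gadget costs roughly $8$ queries to $P$ versus the paper's $2$ (both $O(1)$, so the lemma is unaffected). One thing you flag but do not resolve is the location of the boundary register $A$: it is \emph{not} a fixed index independent of the input -- it is the $a{+}1$-th slot of $\reg{S}$ in Case~1 but the $a$-th slot in Cases~2/3 -- and moreover $a$ itself can be in superposition across basis states $\ket{D}$ of different sizes. So $A$'s index must be computed coherently from the state, e.g.\ by scanning $\reg{S}$ for the last $\ket{\bot}$/first $\ket{\psi}$ as the paper does, or equivalently as register $|D'|+1$ where $D'=D\setminus\{(x,\cdot)\}$ (using the invariant $a=|D|$, which holds throughout the use of $\Comp_x$ in $\AdvO$). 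Once that indexing is pinned down, together with the bookkeeping that removing/reinserting the $(x,\cdot)$ tuple from the sorted database encoding is a classical bijection, your argument goes through.
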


\begin{proof}
We describe the algorithm on states of the form $\ket{D}\otimes \ket{S_{a,b}}$.
\begin{itemize}
\item If $D(x)=\bot$:  
\begin{itemize}
    \item Swap the first non-$\bot$ register from $\ket{S_{a,b}}$ with $|\bot\rangle$ to obtain $\ket{D}\otimes \ket{\psi} \otimes |S_{a+1,b-1}\rangle$.  If $b=0$, abort.  
    \item By acting coherently on computational basis states, map the obtained state to
    \[
    \sum_z \alpha_z \ket{D\cup(x,z)}\otimes |S_{a+1,b-1}\rangle,
    \]
    where ancillas are restored to $\ket{0}$ and hence omitted.
\end{itemize}
\item If $D = D' \cup \{(x,y)\}$ where $D'(x)=\bot$: 
\begin{itemize}
    \item Map the state to 
    \[
      \ket{D'}\otimes |y\rangle \otimes \ket{0} \otimes \ket{S_{a,b}}.
    \]
    \item Use $P$ to implement the projector onto $\ket{\psi}$, and apply 
    \[
      \ket{\psi}\!\bra{\psi} \otimes X + \big(\bbI - \ket{\psi}\!\bra{\psi}\big)\otimes \bbI
    \]
    to $|y\rangle \otimes \ket{0}$. This results in the state
    \[
      \ket{D'}\otimes \big(\beta\ket{\psi}\ket{1} + (1-\beta)\ket{\phi}\ket{0}\big)\otimes \ket{S_{a,b}},
    \]
    where $\ket{\phi}$ is some state orthogonal to $\ket{\psi}$ and $\beta = \langle \psi|y\rangle$.
    \item Conditioned on the bit in the third register, swap the second register with the last $\ket{\bot}$ in $\ket{S_{a,b}}$, to obtain
    \[
      \ket{D'}\otimes \big(\beta\ket{\bot}\ket{1}\ket{S_{a-1,b+1}}
        + (1-\beta)\ket{\phi}\ket{0}\ket{S_{a,b}}\big).
    \]
    \item Apply 
    \[
      \ket{\bot}\!\bra{\bot}\otimes X + \big(\bbI - \ket{\bot}\!\bra{\bot}\big)\otimes \bbI
    \]
    again and remove ancillas, to get
    \[
      \ket{D'}\otimes \big(\beta\ket{\bot}\ket{S_{a-1,b+1}}
        + (1-\beta)\ket{\phi}\ket{S_{a,b}}\big).
    \]
\end{itemize}

\end{itemize}

It is straightforward to see that the algorithm satisfies the requirements for $\Comp_x$ and makes at most 2 queries to $P$.
\end{proof}
\noindent
Define 
\[
\Comp = \sum_x \ket{x}\!\bra{x} \otimes \Comp_x,
\]
and let $\StdO$ be the unitary that maps 
\[
\ket{x,y}\ket{D} \mapsto \ket{x, y \oplus D(x)}\ket{D},
\]
where $\ket{x}$ and $\ket{y}$ are query and response registers respectively.  
For every $x$, let $\widetilde{U}_x$ be the unitary that acts on databases $\ket{D}$ as:
\begin{itemize}
    \item If $D(x)=\bot$, then do nothing.
    \item If $D = D' \cup \{(x,y)\}$ for $D'(x)=\bot$, suppose
    \[
      U_x \ket{y} = \sum_z \beta_z \ket{z},
    \]
    then
    \[
      \widetilde{U}_x \ket{D} = \sum_z \beta_z \ket{D' \cup (x,z)}.
    \]
\end{itemize}
Let $\widetilde{U} := \sum_x \ket{x}\!\bra{x} \otimes \widetilde{U}_x$.  
We now define the compressed oracle with advice.  
Let $q$ be an upper bound on the number of queries made to the oracle.
Initialize database register $\reg{D}$ to the empty database $\ket{\{\}}$ and register $\reg{S}$ to $|S_{0,q}\rangle$. 
The querier prepares query register $\reg{X}$ and response register $\reg{Y}$, and a query to $\AdvO$ is implemented as
\[
\sum_x \ket{x}\!\bra{x}_{\reg{X}} \otimes 
\big( \mathsf{Comp}_x^\dagger \circ \widetilde{U}_x^\dagger \circ \mathsf{StdO} \circ \widetilde{U}_x \circ \mathsf{Comp}_x \big)
\]
applied to query register $\reg{X}$, response register $\reg{Y}$, the database stored in $\reg{D}$ and the stored states in $\reg{S}$.
\begin{theorem}
\label{thm:perfect-comp-oracle-with-advice}
For any $q$-query adversary $\adv$,
\[
\Prr_{O \leftarrow \cD}[\adv^{O}=1] = \Prr[\adv^{\AdvO}=1],
\]
\end{theorem}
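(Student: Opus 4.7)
The plan is to reduce the claim to a standard compressed-oracle-style argument by interposing a ``purified'' oracle that clearly gives the real distribution, and then exhibiting an isometry between the purified and compressed-with-advice representations that intertwines the respective query unitaries. Throughout, $q$ is the upper bound on the number of queries, and we use that after any $t \leq q$ queries the database register contains at most $t$ entries.

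\textbf{Step 1 (Purified oracle).} Introduce a register $\reg{T}_x$ for each $x$, all initialized to $\ket{\psi}_{\reg{T}_x}$, and define the query unitary to be
\[
Q_{\mathsf{pur}} \;=\; \sum_x \ket{x}\!\bra{x}_{\reg{X}} \otimes \bigl( U_{x,\reg{T}_x}^\dagger \,\mathsf{StdO}_{\reg{Y},\reg{T}_x}\, U_{x,\reg{T}_x} \bigr).
\]
Measuring each $\reg{T}_x$ in the computational basis (after undoing the $U_x$ that preceded its use) yields an independent sample $O(x) \leftarrow \cD_x$, because $\cD_x$ is precisely the measurement distribution of $U_x\ket{\psi}$. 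By the principle of deferred measurement, the adversary's view in this purified experiment is identical to its view in the real experiment $O \leftarrow \cD$. Hence it suffices to show $\Pr[\adv^{\mathsf{pur}} = 1] = \Pr[\adv^{\AdvO} = 1]$.

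\textbf{Step 2 (Isometry).} Define a linear map $V$ from the compressed Hilbert space (the $\reg{D}\reg{S}$ registers) to $\bigotimes_x \reg{T}_x$ by its action on the computational basis of $\reg{D}$ together with the specific advice states that actually arise:
\[
V \bigl( \ket{D} \otimes \ket{S_{|D|,\,q-|D|}} \bigr) \;=\; \Bigl( \bigotimes_{(x,y)\in D} \ket{y}_{\reg{T}_x} \Bigr) \otimes \Bigl( \bigotimes_{x \notin \mathrm{dom}(D)} \ket{\psi}_{\reg{T}_x} \Bigr),
\]
extended by linearity to superpositions of such basis states. The initial joint state $\ket{\{\}} \otimes \ket{S_{0,q}}$ is mapped to $\bigotimes_x \ket{\psi}_{\reg{T}_x}$, i.e., the initial purified state, so $V$ sends the initial compressed state to the initial purified state.

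\textbf{Step 3 (Intertwining the queries).} The core technical step is to verify that for every $x$,
\[
V \cdot \bigl( \mathsf{Comp}_x^\dagger \,\widetilde{U}_x^\dagger\, \mathsf{StdO}\, \widetilde{U}_x\, \mathsf{Comp}_x \bigr) \;=\; \bigl( U_{x,\reg{T}_x}^\dagger\, \mathsf{StdO}_{\reg{Y},\reg{T}_x}\, U_{x,\reg{T}_x} \bigr) \cdot V
\]
on the subspace reached by at most $q$ queries. I would verify this by tracking the three cases in the definition of $\mathsf{Comp}_x$: when $D(x) = \bot$, $\mathsf{Comp}_x$ writes $\sum_z \alpha_z \ket{D \cup (x,z)}$ into $\reg{D}$ while converting a $\ket{\bot}$ in $\reg{S}$ into a ``used'' slot; under $V$ this is exactly the configuration in which $\reg{T}_x$ is in state $\ket{\psi}$, which is how the purified side begins. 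Applying $\widetilde{U}_x$ then yields a state corresponding under $V$ to $U_x\ket{\psi}_{\reg{T}_x}$, $\mathsf{StdO}$ acts identically on both sides, and the two $(\cdot)^\dagger$ steps undo the first two. The ``orthogonal'' branch of $\mathsf{Comp}_x$ (acts as identity) corresponds to $V$ mapping those basis vectors to configurations where $\reg{T}_x$ already holds a non-$\ket{\psi}$ component, for which $U_x^\dagger \StdO U_x$ acts the same way. The conservation of $a+b = q$ together with the query count $\leq q$ bound ensures $\mathsf{Comp}_x$ never encounters an ``abort'' (the step where $b = 0$ in Lemma~\ref{lem:comp-from-reflection}).

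\textbf{Step 4 (Conclusion).} Both the adversary's own operations and the query registers $\reg{X},\reg{Y}$ are shared, so they commute with $V$ trivially. Combining Steps 2 and 3 by induction on the number of queries, the joint adversary-plus-oracle state after any number of queries $\leq q$ in the compressed representation is mapped by $V$ to the corresponding state in the purified representation. Since the final measurement of $\adv$'s output bit acts only on the adversary's registers, the two output distributions coincide, and together with Step 1 this yields the claimed equality.

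\textbf{Main obstacle.} The hardest part is the case analysis in Step 3, particularly handling the ``orthogonal'' branch of $\mathsf{Comp}_x$ (case~3 in its definition) and checking that $V$ is well-defined and an isometry on the subspace actually reached during the experiment. Care is needed because $\widetilde{U}_x$ followed by $\mathsf{StdO}$ entangles the $x$-slot of $\reg{D}$ with $\reg{Y}$, so the post-query state is not a superposition of $\ket{\psi}$-type entries but of general vectors; one must check that $\mathsf{Comp}_x^\dagger$ correctly routes the ``$\ket{\psi}$-component'' back into the advice register (replenishing a $\ket{\bot}$) and leaves the ``$\ket{\psi}^\perp$-component'' in $\reg{D}$, and that this exactly matches the $U_x^\dagger$ step on the purified side.
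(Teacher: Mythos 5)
Your proposal is correct and reaches the same conclusion by what is mathematically the same idea — comparing the compressed-with-advice oracle to a purified oracle via the compression map — but it packages the argument differently. The paper walks through a chain of seven hybrids (purify, re-encode as a database, insert global compression $\GComp \circ GU$, then show via commutation that global compression can be replaced by per-query $\Comp$ and $\widetilde{U}$, and finally truncate $\reg{S}$ to $q$ copies), whereas you collapse all of the middle steps into a single linear map $V$ and one intertwining identity. Your $V$ is essentially the inverse of the paper's $GU\circ\GComp$ restricted to the subspace that can actually arise, and your Step~3 bundles together what the paper verifies across Hybrids 4–6. You also implicitly absorb Hybrid~7 into the definition of $V$ by writing $\ket{S_{|D|,\,q-|D|}}$ directly, which is fine once one observes, as you do, that $a+b=q$ is conserved and that fewer than $q$ slots are ever consumed.

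One caveat worth spelling out, since you leave it as the ``main obstacle'': the $V$ you wrote down, extended by linearity to all of $\mathrm{span}\{\ket{D}\otimes\ket{S_{|D|,q-|D|}}\}$, is \emph{not} an isometry on that full span — for example, $V(\ket{\{\}}\otimes\ket{S_{0,q}})$ and $V(\ket{\{(x_0,y_0)\}}\otimes\ket{S_{1,q-1}})$ have inner product $\langle\psi|y_0\rangle$, which is generically nonzero even though the preimages are orthogonal. It only becomes an isometry on the invariant subspace maintained by the compressed oracle, namely the subspace where, for each $x$, the $\reg{D}_x$-component is either $\ket{\bot}$ or lies in the orthogonal complement of $\ket{\psi}$. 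So the isometry check and the intertwining check are not independent: you need the fact that $\Comp_x^\dagger\widetilde{U}_x^\dagger\StdO\widetilde{U}_x\Comp_x$ preserves this invariant subspace (which it does, precisely because $\Comp_x^\dagger$ strips off the $\ket{\psi}$-component and returns it to $\reg{S}$) to justify restricting $V$ in the first place. The paper's hybrid chain makes these invariants explicit as a byproduct of the construction; if you write your proof out in full, you should state and verify the invariant separately rather than leaving it folded into ``the subspace actually reached.''
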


\begin{proof}
We prove via a sequence of hybrids.

\paragraph{Hybrid 1.} Sample $O$ from $\cD$ and answer queries by mapping  
$$\ket{x,y} \ket{O} \mapsto \ket{x,y\oplus O(x)}\ket{O}.$$ 
where $\ket{O}$ contains the truth table representation of $O$. This is the standard definition of queries to $O$.

\paragraph{Hybrid 2.} Instead of sampling $O$ directly, we initialize the truth table register to a purification of $O$. Specifically we initialize the truth table to $\bigotimes_x U_x\ket{\psi}$.  \\

\noindent Since this is a valid purification of the truth table register, measuring it in the computational basis recovers Hybrid 1. Since the operations performed on the truth table register are diagonal in the computational basis, they commute with this measurement and therefore both hybrids are indistinguishable.

\paragraph{Hybrid 3.} View the purification register $\bigotimes_x U_x\ket{\psi}$ as a superposition of multiple truth tables.  
We encode each such truth table $O$ in database form, i.e., the database corresponding to $O$ is defined as $D = \{(0,O(0)), (1,O(1)),\dots,(2^n-1,O(2^n-1))\}$. We answer queries by applying $\StdO$ which maps $\ket{x,y}\ket{D} \mapsto \ket{x, y \oplus D(x)}\ket{D}$. Specifically, if for each $x$,
\[
U_x \ket{\psi} = \sum_y \alpha_{x,y} \ket{y}.
\]
for some $\{\alpha_{x,y}\}_y$, then the the database register is instantiated to
\[
\bigotimes_x \sum_y \alpha_{x,y}\ket{(x,y)}.
\]
\noindent Since we only change the encoding, Hybrid 2 and Hybrid 3 are identical in the adversary’s view.

\paragraph{Hybrid 4.} Define the global compression procedure
\[
\mathsf{GComp} := \mathsf{Comp}_0 \circ \mathsf{Comp}_1 \circ \dots \circ \mathsf{Comp}_{2^n-1},
\]
and similarly $G U := \widetilde{U}_0 \circ \dots \circ \widetilde{U}_{2^n-1}$.
Initialize the database register $\reg{D}$ to the empty database $\ket{\{\}}$ and initialize register $\reg{S}$ to $\ket{S_{0,2^n}}$. Implement oracle queries as
\[
\mathsf{GComp}^\dagger \circ G U^\dagger \circ \mathsf{StdO} \circ G U \circ \mathsf{GComp}.
\]
Note that $\GComp$ and $GU$ only act on registers private to the oracle and therefore commute with computation on the adversary’s registers.  
As a result, between two calls to $\StdO$, the applications of $\GComp^\dagger \circ GU^\dagger$ and $GU \circ \GComp$ cancel out.  In the beginning of the hybrid, $GU \circ \GComp$ is applied to $\ket{\{\}} \otimes \ket{S_{0,2^n}}$, 
resulting in
\[
\left( \bigotimes_x \widetilde{U}_x \sum_y \alpha_z \ket{(x,z)} \right) \otimes \ket{S_{2^n,0}}
= \bigotimes_x \sum_y \alpha_{x,y}\ket{(x,y)} \otimes \ket{S_{2^n,0}}.
\]
This is exactly the starting state in Hybrid~3 tensored with $\ket{S_{2^n,0}}$.  
Therefore, both hybrids are identical in the adversary’s view.

\paragraph{Hybrid 5.}  
We switch to implementing queries as 
\[
\GComp^\dagger \circ \widetilde{U}^\dagger \circ \StdO \circ \widetilde{U} \circ \GComp.
\]
For any database $D$, and any $x,y$, and any $x' \neq x$, since $\widetilde{U}_{x'}$ does not affect tuples of form $(x,y)$,
\begin{align*} 
\StdO \circ \widetilde{U}_{x'} \ket{x,y}\ket{D} 
&= \StdO \ket{x, y} \otimes \widetilde{U}_{x'}\ket{D}\\
&= \ket{x, y \oplus D(x)} \otimes \widetilde{U}_{x'}\ket{D}\\
&= \widetilde{U}_{x'} \ket{x, y \oplus D(x)} \otimes \ket{D}\\
&= \widetilde{U}_{x'} \circ \StdO \ket{x,y}\ket{D}.
\end{align*}
Additionally, for $x \neq x'$, $\widetilde{U}_x$ and $\widetilde{U}_{x'}$ commute, since they act on different tuples in the database and do not add, remove, or change the ordering of the $x$’s in the database.  
Therefore, for all $D,x,y$,
\begin{align*}
GU^\dagger \circ \StdO \circ GU \ket{x,y}\ket{D} &= \widetilde{U}_x^\dagger  \circ \StdO \circ \widetilde{U}_x\ket{x,y}\ket{D}\\
&= \widetilde{U}^\dagger  \circ \StdO \circ \widetilde{U} \ket{x,y}\ket{D},
\end{align*}
which shows the hybrids are identical in the adversary’s view.  

\paragraph{Hybrid 6.}  
We switch to implementing the queries as 
\[
\Comp^\dagger \circ \widetilde{U}^\dagger  \circ \StdO \circ \widetilde{U} \circ \Comp.
\]
Let $\widetilde{\StdO} := \widetilde{U}^\dagger \circ \StdO \circ \widetilde{U}$. First we note that for any $x \neq x'$, for any state of the form 
$\ket{D}\otimes \ket{S_{|D|,2^n-|D|}}$, where $|D|$ refers to the number of tuples in $D$,  
\begin{align*}
  \Comp_x \circ \Comp_{x'} \ket{D}\otimes \ket{S_{|D|,2^n-|D|}}
  &= \Comp_{x'} \circ \Comp_x \ket{D}\otimes \ket{S_{|D|,2^n-|D|}}.
\end{align*}
and
\begin{align*}
  \widetilde{\StdO} \circ \Comp_{x'} 
    \ket{x,y} \ket{D}\ket{S_{|D|,2^n-|D|}}
  &= \Comp_{x'} \circ\widetilde{\StdO}
    \ket{x,y} \ket{D} \ket{S_{|D|,2^n-|D|}}.
\end{align*}
Next, note that applying $\Comp_x$ maintains the invariant of being supported on states of the form 
$\ket{D}\otimes \ket{S_{|D|,2^n-|D|}}$. Additionally since $\widetilde{U}$ and $\StdO$ do not add, remove, or reorder tuples, and do not act on $\reg{S}$, applying $\widetilde{\StdO}$ also does not disturb the invariant. As a result,
\begin{align*}
  \GComp^\dagger& \circ \widetilde{\StdO} \circ \GComp 
    \ket{x,y}\ \ket{D}\ \ket{S_{|D|,2^n-|D|}}\\
  &= \Comp_x^\dagger \circ \widetilde{\StdO} \circ \Comp_x 
    \ket{x,y}\ \ket{D}\ \ket{S_{|D|,2^n-|D|}} \\
  &= \Comp^\dagger \circ \widetilde{\StdO} \circ \Comp 
    \ket{x,y}\ \ket{D}\ \ket{S_{|D|,2^n-|D|}}.
\end{align*}
which shows that both hybrids are indistinguishable in the adversary’s view.
\paragraph{Hybrid 7.}  
We initialize $\reg{S}$ to $\ket{S_{0,q}}$.  \\

\noindent Note that the only operations that interact with $\reg{S}$ are $\Comp_x$ and $\Comp_x^\dagger$, which only touch the $\ket{\bot}$ registers and the first non-$\bot$ register of $S$.  
Additionally, the number of $\ket{\bot}$ registers only increases by at most $1$ each query.  
Therefore, at every point before the last query, $\reg{S}$ is supported on states with at most $q-1$ $\ket{\bot}$ registers.  
Queries in Hybrid~6 therefore commute with any operation on all but the first $q$ registers of $\reg{S}$.  
Tracing out the last $2^n - q$ registers gives Hybrid~7, and both hybrids are therefore indistinguishable.  \\

\noindent Hybrid 7 is identical to querying $\AdvO$, which completes the proof the the theorem. 
\end{proof}
By Lemma \ref{lem:comp-from-reflection}, $\Comp_x$ and hence the query operation of $\AdvO$ is efficient to implement using query access to reflection around $\ket{\psi}$. The initial registers of $\AdvO$ can be prepared using $q$ copies of $\ket{\psi}$. All that remains to show efficient implementation of $\AdvO$ is to show that the reflection around around $\ket{\psi}$ can be efficiently implemented.
\begin{theorem}
\label{thm:efficient-comp-oracle-advice}
There exists a QPT simulator $\Sim$ such that for all $q$-query algorithms $\adv$, for all $n > 0$,  
\[
\Big| \Prr_{O\leftarrow\cD}[\adv^{O}=1] - \Pr[\Sim^{\adv}(\ket{\psi}^{\otimes (q+n)})=1] \Big|
\;\leq\; \frac{4\sqrt{2}q}{\sqrt{n+1}}.
\]
\end{theorem}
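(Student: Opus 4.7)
My plan is to combine the perfect compressed-oracle-with-advice result (Theorem~\ref{thm:perfect-comp-oracle-with-advice}) with the reflection simulation of Theorem~\ref{thm:jls}. Concretely, by Theorem~\ref{thm:perfect-comp-oracle-with-advice} the adversary's output distribution is already exactly preserved when queries to $O \leftarrow \cD$ are replaced by queries to $\AdvO$, provided we have access to $\ket{\psi}^{\otimes q}$ to initialize the advice register $\reg{S}$ to $\ket{S_{0,q}}$, and provided we can apply the $\Comp_x$ operations. The only non-efficient resource in implementing $\AdvO$ is therefore the reflection $P = \bbI - 2\ket{\psi}\bra{\psi}$ used by Lemma~\ref{lem:comp-from-reflection} to realize $\Comp_x$.

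Next, I would count queries. By Lemma~\ref{lem:comp-from-reflection}, implementing $\Comp_x$ uses at most $2$ queries to $P$, and since $P$ is self-adjoint the same bound holds for $\Comp_x^\dagger$. Inspecting the query definition
\[
\sum_x \ket{x}\!\bra{x}_{\reg{X}} \otimes
\bigl( \Comp_x^\dagger \circ \widetilde{U}_x^\dagger \circ \StdO \circ \widetilde{U}_x \circ \Comp_x \bigr),
\]
each call to $\AdvO$ thus incurs at most $4$ queries to $P$, so $q$ queries to $\AdvO$ cost at most $4q$ queries to $P$ in total. The rest of the operations ($\widetilde{U}_x$, $\StdO$, bookkeeping on the database register) are straightforwardly efficient.

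To remove the need for genuine oracle access to $P$, I invoke Theorem~\ref{thm:jls}. Viewing the entire execution of $\adv$ together with the compressed-oracle bookkeeping as one ``outer'' algorithm $\cA^{P}$ that makes $4q$ queries to the reflection $P = O_\psi$, and setting $\ket{\xi}$ to the fixed initial state (adversary's workspace, database register initialized to the empty database, and the $q$ advice copies forming $\ket{S_{0,q}}$), Theorem~\ref{thm:jls} produces a query-free algorithm $\cB$ acting on $\ket{\psi}^{\otimes n} \otimes \ket{\xi}$ whose output has trace distance at most $\frac{4q\sqrt{2}}{\sqrt{n+1}}$ from that of $\cA^{P}$. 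The simulator $\Sim$ is then defined to take $\ket{\psi}^{\otimes(q+n)}$ as input, use $q$ of the copies to prepare $\ket{S_{0,q}}$, hand the remaining $n$ copies to $\cB$, and output whatever $\cB$ outputs.

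Combining the two steps via the triangle inequality yields the claimed bound $\frac{4\sqrt{2}q}{\sqrt{n+1}}$, and efficiency follows since everything in $\cA$ besides the reflection queries was already efficient, and $\cB$ runs in time polynomial in that of $\cA$ and $n$ by Theorem~\ref{thm:jls}. The only step requiring care is the query accounting — in particular verifying that the $\Comp_x^\dagger$ operation can be implemented with the same $2$-query cost as $\Comp_x$ and that the JLS ``outer algorithm'' framing is legitimate (i.e.\ the $\AdvO$ queries are genuinely just a sequence of calls to a fixed reflection interleaved with $P$-independent unitaries acting on the rest of the system); both are immediate from the construction.
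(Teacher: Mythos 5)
Your proposal matches the paper's own proof essentially step for step: Theorem~\ref{thm:perfect-comp-oracle-with-advice} gives an exact simulation given reflection-oracle access to $P=\bbI-2\ket{\psi}\bra{\psi}$ and $q$ copies of $\ket{\psi}$, the query accounting via Lemma~\ref{lem:comp-from-reflection} yields $4q$ reflection queries (two each for $\Comp_x$ and $\Comp_x^\dagger$), and Theorem~\ref{thm:jls} replaces those queries with $n$ additional copies at trace-distance cost $\frac{4q\sqrt{2}}{\sqrt{n+1}}$. Your added observation that $\Comp_x^\dagger$ inherits the two-query bound because $P$ is self-adjoint is a correct justification of a step the paper asserts without comment.
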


\begin{proof}
By Theorem \ref{thm:perfect-comp-oracle-with-advice} and Lemma \ref{lem:comp-from-reflection}, we can exactly simulate $\adv^O$ for random $O$ efficiently using $q$ copies of $\ket{\psi}$ given oracle access to $P = \bbI - 2\ket{\psi}\bra{\psi}$. Additionally, the simulation makes at most four queries (two for running $\Comp$ and two for running $\Comp^\dagger$) for every query $\adv$ makes to $O$.  
By Theorem~\ref{thm:jls}, there exists a QPT algorithm $\mathcal{B}$ that simulates oracle access to $P$ using $\ell$ copies of $\ket{\psi}$ within error $\frac{q'\sqrt{2}}{\sqrt{n+1}}$ in trace distance, where $q'$ is the number of queries made to $P$.  Using $\mathcal{B}$ to simulate $P$ and noting that $q' = 4q$ gives the bound in the theorem statement.
\end{proof}
\subsection{One-way Puzzles from IV-PoQs with Black-Box Reductions}
\begin{theorem}
An IV-PoQ with black-box reduction to a falsifiable assumption secure against efficient non-uniform quantum adversaries implies the existence of one-way puzzles.
\end{theorem}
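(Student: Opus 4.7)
The plan is to prove the statement by its contrapositive via a meta-reduction: assume no quantum-secure one-way puzzles exist, and derive that no IV-PoQ can have a quantum black-box reduction to a quantum-secure falsifiable assumption. Given the hypothetical PoQ $(P,V)$ with black-box reduction $R$, I will construct an inefficient classical prover $\widetilde{P}$ that perfectly mimics the honest quantum prover $P$; by the reduction guarantee, $R^{\widetilde{P}}$ will break the assumption. I will then use the assumed non-existence of one-way puzzles to efficiently simulate $\widetilde{P}$ with a quantum algorithm $M$, yielding an efficient quantum attack and contradicting quantum security.

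To convert failure of one-way puzzles into the needed efficient simulation, I set up a distributional one-way puzzle tailored to the PoQ. Let $\ell(\lambda) = \poly(\lambda)$ be the round count. $\Gen(1^\lambda)$ samples $i \in [\ell(\lambda)]$ uniformly, runs an honest interaction between $P$ and $V$ up through the first $i$ messages to obtain partial transcript $T_i$, and outputs $s = (i, T_i)$ with key $k$ equal to the prover's next message. The conditional distribution of $k$ given $s$ is exactly the honest quantum prover's next-message distribution $\cD_{T_i}$. By the contrapositive of Theorem~\ref{thm:owp-amplification}, for every polynomial $p$ there is a QPT inverter $\cA_p$ with quantum advice satisfying $\SD((s,k),(s,\cA_p(s))) \le 1/p(\lambda)$ for all large $\lambda$. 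Purifying $\cA_p$ yields a single advice state $\ket{\psi_p}$ and, for each $s$, an efficient unitary $U_s$ such that measuring $U_s \ket{\psi_p}$ samples from some $\cD'_s$ with $\Exp_s[\SD(\cD'_s, \cD_s)] \le 1/p(\lambda)$.

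Next, $\widetilde{P}$ is defined by hardcoding, for every partial transcript $T$, an independent sample from $\cD_T$, and returning that sample on every query to $T$. Since the transcripts seen by $V$ under $\widetilde{P}$ are identically distributed to those under $P$, $\widetilde{P}$ convinces $V$ with probability at least $c(\lambda)$, so $\widetilde{P}$ breaks soundness; hence $R^{\widetilde{P}}$ breaks the falsifiable assumption with advantage $1/\poly(\lambda)$ on infinitely many $\lambda$. The meta-reduction $M$ internally runs $R$ and answers $R$'s (possibly quantum) oracle queries via the compressed oracle with advice of Theorem~\ref{thm:efficient-comp-oracle-advice}, instantiated with $\ket{\psi_p}$ and $\{U_s\}_s$. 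With $q = \poly(\lambda)$ an upper bound on $R$'s query complexity and $n = \poly(\lambda)$ copies of advice, Theorem~\ref{thm:efficient-comp-oracle-advice} gives a QPT simulator whose distinguishing advantage from genuine quantum query access to a random function drawn from $\bigotimes_T \cD'_T$ is $O(q/\sqrt{n})$.

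The main obstacle will be the careful error analysis combining the two approximations: the compressed-oracle error $O(q/\sqrt{n})$, controlled by choosing $n$ large; and the gap between the simulated oracle distribution $\bigotimes_T \cD'_T$ and the true distribution $\bigotimes_T \cD_T$ governing $\widetilde{P}$. For the latter, a one-way to hiding hybrid over queries (analogous to Theorem~\ref{thm:ow2h-comp} applied to the compression unitaries for $\cD$ and $\cD'$) bounds the distinguishing advantage by $O(q \sqrt{\Exp_s \SD(\cD'_s, \cD_s)}) \le O(q/\sqrt{p(\lambda)})$. Choosing $p$ a sufficiently large polynomial in $q$ makes this negligible while keeping $\cA_p$, and hence $M$, polynomial time. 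With both sources of error negligible, $M$'s view is statistically indistinguishable from $R^{\widetilde{P}}$'s, so $M$ wins the challenger's game with advantage $1/\poly(\lambda)$ infinitely often, contradicting quantum security of the assumption and completing the proof.
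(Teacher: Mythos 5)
Your overall architecture matches the paper's proof closely: you define the same distributional one-way puzzle (sample a round index, run an honest prefix, output the partial transcript as the puzzle and the next prover message as the key), you invoke the compressed oracle with advice (Theorem~\ref{thm:efficient-comp-oracle-advice}) to turn a puzzle-breaker's advice state into an efficient simulator for quantum queries, and you conclude by breaking the falsifiable assumption. However, there is a genuine gap at the point where you bound the error between the simulated oracle and the classical prover.

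You define the inefficient classical prover $\widetilde{P}$ by hardcoding samples from the \emph{true} next-message distributions $\cD_T$, so $\widetilde{P}$ trivially inherits completeness $c(\lambda)$. Your meta-reduction $M$, however, can only simulate quantum query access to $\bigotimes_T \cD'_T$, the distributions produced by the puzzle inverter. To close the gap you invoke a one-way to hiding argument and claim the distinguishing advantage is $O\big(q\sqrt{\Exp_s[\SD(\cD_s,\cD'_s)]}\big)$, where the expectation over $s$ is taken with respect to the puzzle sampler. But Theorem~\ref{thm:ow2h-comp} does not give this: the expectation in that theorem is taken over the distribution $x \leftarrow \cB^U$ of transcripts \emph{extracted from the distinguisher's own queries}, not over the puzzle sampler. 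The distributional one-way puzzle guarantee only says $\cD_s$ and $\cD'_s$ are close on average under $\Samp$; it says nothing about transcripts that never arise in an honest interaction. An adaptive reduction $R$ can steer its queries toward exactly those transcripts where the inverter's output $\cD'_T$ is far from $\cD_T$, making the $\Exp_{x\leftarrow\cB^U}$ term large even when $\Exp_{s\leftarrow\Samp}$ is small, so the intended chain of inequalities breaks.

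The paper avoids this by flipping which side absorbs the approximation. It defines the inefficient classical prover $P'$ to sample from $B$, a classical machine whose outputs match the \emph{inverter's} distribution $\cD'_T$, so that $R^{P'}$ can be simulated \emph{exactly} (up to the $O(q/\sqrt{n})$ compressed-oracle error) with copies of the advice state --- no control over $R$'s query distribution is needed. The price is that $P'$ no longer trivially inherits completeness; instead, a telescoping hybrid argument shows $\Pr[\langle P',V\rangle=1] \geq c(\lambda) - \sum_i \varepsilon_i$, where each $\varepsilon_i$ is a statistical distance between honest-protocol transcripts at round $i$, which are distributed exactly as $\Samp$ conditioned on $j=i$. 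That is precisely where the $\Exp_{s\leftarrow\Samp}[\SD]$ bound from the puzzle applies, and only there. Your version uses the puzzle bound at the wrong place, so you would need to either switch $\widetilde{P}$ to sample from $\cD'$ and redo the completeness argument as a hybrid over rounds, or strengthen the puzzle definition to average over adversarially chosen transcripts, which is not what Definition~\ref{def:dist-owp} provides.
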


\begin{proof}
Let $\langle P,V\rangle(1^\lambda)$ be an IV-PoQ with reduction $R$ to assumption $(C,\thres)$.  
Let the correctness and soundness parameters be $c(\cdot)$ and $s(\cdot)$, where
\[
c(\lambda) - s(\lambda) \geq \frac{1}{t(\lambda)}
\]
for some polynomial $t$.  
Let $\ell(\lambda)$ be the number of rounds in the protocol.

\paragraph{Defining the Puzzle.}  
Define a distributional one-way puzzle $\Samp(1^\lambda)$ as follows.
\begin{enumerate}
  \item Sample $j \gets [\ell(\lambda)]$.
  \item Run $\langle P,V\rangle(1^\lambda)$ up to the $j$-th round to obtain $(m_1,\dots,m_j)$.
  \item Set $\puzz := (m_1,\dots,m_{j-1})$ and $\key := m_j$.
  \item Output $(\puzz, \key)$.
\end{enumerate}

We show that this defines a $1/2\ell(\lambda)t(\lambda)$-distributional one-way puzzle.
Suppose not: then there exists a QPT adversary $A$ and advice states $\{\ket{\psi_\lambda}\}_\lambda$ such that for infinitely many $\lambda$,
\[
\SD\Big(\{\puzz,\key\}, \{\puzz, A(\puzz,\ket{\psi_\lambda})\}\Big) \leq \frac{1}{2\ell(\lambda)t(\lambda)}.
\]
where $(\puzz, \key) \gets \mathsf{Samp}(1^\lambda)$. Let $B$ be a classical machine such that $\forall \lambda, \forall x$, 
the output distribution of $B(x,1^\lambda)$ is identical to the output distribution of 
$A(x,\ket{\psi_\lambda})$. Next, define a classical prover $P'$ for the IV-PoQ that samples 
the $i$-th message by calling $B(\tau,1^\lambda)$, where $\tau$ is the transcript of 
the first $i-1$ messages.
\begin{claim}
\[
\Pr[\langle P',V\rangle(1^\lambda)=1] \geq c(\lambda) - \frac{1}{2t(\lambda)} \geq s(\lambda) + \frac{1}{2t(\lambda)}. 
\]
\end{claim}

\begin{proof}
Define a sequence of hybrid provers $\{P_i\}_{i \in [0,\ell]}$, where $P_i$ runs the honest (quantum) prover up to the first $i$ messages, then samples every subsequent message using $B$ as in $P'$.  
Note that $P_0 \equiv P$ and $P_\ell \equiv P'$.  
For $i \in [\ell]$, define
\[
\varepsilon_i := \Pr[\langle P_{i-1},V\rangle(1^\lambda)=1] - \Pr[\langle P_i,V\rangle(1^\lambda)=1].
\]
$P_{i-1}$ and $P_i$ differ only in how the $i$-th message is sampled.  
If the $i$-th message is a verifier message, both are identical.  
If it is a prover message, then $P_i$ samples the $i$-th message using $B$, 
while $\tilde{P}_i$ samples using the honest prover.  

Let $\tau$ and $\tau'$ be random variables representing the transcript up to the $i$-th 
message in $\langle P_{i-1},V\rangle(1^\lambda)$ and $\langle P_i,V\rangle(1^\lambda)$ 
respectively. Therefore, by construction,
\[
\varepsilon_i \leq \SD(\tau, \tau'),
\]
Let $(\puzz_i, \key_i)$ be random variables representing the output of 
$\mathsf{Samp}(1^\lambda)$ conditioned on sampling $j=i$. Then, $\tau$ is identically 
distributed to $(\puzz_i, \key_i)$ and $\tau'$ is identically distributed to 
$(\puzz_i, A(\puzz_i,\ket{\psi_\lambda}))$. Additionally,
\begin{align*}
\SD&\big(\{\puzz,\key\}, \{\puzz, A(\puzz,\ket{\psi_\lambda})\}\big) \\
&= \sum_j \Pr[j] \cdot \SD\big(\{\puzz_j,\key_j\}, \{\puzz_j, A(\puzz_j,\ket{\psi_\lambda})\}\big) \\
&\geq \sum_j \Pr[j] \cdot \varepsilon_j \\
&= \frac{\sum_j \varepsilon_j}{\ell(\lambda)}
\end{align*}
Since $A$ breaks the security of the puzzle, for infinitely many $\lambda$, the LHS is upper bounded by $\frac{1}{2t(\lambda)\ell(\lambda)}$. Therefore for infinitely many $\lambda$,
\begin{equation}
\label{eq:owpuzz}
\sum_{i\in[\ell]} \varepsilon_i \leq \frac{1}{2t(\lambda)}.
\end{equation}
By telescoping the difference we aim to bound, 
\begin{align*}
\Pr&[\langle P,V\rangle(1^\lambda)=1] - \Pr[\langle P',V\rangle(1^\lambda)=1]\\
&= \Pr[\langle P_0,V\rangle(1^\lambda)=1] - \Pr[\langle P_\ell,V\rangle(1^\lambda)=1]\\
&= \sum_{i\in[\ell]}\Pr[\langle P_{i-1},V\rangle(1^\lambda)=1] - \Pr[\langle P_i,V\rangle(1^\lambda)=1]\\
&= \sum_{i\in[\ell]}\varepsilon_i
\end{align*}
Finally, plugging the above into \eqref{eq:owpuzz} along with the observation that
\[
\Pr[\langle P,V\rangle(1^\lambda)=1] \geq c(\lambda)
\]
implies the statement of the claim.
\end{proof}
As a consequence of the above claim and the properties of the reduction, there exists a polynomial $p$ such that for infinitely many $\lambda$,
\[
\Pr[\langle R^{P'},C\rangle(1^\lambda)=1] \geq \thres + \frac{1}{p(\lambda)}.
\]
$R^{P'}$ can be efficiently implemented given quantum query access to $B$, 
and since $B(\cdot,1^\lambda)$ samples from the output distribution of 
$A(\cdot,\ket{\psi_\lambda})$, by Theorem \ref{thm:efficient-comp-oracle-advice}, quantum queries to $B$ can be 
efficiently simulated to upto $1/2p(n)$ error using $128p^2(\lambda)q^2(\lambda) + q(\lambda)$ copies of $\ket{\psi_\lambda}$, where $q(\lambda)$ is the 
number of queries made by $R^{(\cdot)}$. This gives an efficient quantum simulator $\Sim$ such that for infinitely many $\lambda$,
\[
\Pr[\langle \Sim(\ket{\psi_\lambda}^{\otimes \poly(\lambda)}), C\rangle(1^\lambda) = 1] 
   \geq \thres + \frac{1}{2p(\lambda)},
\]
which breaks the falsifiable assumption.
\end{proof}
\subsection{Impossibility}
\begin{theorem}
A 3-round IV-PoQ with black-box reduction to a falsifiable assumption secure against efficient non-uniform quantum adversaries does not exist.
\end{theorem}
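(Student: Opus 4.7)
The plan is to follow the meta-reduction template sketched in Section~\ref{sec:bbreductions}. Suppose toward contradiction that there exists a 3-round IV-PoQ $\langle P,V\rangle$ with black-box reduction $R$ to a falsifiable assumption $(C,\thres)$ that is hard against efficient non-uniform quantum adversaries. Denote the three messages by $x$ (prover), $y$ (verifier), $z$ (prover). Since $\Pr[\langle P,V\rangle(1^\lambda)=1]\geq c(\lambda)$, an averaging argument over the honest prover's distribution on first messages produces, for each $\lambda$, a particular $x^*_\lambda$ such that the purified honest prover's residual state after outputting $x^*_\lambda$, denoted $\ket{\psi_{x^*_\lambda}}$, completes the protocol with acceptance probability at least $c(\lambda)$. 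Fix such an $x^*_\lambda$ as non-uniform advice.

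Define the inefficient classical adversary $A$ as follows. Whenever $R$ asks $A$ for its first message, $A$ deterministically returns $x^*_\lambda$. Whenever $R$ asks $A$ for the third message on partial transcript $(x^*_\lambda, y)$, $A$ inefficiently samples $z$ from the measurement distribution of $U_y\ket{\psi_{x^*_\lambda}}$, where $U_y$ is the unitary the honest prover applies on receiving $y$; different queries receive independently fresh samples. Queries of the form $(x,y)$ with $x\neq x^*_\lambda$ are answered with $\bot$. By construction, $\langle A,V\rangle(1^\lambda)$ is distributed as an honest interaction conditioned on first message $x^*_\lambda$, so $\Pr[\langle A,V\rangle(1^\lambda)=1]\geq c(\lambda)\geq s(\lambda)+1/t(\lambda)$. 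The black-box soundness property of the reduction then guarantees a polynomial $p$ such that, for infinitely many $\lambda$, $R^A$ breaks $(C,\thres)$ with advantage at least $1/p(\lambda)$.

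Next, I would build an efficient quantum non-uniform meta-reduction $M$ that simulates $R^A$ using advice $\ket{\psi_{x^*_\lambda}}^{\otimes N}$, where $N$ is polynomial in the query complexity $q(\lambda)$ of $R$ and in $p(\lambda)$. To respond to a classical first-message query $M$ returns $x^*_\lambda$; to respond to a classical response query on $(x^*_\lambda,y)$, $M$ consumes one fresh copy of $\ket{\psi_{x^*_\lambda}}$, applies $U_y$, and measures; any other classical query is answered with $\bot$. On a purely classical-query reduction this simulation is statistically perfect, so $M$ inherits the advantage of $R^A$ and already contradicts the quantum hardness of $(C,\thres)$, recovering (with a quantum non-uniform adversary) the essence of the result of~\cite{C:MorYam24}.

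The main obstacle, and the reason the above argument must be upgraded, is the case where $R$ issues quantum queries to $A$'s response function, possibly in superposition over both $x$ and $y$. Here the plan is to invoke the compressed-oracle-with-advice framework of Theorem~\ref{thm:efficient-comp-oracle-advice}, instantiated with the product distribution whose $(x,y)$-component is the measurement distribution of $U_y\ket{\psi_{x^*_\lambda}}$ when $x=x^*_\lambda$ and the point mass on $\bot$ otherwise. Since the nontrivial components share a common advice state $\ket{\psi}=\ket{\psi_{x^*_\lambda}}$ and a known unitary $U_y$, the compression unitaries can be implemented using Lemma~\ref{lem:comp-from-reflection} together with the reflection around $\ket{\psi_{x^*_\lambda}}$ simulated from the stored copies via Theorem~\ref{thm:jls}. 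Choosing $N=\poly(q(\lambda),p(\lambda))$ pushes the total simulation error below $1/(2p(\lambda))$, so the resulting efficient non-uniform quantum $M$ still breaks $(C,\thres)$ with advantage at least $1/(2p(\lambda))$ for infinitely many $\lambda$, contradicting the assumed quantum hardness.
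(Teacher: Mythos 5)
Your proposal follows the same approach as the paper: fix a best first message $m^*$ (with residual purified prover state $\ket{\psi_{m^*}}$) for which the conditional acceptance probability is at least $c(\lambda)$, define an inefficient classical prover that deterministically sends $m^*$ and samples the third message from the honest measurement distribution, feed this to $R$ to break the assumption, and then efficiently simulate $R$'s (quantum) queries via the compressed-oracle-with-advice machinery of Theorem~\ref{thm:efficient-comp-oracle-advice} using polynomially many copies of $\ket{\psi_{m^*}}$ as non-uniform advice. The only cosmetic differences are that the paper phrases the choice of $m^*$ as a maximization (equivalent to your averaging) and does not explicitly spell out the $\bot$-response on $x \ne m^*$ queries; the substance is identical.
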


\begin{proof}
Let $\langle P,V\rangle$ be a 3-round IV-PoQ with reduction $R$ to assumption $(C,\thres)$.  
Let the correctness and soundness parameters be $c(\cdot)$ and $s(\cdot)$ where
$c(\lambda) - s(\lambda) \geq \frac{1}{t(\lambda)}$
for some polynomial $t$.

For each $\lambda$, suppose we purify the honest prover and only measure at the end of each round to obtain the next message.  
For any first message $m$, let $|\psi_m\rangle$ be the residual prover state after measuring $m$, and let $m^*$ be the first message that maximizes the probability of acceptance in $\langle P,V\rangle(1^\lambda)$ when randomness is taken over the remaining messages and verifier output.  

Let $P'$ be a classical prover that always outputs $m^*$ as its first message, and samples the third message according to the honest prover’s third-message distribution conditioned on the first message being $m^*$.  
Note that the third message can be (quantumly) efficiently sampled given $m^*$, $|\psi_{m^*}\rangle$, and the second message. For all $\lambda$,
\[
\Pr[\langle P',V\rangle(1^\lambda)=1] \geq c(\lambda),
\]
since the third message distribution is the same and we fixed the best first message.  
By the properties of $R$, there exists a polynomial $p$ such that for infinitely many $\lambda$,
\[
\Pr[\langle R^{P'}, C\rangle(1^\lambda)=1] \geq \thres + \frac{1}{p(\lambda)}.
\]
Finally, since the first message is fixed and the third message can be efficiently sampled using advice $(m^*,|\psi_{m^*}\rangle)$, by Theorem \ref{thm:efficient-comp-oracle-advice}, there exists an efficient quantum simulator $\Sim$ such that for infinitely many $\lambda$,
\[
\Pr[\langle \Sim(m^*, |\psi_{m^*}\rangle^{\otimes O(q^2(\lambda)p^2(\lambda))}), C\rangle(1^\lambda)=1] \geq \thres + \frac{1}{2p(\lambda)},
\]
where $q(\lambda)$ is the number of queries made by $R$, which breaks the falsifiable assumption.
\end{proof}

\section{PoQ with Classical Black-Box Reductions}\label{sec:publiccoinreductions}
\subsection{LOCC de Finetti}
We state a de Finetti theorem for one-way LOCC measurements, which are measurements where each subsystem is measured oncee and the measurement results are sent to every other (unmeasured) subsystem. The measured system is not touched again.
\begin{theorem}\cite{1locc}\label{thm:1locc}
Let $\rho_{\reg{A}_1 \dots \reg{A}_n}$ be a permutation-invariant state on 
$\cH_\reg{A}^{\otimes n}$. Then for any integer $0 \leq k \leq n$ there exists a measure 
$\mu$ on density matrices on $\cH_\reg{A}$ such that
\[
  \Bigl\| \rho_{\reg{A}_1 \dots \reg{A}_k} - \int \sigma^{\otimes k}\, d\mu(\sigma) 
  \Bigr\|_{\mathrm{LOCC}_1} 
  \leq \sqrt{\frac{2(k-1)^2 \ln |\reg{A}|}{n-k}}.
\]
\end{theorem}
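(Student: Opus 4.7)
The plan is to prove this one-way LOCC de Finetti statement by an information-theoretic argument in the spirit of Brand\~ao--Harrow, exploiting the characteristic fact that one-way LOCC distinguishers behave classically and hence convert quantum mutual information into distance with only logarithmic (rather than polynomial) dependence on the local dimension $|\reg{A}|$.

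First, by duality of the $\mathrm{LOCC}_1$ norm it suffices to fix any one-way LOCC measurement $M = (M_1, M_2|m_1, M_3|m_{1:2}, \dots, M_k|m_{1:k-1})$ that acts sequentially on $\reg{A}_1, \dots, \reg{A}_k$ (each measurement depending classically on prior outcomes) and show the distribution of its outcomes on $\rho_{\reg{A}_1 \dots \reg{A}_k}$ is statistically close to a mixture of product distributions arising from measuring $\sigma^{\otimes k}$. The candidate measure $\mu$ will be obtained by measuring the remaining $n-k$ subsystems via any complete-information procedure (e.g. full tomographic projection in a purifying reference) and using the classical posterior over $\sigma$.

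The heart of the proof is the following mutual-information chain. Extend $M$ to a one-way LOCC measurement on all $n$ subsystems in an arbitrary permutation-invariant way; let $X_1, \dots, X_n$ denote the classical outcomes. By permutation invariance and the chain rule,
\[
  \sum_{i=1}^{n} I(X_i : X_1 \dots X_{i-1}) = H(X_1 \dots X_n) \leq n \ln |\reg{A}|.
\]
Averaging and using symmetry, for a random ordering the first $k$ terms are each $O\!\left(\tfrac{\ln |\reg{A}|}{n-k}\right)$ on average. The key reduction is that for one-way LOCC protocols, small conditional mutual information between the $i$-th outcome and the prior outcomes implies smallness in total variation between the true outcome distribution on $\reg{A}_1 \dots \reg{A}_k$ and a product-over-$\sigma$ distribution. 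This is a classical Pinsker argument applied to the outcome strings, and crucially it does not pay any factor of $|\reg{A}|$ because everything at this stage is classical. Summing the $k$ gaps via telescoping gives total TVD at most $O\!\left(\sqrt{k^2 \ln|\reg{A}|/(n-k)}\right)$, matching the stated bound.

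The main obstacle, and the reason a generic quantum de Finetti theorem yields the weaker $\sqrt{k^2 |\reg{A}|^2/n}$ bound, is to carefully execute the step that converts conditional quantum mutual information into one-way LOCC distance without paying a dimension factor. Concretely, one must argue that for one-way LOCC the relevant quantity is the \emph{measured} mutual information (Holevo-bounded by $\ln|\reg{A}|$ per register), not the full quantum mutual information, and then apply classical Pinsker to the resulting outcome distributions. Once this lemma is in place, the symmetrization and chain-rule bookkeeping are routine, and the measure $\mu$ is extracted by integrating over the posterior $\sigma$ induced by the remaining $n-k$ subsystems.
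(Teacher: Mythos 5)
The paper does not prove this theorem; it imports it verbatim from the cited reference (Brand\~ao--Harrow / Li--Smith, the paper's \texttt{1locc}), so there is no in-paper proof to compare against. What you have sketched is a reasonable reconstruction of the known information-theoretic argument from that literature: reduce to a fixed one-way LOCC protocol by duality, use a chain-rule decomposition of mutual information over the classical outcome registers, exploit that each outcome carries at most $\ln|\reg{A}|$ nats of information (rather than paying the full quantum dimension), apply classical Pinsker, and define $\mu$ as the posterior after measuring the spectator systems. That is indeed the shape of the proof in the cited source, and the reason the bound scales with $\ln|\reg{A}|$ rather than $|\reg{A}|$ is exactly the point you highlight.

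Two concrete problems, though. First, your chain-rule identity is wrong as stated:
\[
  \sum_{i=1}^{n} I(X_i : X_1\cdots X_{i-1}) \;=\; \sum_{i=1}^{n} H(X_i) - H(X_1\cdots X_n),
\]
which is \emph{not} $H(X_1\cdots X_n)$. The bound $\leq n\ln|\reg{A}|$ still survives because $\sum_i H(X_i)\leq n\ln|\reg{A}|$ and $H(X_1\cdots X_n)\geq 0$, but as written the displayed equality is false. Second, the step ``small conditional mutual information between the $i$-th outcome and the prior outcomes implies a mixture of products $\sigma^{\otimes k}$'' is carrying the entire weight of the theorem and is not actually established by what you wrote. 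Small pairwise/cumulative mutual information among the $X_1,\dots,X_k$ alone would only give you approximate classical independence of the outcome string, not the much stronger structural fact that the marginal $\rho_{\reg{A}_1\cdots\reg{A}_k}$ is close (in $\mathrm{LOCC}_1$) to $\int\sigma^{\otimes k}\,d\mu(\sigma)$ for a \emph{single fixed} measure $\mu$ that works simultaneously for all one-way LOCC distinguishers. In the actual argument, the relevant mutual information is between $\reg{A}_j$ (or its outcome) and the measurement record $R$ on the \emph{spectator} systems $\reg{A}_{k+1},\dots,\reg{A}_n$ --- not between $X_j$ and $X_{<j}$ --- and one shows, by averaging over how many spectators are measured, that conditioning on $R$ leaves the remaining registers nearly in a product state whose factor $\sigma$ is determined by $R$. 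That is where permutation invariance is used in an essential way and where the factor $1/(n-k)$ comes from. Your sketch never constructs this conditioning, so the ``telescoping'' step is not justified; a correct write-up would need to introduce the spectator measurement and prove the bound $I(X_j : R)\lesssim \ln|\reg{A}|/(n-k)$ on average before invoking Pinsker.
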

\noindent We also conjecture that a similar result holds for arbitrary LOCC measurements (without the restriction of measuring once).
\begin{conjecture}
\label{conj:locc}
For every tuple of polynomials $(q(\cdot), \nu(\cdot), k(\cdot))$ there exists a polynomial 
$n(\cdot)$ such that for all large enough $\lambda \in \mathbb{N}$, for all permutation-invariant 
states $\rho_{\reg{A}_1 \dots \reg{A}_{n(\lambda)}}$ on $\cH_\reg{A}^{\otimes n(\lambda)}$ (where 
$\reg{A} = \{\bbC^2\}^{\otimes k(\lambda)}$), there exists a measure $\mu$ on density matrices on $\cH_\reg{A}$ such that
\[
  \Bigl\| \rho_{\reg{A}_1 \dots \reg{A}_{q(\lambda)}} - \int \sigma^{\otimes q(\lambda)}\, d\mu(\sigma) 
  \Bigr\|_{\mathrm{LOCC}} \leq \frac{1}{\nu(\lambda)}.
\]
\end{conjecture}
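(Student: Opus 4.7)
The plan is to reduce general LOCC to one-way LOCC (Theorem~\ref{thm:1locc}) via an iterative peel-off strategy. Since the LOCC norm is a supremum over distinguishing protocols, it suffices to consider protocols of polynomial description length, so the number of rounds $R$ may be assumed polynomial in $\lambda$. Decompose any candidate distinguisher so that in each round some parties measure and broadcast classical outcomes, with subsequent measurements adapting to the transcript. The goal is to apply a conditional variant of Theorem~\ref{thm:1locc} one round at a time and accumulate errors additively.

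First I would establish a conditional 1-LOCC de Finetti: if $\rho_{\reg{A}_1 \cdots \reg{A}_n}$ is permutation-invariant and we perform a 1-LOCC measurement on subsystems $\reg{A}_1,\dots,\reg{A}_k$ obtaining outcome $o$, then the post-measurement state on $\reg{A}_{k+1},\dots,\reg{A}_n$ is approximately permutation-invariant and, up to small error in 1-LOCC norm on the remainder, a mixture of product states with a conditional measure $\mu_o$. Approximate symmetry is automatic because permutations within the unmeasured subset commute with the measurement; the quantitative product-state bound comes from applying Theorem~\ref{thm:1locc} to the re-symmetrized marginal. I would then iterate: peel off one round, apply the conditional statement, restrict to the outcome seen, and continue on the remaining subsystems. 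Choosing $n$ large enough that each of the $R$ rounds contributes at most $1/(R\nu(\lambda))$ error, a telescoping argument yields the claimed $1/\nu(\lambda)$ bound. Since Theorem~\ref{thm:1locc} has only $\sqrt{\log|\reg{A}|}$ dimension dependence and $|\reg{A}|=2^{k(\lambda)}$, one expects $n(\lambda)=\poly(q,k,\nu,R)$, consistent with the polylog-in-dimension behavior the conjecture requires.

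The main obstacle will be producing a \emph{single} measure $\mu$ on density matrices of $\reg{A}$ (as the conjecture demands) rather than the adaptive tree of conditional measures $\{\mu_o\}_o$ that the peel-off naturally produces. Collapsing this tree requires arguing that the $\mu_o$ are consistent with the marginals of a common underlying measure. I expect this will follow from a martingale / mutual-information argument in the spirit of Brand\~ao--Harrow: track the quantum mutual information between measured and unmeasured halves after each round and use a chain-rule / monogamy inequality to show it cannot grow too quickly under LOCC. A related worry is that per-round LOCC norm bounds may compose superadditively when the distinguisher exploits cross-round correlations classically; handling this likely demands a new measurement-information lemma of independent interest. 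Failing that, one might instead settle for a weaker statement where $\mu$ is allowed to depend on the distinguishing protocol, which would still suffice for the meta-reduction application in Section~\ref{sec:publiccoinreductions}.
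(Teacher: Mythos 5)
This statement is labeled a \emph{conjecture} in the paper, and the authors explicitly state that they do not know how to prove it (see the discussion following the statement of Conjecture~\ref{conj:locc} and the remarks at the end of Section~2.5: ``While we do not know how to prove such a statement, our conjectured LOCC de~Finetti theorem does not appear to be trivially false.''). So there is no paper proof to compare against; you are proposing a program for an open problem, and the evaluation has to be whether the program is sound and whether any of its steps are known.

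Your sketch correctly locates the central obstruction, but it does not close it. Peeling off one LOCC round at a time and invoking Theorem~\ref{thm:1locc} conditionally is natural, and your observation that the post-measurement state on the untouched subsystems remains permutation-invariant is correct (in fact exactly so, since the measurement acts only on disjoint subsystems). The genuine gap is precisely where you flag it: each round of conditioning produces its own measure $\mu_o$, and nothing in the one-way theorem forces the family $\{\mu_o\}$ to be marginals of a \emph{single} measure $\mu$ on density matrices of $\cH_\reg{A}$ fixed in advance. The conjecture requires one $\mu$ that works against all (adaptive, multi-round) LOCC distinguishers simultaneously, i.e.\ the supremum defining $\|\cdot\|_{\mathrm{LOCC}}$ is taken \emph{after} $\mu$ is chosen; your iterated construction naturally yields a protocol-dependent tree, which is a strictly weaker conclusion. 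Your proposed remedy --- a martingale or mutual-information chain rule argument --- is not an established tool in this setting: the information-theoretic proof behind Theorem~\ref{thm:1locc} exploits one-wayness in a way that is not known to survive back-and-forth classical communication, and it is open whether the per-round error must compose additively once later measurements can depend on earlier transcripts. In short, this is a proof \emph{sketch} whose key lemma is exactly the part the paper leaves as a conjecture; the fallback you mention (letting $\mu$ depend on the distinguishing protocol, which might suffice for the application in Claim~\ref{clm:classical-bb-sim}) is a reasonable weaker target, but even that is not proved here or in the paper.
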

\subsection{Constructions}
\begin{theorem}
Assuming Conjecture \ref{conj:locc}, the existence of a constant-round PoQ with $1-\negl(\lambda)$ correctness and a classical black-box reduction to a falsifiable assumption secure against QPT adversaries implies the existence of uniform i.o. $\poly(\lambda)$-weak minischemes. 
\end{theorem}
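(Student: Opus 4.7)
The plan is to follow the meta-reduction template described in the overview: start by building an inefficient classical attacker $A$ for the PoQ, then argue that either (i) we can construct an efficient quantum simulator $M$ for $R^A$ that breaks the falsifiable assumption, reaching a contradiction, or (ii) some stage of the PoQ's internal prover state is hard to clone in a publicly-verifiable sense, which will give us the claimed uniform i.o.\ weak mini-scheme. Since the PoQ is public coin, each partial transcript $T$ determines (information-theoretically) the pure state $\ket{\psi_T}$ of the honest prover at that point, and public verifiability of $\ket{\psi_T}$ is obtained by running the remaining rounds of the honest interaction against the public-coin verifier. This makes $(T,\ket{\psi_T})$ a natural candidate serial number / banknote pair for a weak mini-scheme: $\mathsf{Samp}$ picks a random round index $i$, runs the honest interaction to generate $T$, and outputs $(T,\ket{\psi_T})$; $\mathsf{Ver}$ completes the remaining $\ell-i$ rounds using the purported state and accepts iff the verifier accepts. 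Correctness follows from $1-\negl(\lambda)$ completeness of the PoQ.

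First I would define the inefficient attacker $A$ by hardcoding, for each partial transcript $T$ that could arise, a single classical response sampled from the distribution obtained by measuring the honest prover's state $\ket{\psi_T}$ under the appropriate next-message measurement. By construction, $A$ interacts as a deterministic classical machine and, along any single straight-line execution, is statistically indistinguishable from the honest quantum prover; in particular $\Pr[\langle A,V\rangle=1]\geq c(\lambda)-\negl(\lambda)$, so the black-box reduction guarantees an inverse-polynomial advantage of $R^A$ against the challenger $C$. The next step is to design an efficient quantum meta-reduction $M$ that runs $R$ internally and answers $R$'s queries by (a) obtaining many copies of the honest prover states at each intermediate point, and (b) performing the next-message measurement independently on each copy so that queries on diverging transcripts are answered by independent measurement outcomes, matching the behavior of $A$.

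Assume for contradiction that no uniform i.o.\ $\poly(\lambda)$-weak mini-scheme exists. Then for every polynomial $n$ there is a QPT cloner that, on input $(T,\ket{\psi_T})$ produced by $\mathsf{Samp}$, outputs $n$ registers each of which passes $\mathsf{Ver}$. The simulator $M$ runs the honest prover to generate the first banknote, then invokes the cloner with parameter $n = n(\lambda)$ chosen large relative to the number $q$ of queries $R$ makes. Crucially, since the reduction $R$ is \emph{classical}, the communication between copies of the clone during $R$'s execution is just classical transcript bits propagated between different rewinds; this lets us view $M$'s use of the clones as an LOCC procedure acting on the symmetric register $\rho_{\reg{A}_1 \cdots \reg{A}_{n(\lambda)}}$ output by the cloner (which we may symmetrize by a random relabeling of copies without loss of generality). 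By Conjecture~\ref{conj:locc}, for suitably chosen $n(\lambda)$ this symmetric state is $1/\nu(\lambda)$-close in LOCC distance to an i.i.d.\ mixture $\int \sigma^{\otimes q}\, d\mu(\sigma)$. Conditioned on a draw of $\sigma$ from $\mu$, the $q$ answers produced by $M$ are independent, and because each individual $\sigma$ must pass $\mathsf{Ver}$, an averaging argument shows $\sigma$ is (close to) supported on states of high overlap with $\ket{\psi_T}$, so the induced next-message distributions match those of the honest prover (and hence of $A$) up to $1/\poly(\lambda)$ error. Extending this argument level-by-level over the constant number of rounds --- cloning once at the first divergence point to handle queries on partial transcripts differing in the first prover message, then again at the second, and so on --- gives an efficient $M$ whose distribution of interactions with $R$ is $1/\poly(\lambda)$-close to that of $R^A$. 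Tuning $n$ and $\nu$ so that this slack is smaller than half the reduction's advantage, $M$ breaks $(C,\thres)$, contradicting QPT-hardness.

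The main obstacle I expect is handling \emph{arbitrary} query patterns by the reduction, which may interleave rewinds at different rounds of the protocol so that the communication between clones is truly back-and-forth LOCC rather than one-way; this is precisely why the full strength of Conjecture~\ref{conj:locc} is invoked, rather than the one-way LOCC de Finetti theorem of \cite{1locc}. A related subtlety is that the LOCC de Finetti bound only controls statistical distance under LOCC measurements, not the underlying quantum states --- we must therefore be careful to express $M$'s interaction with $R$ as an LOCC protocol on the cloned registers, which is possible because $R$'s messages to and from each simulated copy are classical transcripts. Finally, I would argue that the resulting weak mini-scheme achieves only uniform i.o.\ security: the cloner is only guaranteed to succeed on infinitely many $\lambda$, and the scheme itself and its attacker are uniform, matching the statement of the theorem and explaining why the amplification in Theorem~\ref{thm:minischeme-amp} (which preserves the uniform i.o.\ flavor) suffices to conclude.
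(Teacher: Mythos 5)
Your high-level strategy (meta-reduction + cloning hardness + LOCC de Finetti) is the right one, and several ingredients match the paper, but there are two genuine gaps in the way you execute it.

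The first and most serious gap is the step ``because each individual $\sigma$ must pass $\mathsf{Ver}$, an averaging argument shows $\sigma$ is (close to) supported on states of high overlap with $\ket{\psi_T}$.'' This is false, and the paper's technical overview explicitly warns against it: there are schemes (even from one-way functions) where exact cloning is hard yet producing a state that passes a public verifier is easy. Passing verification gives no fidelity guarantee with the original state $\ket{\psi_T}$. As a result, you cannot conclude that the cloner's outputs induce next-message distributions matching the honest prover, and hence you cannot conclude that your simulator $M$ matches the inefficient attacker $A$. The paper never makes this comparison; instead it builds an entirely separate inefficient \emph{classical} prover $\widetilde{P}$ directly out of the (classically simulated) cloners and the de Finetti measure, shows $\widetilde{P}$ breaks soundness (via a chain $P \to P' \to \widetilde{P}$), and then shows the efficient simulator $\mathsf{Sim}$ is close to $R^{\widetilde{P}}$, not to $R^A$.

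The second gap is that your proposed verifier — ``complete the remaining rounds honestly and accept iff the verifier accepts'' — cannot support the induction you sketch (``clone once at the first divergence point, then again at the second, and so on''). The cloner at round $i$ is only guaranteed to work on the honest prover state $\ket{\psi_T}$; after cloning and measuring you are holding a state that passes your simple $\mathsf{Ver}$ but may be far from the honest state at round $i+1$, so the round-$(i+1)$ cloner has no guarantee on it. To make the levels compose, the paper defines the verifier $\Ver_i^{A_{i+1},\ldots,A_{\ell-1}}$ \emph{recursively} so that it actually invokes the later-level cloners $A_{i+1},\ldots$ and checks that \emph{all} resulting subsystems pass the next level's verification; this is precisely what makes passing $\Ver_i$ strong enough to feed the round-$(i+1)$ cloner. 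The win-win dichotomy (Claim~\ref{clm:classical-bb-branch} in the paper) is then an induction on $\ell - i$ over this recursive family of schemes, and it is this dichotomy (not your fidelity claim) that drives the whole proof. The remaining ingredients in your plan — symmetrizing the cloner output by a random permutation, invoking Conjecture~\ref{conj:locc} because the reduction queries induce general back-and-forth LOCC, and the uniform i.o.\ flavor of the resulting scheme with amplification via Theorem~\ref{thm:minischeme-amp} — all line up with the paper.
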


\begin{proof}
Let the PoQ be $\langle P,V \rangle$ with reduction $R$ to assumption $(C,\thres)$, with correctness $1-\negl(\lambda)$ and soundness $1-\frac{1}{p(\lambda)}$ for some polynomial $p$.  Note that by Theorem \ref{thm:poq-parallel-rep} this is without loss of generality.
Let $2\ell$ be the number of rounds, where we add a dummy round in the beginning if the number of rounds was not initially even, and suppose $R$ makes at most $q(\lambda)$ queries.  
Let $k(\lambda)$ be an upper bound on the length of the (purified) state of the honest prover during execution.

By definition of PoQ, there exists a polynomial $p'$ such that for any classical prover $\widetilde{P}$ where for infinitely many $\lambda$, $\Pr[\langle \widetilde{P},V \rangle(1^\lambda) = 1] \geq 1 - \frac{1}{2p(\lambda)}$, it must be the case that for infinitely many $\lambda$,
\[
\Pr\big[\langle R^{\widetilde{P}}, C(\lambda)=1\rangle \big] \geq \thres + \frac{1}{p'(\lambda)}.
\]
Set $\nu(\lambda) := 4\ell^2 \, p(\lambda) \, q(\lambda) \, p'(\lambda)$.  
By Conjecture \ref{conj:locc}, there exists a polynomial $n(\cdot)$ such that for all large enough $\lambda \in \mathbb{N}$, for all permutation invariant states $\rho_{\reg{A}_1 \ldots \reg{A}_{n(\lambda)}}$ on $\cH_\reg{A}^{\otimes n(\lambda)}$ (where $\reg{A}=\{\bbC^2\}^{k(\lambda)}$), there exists a measure $\mu$ on density matrices on $\cH_\reg{A}$ such that
\[
\Bigl\| \rho_{\reg{A}_1 \ldots \reg{A}_{q(\lambda)}} - \int \sigma^{\otimes q(\lambda)} \, d\mu(\sigma) \Bigr\|_{\text{LOCC}} \leq \frac{1}{\nu(\lambda)}.
\]
Let $c_{\ell-1}(\lambda)$ be the correctness of the PoQ. For $i \in [\ell-1]$, set
\begin{itemize}
    \item $s_i(\lambda) := \frac{1}{4n(\lambda)^{2i} p(\lambda)}$
    \item $c_{i-1}(\lambda) := s_i(\lambda)$
\end{itemize}
We will drop $\lambda$ when clear from context.
For any set of QPT algorithms $(A_1, \ldots, A_{\ell-1})$, we define a set of candidate weak minischemes $(\Samp_i, \Ver_i^{A_{i+1},\ldots,A_{\ell-1}})_{i \in [\ell-1]}$ as follows. 
\begin{itemize}
    \item $\Samp_i(1^\lambda):$ Run $\langle P,V \rangle(1^\lambda)$ until just after the $2i$-th message to obtain transcript $\tau_i$ and the (purified) internal state of the prover $\ket{\psi_i}$. Note that this state is uniquely determined by $\lambda$ and the transcript.
    \item $\Ver_i^{A_{i+1},\ldots,A_{\ell-1}}(\tau, \widetilde{\rho}):$
    \begin{itemize}
        \item Sample $r\gets \{0,1\}^*$ of same length as the $2i+1$'th message.
        \item $(m, \sigma) \gets P_{i+1}(\tau, r, \widetilde{\rho})$.
        \item  $\tau' \gets \tau \| r \| m$.
        \item If $i+1=\ell$ : 
        \begin{itemize}
            \item Return $V(\tau')$
        \end{itemize}
        \item $\rho_{\reg{A}^{(1)}, \ldots, \reg{A}^{(n)}} \leftarrow A_{i+1}(\tau', \sigma)$
        \item For all $t \in [n]$, run $\Ver_{i+1}^{A_{i+2},\ldots,A_{\ell-1}}(\tau', \rho_{\reg{A}^{(t)}})$.
        Accept if all accept, else reject.
    \end{itemize}
\end{itemize}
\begin{claim}
\label{clm:classical-bb-branch}
    Either:
\begin{enumerate}
    \item There exists $i \in [\ell-1]$ and a set of QPT algorithms $(A_{i+1},\ldots,A_{\ell-1})$ such that 
    \[
      (\Samp_i, \Ver_i^{A_{i+1},\ldots,A_{\ell-1}})
    \]
    is a $(n(\lambda),c_i(\lambda),s_i(\lambda))$-weak uniform io-minischeme.

    \item Or there exists a set of QPT algorithms $(A_1,\ldots,A_{\ell-1})$ such that for all $i \in [\ell-1]$:
    \begin{enumerate}
        \item $(\Samp_i, \Ver_i^{A_{i+1},\ldots,A_{\ell-1}})$ has correctness $c_i(\lambda)$; and  
        \item $A_i$ breaks the $s_i(\lambda)$-security of $(\Samp_i, \Ver_i^{A_{i+1},\ldots,A_{\ell-1}})$, i.e. for all large enough $\lambda$
        \[
    \Pr\left[ \forall t \in [n(\lambda)], 
    \Ver_i^{A_{i+1},\ldots,A_{\ell-1}}(s,\rho_{\reg{A}^{(t)}})=1 \middle| \begin{array}{l}
         (s,\ket{\psi_s}) \gets \Samp_i(1^\lambda)\\
         \rho_{\reg{A}^{(1)} \cdots \reg{A}^{({n(\lambda)})}} \gets A_i(s,\ket{\psi_s})
    \end{array}
     \right] 
    \le s_i(\lambda).
    \]
    \end{enumerate}
\end{enumerate}
\end{claim}
\begin{proof}
Suppose the first condition is false, i.e. for all $i$, for all QPT $(A_{i+1},\ldots,A_{\ell-1})$, it is the case that 
  $(\Samp_i, \Ver_i^{A_{i+1},\ldots,A_{\ell-1}})$
is not a $(n(\lambda),c_i(\lambda),s_i(\lambda))$-weak uniform io-minischeme.  
We will show the second condition by induction on $j = \ell-i$.  

First consider $j=1$, i.e.\ $i=\ell-1$.  
$(\Samp_{\ell-1}, \Ver_{\ell-1})$ has correctness $c_{\ell-1}$, since the correctness is simply the probability that an honest execution of $\langle P,V \rangle$ accepts.  
Now, since the first condition is false and the correctness condition is satisfied, there must exist a QPT machine $A_{\ell-1}$ such that for all large enough $\lambda$
\[
\Pr\left[ \forall t \in [n],  \Ver_{\ell-1}(\tau, \rho_{\reg{A}^{(t)}})=1 \middle| 
\begin{array}{l}
  \tau,\sigma \gets \Samp_{\ell-1}(1^\lambda),\\
  \rho_{\reg{A}^{(1)},\ldots,\reg{A}^{(n)}} \gets A_{\ell-1}(\tau,\sigma)
\end{array}
\right] \geq s_{\ell-1}(\lambda).
\]
Next, we show the induction step.  
For some $j \in [\ell-1]$ and $i = \ell-j$, suppose there exists a set of QPT algorithms $(A_i,\ldots,A_{\ell-1})$ such that for all $t \in [n]$,
\begin{itemize}
  \item $(\Samp_i,\Ver_i^{A_{i+1},\ldots,A_{\ell-1}})$ has correctness $c_i(\lambda)$; and
  \item $A_i$ breaks the $s_i(\lambda)$--security of $(\Samp_i,\Ver_i^{A_{i+1},\ldots,A_{\ell-1}})$.
\end{itemize}
The correctness experiment for $(\Samp_{i-1},\Ver_{i-1}^{A_i,\ldots,A_{\ell-1}})$ is identical to the security experiment for $(\Samp_i,\Ver_i^{A_{i+1},\ldots,A_{\ell-1}})$ against $A_i$.  
Therefore $(\Samp_{i-1},\Ver_{i-1}^{A_i,\ldots,A_{\ell-1}})$ has correctness at least $s_i(\lambda)=c_{i-1}(\lambda)$.  

Now, since the first condition is false and the correctness condition is satisfied, there must exist a QPT machine $A_{i-1}$ such that it breaks the $s_{i-1}(\lambda)$--security of $(\Samp_{i-1},\Ver_{i-1}^{A_i,\ldots,A_{\ell-1}})$, which is enough to satisfy the induction hypothesis for $j+1$.
\end{proof}
If the first condition from Claim \ref{clm:classical-bb-branch} is satisfied, then Theorem \ref{thm:minischeme-amp} shows the existence of uniform i.o.~$n(\lambda)$-weak minischemes. In the remainder of the proof, we will show that the second condition contradicts the security of $(C,\thres)$.\\

\noindent Suppose there exists a set of QPT algorithms $(A_1,\ldots,A_{\ell-1})$ such that for all $i \in [\ell-1]$:
    \begin{enumerate}
        \item $(\Samp_i, \Ver_i^{A_{i+1},\ldots,A_{\ell-1}})$ has correctness $c_i(\lambda)$; and  
        \item $A_i$ breaks the $s_i(\lambda)$-security of $(\Samp_i, \Ver_i^{A_{i+1},\ldots,A_{\ell-1}})$, i.e. for all large enough $\lambda$
        \begin{equation}\label{eq:classical-bb-assumption}
            \Pr\left[ \forall t \in [n(\lambda)], 
            \Ver(s,\rho_{\reg{A}^{(t)}})=1 \middle| \begin{array}{l}
                 (s,\ket{\psi_s}) \gets \Samp_i(1^\lambda)\\
                 \rho_{\reg{A}^{(1)} \cdots \reg{A}^{({n(\lambda)})}} \gets A(s,\ket{\psi_s})
            \end{array}
             \right] 
            \le s_i(\lambda).
    \end{equation}
    \end{enumerate}
Define a malicious prover $P'$ as follows. First define $P'_1,\dots,P'_\ell$.
\begin{itemize}
  \item $P'_1(r_1):$
  \begin{itemize}
    \item $(m_1,\sigma_1) \gets P_1(r_1)$
    \item $(\rho_1)_{\reg{A}^{(1)}_1,\ldots,\reg{A}_1^{(n)}} \gets A_1(r_1 \| m_1,\sigma_1)$
    \item $t \gets [n]$
    \item $\widetilde{\rho}_1 \gets (\rho_1)_{\reg{A}_1^{(t)}}$
    \item Return $(m_1,\widetilde{\rho}_1)$
  \end{itemize}

  \item For $i\in[2,\ell]$, define $P'_i(\tau_{i-1}\|r_i,\widetilde{\rho}_{i-1}):$
  \begin{itemize}
    \item $(m_i,\sigma_i) \gets P_i(\tau_{i-1}\|r_i,\widetilde{\rho}_{i-1})$
    \item $\tau_i := \tau_{i-1} \| r_i \| m_i$
    \item $(\rho_i)_{\reg{A}_i^{(1)},\ldots,\reg{A}_i^{(n)}} \gets A_i(\tau_i,\sigma_i)$
    \item $t \gets [n]$
    \item $\widetilde{\rho}_i \gets (\rho_i)_{\reg{A}_i^{(t)}}$
    \item $\text{Return } (m_i,\widetilde{\rho}_i)$
  \end{itemize}
\end{itemize}

\noindent For the first prover message, the prover $P'$ samples $(m_1,\widetilde{\rho}_1) \gets P_1(r_1)$ and sends $m_1$.  
For the $i$-th prover message, $P'$ samples $(m_i,\widetilde{\rho}_i)$ by running $P'_i(\tau_{i-1}\|r_i,\widetilde{\rho}_{i-1})$ where $\tau_{i-1}\|r_i$ is the transcript so far.

\begin{claim} 
\label{clm:P-prime-soundness}
For all large enough $\lambda$
\[
\Pr[\langle P',V\rangle=1] \ge s_1(\lambda).
\]
\end{claim}

\begin{proof}
Define $\widetilde{\Ver}_i^{A_{i+1},\dots,A_{\ell-1}}(\tau_i,\widetilde{\rho})$ as:
\begin{itemize}
    \item Sample $r\gets \{0,1\}^*$ of same length as the $2i+1$'th message.
    \item $(m, \sigma) \gets P_{i+1}(\tau, r, \widetilde{\rho})$.
    \item  $\tau' \gets \tau \| r \| m$.
    \item If $i+1=\ell$ : 
    \begin{itemize}
        \item Return $V(\tau')$
    \end{itemize}
    \item $\rho_{\reg{A}^{(1)}, \ldots, \reg{A}^{(n)}} \leftarrow A_{i+1}(\tau', \sigma)$
  \item $t \gets [n]$
  \item Return $\widetilde{\Ver}_{i+1}^{A_{i+2},\ldots,A_{\ell-1}}(\tau', \rho_{\reg{A}^{(t)}})$.
\end{itemize}
If $i=\ell-1$, then $\widetilde{\Ver}_i$ and $\Ver_i$ are identical. For all other $i$, for all $\tau,\rho$,
\[
\Pr[\widetilde{\Ver}_i^{A_{i+1},\dots,A_{\ell-1}}(\tau,\rho)=1] \ge 
\Pr[\Ver_i^{A_{i+1},\dots,A_{\ell-1}}(\tau,\rho)=1],
\]
since both verification procedures are identical except $\widetilde{\Ver}_i$ only checks a single random register instead of all registers.
Therefore,
\begin{align*}
\Pr\!&\left[\widetilde{\Ver}_1^{A_2,\dots,A_{\ell-1}}(\tau,\rho_{\reg{A}^{(t)}})=1 \middle| 
\begin{array}{l}
(\tau,\sigma)\gets \Samp_1\\
\rho_{\reg{A}^{(1)}\cdots \reg{A}^{(n)}} \gets A_1(\tau,\sigma)\\
t\gets [n]
\end{array}\right] \\
&\ge 
\Pr\!\left[\Ver_1^{A_2,\dots,A_{\ell-1}}(\tau,\rho_{\reg{A}^{(t)}})=1 \middle| 
\begin{array}{l}
(\tau,\sigma)\gets \Samp_1\\
\rho_{\reg{A}^{(1)}\cdots \reg{A}^{(n)}} \gets A_1(\tau,\sigma)\\
t\gets [n]
\end{array}\right] \\
&\ge 
\Pr\!\left[\forall t, \Ver_1^{A_2,\dots,A_{\ell-1}}(\tau,\rho_{\reg{A}^{(t)}})=1 \middle| 
\begin{array}{l}
(\tau,\sigma)\gets \Samp_1\\
\rho_{\reg{A}^{(1)}\cdots \reg{A}^{(n)}} \gets A_1(\tau,\sigma)
\end{array}\right].
\end{align*}
The final term on the RHS is exactly the security game for weak minischemes, so by \eqref{eq:classical-bb-assumption},
\[
\Pr\!\left[\widetilde{\Ver}_1^{A_2,\dots,A_{\ell-1}}(\tau,\rho_{\reg{A}^{(t)}})=1 \middle| 
\begin{array}{l}
(\tau,\sigma)\gets \Samp_1\\
\rho_{\reg{A}^{(1)}\cdots \reg{A}^{(n)}} \gets A_1(\tau,\sigma)\\
t\gets [n]
\end{array}\right]\ge s_1(\lambda).
\]
Now, the experiment performed above is exactly equivalent to running $\langle P,V \rangle$, which concludes the proof of the claim.
\end{proof}
\noindent Next, we describe an unbounded classical prover $\widetilde{P}$.  
To do so, we define algorithms $(\widetilde{P}_1,\ldots,\widetilde{P}_\ell)$.  
For any quantum state $\rho$, let $\langle \rho \rangle$ refer to the description of the density matrix of $\rho$.  
For a quantum algorithm $B$, let $B^c$ refer to a classical algorithm that simulates $B$, i.e.\ $B^c$ takes a description of a state $\langle \rho \rangle$ as input, simulates sampling $\sigma \gets B(\rho)$, and outputs $\langle \sigma \rangle$. Let $S_n$ be the set of permutations on $n$ objects and for any $\pi \in S_n$ let $P(\pi)$ be the unitary that applies permutation $\pi$ to its input registers.
\begin{itemize}
    \item $\widetilde{P}_1(r_1):$
    \begin{itemize}
        \item $m_1, \langle \sigma_1 \rangle \gets P_1^c(r_1)$
          \item $\langle(\rho_1)_{\reg{A}_1^{(1)} \cdots \reg{A}_1^{(n)}}\rangle \gets A_1(r_1 \| m_1, \langle \sigma_1 \rangle)$
          \item Define $\rho_1^{\mathsf{sym}} := \Exp_{\pi \gets S_n} [P(\pi)\,\rho_1\,P(\pi)^\dagger]$
          \item Since $\rho_1^{\mathsf{sym}}$ is permutation-invariant, by Conjecture \ref{conj:locc}. there exists measure $\mu_1$ s.t. 
          \[
            \left\| (\rho_1^{\mathsf{sym}})_{\reg{A}_1^{(1)} \cdots \reg{A}_1^{(q)}} - \int \varphi^{\otimes q} \, d\mu_1(\varphi) \right\|_{\mathrm{LOCC}} \le \tfrac{1}{\nu(\lambda)}.
         \]
          Sample $\varphi_1 \gets \mu_1$
          \item Return $m_1, \langle \varphi_1 \rangle$
    \end{itemize}
    \item For $i \in [2,\ell]$, $\widetilde{P}_i(\tau_{i-1} \| r_i, \langle \varphi_{i-1} \rangle):$
    \begin{itemize}
        \item $m_i, \langle \sigma_i \rangle \gets P_i^c(\tau_{i-1} \| r_i, \langle \varphi_{i-1} \rangle)$.  
        If $i = \ell$, return $m_i$.
        \item $\tau_i := \tau_{i-1} \| r_i \| m_i$
        \item $\langle (\rho_i)_{\reg{A}_i^{(1)} \cdots \reg{A}_i^{(n)}} \rangle \gets A_i(\tau_i, \langle \sigma_i \rangle)$
        \item Define $\rho_i^{\mathsf{sym}} := \Exp_{\pi \gets S_n} [P(\pi)\,\rho_i\,P(\pi)^\dagger]$
        \item Since $\rho_i^{\mathsf{sym}}$ is permutation-invariant, by Conjecture~\ref{conj:locc} there exists measure $\mu_i$ s.t.
        \[
            \left\| (\rho_i^{\mathsf{sym}})_{\reg{A}_i^{(1)} \cdots \reg{A}_i^{(q)}} - \int \varphi^{\otimes q} \, d\mu_i(\varphi) \right\|_{\mathrm{LOCC}} \le \tfrac{1}{\nu(\lambda)}.
        \]
        Sample $\varphi_i \gets \mu_i$
        \item Return $m_i, \langle \varphi_i \rangle$
    \end{itemize}
\end{itemize}
For the first prover message, the prover $\widetilde{P}$ samples $m_1, \langle \varphi_1 \rangle \gets \widetilde{P}_1(r_1)$ and sends $m_1$.  
For the $i$th prover message, $\widetilde{P}$ samples $m_i, \langle \varphi_i \rangle$ by running  
$\widetilde{P}_i(\tau_{i-1} \| r_i, \langle \varphi_{i-1} \rangle)$, where $\tau_{i-1} \| r_i$ is the transcript so far.  
While our description of the prover is randomized, we will think of the $\widetilde{P}_i$’s as sampling and fixing the randomness for all possible inputs in the beginning of execution.

Next we describe an efficient simulator $\Sim$ that, given access to a $q$-query classical reduction $\widetilde{R}$, and for all challengers $\widetilde{C}$, will mimic the interaction of $\widetilde{R}^{\widetilde{P}}$ with $\widetilde{C}$.  
Without loss of generality, we may assume that $\widetilde{R}$ only runs $\widetilde{P}$ on $(\tau \| r)$ for partial transcripts $\tau$ that were generated by $\widetilde{P}$, and since $\widetilde{P}$ is deterministic after fixing randomness, we may also assume $\widetilde{R}$ never runs $\widetilde{P}$ twice on the same partial transcript.  
  
$\Sim$ maintains an internal database keyed by classical strings, where each entry stores a multipartite quantum state, together with bookkeeping marking subsystems as used or unused.  
When a new state is added to the database, all its subsystems are marked as unused.
We think of an $\widetilde{R}$ query requesting an $i$th prover message as a query to $\widetilde{P}_i$.  $\Sim$ is then defined as the algorithm that interacts with $\widetilde{C}$ by running $\widetilde{R}$ and answering $\widetilde{R}$ queries as follows.

\begin{itemize}
    \item On query $r$ to $\widetilde{P}_1$:
    \begin{itemize}
        \item $m, \sigma \gets P_1(r)$
        \item $\rho \gets A_1(r \| m, \sigma)$
        \item Apply a random permutation to $\rho$ to get $\rho^{\mathsf{sym}}$
        \item Store $\rho^{\mathsf{sym}}$ in the database keyed by $r \| m$
        \item Return $m$
    \end{itemize}

    \item On query $\tau \| r$ to $\widetilde{P}_i$:
    \begin{itemize}
        \item Look up the state stored under key $\tau$ in the database.  
        If no such state exists or if all subsystems of the state are marked used, then abort.  
        Else, retrieve the marginal state $\widetilde{\rho}$ on the next unused subsystem of the stored database state.  
        Mark this subsystem as used.
        \item $m, \sigma \gets P_i(\tau \| r, \widetilde{\rho})$. If $i = \ell$, return $m$.
        \item $\rho \gets A_i(\tau \| m, \sigma)$
        \item Apply a random permutation to $\rho$ to get $\rho^{\mathsf{sym}}$
        \item Store $\rho^{\mathsf{sym}}$ in the database keyed by $\tau \| r \| m$
        \item Return $m$
    \end{itemize}
\end{itemize}
\begin{claim}
    \label{clm:classical-bb-sim}
    For any $q'(\lambda) \leq q(\lambda)$, for all $q'(\lambda)$-query reductions $\widetilde{R}$ and all challengers $\widetilde{C}$,
\[
    \Big| \Pr\big[ \langle \mathsf{Sim}^{\widetilde{R}}, \widetilde{C} \rangle = 1 \big]
      - \Pr\big[ \langle \widetilde{R}^{\widetilde{P}}, \widetilde{C} \rangle = 1 \big] \Big|
    \le \frac{(\ell-1)q'(\lambda)}{\nu(\lambda)}.
\]
\end{claim}

\begin{proof}
We prove by constructing a sequence of hybrid simulators.  

For $i \in [0,q']$, define $\mathsf{Sim}_{1,i}$ as identical to $\mathsf{Sim}$ except for the first $i$ queries to $\widetilde{P}_1$, where it does the following instead:
\begin{itemize}
    \item On query $r$, sample $m, \langle \varphi \rangle \gets \widetilde{P}_1(r)$
    \item Prepare the state $\varphi^{\otimes n}$ and store it in the database under key $r \| m$
    \item Return $m$
\end{itemize}
\begin{subclaim}
\label{subclm:classical-sim-hybrid-1}
    For $i \in [q']$,
\[
    \Big| \Pr\big[ \langle \mathsf{Sim}_{1,i}^{\widetilde{R}}, \widetilde{C} \rangle(1^\lambda) = 1 \big]
      - \Pr\big[ \langle \mathsf{Sim}_{1,i-1}^{\widetilde{R}}, \widetilde{C} \rangle(1^\lambda) = 1 \big] \Big|
    \le \tfrac{1}{\nu(\lambda)}.
\]
\end{subclaim}

\begin{proof}
The only difference between the simulators is the state stored in the database for the $i$th query to $\widetilde{P}_1$ by $\widetilde{R}$.  
Each subsystem of this stored state may then later be used to generate further messages and states to be stored in the database.  
However, no entangled quantum operations are applied to subsystems; i.e., all measurements are local, and only classical information about the subsystems is sent to $\widetilde{R}$.  
Therefore $\widetilde{R}$ and $\widetilde{C}$ together with the action of the simulator constitute an LOCC measurement on the stored state.  
However, by Conjecture \ref{conj:locc}, the states stored in the case of either simulator cannot be distinguished with probability better than $1/\nu(\lambda)$ by LOCC measurements on $q$ subsystems.  
Since only the first $q' \leq q$ subsystems can be measured in $q'$ queries, this gives the required bound.
\end{proof}

\noindent For $t \in [2,\ell-1]$, define $\mathsf{Sim}_{t,0}$ as identical to $\mathsf{Sim}$ except:
\begin{itemize}
    \item On queries of form $r$ to $\widetilde{P}_1$:
    \begin{itemize}
        \item Sample $m, \langle \varphi \rangle \gets \widetilde{P}_1(r)$
        \item Store description $\langle \varphi \rangle$ in the database under key $r \| m$
        \item Return $m$
    \end{itemize}

    \item On queries of form $\tau \| r$ to $\widetilde{P}_j$ where $2\leq j \le t-1$:
    \begin{itemize}
        \item Look up description $\langle \varphi' \rangle$ in the database keyed by $\tau$
        \item Sample $m, \langle \varphi \rangle \gets \widetilde{P}_j(\tau \| r, \langle \varphi' \rangle)$
        \item Store description $\langle \varphi \rangle$ in the database under key $\tau \| r \| m$
        \item Return $m$
    \end{itemize}

    \item On queries of form $\tau \| r$ to $\widetilde{P}_t$:
    \begin{itemize}
        \item Answer query as in $\mathsf{Sim}$ except when looking up the state in the database,  
        use the stored description $\langle \varphi \rangle$ to generate $\varphi$ and use $\varphi$ as the marginal state on the next unused subsystem.
    \end{itemize}
\end{itemize}
For $t \in [2,\ell-1]$ and $i \in [0,q']$, define $\mathsf{Sim}_{t,i}$ as identical to $\mathsf{Sim}_{t,0}$  
except that for the first $i$ queries to $\widetilde{P}_t$, it does the following instead:
\begin{itemize}
    \item On query $\tau \| r$, look up description $\langle \varphi' \rangle$ in the database keyed by $\tau$
    \item Sample $m, \langle \varphi \rangle \gets \widetilde{P}_t(\tau \| r, \langle \varphi' \rangle)$
    \item Prepare the state $\varphi^{\otimes n}$ and store in the database under key $\tau \| r \| m$
    \item Return $m$
\end{itemize}

\begin{subclaim}
\label{subclm:classical-sim-hybrid-2}
For $t \in [2,\ell-1]$ and $i \in [q']$,
\[
    \Big| \Pr\big[ \langle \mathsf{Sim}_{t,i}^{\widetilde{R}}, \widetilde{C} \rangle(1^\lambda) = 1 \big]
      - \Pr\big[ \langle \mathsf{Sim}_{t,i-1}^{\widetilde{R}}, \widetilde{C} \rangle(1^\lambda) = 1 \big] \Big|
    \le \tfrac{1}{\nu(\lambda)}.
\]
\end{subclaim}

\begin{proof}
Almost identical to the proof of SubClaim \ref{subclm:classical-sim-hybrid-1}.
\end{proof}

\begin{subclaim}
\label{subclm:classical-sim-hybrid-3}
For $t \in [2,\ell-1]$,
\[
    \Pr\big[ \langle \mathsf{Sim}_{t,0}^{\widetilde{R}}, \widetilde{C} \rangle(1^\lambda) = 1 \big]
    = \Pr\big[ \langle \mathsf{Sim}_{t-1,q'}^{\widetilde{R}}, \widetilde{C} \rangle(1^\lambda) = 1 \big].
\]
\end{subclaim}

\begin{proof}
The only difference between the simulators is whether the database entry created during a $\widetilde{P}_t$ query is stored in description form and the corresponding state is generated during lookup, or whether the database entry contains $n$ copies of the state. Both cases are identical in the view of $\widetilde{R}$ and $\widetilde{C}$. 
\end{proof}
Finally, we observe that $\widetilde{R}$’s view in $\mathsf{Sim}_{\ell-1,q'}$ is identical to its view in $\widetilde{R}^{\widetilde{P}}$.  
Putting all SubClaims \ref{subclm:classical-sim-hybrid-1}, \ref{subclm:classical-sim-hybrid-2}, and \ref{subclm:classical-sim-hybrid-3} together with this fact yields the statement in the claim.
\end{proof}

\begin{claim}
\label{clm:c-and-q-prover-soundness}
\[
    \Big| \Pr\big[ \langle \widetilde{P}, V \rangle(1^\lambda) = 1 \big]
      - \Pr\big[ \langle P', V \rangle(1^\lambda) = 1 \big] \Big|
    \le \tfrac{\ell(\ell-1)}{\nu(\lambda)}
\]
\end{claim}

\begin{proof}
This follows from the observation that the Claim \ref{clm:classical-bb-sim} applies to all reductions and challengers.  
We may view the interaction between the prover and verifier as a game where the challenger is the verifier and the reduction merely forwards prover and verifier messages to each other.  
Running the simulator with this reduction is identical to running $\langle P', V \rangle$, while running the reduction with $\widetilde{P}$ is identical to running $\langle \widetilde{P}, V \rangle$.  
Since the reduction makes at most $\ell$ queries, the statement of the claim follows.
\end{proof}
\noindent Now, recall that by Claim \ref{clm:P-prime-soundness},
\[
    \Pr\big[ \langle P', V \rangle(1^\lambda) = 1 \big] \ge s_1(\lambda) = 1 - \frac{1}{4n(\lambda)^2 p(\lambda)}.
\]
which implies by Claim \ref{clm:c-and-q-prover-soundness} that
\begin{align*}
    \Pr\big[ \langle \widetilde{P}, V \rangle(\lambda) = 1 \big] 
    &\ge 1 - \frac{1}{4n(\lambda)^2 p(\lambda)} - \frac{\ell(\ell-1)}{\nu(\lambda)}\\
    &\geq 1 - \frac{1}{4n(\lambda)^2 p(\lambda)} - \frac{1}{4p(\lambda)q(\lambda)p'(\lambda)}\\
    &\ge 1 - \frac{1}{2p(\lambda)}.
\end{align*}
By the definition of $p'$ at the beginning of the proof, implies that
\[
    \Pr\big[ \langle R^{\widetilde{P}}, C \rangle = 1 \big] \ge \mathsf{thres} + \frac{1}{p'(\lambda)}.
\]
By Claim \ref{clm:classical-bb-sim}, this means
\begin{align*}
    \Pr\big[ \langle \mathsf{Sim}^R, C \rangle = 1 \big] 
    &\ge \mathsf{thres} + \frac{1}{p'(\lambda)} - \frac{(\ell-1)q(\lambda)}{\nu(\lambda)}\\
    &\geq \mathsf{thres} + \frac{1}{p'(\lambda)} - \frac{1}{4 \ell \, p(\lambda) \cdot p'(\lambda)}\\
    &\ge \mathsf{thres} + \frac{1}{2p'(\lambda)}.
\end{align*}
which contradicts the security of $C$, completing the proof of the theorem.
\end{proof}
\begin{theorem}
The existence of a 4-round PoQ with $1-\negl(\lambda)$ correctness and a classical black-box reduction to a falsifiable assumption secure against non-uniform quantum adversaries implies the existence of $\poly(\lambda)$-weak lightning.
\begin{remark}
    Note that we obtain a primitive with almost-everywhere (as opposed to infinitely-often) security against non-uniform quantum adversaries.
\end{remark}
\end{theorem}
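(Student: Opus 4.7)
The plan is to adapt the previous (constant-round, conjecture-dependent) theorem's proof, replacing the general LOCC de Finetti conjecture with the unconditional one-way LOCC version (Theorem~\ref{thm:1locc}), and strengthening the output from a weak minischeme to a weak lightning scheme. The candidate is built directly from the four-round PoQ: $\Setup(1^\lambda)$ samples the first verifier message $r_1 = \pp$; $\Samp(\pp)$ runs the purified honest prover on $r_1$, measures only the first-message register to get $\srno = m_1$, and outputs the residual pure state as $\ket{\psi_\srno}$ (which is uniquely determined by $\pp$ and $\srno$); $\Ver(\pp,\srno,\rho)$ samples a fresh $r_2$, applies the honest prover's second-round circuit to $(\pp,\srno,r_2,\rho)$ to obtain $m_2$, and accepts iff $V(\pp,\srno,r_2,m_2)=1$. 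Correctness is immediate from $1-\negl$ PoQ correctness.

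For security I would argue the contrapositive: suppose a non-uniform QPT $\cA$ has advantage $1/p(\lambda)$ against the scheme at infinitely many $\lambda$. Choose $n(\lambda)$ large enough that Theorem~\ref{thm:1locc} approximates the first $q(\lambda)$ marginals of the symmetrized adversary output within $1/\nu(\lambda)$ in 1-LOCC norm, where $q$ bounds the reduction's query count and $\nu$ is to be set below. Build an unbounded classical prover $\widetilde{P}$ that on input $r_1$ runs $\cA$ internally, symmetrizes the $n(\lambda)$-partite output over $S_{n(\lambda)}$, invokes Theorem~\ref{thm:1locc} to extract a measure $\mu$ approximating the single-subsystem marginal, samples $\varphi \sim \mu$, records a classical description of $\varphi$, and outputs $\srno = m_1$; on $r_2$ it applies the second-round circuit to $\varphi$ classically to produce $m_2$. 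Since a uniformly random subsystem passes $\Ver$ with probability at least $1/p(\lambda)$ and that marginal is $1/\nu(\lambda)$-close under 1-LOCC to $\int \varphi\, d\mu(\varphi)$, the transcript $\widetilde{P}$ produces makes $V$ accept with advantage at least $1/p(\lambda) - 1/\nu(\lambda)$ above PoQ soundness, so by the reduction's contract $R^{\widetilde{P}}$ attains advantage $1/p'(\lambda)$ against $(C,\thres)$ at infinitely many $\lambda$ for some polynomial $p'$.

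The final step is to emulate $R^{\widetilde{P}}$ efficiently. Define a QPT simulator $\Sim$ that on each first-round query $r_1$ runs the honest $P_1$ to obtain $(m_1,\sigma)$, runs $\cA(r_1,m_1,\sigma)$ to obtain $\rho$, symmetrizes, and caches $\rho^{\mathsf{sym}}$ under the key $(r_1,m_1)$; on each second-round query $(r_1,m_1,r_2)$ it retrieves the next unused subsystem of the cached state and applies $P_2$ to it. The crucial structural observation is that because $R$ is \emph{classical} and the protocol has only four rounds, each cached subsystem is measured at most once before being discarded, so the joint action of $R$, the challenger $C$, and $\Sim$ on any single cached state is a one-way LOCC measurement; Theorem~\ref{thm:1locc} then bounds the distinguishing advantage between $\Sim^R$ and $R^{\widetilde{P}}$ by $O(q(\lambda)/\nu(\lambda))$. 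Choosing $\nu(\lambda)$ a sufficiently large polynomial multiple of $q(\lambda) p'(\lambda)$ yields a QPT $\Sim$ breaking $(C,\thres)$ with noticeable advantage at infinitely many $\lambda$, contradicting the assumption's non-uniform quantum security and thus forcing security of the lightning scheme at all large enough $\lambda$ (the almost-everywhere strengthening in the conclusion). The main obstacle, and the reason this argument works unconditionally only for four rounds, is establishing the 1-LOCC structure: beyond four rounds the reduction's interleaved rewindings across cached subsystems would create genuine two-way classical flow between clones, which is exactly the situation that forces the earlier constant-round theorem to invoke Conjecture~\ref{conj:locc}; here the four-round structure collapses the relevant interaction to a single forward pass through each clone, and 1-LOCC de Finetti suffices.
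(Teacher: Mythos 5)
Your construction of the weak lightning scheme, the unbounded classical prover $\widetilde{P}$, the QPT simulator $\Sim$, and the observation that one-way LOCC (Theorem~\ref{thm:1locc}) suffices because each cached clone is touched exactly once before being discarded all coincide with the paper's proof. The parameter accounting, however, has a real gap.

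You suppose a QPT $\cA$ with \emph{advantage} $1/p(\lambda)$ against the lightning scheme, conclude that a random subsystem passes $\Ver$ with probability at least $1/p(\lambda)$, and then assert that $\widetilde{P}$ makes $V$ accept ``with advantage at least $1/p(\lambda)-1/\nu(\lambda)$ above PoQ soundness.'' This last step silently assumes the PoQ soundness is negligible (or at least $\ll 1/p$). But the theorem hypothesis gives only $1-\negl$ correctness, which by non-triviality bounds soundness by $1-1/t(\lambda)$ for some polynomial $t$; nothing forces it to be small. A $\widetilde{P}$ that passes with absolute probability only $1/p - 1/\nu$ is nowhere near breaking a soundness threshold of, say, $1 - 1/t$. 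The paper avoids this by targeting a weak lightning scheme whose soundness parameter is \emph{close to} $1$, namely $s(\lambda) = 1 - 1/(4n(\lambda)^2 p(\lambda))$ where $1 - 1/p$ is the PoQ soundness. Then a successful lightning adversary $\cA$ immediately yields a quantum malicious prover $P'$, and in turn a classical $\widetilde{P}$, whose acceptance probability is at least $1 - 1/(4n^2p) - 2/\nu \geq 1 - 1/(2p)$, which does exceed $1-1/p$. (The resulting $(n, 1-\negl, 1-1/(4n^2p))$-weak lightning is then amplified to the standard negligible-soundness form, analogous to Theorem~\ref{thm:minischeme-amp}.)

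To rescue your version you would have to first amplify the PoQ to negligible soundness via parallel repetition, and then argue that the resulting PoQ still has a classical black-box reduction to the same assumption. That is a nontrivial extra step (it requires a reduction for the repeated protocol, and it only applies to public-coin PoQ, which the theorem statement does not make explicit), whereas the paper's choice of threshold makes the implication go through directly from the hypothesis as stated.
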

\begin{proof}

Let the PoQ be $\langle P,V \rangle$ with reduction $R$ to assumption $(C,\thres)$, with correctness $1-\negl(\lambda)$ and soundness $1-\frac{1}{p(\lambda)}$ for some polynomial $p$. 
Let $k(\lambda)$ be an upper bound on the length of the (purified) state of the honest prover during execution.

By definition of PoQ, there exists a polynomial $p'$ such that for any classical prover $\widetilde{P}$ where for infinitely many $\lambda$, $\Pr[\langle \widetilde{P},V \rangle(1^\lambda) = 1] \geq 1 - \frac{1}{2p(\lambda)}$, it must be the case that for infinitely many $\lambda$,
\[
\Pr\big[\langle R^{\widetilde{P}}, C(\lambda)=1\rangle \big] \geq \thres + \frac{1}{p'(\lambda)}.
\]
Set $\nu(\lambda) := 16 \, p(\lambda) \, q(\lambda) \, p'(\lambda)$.  
By Theorem \ref{thm:1locc} \ref{conj:locc}, there exists a polynomial $n(\cdot)$ such that for all large enough $\lambda \in \mathbb{N}$, for all permutation invariant states $\rho_{\reg{A}_1 \ldots \reg{A}_{n(\lambda)}}$ on $\cH_\reg{A}^{\otimes n(\lambda)}$ (where $\reg{A}=\{\bbC^2\}^{k(\lambda)}$), there exists a measure $\mu$ on density matrices on $\cH_\reg{A}$ such that
\[
\Bigl\| \rho_{\reg{A}_1 \ldots \reg{A}_{q(\lambda)}} - \int \sigma^{\otimes q(\lambda)} \, d\mu(\sigma) \Bigr\|_{\text{LOCC}_1} \leq \frac{1}{\nu(\lambda)}.
\]
Let $c(\lambda)$ be the correctness of the PoQ. Set $s(\lambda) := \frac{1}{4n(\lambda)^{2} p(\lambda)}$
We will drop $\lambda$ when clear from context.
Define
\begin{itemize}
    \item $\mathsf{Setup}(1^\lambda):$
    \begin{itemize}
        \item Output the first verifier message as $\pp$.
    \end{itemize}
    \item $\mathsf{Samp}(\pp):$
    \begin{itemize}
        \item Run the honest prover on the first verifier message to obtain $\srno$ and (purified) prover state $\ket{\psi_\srno}$.
    \end{itemize}

    \item $\mathsf{Ver}(\pp,\srno, \rho):$  
    \begin{itemize}
        \item Run the last two rounds of the honest protocol using $\rho$ as the honest prover state and $\pp\|\srno$ as the transcript so far. Accept if the verifier accepts.
    \end{itemize}
\end{itemize}
Suppose $(\Setup, \Samp, \Ver)$ is not $(n(\lambda), 1-\negl(\lambda), s(\lambda))$-weak lightning. The scheme  satisfies correctness by definition. Therefore there exists a  QPT adversaries $\mathcal{A}$ with advice $\{\ket{\phi_\lambda}\}_\lambda$ that breaks security for infinitely many $\lambda$. We will drop the advice state from the notation and assume it is always passed to the adversary whenever it is called.

Define a malicious prover $P'$ as follows. First we define $P'_1$ and $P'_2$
\begin{itemize}
  \item $P'_1(r_1):$
  \begin{itemize}
    \item $m_1, (\rho_1)_{\reg{A}^{(1)}_1,\ldots,\reg{A}_1^{(n)}} \gets \cA(r_1)$
    \item $t \gets [n]$
    \item $\widetilde{\rho}_1 \gets (\rho_1)_{\reg{A}_1^{(t)}}$
    \item Return $(m_1,\widetilde{\rho}_1)$
  \end{itemize}

  \item  $P'_2(\tau, \widetilde{\rho}_{1}):$
  \begin{itemize}
  \item Run the honest prover to compute the second prover message with transcript $\tau$ and prover state $\widetilde{\rho}_{1}$. Return the computed message.
  \end{itemize}
\end{itemize}

\noindent For the first prover message, the prover $P'$ samples $(m_1,\widetilde{\rho}_1) \gets P_1(r_1)$ and sends $m_1$.  
For the second prover message, $P'$ sends the output of $P'_2(\tau, \widetilde{\rho}_1)$ where $\tau$ is the transcript so far.

\begin{claim} 
\label{clm:P-prime-soundness-4-round}
For infinitely many $\lambda$
\[
\Pr[\langle P',V\rangle=1] \ge s(\lambda).
\]
\end{claim}

\begin{proof}
Similar to Claim \ref{clm:P-prime-soundness}, this follows from the fact that $\cA$ breaks the security of the defined lightning scheme.
\end{proof}
\noindent Next, we describe an unbounded classical prover $\widetilde{P}$.  
To do so, we define algorithms $\widetilde{P}_1$ and $\widetilde{P}_2)$.  
For any quantum state $\rho$, let $\langle \rho \rangle$ refer to the description of the density matrix of $\rho$.  
For a quantum algorithm $B$, let $B^c$ refer to a classical algorithm that simulates $B$, i.e.\ $B^c$ takes a description of a state $\langle \rho \rangle$ as input, simulates sampling $\sigma \gets B(\rho)$, and outputs $\langle \sigma \rangle$. Let $S_n$ be the set of permutations on $n$ objects and for any $\pi \in S_n$ let $P(\pi)$ be the unitary that applies permutation $\pi$ to its input registers.
\begin{itemize}
    \item $\widetilde{P}_1(r_1):$
    \begin{itemize}
          \item $m_1, \langle(\rho_1)_{\reg{A}_1^{(1)} \cdots \reg{A}_1^{(n)}}\rangle \gets \cA(r_1)$
          \item Define $\rho_1^{\mathsf{sym}} := \Exp_{\pi \gets S_n} [P(\pi)\,\rho_1\,P(\pi)^\dagger]$
          \item Since $\rho_1^{\mathsf{sym}}$ is permutation-invariant, by Theorem \ref{thm:1locc}. there exists measure $\mu_1$ s.t. 
          \[
            \left\| (\rho_1^{\mathsf{sym}})_{\reg{A}_1^{(1)} \cdots \reg{A}_1^{(q)}} - \int \varphi^{\otimes q} \, d\mu_1(\varphi) \right\|_{\mathrm{LOCC}_1} \le \tfrac{1}{\nu(\lambda)}.
         \]
          Sample $\varphi_1 \gets \mu_1$
          \item Return $m_1, \langle \varphi_1 \rangle$
    \end{itemize}
    \item  $\widetilde{P}_2(\tau, \langle \varphi_1 \rangle):$
    \begin{itemize}
        \item Run honest prover using transcript $\tau$ and internal state $\varphi_1$ recovered from the description. Return the computed second prover message.
    \end{itemize}
\end{itemize}
For the first prover message, the prover $\widetilde{P}$ samples $m_1, \langle \varphi_1 \rangle \gets \widetilde{P}_1(r_1)$ and sends $m_1$.  
For the second prover message, $\widetilde{P}$ sends the output of 
$\widetilde{P}_2(\tau, \langle \varphi_{1} \rangle)$, where $\tau$ is the transcript so far.  
While our description of the prover is randomized, we will think of the $\widetilde{P}_i$’s as sampling and fixing the randomness for all possible inputs in the beginning of execution.

Next we describe an efficient simulator $\Sim$ (with $q$ copies of non-uniform advice in order to run $\cA)$ that, given access to a $q$-query classical reduction $\widetilde{R}$, and for all challengers $\widetilde{C}$, will mimic the interaction of $\widetilde{R}^{\widetilde{P}}$ with $\widetilde{C}$.  
Without loss of generality, we may assume that $\widetilde{R}$ only runs $\widetilde{P}$ on $(\tau \| r)$ for partial transcripts $\tau$ that were generated by $\widetilde{P}$, and since $\widetilde{P}$ is deterministic after fixing randomness, we may also assume $\widetilde{R}$ never runs $\widetilde{P}$ twice on the same partial transcript.  
  
$\Sim$ maintains an internal database keyed by classical strings, where each entry stores a multipartite quantum state, together with bookkeeping marking subsystems as used or unused.  
When a new state is added to the database, all its subsystems are marked as unused.
We think of an $\widetilde{R}$ query requesting an $i$th prover message as a query to $\widetilde{P}_i$.  $\Sim$ is then defined as the algorithm that interacts with $\widetilde{C}$ by running $\widetilde{R}$ and answering $\widetilde{R}$ queries as follows.

\begin{itemize}
    \item On query $r$ to $\widetilde{P}_1$:
    \begin{itemize}
        \item $m, \rho \gets \cA(r)$
        \item Apply a random permutation to $\rho$ to get $\rho^{\mathsf{sym}}$
        \item Store $\rho^{\mathsf{sym}}$ in the database keyed by $r \| m$
        \item Return $m$
    \end{itemize}

    \item On query $\tau \| r$ to $\widetilde{P}_2$:
    \begin{itemize}
        \item Look up the state stored under key $\tau$ in the database.  
        If no such state exists or if all subsystems of the state are marked used, then abort.  
        Else, retrieve the marginal state $\widetilde{\rho}$ on the next unused subsystem of the stored database state.  
        Mark this subsystem as used.
        \item Run the honest prover with transcript $\tau$ and internal state $\widetilde{\rho}$ and return the computed second prover message.
    \end{itemize}
\end{itemize}
\begin{claim}
    \label{clm:classical-bb-sim-4-round}
    For any $q'(\lambda) \leq q(\lambda)$, for all $q'(\lambda)$-query reductions $\widetilde{R}$ and all challengers $\widetilde{C}$,
\[
    \Big| \Pr\big[ \langle \mathsf{Sim}^{\widetilde{R}}, \widetilde{C} \rangle = 1 \big]
      - \Pr\big[ \langle \widetilde{R}^{\widetilde{P}}, \widetilde{C} \rangle = 1 \big] \Big|
    \le \frac{q'(\lambda)}{\nu(\lambda)}.
\]
\end{claim}

\begin{proof}
By almost identical arguments to Claim \ref{clm:classical-bb-sim}. The only major difference is we only have bounds for one-way LOCC measurements, however this is not an issue since the measurement applied in our case is in fact one-way LOCC since there are no further measurements applied once $\widetilde{P}_2$ is run on a subsystem.
\end{proof}

\begin{claim}
\label{clm:c-and-q-prover-soundness-4-round} For all large enough $\lambda$
\[
    \Big| \Pr\big[ \langle \widetilde{P}, V \rangle(1^\lambda) = 1 \big]
      - \Pr\big[ \langle P', V \rangle(1^\lambda) = 1 \big] \Big|
    \le \tfrac{2}{\nu(\lambda)}
\]
\end{claim}

\begin{proof}
This follows from the observation that the Claim \ref{clm:classical-bb-sim-4-round} applies to all reductions and challengers.  
We may view the interaction between the prover and verifier as a game where the challenger is the verifier and the reduction merely forwards prover and verifier messages to each other.  
Running the simulator with this reduction is identical to running $\langle P', V \rangle$, while running the reduction with $\widetilde{P}$ is identical to running $\langle \widetilde{P}, V \rangle$.  
Since the reduction makes at most $2$ queries, the statement of the claim follows.
\end{proof}
\noindent Now, recall that by Claim \ref{clm:P-prime-soundness-4-round}, for infinitely many $\lambda$
\[
    \Pr\big[ \langle P', V \rangle(1^\lambda) = 1 \big] \ge s(\lambda) = 1 - \frac{1}{4n(\lambda)^2 p(\lambda)}.
\]
which implies by Claim \ref{clm:c-and-q-prover-soundness-4-round} that  for infinitely many $\lambda$
\begin{align*}
    \Pr\big[ \langle \widetilde{P}, V \rangle(\lambda) = 1 \big] 
    &\ge 1 - \frac{1}{4n(\lambda)^2 p(\lambda)} - \frac{2}{\nu(\lambda)}\\
    &\geq 1 - \frac{1}{4n(\lambda)^2 p(\lambda)} - \frac{1}{4p(\lambda)q(\lambda)p'(\lambda)}\\
    &\ge 1 - \frac{1}{2p(\lambda)}.
\end{align*}
By the definition of $p'$ at the beginning of the proof, implies that  for infinitely many $\lambda$
\[
    \Pr\big[ \langle R^{\widetilde{P}}, C \rangle = 1 \big] \ge \mathsf{thres} + \frac{1}{p'(\lambda)}.
\]
By Claim \ref{clm:classical-bb-sim-4-round}, this means  for infinitely many $\lambda$
\begin{align*}
    \Pr\big[ \langle \mathsf{Sim}^R, C \rangle = 1 \big] 
    &\ge \mathsf{thres} + \frac{1}{p'(\lambda)} - \frac{q(\lambda)}{\nu(\lambda)}\\
    &\geq \mathsf{thres} + \frac{1}{p'(\lambda)} - \frac{1}{4 p(\lambda) \cdot p'(\lambda)}\\
    &\ge \mathsf{thres} + \frac{1}{2p'(\lambda)}.
\end{align*}
which contradicts the security of $C$, completing the proof of the theorem.
\end{proof}

\section*{Acknowledgments}
KT was supported in part by AFOSR, NSF 2112890, NSF CNS-2247727 and a Google
Research Scholar award. 
This material is based upon work supported by the Air Force Office of
Scientific Research under award number FA9550-23-1-0543.
 Part of this work was done while visiting the Simons Institute for the Theory of
Computing.
\bibliographystyle{alpha}
\addcontentsline{toc}{section}{References}
\bibliography{abbrev0,crypto,custom,bib,main}
\end{document}